\definecolor{Darkblue}{rgb}{0,0,0.4}
\definecolor{Brown}{cmyk}{0,0.61,1.,0.60}
\definecolor{Purple}{cmyk}{0.45,0.86,0,0}
\definecolor{Darkgreen}{rgb}{0.133,0.543,0.133}
\newif\ifdraft 
\newcommand{\namedref}[2]{\hyperref[#2]{#1~\ref*{#2}}}
\newcommand{\propref}[1]{\hyperref[#1]{property~(\ref*{#1})}}
\newcommand{\theoremref}[1]{\namedref{Thm.}{#1}}
\newtheorem{theorem}{Theorem}
\newtheorem{lemma}{Lemma}
\newtheorem{definition}{Definition}
\newtheorem{claim}{Claim}
\newtheorem{observation}{Observation}
\newtheorem{corollary}{Corollary}
\newtheorem{remark}{Remark}
\newcommand{\poly}{\mathrm{poly}}
\newcommand{\polylog}{\mathrm{polylog}}
\newcommand{\tw}{\mathrm{tw}}
\newcommand{\pw}{\mathrm{pw}}
\newcommand{\R}{\mathbb{R}}
\newcommand{\N}{\mathbb{N}}
\newcommand{\opt}{\mathrm{opt}}
\newcommand{\home}{\mbox{\bf home}}
\newcommand{\diam}{\mathrm{diam}}
\newcommand{\dm}{\mathrm{diam}}
\newcommand{\lo}{\mathrm{lo}}
\newcommand{\hi}{\mathrm{hi}}
\newcommand{\midP}{\mathrm{mid}}
\newtheorem{fact}{Fact}
\definecolor{forestgreen}{rgb}{0.13, 0.55, 0.13}
\def\eps{\epsilon}
\DeclareMathAlphabet{\mathpzc}{OT1}{pzc}{m}{it}
\newcommand{\etal}{{\em et al. \xspace}}
\newlength{\dhatheight}
\newcommand {\ignore} [1] {}
\newcommand{\initOneLiners}{%
	\setlength{\itemsep}{0pt}
	\setlength{\parsep }{0pt}
	\setlength{\topsep }{0pt}
}
\title{Clan Embeddings into Trees, and Low Treewidth Graphs}
\author{Arnold Filtser\thanks{The research was supported by the Simons Foundation.}}
\affil{Columbia University, \texttt{arnold273@gmail.com}}
\author{Hung Le\thanks{The research was supported by the start-up grant of UMass Amherst.}}
\affil{University of Massachusetts at Amherst, \texttt{hungle@cs.umass.edu}}
\date{}
\begin{document}
\maketitle
\begin{abstract}
In low distortion metric embeddings, the goal is to embed a host ``hard'' metric space into a ``simpler'' target space while approximately preserving pairwise distances. A highly desirable target space is that of a tree metric. Unfortunately, such embedding will result in a huge distortion. 
A celebrated bypass to this problem is stochastic embedding with logarithmic expected distortion. Another bypass is Ramsey-type embedding, where the distortion guarantee  applies only to a subset of the points. 
However, both these solutions fail to provide an embedding into a single tree with a worst-case distortion guarantee on all pairs.
In this paper, we propose a novel third bypass called \emph{clan embedding}. Here each point $x$ is mapped to a subset of points $f(x)$, called a \emph{clan}, with a special \emph{chief} point $\chi(x)\in f(x)$. The clan embedding has multiplicative distortion $t$ if for every pair $(x,y)$ some copy $y'\in f(y)$ in the clan of $y$ is  close to the chief of $x$: $\min_{y'\in f(y)}d(y',\chi(x))\le t\cdot d(x,y)$. Our first result is a clan embedding into a tree with multiplicative distortion $O(\frac{\log n}{\epsilon})$ such that each point has $1+\epsilon$ copies (in expectation).
In addition, we provide a ``spanning'' version of this theorem for graphs  and use it to devise the first compact routing scheme with constant size routing tables.

We then focus on minor-free graphs of diameter prameterized by $D$, which were known to be stochastically embeddable into bounded treewidth graphs with expected additive distortion $\epsilon D$. We devise  Ramsey-type embedding and clan embedding analogs of the stochastic embedding. We use these embeddings to construct the first (bicriteria quasi-polynomial time) approximation scheme for the metric $\rho$-dominating set and metric $\rho$-independent set problems in minor-free graphs.	 
\end{abstract}

\newpage


\setcounter{tocdepth}{2} 
\tableofcontents
    \newpage
    \pagenumbering{arabic}

\section{Introduction}
Low distortion metric embeddings provide a powerful algorithmic toolkit, with applications ranging from approximation/sublinear/online/distributed algorithms \cite{LLR95,AMS99Sketch,BCLLM18,KKMPT12} to  machine learning \cite{GKK17}, biology \cite{HBKKW03}, and vision \cite{AS03}.
Classically, we say that an embedding $f$ from a metric space $(X,d_X)$ to a metric space $(Y,d_Y)$ has multiplicative distortion $t$, if for every pair of points $u,v\in X$ it holds that $d_X(u,v)\le d_Y(f(u),f(v))\le t\cdot d_X(u,v)$.
Typical applications of metric embeddings naturally have the following structures: take some instance of a problem in a ``hard'' metric space $(X,d_X)$; embed $X$ into a ``simple'' metric space $(Y,d_Y)$ via a low-distortion metric embedding $f$; solve the problem in $Y$, and ``pull-back'' the solution in $X$. Thus, the objectives are  low distortion and ``simple'' target space.

Simple target spaces that immediately come to mind are Euclidean space and tree metric, or --- even better --- an ultrametric. \footnote{Ultrametric is a metric space satisfying a strong form of the triangle inequality: 
	$d(x,z)\le\max\left\{ d(x,y),d(y,z)\right\}$ (for all $x,y,z$). Ultrametrics embed isometrically into both Euclidean space \cite{Lemin01}, and tree metric. See \Cref{def:ultra}.}
In a celebrated result, Bourgain \cite{Bou85} showed that every $n$-point metric space embeds into Euclidean space with multiplicative distortion $O(\log n)$ (which is tight \cite{LLR95}). 
On the other hand, any embedding of the $n$-vertex cycle graph $C_n$ into a tree metric will incur multiplicative distortion $\Omega(n)$ \cite{RR98}.
Karp \cite{Karp89} observed that deleting a random edge from $C_n$ results in an embedding into a line with expected distortion $2$ (see \hyperref[fig:circleExample]{Figure 1(a)}).
This idea was developed by Bartal \cite{Bar96,Bartal98} (improving over \cite{AKPW95}), 
and culminating in the celebrated work of Fakcharoenphol, Rao, and Talwar \cite{FRT04} (see also \cite{Bartal04}) who showed that every $n$-point metric space stochastically embeds into trees (actually ultrametrics) with expected multiplicative distortion $O(\log n)$.
Specifically, there is a distribution $\mathcal{D}$, over dominating metric embeddings \footnote{Metric embedding $f:X\rightarrow Y$ is dominating if  $\forall u,v\in X$, $d_X(u,v)\le d_Y(f(u),f(v))$.\label{foot:dominating}} into trees (ultrametrics), such that $\forall u,v\in X$, $ \mathbb{E}_{(f,T)\sim\mathcal{D}}d_T(f(u),f(v))\le O(\log n)\cdot d_X(u,v)$.
The $O(\log n)$ multiplicative distortion is known to be optimal \cite{Bar96}. Stochastic embeddings into trees are widely successful and have found numerous applications (see e.g. \cite{Ind01}).  

In many applications of metric embeddings, a worst-case distortion guarantee is required. A different type of compromise (compared to expected distortion) is provided by \emph{Ramsey-type} embeddings.
The classical Ramsey problem for metric spaces was introduced by Bourgain \etal \cite{BFM86}, and is concerned with finding "nice" structures in arbitrary metric spaces. Following \cite{BBM06,BLMN05b}, Mendel and Naor \cite{MN07} showed that  for every integer parameter $k\ge 1$, every $n$-point metric $(X,d)$ has a subset $M\subseteq X$ of size at least $n^{1-1/k}$ that embeds into a tree (ultrametric) with multiplicative distortion $O(k)$ (see \cite{NT12,BGS16,ACEFN20} for improvements). 
In fact, the embedding has multiplicative distortion $O(k)$ for any pair in $M\times X$. We say that the vertices in $M$ are \emph{satisfied}
(see \hyperref[fig:circleExample]{Figure 1(b)} for an illustration).
As a corollary, every $n$-point metric space $(X,d_X)$ admits a collection ${\cal T}$ of $k\cdot n^{1/k}$ dominating trees over $X$ and a mapping $\home:X\to{\cal T}$, such that for every $x,y\in X$, it holds that $d_{\home(x)}(x,y)\le O(k)\cdot d_X(x,y)$. These are called Ramsey trees, and they have found applications to online algorithms \cite{BBM06}, approximate distance oracles \cite{MN07,C15}, and routing \cite{ACEFN20}.

\begin{figure}[t]
	\centering
	\includegraphics[width=.9\textwidth]{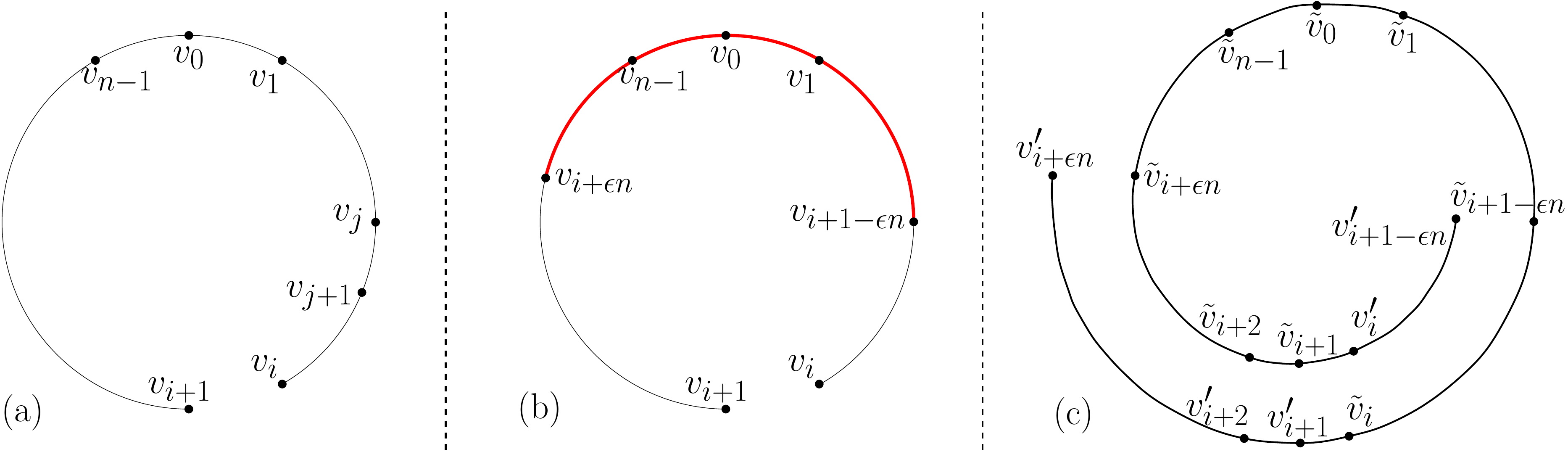}	
	{\caption{\small
			Three different types of embeddings of the cycle graph $C_n$ into a tree. 
			\textbf{(a)}~ On the left illustrated a stochastic embedding that is created by deleting an edge $\{v_i,v_{i+1}\}$ uniformly at random. The expected multiplicative distortion of a pair of neighboring vertices $v_j,v_{j+1}$ is $\mathbb{E}[d_{T}(v_{j},v_{j+1})]=\frac{n-1}{n}\cdot1+\frac{1}{n}\cdot(n-1)=\frac{2n-2}{n}<2$. By the triangle inequality and linearity of expectation, the expected multiplicative distortion is $\le 2$.\newline
			\textbf{(b)}~ In the middle illustrated a Ramsey type embedding: an arbitrary edge $\{v_i,v_{i+1}\}$ is deleted. The vertices in the subset $M$ (on the thick red line), which constitutes a $(1-2\eps)$ fraction of the vertex set, are satisfied. That is, they suffer from a multiplicative distortion at most $\frac{1}{\eps}$ w.r.t. any other vertex. \newline 
			\textbf{(c)}~ On the right illustrated a clan embedding, where $i$ is chosen uniformly at random. The chief of a vertex $v_j$ denoted $\tilde{v}_j$. Each vertex  $v_j\in\{v_{i+1-\eps n},\dots,v_{i+\eps n}\}$ has additional copy $v'_j$; thus the probability that a vertex has two copies is $2\eps$, implying  that  $\mathbb{E}[|f(v_a)|]=1+2\eps$.
			The distortion is $\min\{d(\tilde{v}_{a},\tilde{v}_{b}),d(v'_{a},\tilde{v}_{b})\}\le\frac{1}{\epsilon}\cdot d_{C_{n}}(v{}_{a},v_{b})$. 
		}
		\label{fig:circleExample}}
\end{figure}

\paragraph*{A new type of embedding: clan embedding}
Recall that our initial goal was to embed a general metric space into a ``simple'' target space, specifically a tree metric. A drawback of both the stochastic embedding and the Ramsey-type embedding is that the embeddings are actually into a collection of trees rather than into a single one; thus the target space is not as simple as one might desire.
Each embedding type makes a different type of compromise: the distortion guaranteed in stochastic embedding is only in expectation, while in the  Ramsey-type embedding, only a subset of the vertices enjoys a bounded distortion guarantee.
In this paper, we propose a novel type of compromise, which we call \emph{clan embedding}.
Here we will have a single embedding with a worst-case guarantee on all vertex pairs. The caveat is that each vertex might be mapped to multiple copies. 
This violates the classical paradigm of having a one-to-one relationship between the source and target spaces. However, we obtain a map into a single tree with a worst-case guarantee; this is beneficial and opens a new array of possibilities.

A \emph{one-to-many}  embedding $f:X\rightarrow 2^Y$ maps each point $x$ into a subset $f(x)\subseteq Y$ called the \emph{clan} of $x$. Each vertex $x'\in f(x)$ is called a \emph{copy} of $x$ (see \Cref{def:one-to-many}). 
Clan embedding is a pair $(f,\chi)$, where $f$ is a one-to-many embedding, and $\chi:X\rightarrow Y$ maps each vertex $x$ to a special vertex $\chi(x) \in f(x)$ called the \emph{chief}. Clan embeddings are \emph{dominating}, that is, for every $x,y\in X$, the distance between every two copies is at least the original distance: $\min_{x'\in f(x),y'\in f(y)}d_Y(x',y')\ge d_X(x,y)$. 
$(f,\chi)$ has multiplicative distortion $t$, if for every $x,y\in X$, some vertex in the clan of $x$ is close to the chief of $y$:  $\min_{x'\in f(x)}d_Y(x',\chi(y))\le t\cdot d_X(x,y)$ (see \Cref{def:clan}).
See \hyperref[fig:circleExample]{Figure 1(c)} for an illustration.

\paragraph*{Clan embeddings into trees}
One can easily construct an isometric clan embedding into a tree by allowing $n$ copies for each vertex. On the other hand, with a single copy per vertex, the clan embedding becomes a classic embedding, which requires a multiplicative distortion of $\Omega(n)$.
Our goal is to construct a low distortion clan embedding, while keeping the number of copies each vertex has as small as possible.
To this end, we construct a distribution over clan embeddings, where all the embeddings in the support have a worst-case distortion guarantee;  however, the expected number of copies each vertex has is bounded by a constant arbitrarily close to $1$.

\begin{restatable}[Clan embedding into ultrametric]{theorem}{ClanUltrametric}
	\label{thm:ClanUltrametric}
	Given an $n$-point metric space $(X,d_{X})$ and parameter $\epsilon\in(0,1]$, there is a uniform distribution $\mathcal{D}$ over $O(n\log n/\epsilon^2)$ clan embeddings   $(f,\chi)$ into ulrametrics	with multiplicative distortion $O(\frac{\log n}{\epsilon})$ such that for every point $x\in X$, $\mathbb{E}_{f\sim\mathcal{D}}[|f(x)|]\le1+\epsilon$.
	
	In addition, for every $k\in \N$, there is a uniform distribution $\mathcal{D}$ over $O(n^{1+\frac{2}{k}}\log n)$ clan embeddings $(f,\chi)$
	into ulrametrics with multiplicative distortion $16k$ such that for every point $x\in X$, $\mathbb{E}_{f\sim\mathcal{D}}[|f(x)|]= O(n^{\frac1k})$.
\end{restatable}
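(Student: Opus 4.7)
The plan is to build each clan embedding in the distribution as follows: first construct a random ultrametric $T$ on $X$ from a nested random partition (using FRT/CKR-style partitions for the $O(\log n/\epsilon)$-distortion statement and the Mendel--Naor partition tuned to $k$ for the $16k$-distortion statement), with $\chi(x)=x$ being the canonical leaf of $x$; then augment $T$ with a few extra ``copy'' leaves so that every pair of points has a pair of representatives at bounded tree distance.

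Concretely, set $\eta=\Theta(\epsilon/\log n)$ for statement~1 and $\eta=1/(16k)$ for statement~2. After the random nested partition $P_0\preceq P_1\preceq\cdots$ at scales $2^0,2^1,\ldots$ is built, I would, for every level $i$, every point $y$, and every cluster $C\in P_i$ with $C\neq P_i(y)$ and $C\cap B(y,\eta\cdot 2^i)\neq\emptyset$, attach one leaf copy of $y$ as a child of the tree-node of $C$. For any pair $(x,y)$, letting $i$ be minimal with $\eta\cdot 2^i\ge d(x,y)$, either $P_i(x)=P_i(y)$, or there is a copy of $y$ under $P_i(x)$ (and symmetrically a copy of $x$ under $P_i(y)$); in either case the tree distance from $\chi(x)$ to the closest copy of $y$ is at most $2^{i+1}\le (2/\eta)d(x,y)$, yielding the advertised worst-case distortion. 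The uniform distribution is taken over $N$ i.i.d.\ samples of this construction; the support sizes $O(n\log n/\epsilon^2)$ and $O(n^{1+2/k}\log n)$ should then come from a Chernoff-type concentration ensuring that the empirical clan averages stay close to their expectations simultaneously for all $n$ vertices, which is what downstream applications (the graph/spanning version and the routing scheme) appear to need.

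The technically delicate step is to bound
\[
\mathbb{E}[|f(y)|]\;\le\; 1+\sum_i\mathbb{E}\bigl[\#\{C\neq P_i(y):C\cap B(y,\eta\cdot 2^i)\neq\emptyset\}\bigr].
\]
A naive bound using the standard CKR cut probability $O(d(x,y)\log n/2^i)$ summed over $x\in B(y,\eta\cdot 2^i)$ overshoots the target by a factor of $n$. The resolution is to use a \emph{local} per-level padding-failure bound — scaling with the cardinality ratio $\ln(|B(y,2^{i+1})|/|B(y,2^{i-1})|)$ rather than with the global $\log n$ — and then to telescope these local bounds across scales for each fixed $y$, collapsing the sum to $O(\eta\log n)=O(\epsilon)$ for statement~1 and to $O(n^{1/k})$ for statement~2 (using the Mendel--Naor padding in place of CKR). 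Getting this telescoping tight enough that the leading constant is \emph{exactly} $1+\epsilon$ rather than $1+O(\epsilon)$, and that the second statement lands on $n^{1/k}$ rather than $n^{1-1/k}$, is where I expect the bulk of the technical work to lie; this is the main obstacle.
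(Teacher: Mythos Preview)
Your route is genuinely different from the paper's, and the step you yourself flag as the obstacle is a real gap that your proposed fix does not close.

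The paper never builds the clan embeddings from random CKR/FRT partitions. It first proves a \emph{deterministic, measure-weighted} version: for any $(\ge 1)$-measure $\mu$ and any integer $k$, one can construct a single clan embedding into an ultrametric with distortion $16k$ satisfying $\sum_{x}\mu(x)\,|f(x)|\le \mu(X)^{1+1/k}$. The construction is a Bartal-style recursive ball-growing: choose the center $v$ minimizing the ratio $\mu(B(v,\mathrm{diam}/4))/\mu(B(v,\mathrm{diam}/8))$, then pigeonhole over $k$ concentric shells to locate a radius whose duplicated ``belt'' is light relative to the mass it encloses. A minimax argument then yields the existential distribution, and the explicit uniform distribution with the stated support sizes $O(n\log n/\epsilon^2)$ and $O(n^{1+2/k}\log n)$ is obtained by the multiplicative-weights method with this deterministic construction as the oracle --- not by i.i.d.\ sampling and Chernoff.

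In your construction, the quantity you must control is
\[
\sum_i \mathbb{E}\bigl[\#\{C\in P_i : C\neq P_i(y),\ C\cap B(y,\eta\cdot 2^i)\neq\emptyset\}\bigr],
\]
the expected total \emph{number} of foreign clusters touching the ball, summed over scales. Local padding bounds plus telescoping only give you
\[
\sum_i \Pr\bigl[B(y,\eta\cdot 2^i)\not\subseteq P_i(y)\bigr]\;\le\;O(\eta\log n),
\]
i.e.\ the sum of \emph{indicator} probabilities that the ball is cut at all. These coincide only if, conditioned on being cut at level $i$, the ball is cut into $O(1)$ pieces in expectation --- and padding/separation lemmas for CKR-type partitions say nothing about this conditional multiplicity. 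For general metrics there is no known bound of the form $\mathbb{E}[N_i(y)-1]\le O(\eta)\cdot\ln(|B(y,2^{i+1})|/|B(y,2^{i-1})|)$, and you give no argument for one. This is exactly why the paper works against a fixed measure deterministically (where the ``belt'' mass is charged once and for all) and then dualizes via minimax/MWU, rather than trying to control random fragmentation counts. The Mendel--Naor invocation for the second statement has the same problem: Mendel--Naor controls which vertices are padded, not how many pieces an unpadded ball shatters into.

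A secondary point you do not address: attaching a leaf copy of $y$ directly under the node of a cluster $C\not\ni y$ can violate domination, since some $z\in C$ may satisfy $d_X(y,z)>\mathrm{diam}(C)$ while the tree distance from the copy to $\chi(z)$ is only the label of $C$. This is fixable by inflating labels by $(1+O(\eta))$, but it must be done.
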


We fist show that there exists a distribution $\mathcal{D}$ of clan embeddings that has distortion and expected clan size via the minimax theorem. We then use the multiplicative weights update (MWU) method to explicitly construct a uniform distribution $\mathcal{D}$ of polynomial support as specified by~\Cref{thm:ClanUltrametric}. 

Our clan embedding into ultrametric is asymptotically tight (up to a constant factor in the distortion), and cannot be improved even if we embed into a general tree (rather than to the much more restricted structure of an ultrametric).
Additionally, our lower bound implies that the  ultra-sparse spanner construction of Elkin and Neiman \cite{EN19} is asymptotically tight. (Elkin and Neiman~\cite{EN19} constructed a spanner with stretch $O(\frac{\log n}{\eps})$ and $(1+\eps)n$ edges; see \Cref{rem:UltraSparseSpanners} for further details.)

\begin{restatable}[Lower bound for clan embedding into a tree]{theorem}{LBClanTree}
	\label{thm:ClnUltrametricLB}
	For every fixed $\epsilon\in(0,1)$ and large enough $n$, there is an $n$-point metric space $(X,d_X)$ such that for every clan embedding $(f,\chi)$ of $X$ into a tree with multiplicative distortion $O(\frac{\log n}{\eps})$, it holds that $\sum_{x\in X}|f(x)|\ge(1+\eps)n$.\\
	Furthermore, for every $k\in\N$, there is an $n$-point metric space  $(X,d_X)$ such that for every clan embedding $(f,\chi)$ of $X$ into a tree with multiplicative distortion $O(k)$, it holds that  $\sum_{x\in X}|f(x)|\ge\Omega(n^{1+\frac1k})$.
\end{restatable}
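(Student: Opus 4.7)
The plan is to reduce the clan-embedding lower bound to one for sparse \emph{emulators} (weighted graphs $H$ on $X$ with $d_X\le d_H\le t\cdot d_X$). Given any clan embedding $(f,\chi)$ of $(X,d_X)$ into a weighted tree $T$ with multiplicative distortion $t$ and $N:=\sum_{x\in X}|f(x)|$ total copies, I construct an emulator $H$ with at most $N-1$ edges as follows: for every tree edge $(u',v')$ of weight $w$ with $u'\in f(a)$ and $v'\in f(b)$, add an edge $(a,b)$ of weight $w$ to $H$ (skip when $a=b$). Since $T$ has $N-1$ edges, $|E(H)|\le N-1$. Domination of the clan embedding gives $w=d_T(u',v')\ge d_X(a,b)$, hence $d_H\ge d_X$. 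For the upper bound, the distortion condition supplies, for every pair $x,y$, a tree path from $\chi(x)$ to some $y'\in f(y)$ of weighted length $\le t\cdot d_X(x,y)$; reading off the clans of successive vertices along this path (collapsing consecutive vertices in the same clan) produces an $H$-walk from $x$ to $y$ of total weight at most the tree path length, so $d_H(x,y)\le t\cdot d_X(x,y)$.

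\textbf{Choosing the hard metric.} For the polynomial regime, I take $(X,d_X)$ to be the shortest-path metric of an $n$-vertex unweighted graph $G$ of girth $\ge 2k+2$ with $\Omega(n^{1+1/k})$ edges (available via Lubotzky--Phillips--Sarnak or Erd\H{o}s--Sachs constructions). The standard Moore/girth argument shows that any $O(k)$-emulator of this metric needs $\Omega(n^{1+1/k})$ edges: every $G$-edge demands a short $H$-walk between its endpoints, and the high girth of $G$ prevents these walks from overlapping too much. Combined with the reduction, $N=\Omega(n^{1+1/k})$ follows, yielding the second claim. For the ultra-sparse regime, I take $(X,d_X)$ to be the shortest-path metric of a bounded-degree expander of girth $\Omega(\log n)$, for which the goal becomes showing that any $O(\log n/\epsilon)$-emulator must have at least $(1+\epsilon)n$ edges.

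\textbf{Main obstacle.} The polynomial case is routine once the reduction is set up, but the ultra-sparse case is the main technical hurdle: standard Moore/girth bounds are tight only in the $n^{1+1/k}$ regime and cannot handle an excess of only $\epsilon n$ edges over a spanning tree. My plan is a finer expansion-based charging: an emulator with $(1+\epsilon)n$ edges exceeds any spanning tree by only $\epsilon n$ edges, and the girth plus uniform expansion of $G$ force the average tree-pair stretch to be $\Omega(\log n)$; each extra emulator edge can short-circuit only a bounded share of pairs, so some pair must incur stretch $\Omega(\log n/\epsilon)$. This matches the Elkin--Neiman ultra-sparse spanner upper bound and closes the gap with Theorem~\ref{thm:ClanUltrametric}.
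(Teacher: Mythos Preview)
Your reduction from clan embeddings to sparse emulators is essentially the same idea as the paper's (the paper merges all copies of each vertex and tracks the Euler characteristic, which is cleaner because it automatically handles Steiner vertices in $T$; your edge-by-edge projection tacitly assumes every tree vertex lies in some $f(x)$). The polynomial regime then follows, but the correct tool is not a ``Moore/girth'' counting argument---that argument is about subgraph spanners, not weighted emulators---it is the Rabinovich--Raz theorem: any embedding of a girth-$g$ graph $G$ into a graph $H$ with $\chi(H)<\chi(G)$ has distortion at least $g/4-3/2$. This gives $|E(H)|\ge |E(G)|$ immediately when $|V(H)|=|V(G)|$, and in the paper's formulation yields $\sum_x|f(x)|\ge n+\chi(G)=|E(G)|+1$.

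The genuine gap is the ultra-sparse case. Your ``expansion-based charging'' sketch is not a proof: you never specify which pairs are charged to which extra edges, and there is no mechanism forcing a single pair to incur stretch $\Omega(\log n/\epsilon)$ rather than the $\epsilon n$ extra edges collectively absorbing all the slack. More concretely, your choice of hard instance---an expander of girth $\Omega(\log n)$---cannot work: Rabinovich--Raz only rules out distortion below $g/4-O(1)$, so a girth-$\Theta(\log n)$ graph cannot force distortion $\Omega(\log n/\epsilon)$ for small $\epsilon$. The paper's fix is to manufacture the right girth directly: start from an $n'$-vertex graph with $2n'$ edges and girth $\Omega(\log n')$ (easy via $G(n',p)$ and cycle-killing), then subdivide every edge into a path of length $\Theta(1/\epsilon)$. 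The result has $n=\Theta(n'/\epsilon)$ vertices, exactly $(1+\epsilon)n$ edges, and girth $\Omega(\log n/\epsilon)$. Now Rabinovich--Raz applies verbatim and gives $\sum_x|f(x)|\ge |E(G)|+1>(1+\epsilon)n$ for any clan embedding of distortion $o(\log n/\epsilon)$. No expansion or averaging is needed.
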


Often, we are given a weighted graph $G=(V,E,w)$, and the goal is to embed the shortest path metric of the graph $d_G$ into a tree $T$. However, if, for example, one is required to construct a network while using only pre-existing edges from $E$, 
it is desirable that the tree $T$ will be a subgraph of $G$, called a spanning tree.
Abraham and Neiman \cite{AN19} (improving over \cite{EEST08}) constructed a stochastic embedding of general graphs into spanning trees with expected distortion $O(\log n\log\log n)$ (losing a $\log\log n$ factor compared to general trees \cite{FRT04}).
Later, Abraham \etal \cite{ACEFN20} constructed Ramsey spanning trees, showing that for every $k\in \N$, every graph can be embedded into a spanning tree with a subset $M$  of at least $n^{1-\frac1k}$ satisfied vertices which suffers a distortion at most $O(k\log\log n)$ w.r.t. any other vertex (again losing a $\log\log n$ factor compared to general trees).
Here we provide a ``spanning'' analog of \Cref{thm:ClanUltrametric}. Similar to \cite{AN19,ACEFN20}, we also lose a $\log\log n$ factor compared to general trees (see the introduction to \Cref{sec:SpanningClan} for further discussion). 
In particular, by \Cref{thm:ClnUltrametricLB}, our spanning clan embedding is optimal up to second-order terms.
As an application, we construct the first compact routing scheme with routing tables of constant size in expectation; see \Cref{subsec:CompactRoutingSchemeIntro}.
We say that a clan embedding $(f,\chi)$ of a graph $G$ into a graph $H$ is \emph{spanning} if $f(V(G))=V(H)$ (i.e., every vertex in $H$ is an image of a vertex in $G$) and for every edge $\{v',u'\}\in E(H)$ where $v'\in f(v),u'\in f(u)$, it holds that $\{v,u\}\in E(G)$ (see \Cref{def:one-to-many,def:clan}).

\begin{restatable}[Spanning clan embedding into trees]{theorem}{ClanSpanningTree}
	\label{thm:ClanSpanningTree}
	Given an $n$-vertex weighted graph $G=(V,E,w)$ and parameter $\epsilon\in(0,1]$, there is a distribution $\mathcal{D}$ over spanning clan embeddings $(f,\chi)$ into trees with multiplicative distortion $O(\frac{\log n\log\log n}{\epsilon})$ such that for every vertex $v\in V$, $\mathbb{E}_{f\sim\mathcal{D}}[|f(v)|]\le1+\epsilon$.
	
	In addition, for every $k\in\N$, there is a distribution $\mathcal{D}$ over spanning clan embeddings $(f,\chi)$
	into trees with multiplicative distortion $O(k\log\log n)$, where for every vertex $v\in V$, \mbox{$\mathbb{E}_{f\sim\mathcal{D}}[|f(v)|]= O(n^{\frac1k})$}.
\end{restatable}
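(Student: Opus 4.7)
The plan is to mirror the approach used for \Cref{thm:ClanUltrametric}, replacing the unconstrained hierarchical partitioning with a \emph{spanning} hierarchical partitioning scheme, specifically the petal decomposition of Elkin, Emek, Spielman, and Teng \cite{EEST08} as refined by Abraham and Neiman~\cite{AN19}. Recall that the non-spanning clan embedding works roughly as follows: at each scale $2^i$, we build a random laminar partition of $X$ (in the style of FRT), and for each cluster at scale $i$ we create a node in the ultrametric at label $\Theta(2^i)$. The key novelty of clan embedding is that instead of forcing a single copy of each vertex to sit in one cluster, we allow each vertex $v$ whose ball $B(v,\epsilon \cdot 2^i)$ crosses several clusters to place a copy into each such neighboring cluster, with the chief sitting in its original cluster. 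A padding-style analysis then shows that (i) the expected number of clans copies added at each scale is $O(\epsilon / \log n)$ per vertex, which sums to $\epsilon$ across all $\log n$ scales, and (ii) any pair $(x,y)$ at distance $\approx 2^i$ has, with probability one, some copy of $y$ landing in the same cluster as the chief of $x$, yielding distortion $O(\log n / \epsilon)$.

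To get the spanning analog, I would replace the padded partition at each scale with a petal decomposition of the current cluster, which produces laminar clusters together with a spanning tree whose edges all come from $E(G)$. The classical Abraham--Neiman analysis gives, for any single vertex $v$ and any radius $r$, a padding guarantee where the probability that $B(v,r)$ is cut by the petal decomposition at scale $\Delta$ is $O(r\log\log n / \Delta)$; this is precisely where the $\log\log n$ overhead comes from compared to the non-spanning case. Building the ultrametric/tree by iterating the petal decomposition on each cluster and connecting the roots by the spanning-tree edges produced along the way gives a spanning tree $T \subseteq G$, and for every scale $i$ we create a copy of $v$ inside each neighboring cluster that $B(v,\epsilon \cdot 2^i / \log\log n)$ intersects. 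Summing over $\log n$ scales then gives expected clan size $1+\epsilon$, and the scale-by-scale argument gives distortion $O(\log n \log\log n / \epsilon)$ for the first statement, and $O(k\log\log n)$ with expected clan size $O(n^{1/k})$ for the second (by tuning the scale spacing as in the $k$-parameter regime of \Cref{thm:ClanUltrametric}).

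The hardest step is verifying the dominating and distortion guarantees in the spanning setting: when we add copies of a vertex into neighboring clusters, we must argue that distances in the resulting spanning tree $T$ are still $\ge d_G$ between chiefs and copies, and simultaneously that for any pair $(x,y)$ at distance $2^i$, \emph{some} copy of $y$ ends up in the smallest-scale cluster containing $\chi(x)$. The domination is immediate since $T$ is a subgraph of $G$. The upper bound on distortion requires a careful coupling between the petal-decomposition padding event and the copy-insertion rule: if $y$'s ball at the appropriate scale is padded, then a natural copy lands next to $\chi(x)$ with no distortion penalty; if it is not padded, the copy we created in the cluster of $\chi(x)$ serves the role. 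This mirrors the clan-embedding analysis of \Cref{thm:ClanUltrametric} but now over petal decompositions; the bookkeeping on conditional probabilities (petals are not mutually independent across scales) is what will require the most care.

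Finally, to turn the existential distribution into the form claimed, I would invoke the same minimax/LP duality argument used in \Cref{thm:ClanUltrametric}: set up a two-player zero-sum game where the embedding player picks a random spanning clan embedding and the adversary picks a pair $(x,y)$ or a vertex $v$, with payoff capturing simultaneously the distortion guarantee and the clan-size bound. The existence of a good mixed strategy for the embedding player follows from the per-scale analysis above applied to any fixed worst-case query. Unlike \Cref{thm:ClanUltrametric}, the statement here does not ask for polynomial support, so the MWU discretization step can be skipped, and any distribution witnessing the minimax value suffices.
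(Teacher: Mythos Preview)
Your high-level instinct to use petal decomposition is right, and you correctly identify that the $\log\log n$ overhead stems from it. However, your proposed argument diverges from the paper in a way that creates a genuine gap, and it also misdescribes the proof of \Cref{thm:ClanUltrametric} that you are trying to mirror.

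First, the paper does \emph{not} prove \Cref{thm:ClanUltrametric} via FRT-style random partitions with a padding analysis. It proves a deterministic ``distributional'' lemma: given any $(\ge 1)$-measure $\mu$, a single clan embedding is built by deterministic ball-growing (\Cref{clm:ClanTreePartition}) so that $\sum_x \mu(x)|f(x)|\le \mu(X)^{1+1/k}$; the distribution over embeddings is then obtained via minimax. The spanning version follows the same template: \Cref{lem:SpanningClanTreeMeasure} gives, for any measure $\mu$, a \emph{deterministic} spanning clan embedding with $\sum_v \mu(v)|f(v)|\le \mu(V)^{1+1/k}$, built by running petal decomposition with a carefully chosen radius at each carving step (the \texttt{create-petal} procedure, which performs a two-stage region-growing: first an interval selection \`a la \cite{ACEFN20} costing the $\log\log$ factor, then a geometric radius choice inside that interval). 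Minimax is applied only afterward. So in the paper's logic, the randomness lives entirely in the minimax step, not in the decomposition.

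Your proposal instead runs a randomized petal decomposition and tries to bound $\mathbb{E}[|f(v)|]$ directly. Two concrete problems arise. (i) The Abraham--Neiman padding statement you invoke bounds the \emph{probability} that $B(v,r)$ is cut at a given scale; it does not bound the \emph{expected number of petals} that $B(v,r)$ meets. In petal decomposition, many petals are carved from the same cluster sequentially, and $v$'s ball can straddle several of those boundaries at one level, each straddle producing an additional copy. Controlling this is exactly what the measure-based region-growing in \texttt{create-petal} accomplishes (via the inductive inequality $\mu(\widetilde X_1)^{1+1/k}+\mu(Y_1)^{1+1/k}\le \mu(X)^{1+1/k}$), and your sketch has no replacement for it. (ii) Even granting one extra copy per cut, your arithmetic does not close: with $r=\epsilon\Delta/\log\log n$ the per-scale cut probability is $O(\epsilon)$, and summing over the $\Theta(\log n)$ levels of the recursion yields $O(\epsilon\log n)$ expected copies, not $1+\epsilon$. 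Finally, the minimax step you append is redundant in your framing (a direct randomized construction with per-vertex expectation needs no duality) and is, in the paper, doing the opposite job: converting a per-measure deterministic guarantee into a per-vertex randomized one.
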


\paragraph{Clan embedding of minor-free graphs into bounded treewidth graphs}
As \cite{Bou85} and \cite{FRT04} are tight, a natural question arises: by embedding from a simpler space (than general $n$-point metric space) into a richer space (than trees), could the distortion be reduced?
The family of low-treewidth graphs is an excellent candidate for a target space: it is a much more expressive target space than trees, while many hard problems remain tractable.
Unfortunately, by the work of Chakrabarti \etal \cite{CJLV08} (see also \cite{CG04}), there are $n$-vertex planar graphs such that every (stochastic) embedding into $o(\sqrt{n})$-treewidth graphs must incur expected multiplicative distortion $\Omega(\log n)$.
Bypassing this roadblock,
Fox-Epstein \etal \cite{FKS19} (improving over \cite{EKM14}), showed
how to embed planar metrics into bounded treewidth graphs while incurring only a small \emph{additive} distortion.
Specifically,  given a planar graph $G$ and a parameter $\epsilon$, they constructed a deterministic dominating embedding $f$ into a graph $H$ of treewidth $\poly(\frac1\eps)$, such that $\forall u,v\in G$, $d_H(f(u),f(v))\le d_G(u,v)+\eps D$, where $D$ is the diameter of $G$. 
While $\eps D$ looks like a crude additive bound, it suffices to obtain approximation schemes for several classic problems: $k$-center, vehicle routing, metric $\rho$-dominating set, and metric $\rho$-independent set.

Following the success in planar graphs, Cohen-Addad \etal  \cite{CFKL20} wanted to generalize to minor-free graphs.
Unfortunately, they showed that already obtaining additive distortion $\frac{1}{20}D$ for $K_6$-minor-free graphs requires  the host graph to have treewidth $\Omega(\sqrt{n})$.
Inspired by the case of trees, \cite{CFKL20} bypass this barrier by constructing a stochastic embedding from $K_r$-minor-free $n$-vertex graphs into a distribution over treewidth-$O_r(\frac{\log n}{\eps^2})$ graphs with expected additive distortion $\eps D$, \footnote{$O_r$ hides some function depending only on $r$. That is, there is some function $\chi:\N\rightarrow\N$ such that $O_r(x)\le \chi(r)\cdot x$.} 
that is 
$\forall u,v\in G$, $\mathbb{E}_{(f,H)\sim\mathcal{D}}[d_H(f(u),f(v))]\le d_G(u,v)+\eps D$.
Similar to the case in planar graphs, Cohen-Addad \etal \cite{CFKL20} used their embedding to construct an approximation scheme  for the capacitated vehicle routing problem  in $K_r$-minor-free graphs. 
However, due to the stochastic nature of the embedding, it was not strong enough to imply any results for the metric $\rho$-dominating/independent problems in minor-free graphs, which, prior to our work, remain wide open.

In this paper, similar to the case of trees, we construct Ramsey-type and clan embedding analogs to the stochastic embedding of \cite{CFKL20}.
Our Ramsey-type embedding bypasses the lower bound of $\Omega(\sqrt{n})$ from \cite{CFKL20} while guaranteeing a worst-case distortion (for a large random subset of vertices).
As an application, we obtain a bicriteria quasi-polynomial time approximation scheme (QPTAS) $^{\ref{foot:approximationSchemes}}$ for the metric $\rho$-independent set problem in minor-free graphs (see \Cref{subsec:Becker}). 

\begin{restatable}[Ramsey-type embedding for minor-free graphs]{theorem}{Ramsey}
	\label{thm:Ramsey-minor-Free}
	Given an $n$-vertex $K_r$-minor-free graph $G=(V,E,w)$ with diameter $D$ and parameters ${\eps\in(0,\frac14)}$, $\delta\in(0,1)$, there is a distribution over dominating embeddings $g:G\rightarrow H$ into graphs of treewidth $O_{r}(\frac{\log^2 n}{\eps\delta})$, such that there is a subset $M\subseteq V$ of vertices for which the following claims hold:
	\begin{enumerate}
		\item For every $u\in V$, $\Pr[u\in M]\ge 1-\delta$.
		\item For every $u\in M$ and $v\in V$, $d_H(g(u),g(v))\le d_G(u,v)+\eps D$.
	\end{enumerate}
\end{restatable}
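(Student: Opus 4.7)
The plan is to lift the stochastic embedding of Cohen-Addad, Filtser, Klein, and Le (\cite{CFKL20}) into a Ramsey-type guarantee by requiring that satisfied vertices avoid every bad random event of the recursive construction, and paying for this strengthening with a larger treewidth. Conceptually this mirrors how Mendel--Naor Ramsey trees strengthen \cite{FRT04}, but adapted to the recursive shortest-path separator decomposition underlying the minor-free embedding.

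\textbf{Step 1 (Recall the stochastic construction).} The CFKL20 embedding recursively decomposes a $K_r$-minor-free graph via $O_r(1)$ shortest-path separators per level (from the minor-free structure theorem), over a hierarchy of depth $O(\log n)$. On each separator the construction places \emph{portals} at spacing roughly $\epsilon D/\log n$, and the embedding $H$ routes across separators through these portals. The only randomness is a uniform shift used to position the portals, and over this shift the expected additive error contributed by each level is $O(\epsilon D/\log n)$, telescoping to $\epsilon D$ overall. The treewidth of $H$ scales with the number of portals per separator, times a function of $r$ and the recursion depth.

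\textbf{Step 2 (From expectation to a per-vertex good event).} Fix a vertex $u$. For $u$ to incur more than its share of additive error at some level, the random shift must place portals inconveniently relative to some shortest path out of $u$ that crosses a separator at that level. Using the uniform-shift argument of \cite{CFKL20} from the reverse direction, the probability of this bad event is proportional to the ratio between the critical window (governed by the target per-level error) and the portal spacing. By shrinking the spacing from $\epsilon D/\log n$ to $\epsilon\delta D/\log^2 n$, I drive the per-level failure probability for any fixed $u$ down to $\delta/\log n$, while still guaranteeing per-level additive error at most $O(\epsilon D/\log n)$ when the event does not occur. Crucially, the choice of whether $u$ fails depends only on the geometry near $u$'s shortest paths, not on the identity of the partner $v$, so a single good event handles all $v$ simultaneously.

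\textbf{Step 3 (Defining $M$ and verifying the two properties).} Let $M$ consist of all vertices $u$ whose good event holds at every level of the $O(\log n)$-level hierarchy. A union bound gives $\Pr[u \in M] \ge 1-\delta$, establishing property~(1). For $u\in M$ and any $v\in V$, I trace the shortest path from $u$ to $v$ through the hierarchy: at each level it either stays in a single piece, or crosses a separator where $u$'s good event guarantees a portal within $O(\epsilon D/\log n)$ of the path. Summing over the $O(\log n)$ levels yields the additive error $\epsilon D$ required by property~(2); dominance is inherited from the CFKL20 construction since routes through portals only lengthen distances. Finally, the inflated portal count multiplies the treewidth by $\log n/\delta$, upgrading it from $O_r(\log n/\epsilon^2)$ to $O_r(\log^2 n/(\epsilon\delta))$, matching the theorem.

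The main obstacle is making the per-level good event genuinely ``universal in $v$'': one must show that the padding of $u$ away from the random portal grid, at every scale of the decomposition, simultaneously controls the approximation of every crossing shortest path out of $u$, not just a fixed partner. A secondary subtlety is that shortest-path separators in minor-free graphs have a more delicate structure than in the planar case, so the uniform-shift probabilistic lemma from \cite{CFKL20} needs to be re-examined pointwise rather than in expectation; ensuring that dominance is preserved under the finer portal grid, so that $H$ still majorises $G$, also requires care.
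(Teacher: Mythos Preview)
Your high-level strategy---tighten the per-level failure probability to $O(\delta/\log n)$ and union-bound over $O(\log n)$ recursion levels---matches the paper. But your model of where the randomness in \cite{CFKL20} lives is wrong, and so your implementation adjusts a parameter that does not control the relevant random event.

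The CFKL20 embedding is built through the Robertson--Seymour structure theorem: nearly $h$-embeddable pieces glued by clique-sums. For surface-embedded graphs with vortices but \emph{no apices}, the embedding is deterministic (\Cref{lm:embed-genus-vortex}); there is no random portal shift on shortest-path separators to refine. Randomness enters only when handling apices: after deleting the apex set, the remaining graph may have unbounded diameter, so it is partitioned into annular layers of width $O(D/\delta)$ around a root, with a uniformly random offset $\sigma$ choosing the layer boundaries. A vertex joins $M$ iff it is $2D$-deep inside its layer, which fails with probability at most $\delta$; on each layer the deterministic embedding is then applied with parameter $\Theta(\epsilon\delta)$, and apices are added back to every bag. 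This is \Cref{lem:Ramsey-almost-embeddable}. The clique-sum step recurses on a centroid decomposition of the clique-sum tree $\mathbb{T}$ (depth $O(\log n)$), and a vertex lies in $M$ only if both it and the joint clique $Q_i$ connecting its subtree to the central piece succeed at every level---hence the rescaling $\delta'=\Theta(\delta/\log n)$ and the extra $\log n$ factor in the treewidth. Your proposal never touches apices or clique-sums, and the ``universal in $v$'' worry you flag is resolved in the paper not by a portal-padding argument but by the simpler fact that if $u$ is $2D$-padded in its layer, the entire shortest $u$--$v$ path (length $\le D$) stays in that layer, so the deterministic black box handles the pair.
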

By setting $\delta=\frac12$ and repeating $\log n$ times, a straightforward corollary is the following.
\begin{corollary}\label{cor:ramseyManyEmbeddings}
	Given a $K_r$-minor-free $n$-vertex graph $G=(V,E,w)$ with diameter $D$ and parameter $\eps\in(0,\frac14)$, there are $\log n$ dominating embeddings $g_1,\dots,g_{\log n}$ into graphs of treewidth $O_{r}(\frac{\log^2 n}{\eps})$, such that for every 
	vertex $v$, there is some embedding $g_{i_v}$, such that
	$$\forall u\in V,\qquad d_{H_{i_v}}(g_{i_v}(u),g_{i_v}(v))\le d_G(u,v)+\eps D~.$$
\end{corollary}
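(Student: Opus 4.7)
The plan is a routine probabilistic amplification of \Cref{thm:Ramsey-minor-Free} via independent sampling and a union bound. First I would instantiate \Cref{thm:Ramsey-minor-Free} with the parameter $\delta=\frac12$. This produces a distribution $\mathcal{D}$ over dominating embeddings $g:G\to H$ into graphs of treewidth $O_r\!\left(\tfrac{\log^2 n}{\eps\cdot 1/2}\right)=O_r\!\left(\tfrac{\log^2 n}{\eps}\right)$, matching the bound claimed in the corollary, and such that every vertex $u\in V$ lies in the associated ``satisfied'' set $M$ with probability at least $\frac12$, while the additive distortion guarantee $d_H(g(u),g(v))\le d_G(u,v)+\eps D$ holds deterministically for all $u\in M$, $v\in V$.

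Next I would draw $N$ independent samples $(g_1,H_1,M_1),\dots,(g_N,H_N,M_N)$ from $\mathcal{D}$ with $N=\Theta(\log n)$ (concretely, $N\ge 2\log_2 n+1$). Fix a vertex $v\in V$: since the samples are independent, the probability that $v\notin M_i$ for all $i\in[N]$ is at most $(\tfrac12)^N\le \tfrac{1}{2n}$. Taking a union bound over the $n$ vertices of $G$, the probability that some vertex fails to be satisfied by at least one of the $N$ embeddings is strictly less than $1$. Hence there exists a deterministic choice of $g_1,\dots,g_N$ such that for every $v\in V$ there is an index $i_v\in[N]$ with $v\in M_{i_v}$.

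Fixing such a tuple of embeddings, the Ramsey guarantee of \Cref{thm:Ramsey-minor-Free} applied to the embedding $g_{i_v}$ and the vertex $v\in M_{i_v}$ yields, for every $u\in V$,
\[
d_{H_{i_v}}(g_{i_v}(u),g_{i_v}(v))\le d_G(u,v)+\eps D,
\]
which is exactly the statement of the corollary. Each $H_i$ has treewidth $O_r\!\left(\tfrac{\log^2 n}{\eps}\right)$ by the above instantiation, and the number of embeddings is $N=O(\log n)$, as required.

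There is no real obstacle here beyond bookkeeping: the only thing to check is that setting $\delta=\frac12$ leaves the treewidth bound at the claimed $O_r(\log^2 n/\eps)$, which it does because $\delta$ enters \Cref{thm:Ramsey-minor-Free} only through the factor $1/\delta$ in the treewidth, and that $\Theta(\log n)$ independent repetitions drive the failure probability per vertex below $1/n$, enabling the union bound.
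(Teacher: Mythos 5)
Your proof is correct and follows exactly the route the paper intends: the paper dispenses with this corollary in one sentence ("By setting $\delta=\frac12$ and repeating $\log n$ times"), and you have simply filled in the independent-sampling-plus-union-bound details. The only (harmless) difference is that you take $N\ge 2\log_2 n+1$ samples to make the union bound strictly work, whereas the statement claims exactly $\log n$ embeddings; this is a constant-factor bookkeeping matter one could also fix by taking $\delta$ slightly below $\tfrac12$.
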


While Ramsey-type embedding is sufficient for the metric $\rho$-independent set problem (as we can restrict our search to  independent sets in $M$), we cannot use it for the metric $\rho$-dominating set problem (as every good solution might contain vertices outside $M$). To resolve this issue, we construct a clan embedding of minor-free graphs into bounded treewidth graphs. As we have a worst-case distortion guarantee for all vertex pairs, we obtain a QPTAS $^{\ref{foot:approximationSchemes}}$ for the metric $\rho$-dominating set problem in minor-free graphs (see \Cref{subsec:Becker}). 
\begin{restatable}[Clan embedding for minor-free graphs]{theorem}{Clan}\label{thm:Clan-Embedding}
Given a $K_r$-minor-free $n$-vertex graph $G=(V,E,w)$ of diameter $D$  and parameters $\eps\in(0,\frac14)$, $\delta\in(0,1)$, there is a distribution $\mathcal{D}$ over clan embeddings $(f,\chi)$ with additive distortion $\eps D$ into graphs of treewidth $O_{r}(\frac{\log^2 n}{\delta\eps})$  such that for every $v\in V$, $\mathbb{E}[|f(v)|]\le 1+\delta$.
\end{restatable}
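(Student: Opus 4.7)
The plan is to prove \Cref{thm:Clan-Embedding} by a minimax duality argument paralleling the proof of \Cref{thm:ClanUltrametric}, combined with \Cref{thm:Ramsey-minor-Free}. Let $\mathcal{F}$ denote the family of deterministic clan embeddings of $G$ into graphs of treewidth $\tau = O_r(\log^2 n /(\delta\eps))$ with worst-case additive distortion $\eps D$. Consider the zero-sum game in which Player $1$ picks $(f,\chi) \in \mathcal{F}$, Player $2$ picks a vertex $v \in V$, and the payoff is $|f(v)|$. By von Neumann's minimax theorem, a distribution $\mathcal{D}$ over $\mathcal{F}$ with $\max_v \mathbb{E}_{\mathcal{D}}[|f(v)|] \le 1+\delta$ exists iff for every probability distribution $\mu$ on $V$ there is some $(f,\chi) \in \mathcal{F}$ with $\sum_v \mu_v|f(v)| \le 1+\delta$, so it suffices to construct such a weighted clan embedding for an arbitrary fixed $\mu$.

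Given $\mu$, I would invoke \Cref{thm:Ramsey-minor-Free} with parameter $\delta' = \delta/C$ for a small universal constant $C$. By linearity of expectation $\mathbb{E}[\mu(V\setminus M)] \le \delta'$, so by averaging there is a specific Ramsey embedding $g:V\to H$ with subset $M\subseteq V$ such that $\mu(V\setminus M) \le \delta'$, $H$ has treewidth $\tau$, and $d_H(g(u),g(v)) \le d_G(u,v)+\eps D$ whenever $u\in M$. For every $v\in M$ I set $\chi(v) := g(v)$ and $f(v) := \{g(v)\}$; these contribute exactly $1$ each to the weighted clan size and automatically satisfy the clan distortion condition for every pair $(x,y)$ with $y \in M$, by applying the Ramsey guarantee with $u=y$.

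The remaining work is to assign $\chi(v)$ and $f(v)$ for the unsatisfied set $V\setminus M$ (of $\mu$-weight at most $\delta'$). The plan is to apply the theorem recursively to the $K_r$-minor-free subgraph induced by $V\setminus M$ (which still has diameter at most $D$), obtaining a clan embedding into a bounded-treewidth graph $H'$, and set the combined target to $H \sqcup H'$ --- disjoint union preserves treewidth. For $v \in V\setminus M$, the chief $\chi(v)$ and its recursive clan copies lie in $H'$; we additionally include $g(v) \in H$ in $f(v)$ to cover chiefs of satisfied vertices via the original Ramsey guarantee. Since the $\mu$-mass of the unsatisfied set shrinks by a factor $\delta'$ at each recursive level, the extra copies per vertex telescope to a bounded constant $K$, yielding $\sum_v\mu_v|f(v)| \le 1 + K\delta' \le 1+\delta$ once $C \ge K$.

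The main obstacle I anticipate is the recursion step: the shortest-path metric of $G[V\setminus M]$ need not match $d_G$ restricted to $V\setminus M$, so a naive recursion may blow up the additive distortion when both endpoints of a pair lie outside $M$. This can likely be fixed by replacing the induced subgraph with a suitably contracted or shortcut minor-free graph that preserves the inherited metric up to additive $\eps D$. An alternative I would also try is to drop the recursion in favor of taking the disjoint union of a few independently sampled Ramsey embeddings, declaring the chief of each unsatisfied vertex in whichever sample first satisfies it, and arguing by direct probabilistic accounting that the expected number of extra copies per vertex is $O(\delta)$ rather than $\mathrm{polylog}(n)$; the challenge here is to prevent ``chief-coverage'' across disjoint components from forcing a copy in every component for every vertex.
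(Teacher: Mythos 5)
Your reduction via minimax to a distributional statement is fine in principle (it mirrors the proof of \Cref{thm:ClanUltrametric}), but the construction you propose for a fixed measure $\mu$ has a fatal gap that is not repairable by the fixes you sketch. The clan distortion requirement is $\min_{y'\in f(y)}d(y',\chi(x))\le d_G(x,y)+\eps D$ for \emph{every} pair $(x,y)$. Take $x\in V\setminus M$ and $y\in M$. In your construction $\chi(x)$ lives in the recursive target $H'$, while $f(y)=\{g(y)\}$ lives entirely in $H$, and $H\sqcup H'$ is a disjoint union; so $\min_{y'\in f(y)}d(y',\chi(x))=\infty$. You flag exactly this ``chief-coverage across disjoint components'' problem, but only for your alternative (multiple independent Ramsey samples); it equally kills the primary recursive scheme. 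The only way to cover chiefs in $H'$ is to give every $y\in V$ (not just $y\in V\setminus M$) a copy in $H'$, but then every vertex acquires a copy at every level of the recursion and the clan-size accounting collapses to $2^{\Theta(\text{depth})}$ per vertex. This asymmetry is precisely why a Ramsey-type embedding does not yield a clan embedding as a black box: the paper even notes the converse direction only gives a \emph{weak} Ramsey guarantee (\Cref{remark:ClanImpliesWeakRamsey}). The secondary issue you raise (the induced metric on $G[V\setminus M]$ not matching $d_G$) is also real but is the lesser of the two problems.

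For comparison, the paper does not use minimax or the Ramsey theorem here at all. It constructs the clan embedding directly and recursively along the Robertson--Seymour clique-sum decomposition: \Cref{lem:Clan-AlmostEmbedable} handles each nearly $h$-embeddable piece by a randomly shifted layering in which vertices near a layer boundary receive \emph{two} groups of copies $\chi(v),\psi(v)$ (rather than being dropped, as in the Ramsey version), and \Cref{lem:Clan-enhanced-minor-free} combines the pieces by duplicating an entire child embedding $H_i$ whenever the joint clique $Q_i$ straddles a boundary. The per-vertex bound $\mathbb{E}[|f(v)|]\le(1+\delta')^{\log 2n}$ then comes from multiplying the $(1+\delta')$ duplication factors along the $O(\log n)$-depth clique-sum tree, with $\delta'=\Theta(\delta/\log n)$. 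If you want to salvage your approach, you would essentially have to rebuild this machinery: the unsatisfied vertices must be duplicated \emph{inside} the same connected host graph, attached so that copies of all other vertices remain reachable at the correct distances.
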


\subsection{Applications}
\subsubsection{Compact Routing Scheme}\label{subsec:CompactRoutingSchemeIntro}
A \emph{routing scheme} in a network is a mechanism that allows packets to be delivered from any node to any other node. The network is represented as a weighted undirected graph, and each node can forward incoming data by using local information stored at the node,  called a \emph{routing table}, and the (short) packet's \emph{header}. The routing scheme has two main phases: in the preprocessing phase, each node is assigned a routing table and a short \emph{label}; in the routing phase, when a node receives a packet, it should make a local decision, based on its own routing table and the packet's header (which may contain the label of the destination, or a part of it), of where to send the packet.
The {\em stretch} of a routing scheme is the worst-case ratio between the length of a path on which a packet is routed to the shortest possible path.

Compact routing schemes were extensively studied \cite{PU89,ABLP90,AP92,Cowen01,EGP03,TZ01b,C13,ACEFN20}, starting with Peleg and Upfal \cite{PU89}. Using $\tilde{O}(n^{\frac1k})$ table size, Awerbuch et al. \cite{ABLP90} obtained stretch $O(k^29^k)$, which was improved later to  $O(k^2)$ by Awerbuch and Peleg  \cite{AP92}. In their  celebrated compact routing scheme,  Thorup and Zwick \cite{TZ01b} obtained stretch $4k-5$ while using $O(k\cdot n^{1/k})$ size tables and labels of size $O(k\log n)$. \footnote{Unless stated otherwise, we measure space in machine words, each word is $\Theta(\log n)$ bits.\label{foot:words}} 
The stretch was improved to roughly $3.68k$ by Chechik \cite{C13}, using a scheme similar to \cite{TZ01b} (while keeping all other parameters intact). 
Recently, Abraham \etal \cite{ACEFN20} devise a  compact routing scheme (using Ramsey spanning trees) with labels of size $O(\log n)$, tables of size $O(k\cdot n^{1/k})$, and stretch $O(k\log\log n)$. 

In all previous works, the guarantees on the table size are worst case. That is, the table size of every node in the network is bounded by a certain parameter. Here our guarantee is only in expectation. 
Note that such an expected guarantee makes a lot of sense for a central planner constructing a routing scheme for a network where the goal is to minimize the total amount of resources rather than the maximal amount of resources in a single spot.
Even though previous works analyzed worst-case guarantees, if one tries to analyze their expected bounds per vertex, the guarantees will not be improved. Our contribution is the following:

\begin{table}[t]
	\floatbox[{\capbeside\thisfloatsetup{capbesideposition={left,top},capbesidewidth=8.1cm}}]{table}[\FBwidth]
	{\caption{\footnotesize The table compares various routing schemes for $n$-vertex graphs. In rows 1-4, we compare different schemes in their full generality, here $k$ is an integer parameter.
	In rows 5,6,8,10, we fix $k=\log n$, while in rows 7 and 9, we fix $k=\frac{\log n}{\log\log n}$. Note that our result in line 9 is superior to all previous results: it has reduced label size compared to lines 5-6, reduced table size compared to line 7, and reduced stretch compared to line 8.
	Our result in line 10 is the first to obtain a constant table size.\newline
	The sizes of the table and label are measured in words, each word is $O(\log n)$ bits.   
	The header size is asymptotically equal to the label size in all the compared routing schemes.
	The main caveat is that, while in all previous results the table size is analyzed w.r.t. a worst-case guarantee, we only provide bounds in expectation (marked by (*)).
	The label size (as well as the stretch) is a worst-case guarantee in our work as well.}\label{table:routing}}
	{
\scalebox{0.83}{\begin{tabular}{|l|l|l|l|l|}
	\hline\multicolumn{2}{|l|}{\textbf{Routing s.}} 
	& \textbf{Stretch}  	& \textbf{Label}     	& \textbf{Table}             	\\ \hline
	1.&\cite{TZ01b}    	& $4k-5$           		& $O(k\log n)$ 		 	& $O(k n^{1/k})$ 			\\ \hline		
	2.&\cite{C13}   	& $3.68k$          		& $O(k\log n)$ 		 	& $O(k n^{1/k})$  	\\ \hline
	3.&\cite{ACEFN20} 	& $O(k\log\log n)$ 		& $O(\log n)$  		 	& $O(k n^{1/k})$ 	\\ \hline
	4.&\theoremref{thm:route}	    & $O(k\log\log n)$ 		& $O(\log n)$  		 	& $O(n^{1/k})^{(*)}$      	\\ \hline\hline
	5.&\cite{TZ01b} 	& $O(\log n)$         	& $O(\log^2 n)$         & $O(\log n)$  				\\ \hline
	6.&\cite{C13}    	& $O(\log n)$         	& $O(\log^2 n)$         & $O(\log n)$  				\\ \hline
	7.&\cite{ACEFN20} 	& $O(\log n)$ 			& $O(\log n)$           & $O(\log^2 n)$      \\  \hline				
	8.&\cite{ACEFN20} 	& $\widetilde{O}(\log n)$ 	& $O(\log n)$           & $O(\log n)$      \\  \hline\hline		9.&\theoremref{thm:route}   	& $O(\log n)$ 			& $O(\log n)$           & $O(\log n)^{(*)}$  				\\ \hline			
	10.&\theoremref{thm:route}   	& $\widetilde{O}(\log n)$ 	& $O(\log n)$           & $\boldsymbol{O(1)}^{(*)}$  						\\ \hline	
	
\end{tabular}}	
}
\end{table}
	
\begin{restatable}[Compact routing scheme]{theorem}{Routing}\label{thm:route}
	Given a weighted graph $G=(V,E,w)$ on $n$ vertices and integer parameter $k> 1$, there is a compact routing scheme with stretch $O(k\log\log n)$ that has (worst-case) labels (and headers) of size $O(\log n)$, and the expected size of the routing table of each vertex is $O(n^{1/k})$.
\end{restatable}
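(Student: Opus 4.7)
The plan is to reduce the routing problem on $G$ to the routing problem on a single spanning tree $T$ via Theorem \ref{thm:ClanSpanningTree}, and then pay the $O(\log n)$-label cost only for the chief of each vertex. First I would sample one spanning clan embedding $(f,\chi)$ of $G$ into a tree $T$ from the distribution guaranteed by Theorem \ref{thm:ClanSpanningTree} with parameter $k$, obtaining multiplicative distortion $O(k\log\log n)$ and expected clan size $\mathbb{E}[|f(v)|]=O(n^{1/k})$ for every $v\in V$. On top of $T$ I would invoke a standard compact tree routing scheme (e.g., a Thorup--Zwick style scheme with DFS-interval labels) providing, for every tree-vertex $x$, a routing label $\ell_T(x)$ of $O(\log n)$ bits and a local routing datum $\rho_T(x)$ of $O(1)$ words sufficient to forward any packet whose header is the label of a destination along the tree path to that destination.

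From this I would define the routing scheme on $G$: the network label of $v$ is $\ell_T(\chi(v))$ (size $O(\log n)$, worst case), and the routing table at $v$ stores, for every copy $v'\in f(v)$, the local datum $\rho_T(v')$ together with the index telling $v$ which incident $G$-edge corresponds to which tree-edge incident to $v'$ (this is well defined because $(f,\chi)$ is spanning, so each tree-edge between $u'\in f(u)$ and $v'\in f(v)$ is the $G$-edge $\{u,v\}$). The expected size of this table is therefore $O(\mathbb{E}[|f(v)|])=O(n^{1/k})$ words. To route a packet from $u$ to $v$ with header $\ell_T(\chi(v))$, $u$ selects a copy $u^\star\in f(u)$ whose distance to $\chi(v)$ in $T$ realises (or approximately realises) $\min_{u'\in f(u)} d_T(u',\chi(v))$; by the clan distortion guarantee this copy satisfies $d_T(u^\star,\chi(v))\le O(k\log\log n)\cdot d_G(u,v)$. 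Then $u$ injects the packet into $T$ at $u^\star$ and lets the tree routing scheme carry it to $\chi(v)$; since $T$ is spanning every tree-step is a single $G$-edge step of identical weight, so the length of the traversed path in $G$ equals $d_T(u^\star,\chi(v))\le O(k\log\log n)\cdot d_G(u,v)$, yielding the claimed stretch. The header size remains $O(\log n)$ throughout.

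The main obstacle is enabling $u$ to identify a good starting copy $u^\star$ using only the $O(\log n)$-bit header of the destination, since general trees do not admit $O(\log n)$-bit \emph{distance} labels. I would handle this by augmenting $u$'s table with a succinct encoding of the Voronoi-type partition of $V(T)$ induced by $f(u)$: with $c=|f(u)|$ centres in $T$, the partition has exactly $c-1$ boundary edges, and with respect to the DFS order already embedded in $\ell_T(\cdot)$ it decomposes $V(T)$ into $O(c)$ contiguous DFS intervals, each labelled by the corresponding copy. Storing these $O(|f(u)|)$ intervals adds only $O(|f(u)|)$ words, keeps the expected table size at $O(n^{1/k})$, and lets $u$ retrieve $u^\star$ by a binary search on the DFS index extracted from $\ell_T(\chi(v))$. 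With this selection mechanism in place, the stretch, label, and expected table bounds combine to give exactly the guarantees of Theorem~\ref{thm:route}.
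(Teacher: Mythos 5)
Your proposal is correct and follows the paper's overall architecture: sample one spanning clan embedding $(f,\chi)$ into a tree $T$ from \Cref{thm:ClanSpanningTree}, give $v$ the tree-routing label of its chief $\chi(v)$, and store in $v$'s table the tree-routing data of all copies in $f(v)$, so that stretch reduces to $\min_{u'\in f(u)}d_T(u',\chi(v))\le O(k\log\log n)\cdot d_G(u,v)$ and the expected table size is $O(\mathbb{E}[|f(v)|])=O(n^{1/k})$. The one place where you genuinely diverge is the mechanism by which the source picks its starting copy $u^\star$. The paper augments each copy's table entry with a $(1+\eps)$-approximate distance label for trees (Freedman et al., \Cref{thm:tree-label}, $O(\log\frac1\eps)$ words), evaluates the estimator against the chief's distance label carried in the destination's label, and accepts a factor-$2$ loss in the selection; this requires the aspect-ratio caveat of that labeling scheme. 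You instead store the Voronoi partition of $V(T)$ induced by the centers $f(u)$ as $O(|f(u)|)$ DFS intervals and look up the destination's DFS index. This is a valid alternative: the cells are connected subtrees obtained by deleting $|f(u)|-1$ edges, and the resulting laminar structure of subtree intervals yields at most $2|f(u)|-1$ maximal contiguous pieces in first-visit DFS order, so the space overhead is $O(|f(u)|)$ words and the selection is \emph{exact} rather than $2$-approximate, with no aspect-ratio assumption. You do need the destination label to expose a DFS index compatible with your intervals, which is harmless since the Thorup--Zwick tree labels are DFS-based (or you can append the index explicitly within the $O(\log n)$ budget).

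One detail you gloss over: after injection at $u^\star$, an intermediate vertex $z$ that receives the packet must know \emph{which} copy $z'\in f(z)$ the packet currently occupies in $T$, since a single $G$-edge $\{y,z\}$ may realize several tree edges between different copies of $y$ and $z$, and the correct forwarding datum $\rho_T(\cdot)$ depends on the copy. The paper resolves this by having each hop write the name of the next tree vertex into the header; you should do the same (it fits in the $O(\log n)$ header). With that addition your argument is complete.
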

See \Cref{table:routing} for comparison of our and previous results.
We mainly focus on the very compact regime where all the parameters are at most poly-logarithmic.
A key result in \cite{TZ01b} is a stretch $1$ routing scheme for the special case of a tree, where a routing table has constant size, and logarithmic label size (see \Cref{thm:tree-routh}).
All the previous works are based on constructing a collection of trees. Specifically, in \cite{TZ01b,C13}, there are $n$ trees, where each vertex belongs to $O(\log n)$ trees, and for each pair of nodes, there is a tree that guarantees a small stretch. Routing is then done in that tree. 
This is the reason for their large label size of $\log^2n$ (as a label consists of $\log n$ labels in different trees).
\cite{ACEFN20} constructs $\log n$ (Ramsey spanning) trees in total, where each vertex $v$ has a home tree $T_v$, such that $v$ enjoys a small stretch w.r.t. any other vertex in $T_v$. The label then consists of the name of $T_v$ and the label of $v$ in $T_v$. However, the routing table is still somewhat large as one needs to store the routing information in $\log n$ different trees.

In contrast, our construction is based on the spanning clan embedding $(f,\chi)$ of \Cref{thm:ClanSpanningTree} into a single tree $T$, where the clan of each vertex consists of $O(1)$ copies (in expectation). The label of each vertex $v$ is simply the label of $\chi(v)$ in $T$. The routing table of $v$ contains the routing tables of all the corresponding copies in $f(v)$.

\subsubsection{Metric Baker Problems in Minor-free graphs}\label{subsec:Becker}
Baker \cite{Baker94} introduced a ``layering'' technique in order to  construct efficient polynomial approximation schemes (EPTAS) \footnote{A polynomial time approximation scheme (PTAS) is an algorithm that for any fixed $\eps\in(0,1)$, provides a $(1+\eps)$-approximation in polynomial time. A PTAS is an \emph{efficient} polynomial time approximation scheme (EPTAS) if running time is of the form $n^{O(1)}\cdot f(\eps)$ for some function $f(.)$ depending on $\epsilon$ only. A quasi-polynomial time approximation scheme (QPTAS) has running time $2^{\cdot\polylog(n)}$ for every fixed $\epsilon$.\label{foot:approximationSchemes}} for many ``local'' problems in planar graphs such as minimum-measure \emph{dominating set} and maximum-measure \emph{independent set}.
The key observation is that planar graphs have the ``bounded local treewidth'' property. Baker showed that for some problems solvable on bounded treewidth graphs, one can construct efficient approximation schemes for graphs possessing the bounded local treewidth property.
This approach was generalized by Demaine \etal \cite{DHK05} to minor-free graphs.

Eisenstat \etal \cite{EKM14} proposed metric generalizations of Baker problems: minimum measure $\rho$\emph{-dominating set}, and maximum measure $\rho$\emph{-independent set}. 
Given a metric space $(X,d_X)$, a $\rho$-independent set is a subset $S\subseteq X$ of points such that for every $x,y\in S$, $d_X(x,y)> \rho$. 
Similarly, a $\rho$-dominating set is a subset $S\subseteq X$ such that for every $x\in X$, there exists $y\in S$, such that $d_X(x,y)\le \rho$.
Given a measure $\mu:X\rightarrow \R_+$, the goal of the metric $\rho$-dominating (resp. independent) set problem is to find a $\rho$-dominating (resp. independent) set of minimum (resp. maximum) measure.
It is often the case that metric Baker problems are much easier under the uniform measure.
Sometimes, in addition, we are given a set of terminals ${\cal K}\subseteq X$, and required only that the terminals will be dominated ($\forall x\in {\cal K},~\exists y\in S$ s.t.  $d_X(x,y)\ge \rho$).
Note that the metric generalization of Becker problems in structured graphs (e.g. planar) is considerably harder than the non-metric problems. This is because the graph describing dominance/independence relations no longer posses the original structure (e.g. planarity).
 
An approximation scheme for the $\rho$-dominating (resp. independent) set problem returns a $\rho$-dominating (resp. independent) set $S$ such that for every $\rho$-dominating (resp. independent) set $S'$ it holds that $\mu(S)\le(1+\eps)\mu(S')$ (resp. $\mu(S)\ge(1-\eps)\mu(S')$).
A bicriteria approximation scheme for the $\rho$-dominating (resp. independent) set problem returns a $(1+\eps)\rho$-dominating (resp. $(1-\eps)\rho$-independent) set $S$ such that for every $\rho$-dominating (resp. independent) set $S'$ it holds that $\mu(S)\le(1+\eps)\mu(S')$ (resp. $\mu(S)\ge(1-\eps)\mu(S')$).

For unweighted graphs with treewidth $\tw$, Borradaile and Le \cite{BL16} provided an exact algorithm for the $\rho$-dominating set problem with $O((2\rho+1)^{\tw+1}n)$ running time (see also \cite{DFHT05}).
For general treewidth $\tw$ graphs, using dynamic programming technique, Katsikarelis \etal \cite{KLP19} designed a fixed parameter tractable (FPT) approximation algorithm for the metric $\rho$-dominating set problem with $(\tw/\eps)^{O(\tw)}\cdot\poly(n)$ runtime that returns a $(1+\eps)\rho$-dominating set $S$, such that for every $\rho$-dominating  set $S'$ it holds that $\mu(S)\le\mu(S')$. A similar result was also obtained for the  metric $\rho$-independent set problem \cite{KLP20}. 
In particular, for the very basic case of bounded treewidth graphs, no true approximation scheme (even with quasi-polynomial time) is known for these  problems.
Additional evidence was provided by Marx and Pilipczuk \cite{MP15} (see also \cite{FKS19}), who showed that the existence of EPTAS $^{\ref{foot:approximationSchemes}}$ for either  $\rho$-dominating/independent set problem in planar graphs would refute the exponential-time hypothesis (ETH).
Given this evidence, it is natural to settle for bicriteria approximation.

For unweighted planar graphs and constant $\rho$, there are linear time approximation schemes (not bicriteria) for the metric $\rho$-independent/dominating set problems \cite{EILM16,DFHT05}.
In weighted planar graphs, under the uniform measure, Marx and Pilipczuk \cite{MP15} gave exact $n^{O(\sqrt{k})}$ time solution to both metric $\rho$-dominating/isolated set problems, provided that the solution is guaranteed to be of size at most $k$.
Using their embedding of planar graphs into $\eps^{-O(1)}\log n$-treewidth graphs with additive distortion $\eps D$, Eisenstat \etal \cite{EKM14} provided a bicriteria PTAS $^{\ref{foot:approximationSchemes}}$ for both metric $\rho$-independent/dominating set problems in planar graphs.
Later, by constructing an improved embedding into $\eps^{-O(1)}$-treewidth graphs, Fox-Epstein \etal \cite{FKS19} obtained a  bicriteria EPTAS.$^{\ref{foot:approximationSchemes}}$

\begin{table}[t]
\scalebox{0.91}{\begin{tabular}{|l|l|l|l|l|}		
		\hline\multicolumn{2}{|l|}{\textbf{Reference}}			& \textbf{Family}        & \textbf{Result}                                              & \textbf{Technique}         \\ \hline
		1.&\cite{MP15}               								& planar                 & No EPTAS  under ETH                                          &                            \\ \hline
		2.&\cite{KLP19,KLP20}        								& treewidth			  	 & FPT with  approx   $(1+\eps)\rho$ 				& Dynamic programming        \\ \hline
		3.&\cite{EKM14}              								& planar                 & Bicriteria PTAS                                              & Deterministic embedding    \\ \hline
		4.&\cite{FKS19}              								& planar                 & Bicriteria EPTAS                                             & Deterministic embedding    \\ \hline
		5.&Theorems \ref{thm:CKM19}\&\ref{thm:Local-Sarch-Independent}	& minor-free             & PTAS (uniform measure)                                                       & Local search               \\ \hline
		6.&Theorems \ref{thm:isolated}\&\ref{thm:dominatingSet}     & minor-free             & Bicriteria QPTAS                                             & Clan/Ramsey type embedding \\ \hline
	\end{tabular}}
\caption{\small The table compares different approximation schemes for metric Becker problems on weighted graphs. All compared results apply to both metric $\rho$-dominating/independent set problems. All the results (other than in line 5) apply to the general measure case.\label{tab:Becker}}
\end{table}

Finally, we turn to the most challenging case of minor-free graphs. 
For the restricted uniform measure case, using local search (similarly to \cite{CKM19}), we construct PTAS for both metric $\rho$-dominating/independent set problems. See \Cref{thm:CKM19,thm:Local-Sarch-Independent} in \Cref{sec:LocalSearch} for details. However, the local search approach seems to be hopeless for general measures.
Alternately, one can try the metric embedding approach (for which bicriteria approximation is unavoidable). Unfortunately, unlike the classic embeddings in \cite{EKM14,FKS19}, Cohen-Addad \etal \cite{CFKL20} provided a stochastic embedding with an \emph{expected distortion} guarantee.
Such a stochastic guarantee is not strong enough to construct approximation schemes for the metric $\rho$-independent/dominating set problems.
Using our clan and Ramsey-type embeddings, we are able to provide the first bicriteria QPTAS $^{\ref{foot:approximationSchemes}}$ for these problems. See \Cref{tab:Becker} for a summary of previous and current results.

\begin{restatable}[Metric $\rho$-independent set]{theorem}{isolated}\label{thm:isolated}
	There is a bicriteria quasi-polynomial time approximation scheme (QPTAS) for the metric $\rho$-independent set problem in $K_r$-minor-free graphs.\\
	Specifically, given a weighted $n$-vertex $K_r$-minor-free graph $G=(V,E,w)$, measure $\mu:V\rightarrow \R_+$ and parameters $\eps\in(0,\frac14)$, $\rho>0$, in $2^{\tilde{O}_r(\frac{\log^2n}{\eps^2})}$ time, one can find a $(1-\eps)\rho$-independent set $S\subseteq V$ such that for every $\rho$-independent set $\tilde{S}$,
	$\mu(S)\ge (1-\eps)\mu(\tilde{S})$.
\end{restatable}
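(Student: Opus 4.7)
The plan is to combine the Ramsey-type embedding of \Cref{thm:Ramsey-minor-Free} with a shifted partitioning argument (in the spirit of Baker's layering) to reduce the metric $\rho$-independent set problem on a minor-free graph to solving $\rho$-independent set on a bounded-treewidth graph, where a standard dynamic program runs in $2^{O(\tw)}\cdot\poly(n)$ time. Since \Cref{thm:Ramsey-minor-Free} has additive distortion $\eps D$ while the independence scale is $\rho$, I first partition $V$ into pieces $P_{1},\dots,P_{\ell}$ (losing at most an $O(\eps)$-fraction of $\mu$-measure in expectation) such that (i) any two distinct pieces are at $G$-distance at least $\rho$ and (ii) each induced subgraph $G[P_{j}]$ is $K_{r}$-minor-free with diameter $O(\rho/\eps)$. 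Such a decomposition can be obtained by a random-shift BFS layering (slabs of BFS-width $\rho/\eps$ separated by annuli of width $\rho$) or, more robustly, by a strong-diameter padded decomposition of Klein--Plotkin--Rao type for $K_{r}$-minor-free graphs. Property (i) factorizes the $\rho$-independence constraint across pieces, and property (ii) makes the embedding's additive distortion small relative to $\rho$ within each piece.

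Next, for each piece $P_{j}$ I apply \Cref{thm:Ramsey-minor-Free} to $G[P_{j}]$ with parameters $\eps'=\Theta(\eps^{2})$ and $\delta=\eps$, obtaining a dominating embedding $g_{j}:G[P_{j}]\to H_{j}$ into a graph $H_{j}$ of treewidth $t=\tilde{O}_{r}(\log^{2}n/\eps^{2})$ together with a random subset $M_{j}\subseteq P_{j}$ with $\Pr[v\in M_{j}]\ge 1-\eps$, satisfying $d_{H_{j}}(g_{j}(u),g_{j}(v))\le d_{G}(u,v)+O(\eps\rho)$ for all $u\in M_{j}$, $v\in P_{j}$. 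On each $H_{j}$ a standard bounded-treewidth DP finds the maximum $\mu$-measure set $S'_{j}\subseteq g_{j}(M_{j})$ that is $\rho$-independent in $H_{j}$, in $2^{O(t)}\cdot\poly(n)=2^{\tilde{O}_{r}(\log^{2}n/\eps^{2})}$ time; I output $S=\bigcup_{j}g_{j}^{-1}(S'_{j})$. For correctness, if $u,v\in S$ lie in different pieces then $d_{G}(u,v)\ge\rho$ by (i); if they lie in the same piece $P_{j}$ then both are in $M_{j}$ and $d_{H_{j}}(g_{j}(u),g_{j}(v))>\rho$, so $d_{G}(u,v)\ge d_{H_{j}}(g_{j}(u),g_{j}(v))-O(\eps\rho)>(1-O(\eps))\rho$, yielding a $(1-\eps)\rho$-independent set after rescaling $\eps$. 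For the measure bound, restricting $\mathrm{OPT}$ to the undeleted pieces and then to each $M_{j}$ loses only an expected $O(\eps)$-fraction; since $g_{j}$ is dominating, $g_{j}(\mathrm{OPT}\cap P_{j}\cap M_{j})$ is $\rho$-independent in $H_{j}$, so the DP yields $\mu(S'_{j})\ge\mu(\mathrm{OPT}\cap P_{j}\cap M_{j})$, and summing over $j$ gives $\mu(S)\ge(1-O(\eps))\mu(\mathrm{OPT})$.

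The main obstacle is carrying out the decomposition of step 1 so that it simultaneously delivers inter-piece $G$-separation at least $\rho$, small diameter of each induced subgraph $G[P_{j}]$ (not merely small BFS-width inside some ambient slab), and expected measure loss $O(\eps)$. Pure BFS layering in the full graph provides separation and measure loss but does not control the diameter of the induced subgraph, since shortest paths between two piece-vertices may detour through the root or through deleted buffer vertices, which would inflate the diameter parameter fed into \Cref{thm:Ramsey-minor-Free} and blow up the additive distortion beyond $O(\eps\rho)$. The robust fix is to invoke a strong-diameter padded decomposition tailored to $K_{r}$-minor-free graphs, which produces pieces of strong diameter $O_{r}(\rho/\eps)$ with pairwise separation at least $\rho$ and expected measure loss $O(\eps)$; plugging these parameters into \Cref{thm:Ramsey-minor-Free} and the treewidth DP then yields the claimed running time and bicriteria approximation guarantee.
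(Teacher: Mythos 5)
Your overall architecture (reduce to bounded diameter, apply \Cref{thm:Ramsey-minor-Free}, run the treewidth DP, pull back) is sound, and your handling of the measure loss and of the optimality direction is correct: since $g_j$ dominates $d_{G[P_j]}\ge d_G$, the image of $\mathrm{OPT}\cap P_j\cap M_j$ is indeed $\rho$-independent in $H_j$. The gap is in the \emph{feasibility} direction, in the step ``$d_{G}(u,v)\ge d_{H_{j}}(g_{j}(u),g_{j}(v))-O(\eps\rho)$.'' The distortion guarantee of the embedding is with respect to the metric of the graph you embed, namely $d_{G[P_j]}$, so what you actually get is $d_{G[P_j]}(u,v)\ge d_{H_j}(g_j(u),g_j(v))-O(\eps\rho)$. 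But $d_{G[P_j]}(u,v)$ can be arbitrarily larger than $d_G(u,v)$: after you carve out the width-$\rho$ buffer annuli, two vertices $u,v\in P_j$ with $d_G(u,v)=\rho/2$ may have every short connecting path pass through deleted buffer vertices, so that $d_{G[P_j]}(u,v)\gg\rho$. The DP on $H_j$ will then happily keep both $u$ and $v$ (their $H_j$-distance is at least $d_{G[P_j]}(u,v)>\rho$), and the returned set fails to be $(1-\eps)\rho$-independent in $G$. Note this asymmetry is specific to independent set: enlarging distances makes the constraint \emph{easier} in the surrogate instance, so a certificate of independence in $G[P_j]$ does not transfer back to $G$.

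The fix is to make the decomposition locally distance-preserving at scale $\rho$: run the embedding on the full padded clusters $C_j$ (strong diameter $O_r(\rho/\eps)$), retain only vertices $u$ with $B_G(u,\rho)\subseteq C_j$ (zeroing the measure of the rest), and observe that for a retained $u$ and any $v$ with $d_G(u,v)\le\rho$ the entire shortest $G$-path lies in $C_j$, so $d_{G[C_j]}(u,v)=d_G(u,v)$ and the contrapositive argument goes through. With that repair your route works, though it costs an extra $1/\eps$ in the treewidth (you need $\eps'=\Theta(\eps^2)$ because the piece diameter is $\Theta(\rho/\eps)$), giving $2^{\tilde O_r(\log^2 n/\eps^3)}$ time rather than the claimed $2^{\tilde O_r(\log^2 n/\eps^2)}$. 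The paper sidesteps the decomposition entirely with a simpler device: add one dummy vertex at distance $\frac34\rho$ from everyone, which caps the diameter at $\frac32\rho$ without changing $\rho'$-independence for $\rho'<\frac32\rho$, and then a single application of \Cref{thm:Ramsey-minor-Free} with $\eps'=\eps/2$, $\delta=\eps/4$ suffices.
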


\begin{restatable}[Metric $\rho$-dominating set]{theorem}{dominating}\label{thm:dominatingSet}
	There is a bicriteria quasi-polynomial time approximation scheme (QPTAS) for the metric $\rho$-dominating set problem in $K_r$-minor-free graphs.\\
	Specifically, given a weighted $n$-vertex $K_r$-minor-free graph $G=(V,E,w)$, measure $\mu:V\rightarrow \R_+$, a subset of terminals ${\cal K}\subseteq V$, and parameters $\eps\in(0,\frac14)$, $\rho>0$, in $2^{\tilde{O}_r(\frac{\log^2n}{\eps^2})}$ time, one can find a $(1+\eps)\rho$-dominating set $S\subseteq V$ for ${\cal K}$ such that for every $\rho$-dominating set $\tilde{S}$ of ${\cal K}$ ,
	$\mu(S)\le (1+\eps)\mu(\tilde{S})$.
\end{restatable}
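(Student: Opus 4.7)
The plan is to lift the metric $\rho$-dominating set instance into a graph $H$ of small treewidth via Theorem~\ref{thm:Clan-Embedding}, formulate and solve an auxiliary dominating-set problem on $H$ by dynamic programming on its tree decomposition, and pull the solution back to $G$. Since the clan embedding's additive distortion is $\eps D$ where $D$ is the diameter of $G$, but the target approximation is $(1+\eps)\rho$, I would first reduce the effective diameter from $D$ down to $O_r(\rho/\eps)$. This can be done with a standard probabilistic low-diameter decomposition for $K_r$-minor-free graphs (a $(O_r(1),\Delta)$-padded partition with $\Delta=\Theta(\rho/\eps)$): every cluster has diameter $O_r(\rho/\eps)$, and for every $x$ the ball $B_G(x,\rho)$ is contained in a single cluster with probability at least $1-O(\eps)$. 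Each cluster $C$ is then embedded independently via Theorem~\ref{thm:Clan-Embedding} with distortion parameter $\Theta(\eps^2)$ so that the additive distortion becomes $\Theta(\eps^2)\cdot O_r(\rho/\eps)=O_r(\eps\rho)$, and with clan-size parameter $\delta=\eps$, yielding a host $H_C$ of treewidth $\tilde{O}_r(\log^2n/\eps^2)$ and $\mathbb{E}[|f(v)|]\le 1+\eps$. Let $H$ be the disjoint union of the $H_C$'s.

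On $H$, I would solve the following auxiliary problem: find $S\subseteq V(G)$ of minimum $\mu$-measure such that for every terminal $x\in\mathcal{K}$ there exist $y\in S$ and $y'\in f(y)$ with $d_H(y',\chi(x))\le (1+\eps/2)\rho$. This treewidth-bounded optimization can be solved by dynamic programming on a tree decomposition of $H$: each bag's state records which chiefs of terminals in the bag are already covered and which vertices of $V(G)$ with clans touching the bag are selected; bucketing the relevant distances into $O(\log(n)/\eps)$ scales produces a table of size $2^{\tilde{O}_r(\log^2 n/\eps^2)}$ per bag, so the total running time is $2^{\tilde{O}_r(\log^2n/\eps^2)}$. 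Feasibility of the output is immediate from the dominating property of $(f,\chi)$, which yields $d_G(x,y)\le d_H(\chi(x),y')\le (1+\eps/2)\rho\le (1+\eps)\rho$, so $S$ is a $(1+\eps)\rho$-dominating set of $\mathcal{K}$ in $G$.

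For the cost guarantee, take an optimum $\tilde S$: for each terminal $x$ whose $\rho$-ball lies in a single cluster (which holds with probability at least $1-O(\eps)$), the clan embedding places some copy $\tilde y'$ of the $G$-dominator $\tilde y\in \tilde S$ at $d_H(\tilde y',\chi(x))\le d_G(x,\tilde y)+O(\eps\rho)\le (1+\eps/2)\rho$, so $\tilde S$ is a feasible solution of the auxiliary problem for $x$. Taking expectation over the random partition and the random clan embedding yields $\mathbb{E}[\mu(S)]\le (1+\eps)\mu(\tilde S)$, with the $O(\eps)$-fraction of boundary-crossing terminals handled by repeating the random partition $O(1/\eps)$ times and unioning the per-iteration solutions (or by a shifted decomposition that guarantees every terminal is well-padded in at least one iteration). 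I expect the main obstacle to be the dynamic program: the auxiliary distance constraint is non-local in $H$, so the DP state must propagate rounded distance information across bags while enforcing \emph{clan consistency}---a single $v\in V(G)$ whose clan members lie in many bags must be consistently ``selected or not'' across them. Fitting this into a $2^{\tilde{O}_r(\log^2n/\eps^2)}$-size table relies on careful distance bucketing and on the fact that only terminals within $O(\rho)$ of a bag can influence its coverage decisions.
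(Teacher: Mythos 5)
Your high-level plan (clan embedding into bounded treewidth, DP on the host, pull back) matches the paper's, but two of your steps have genuine gaps. The first is the diameter reduction. You replace the global embedding by a random padded decomposition into clusters of diameter $O_r(\rho/\eps)$ and accept that an $O(\eps)$-fraction of terminals are badly cut per iteration. For a \emph{minimization} problem with arbitrary measure this is fatal: in any single iteration, a terminal $x$ whose $\rho$-ball is cut may have its only cheap dominator $\tilde y\in\tilde S$ land in a different cluster, so $\tilde S$ is infeasible for your auxiliary problem and the DP may be forced to cover $x$ with arbitrarily expensive vertices; hence $\mathbb{E}[\mu(S)]\le(1+\eps)\mu(\tilde S)$ does not follow from per-terminal padding probabilities. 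Your fallback --- restrict each iteration to its padded terminals and union the $O(1/\eps)$ solutions --- yields only an $O(1/\eps)$-approximation, since each of the $O(1/\eps)$ solutions can cost $\approx\mu(\tilde S)$. This is exactly the failure mode that forces the clan embedding (worst-case guarantee on all pairs) instead of a Ramsey-type one, and reintroducing a per-terminal failure event defeats it. The paper's fix is deterministic and much simpler: add one auxiliary vertex $\psi$ at distance $2\rho$ from every vertex, so the new graph $G'$ is $K_{r+1}$-minor-free, has diameter at most $4\rho$, preserves all $\rho'$-dominating sets for $\rho'<2\rho$, and a single application of \Cref{thm:Clan-Embedding} with $\eps'=\eps/12$ already gives additive distortion $\eps\rho/3$. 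No decomposition, no $\Theta(\eps^2)$ distortion parameter, no repetition over shifts.

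The second gap is your DP formulation. You have the DP select vertices of $V(G)$ and enforce ``clan consistency'' across bags, and you correctly identify this as the main obstacle --- but it is not merely an obstacle, it breaks the treewidth-bounded structure: the bags containing copies of a fixed $v$ need not form a connected subtree of the decomposition of $H$, so a standard DP cannot remember whether $v$ was selected. The paper avoids this entirely: it assigns measure $\mu(v)$ to \emph{each copy} of $v$ (and $\infty$ to non-image vertices), runs the off-the-shelf bicriteria PTAS for treewidth graphs (\Cref{thm:treewidthPTAS}, from \cite{KLP19,KLP20} as observed in \cite{FKS19}) with terminals $\chi(\mathcal{K})$, and pulls back via $f^{-1}$. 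The overcounting from selecting multiple copies only helps feasibility of the pulled-back set, and the cost of the benchmark $f(A_{\opt})$ is controlled in expectation by $\mathbb{E}[|f(u)|]\le 1+\eps/6$, followed by Markov and $O(\log n)$ independent repetitions to get a good run with high probability. You would need to adopt both of these fixes for your argument to go through.
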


\subsection{Paper Overview}
The paper overview uses terminology presented in the preliminaries \Cref{sec:prelim}. 
\paragraph{Clan embedding into ultrametric}
The main task is to prove a ``distributional'' version of \Cref{thm:ClanUltrametric}. Specifically, given a parameter $k$, and a measure $\mu:X\rightarrow \R_{\ge1}$, we construct a clan embedding with distortion $16k$ such that $\sum_{x\in X}\mu(x)\cdot |f(x)|\le \mu(X)^{1+\frac1k}$, where $\mu(X)=\sum_{x\in X}\mu(x)$ (\Cref{lem:clanTreeMeasure}).  We show that the distributioal version implies  \Cref{thm:ClanUltrametric} by using the minimax theorem.

The algorithm to construct the distributional version is a deterministic recursive ball growing algorithm, which is somewhat similar to previous deterministic algorithms constructing Ramsey trees \cite{Bar11,ACEFN20}.
Let $D$ be the diameter of the metric space. We grow a ball $B(v,R)$ around a point $v$ and partition the space into two clusters: the interior $B(v,R+\frac{D}{16k})$ and exterior $X\setminus B(v,R-\frac{D}{16k})$ of the ball, while points at distance $\frac{D}{16k}$ from the boundary of the ball belong to both clusters.
 We then recursively create a clan embedding into ultrametrics for each of the two clusters. These two embeddings are later combined into a single ultrametric where the root has label $D$. See \Cref{fig:UltrametricClanEmbedding} for an illustration.
The $16k$ distortion guarantee follows from the wide ``belt'' around the boundary of the ball belonging to both clusters.
Note that the images of vertices in this ``belt'' contain copies in the clan embeddings of both clusters, while ``non-belt'' points have copies in a single embedding only.
However, the two clusters have cardinality smaller than $|X|$. The key is to  carve the partition while guaranteeing that the relative measure of points belonging to both clusters will be small compared to the reduction in cardinality.

\paragraph{Spanning clan embedding into trees}
In \Cref{thm:ClanSpanningTree}, the spanning version, we try to imitate the approach of \Cref{thm:ClanUltrametric}. However, we cannot simply carve balls and continue recursively. The reason is that the diameter of a cluster could grow unboundedly after deleting some vertices. In particular, there is no clear upper bound on the distance between separated points.

To imitate the ball growing approach nonetheless, we use the petal-decomposition framework that was previously applied to create stochastic embedding into spanning trees \cite{AN19}, and Ramsey spanning trees \cite{ACEFN20}. 
The petal decomposition framework enables one to iteratively construct a spanning tree for a given graph. In each level, the current cluster is partitioned into smaller diameter pieces (called \emph{petals}), which have properties resembling balls. The algorithm continues recursively on the petals. Later, the petals are connected back to create a spanning tree.
The key property is that while creating a petal, we have a certain degree of freedom to chose its ``radius'', which enables us to use the ball growing approach from above. Crucially, the framework guarantees that for every choice of radii (within the sepecified limits), the diameter of the resulting tree will be only constant times larger than that of the original graph.
However, the petal decomposition framework does not provide us with the freedom to choose the center of the petal. This makes the task of controlling the number of copies more subtle.

\paragraph{Lower bound for clan embedding into a tree}
We provide here a proof sketch for the first assertion in \Cref{thm:ClnUltrametricLB}.
We begin by constructing an $n$-vertex graph $G=(V,E)$ with $(1+\eps)n$ edges and girth $g=\Omega(\frac{\log n}{\eps})$; the girth is  the length of the shortest cycle.
Consider an arbitrary clan embedding of $G$ into a tree $T$ with distortion $\frac{g}{c}=O(\frac{\log n}{\eps})$ (for some constant $c$) and $\kappa$ copies overall. 
We create a new graph $H$ by merging all the copies of each vertex into a single vertex. There is a naturally defined classic embedding from $G$ to $H$ with distortion $\le\frac{g}{c}$.
The Euler characteristic of the graph $G$ equals $\chi(G)=|E|-|V|+1=\eps n+1$, while the Euler characteristic of $H$ is at most $\chi(H)\le \kappa-n$. However, Rabinovich and Raz~\cite{RR98} showed that, if an embedding from a girth-$g$ graph $G$ has distortion $\le\frac{g}{c}$, the host graph must have the Euler characteristic at least as large as that of $G$. Thus, we conclude that $\kappa\ge (1+\eps)n+1$ as required.

\paragraph{Ramsey type embedding for minor-free graphs}
The structure theorem of Robertson and Seymour \cite{RS03} stated that every minor-free graph can be decomposed into a collection of graphs embedded on the surface of constant genus (with some vortices and apices), glued together into a tree structure by taking clique-sums. The stochastic embedding of minor free graphs into  a distribution over bounded treewidth graphs by Cohen-Addad \etal \cite{CFKL20} was constructed according to the layers of the structure theorem.
First, they constructed an embedding for a planar graph with a single vortex. Then, they generalized it to planar graphs with multiple vortices, subsequently to graphs embedded on the surface of constant genus with multiple vortices, and to surface embeddable graphs with multiple vortices and apices. Finally, they incorporated cliques-sums and generalized to minor-free graphs. Most crucially, for this paper, the only step requiring randomness was the incorporation of apices.
Specifically, \cite{CFKL20} constructed a deterministic embedding for graphs embedded on the surface of constant genus with multiple vortices.
This is the starting point of our embeddings.

Our first step is to incorporate apices, however, instead of guaranteeing that the distance of each pair is distorted by $\eps D$ in expectation, we will show that each vertex with probability $1-\delta$ enjoys a small distortion w.r.t. any other vertex.
We begin by deleting all the apices $\Psi$ and obtaining a surface embeddable graph with multiple vortices $G'=G[V\setminus\Psi]$. However, the diameter of the resulting graph is essentially unbounded. Pick an arbitrary vertex $r$, and partition $G'$ into layers of width $O(\frac{D}{\delta})$ w.r.t. distances from $r$ with a random shift
\footnote{Alternatively, one could use here a strong padded decomposition \cite{Fil19padded} (as in \cite{CFKL20}) into clusters of diameter $O_r(\frac{D}{\delta})$ such that each radius-$D$ ball is fully contained in a single cluster with probability $1-\delta$. However, this approach will not work for our clan embedding, as there is no bound on the number of copies we will need for failed vertices. We use the layering approach for the \Cref{thm:Ramsey-minor-Free} as well to keep the proofs of \Cref{thm:Ramsey-minor-Free,thm:Clan-Embedding} similar.}. 
It follows that every vertex $v$ is $2D$-padded (that is, the ball $B(v,2D)$ is fully contained in a single layer) with probability $1-\delta$.
The set $M$ of \emph{satisfied} vertices defined to be the set of all $D$-padded vertices. We then use the deterministic embedding from \cite{CFKL20} on every layer with distortion parameter $\eps'=\Theta(\eps\delta)$ to incur additive distortion $\eps D$. Finally, we combine all these embeddings together into a single embedding, which also contains the apices. 

The next step is to incorporate clique-sums. This is done recursively w.r.t. the clique-sum decomposition tree $\mathbb{T}$. In each step, we pick a central piece $\tilde{G}\in\mathbb{T}$ such that $\mathbb{T}\setminus \tilde{G}$ breaks into connected components $\mathbb{T}_1,\mathbb{T}_2,\dots$, where each $\mathbb{T}_i$ contains at most $|\mathbb{T}|/2$ pieces. 
We construct a Ramsey-type embedding for $\tilde{G}$ using the lemma above and obtain a set $\tilde{M}$ of satisfied vertices.
Recursively, we construct a Ramsey-type embedding for each $\mathbb{T}_i$ and obtain a set $M_i$ of satisfied vertices. 
We ensure that all these embeddings are clique-preserving. Thus even though eventually we will obtain a one-to-one embedding, during the process, we keep them \emph{one-to-many and clique-preserving}. This provides us with a natural way to combine all the embeddings of $\tilde{G},\mathbb{T}_1,\mathbb{T}_2,\dots$ into a single embedding into a graph of bounded treewidth (by identifying vertices of respective clique copies). All the vertices in $\tilde{M}$ will be satisfied. A vertex $v\in \mathbb{T}_i$ will be satisfied if $v\in M_i$ and all the vertices in the clique $Q_i$, used in the clique sum of $\tilde{G}$ with $\mathbb{T}_i$, are satisfied $Q_i\subseteq\tilde{M}$. Analyzing the entire process, we show that each vertex is satisfied with probability at least $(1-\delta)^{\log n}$. The theorem follows by setting the parameter $\delta'=\Theta(\frac{\delta}{\log n})$.

\paragraph{Clan embedding for minor-free graphs}
The construction here follows similar lines to our Ramsey-type embedding. However,  we cannot simply ``give-up'' on vertices, as we required to provide a worst-case distortion guarantee on all vertex pairs.
Similarly to the Ramsey-type case, we build on the deterministic embedding of surface embeddable graphs with vortices from \cite{CFKL20}, and generalize it to a clan embedding of graphs including the apices.
However, there is one crucial difference in creating the ``layering'' (with the random shift). In the Ramsey-type embedding, vertices near the boundary between two layers simply failed and did not join $M$. Here, instead, the layers will somewhat overlap such that copies of vertices near boundary areas will be split into two unrelated sets. In particular, cliques that lie near boundary areas will have two separated clique copies w.r.t. each corresponding layer (at most two). Even though that actually each vertex will have an essentially unbounded number of copies (due to the clique-preservation requirement), the copies of each vertex will be divided to either one or two sets, such that in the final embedding, it will be enough to pick an arbitrary single copy from each set.
The copies of a vertex will split into two sets only if it is in the area of the boundary, the probability of which is bounded by $\delta$.

The generalization to clique-sums also follows similar lines to the Ramsey-type embedding. We create a clan embedding for $\tilde{G}$ into treewidth graph $\tilde{H}$ as above, and recursively clan embeddings $H_1,H_2,\dots$ for $\mathbb{T}_1,\mathbb{T}_2,\dots$. For each $\mathbb{T}_i$, we will make the vertices of the clique $Q_i$, used for the clique-sum between $\tilde{G}$ and $\mathbb{T}_i$, into apices, thereby  ensuring that $H_i$ will succeed on $Q_i$. In particular, every vertex $v\in Q_i$ will have a single copy in $H_i$.
When combining $H_i$ with $\tilde{H}$, there are two cases. If the embedding $\tilde{H}$ was successful w.r.t. $Q_i$ we will simply identify between the two clique copies and done. Otherwise, $\tilde{H}$ will contain two vertex-disjoint clique copies $\tilde{Q}_i^1,\tilde{Q}_i^2$ of $Q_i$. We will create two disjoint copies of the embedding $H_i$: $H_i^1,H_i^2$, and identify the two copies of $Q_i$ in $H_i^1,H_i^2$ with $\tilde{Q}_i^1,\tilde{Q}_i^2$, respectively. It follows that for a vertex $v\in \mathbb{T}_i$, with probability at least $1-\delta$, the number of copies it will have is the same as in $H_i$, while with probability at most $\delta$ it will be doubled.
Analyzing the entire process (and picking a single copy from each relevant set as above), we show that each vertex is expected to have at most  $(1+\delta)^{\log n}$ copies. The theorem follows by using the parameter $\delta'=\Theta(\frac{\delta}{\log n})$.

\subsection{Related Work}\label{sec:related}
\textbf{Path-distortion} 
A closely related notion to clan embeddings is multi-embedding studied by Bartal and Mendel \cite{BM04multi}. A multi-embedding is a dominating one-to-many embedding. The distortion guarantee, however, is very different. We say that a multi-embedding $f:X\rightarrow 2^Y$ between metric spaces $(X,d_X)$, $(Y,d_Y)$ has \emph{path distortion} $t$, if for every ``path'' in $X$, i.e., a sequence of points $x_0,x_1,\dots,x_q$, there are copies $x'_i\in f(x_i)$ such that $\sum_{i=0}^{q-1}d_Y(x'_i,x'_{i+1})\le t\cdot \sum_{i=0}^{q-1}d_X(x_i,x_{i+1})$.
For $n$ point metric space $(X,d)$ with aspect ratio $\Phi$ 
\footnote{The aspect ratio of a metric space $(X,d)$ is the ratio between the maximal and minimal distances $\frac{\max_{x,y}d(x,y)}{\min_{x\not= y}d(x,y)}$.\label{foot:aspectRatio}}, 
and parameter $k\ge1$, 
Bartal and Mendel~\cite{BM04multi} constructed a multi-embedding into ultrametric with $O(n^{1+\frac1k})$ vertices and distortion $O(k\cdot\min\{\log n\cdot\log\log n,\log \Phi\cdot\log\log \Phi\})$.
Formally, path distortion and multiplicative distortion of clan embedding are incomparable, as clan embedding guarantees small distortion with respect to a single chief vertex (which is crucial to our applications), while the multi-embedding \cite{BM04multi} distortion guarantee is w.r.t. arbitrary copies, but preserve entire ``paths''.
Interestingly, a small modification to our clan embedding provides the path distortion guarantee as well! See \Cref{thm:ClanUltrametricAlt} in \Cref{app:PathDistortion}.
Specifically, we obtain embedding into  ultrametric with $O(n^{1+\frac1k})$ (resp. $(1+\eps)n$) vertices and distortion $O(k\cdot\min\{\log n,\log \Phi\})$ (resp.  $O(\frac{\log n}{\eps}\cdot\min\{\log n,\log \Phi\})$), shaving a $\log\log$ factor compared with \cite{BM04multi}.
In a private communication, Bartal told us that he obtained the exact same path distortion guarantees more than a decade ago; Bartal's manuscript is made public recently~\cite{Bar21}.

In a concurrent paper, Haeupler \etal \cite{HHZ21} studied a closely related notion of tree embeddings with copies. They construct a one-to-many embedding of a graph $G$ into a tree $T$ where every vertex has at most $O(\log n)$ copies, and such that every connect subgraph $H$ of $G$ has a connected copy $H'$ in $T$, of weight at most $O(\log^2n)\cdot w(H)$.
Using the path distortion gurantee in our embedding (or \cite{Bar21}), one will obtain an embedding such that every connect subgraph $H$ of $G$ has a connected copy $H'$ in $T$, of weight at most $O(\log n)\cdot w(H)$, however the bound on the maximal number of copies will be only polynomial.

\textbf{Tree covers.} The constructions of  Ramsey trees are asymptotically tight \cite{BBM06}. Furthermore, as was shown by Bartal \etal \cite{BFN19Ramsey} that they cannot be substantially improved even for planar graphs with a constant doubling dimension.  \footnote{Specifically, for every $\alpha>0$, \cite{BFN19Ramsey} constructed planar graph with constant doubling dimension, such that for every tree embedding, the subset of vertices  enjoying distortion $\le \alpha$ is of size at most $n^{1-\Omega(\frac{1}{\alpha\log\alpha})}$, which is almost as bad as general graphs.}
Therefore \cite{BFN19Ramsey} suggested studying a weaker gurantee provided by tree covers. 
Here the goal is to construct a small collection of dominating embeddings into trees such that every pair of vertices has a small distortion in some tree in the collection.
For  $n$-vertex minor-free graph \cite{BFN19Ramsey} constructed $1+\eps$ tree covers of size  $O_r(\frac{\log^2n}{\eps^2})$ (or a $O(1)$-tree cover $O(1)$ size). For metrics with doubling dimension $d$, \cite{BFN19Ramsey} constructed $1+\eps$-tree covers of size $(\frac1\eps)^{O(d)}$. 
Recently, the authors \cite{FL22Relaible} showed that for doubling metrics, we can replace the trees by ultrametrics.

\textbf{Minor free graphs.} Different types of embedding were studied for minor-free graphs. $K_r$-minor-free graphs embed into $\ell_p$ space with multiplicative distortion $O_r(\log^{\min\{\frac12,\frac1p\}}n)$ \cite{Rao99,KLMN04,AGGNT19,AFGN18}. In particular, they embed into $\ell_\infty$ of dimension $O_r(\log^2n)$
with a constant multiplicative distortion. They also admit spanners with multiplicative distortion $1+\eps$ and $\tilde{O}_r(\eps^{-3})$ lightness \cite{BLW17}.
On the other hand, there are other graph families that embed well into bounded treewidth graphs.
Talwar \cite{Talwar04} showed that graphs with doubling dimension $d$ and aspect ratio $\Phi$ $^{\ref{foot:aspectRatio}}$, stochastically embed into graphs with treewidth $\eps^{-O(d\log d)}\cdot\log^d\Phi$ with expected distortion $1+\eps$.
Similar embeddings are known for graphs with highway dimension $h$ \cite{FFKP18} (into treewidth $(\log \Phi)^{-O(\log^2\frac h\eps)}$ graphs), and graphs with correlation dimension $k$ \cite{CG12} (into treewidth $\tilde{O}_{k,\eps}(\sqrt{n})$ graphs).

	\section{Preliminaries} \label{sec:prelim}
	$\tilde{O}$ notation hides poly-logarithmic factors, that is $\tilde{O}(g)=O(g)\cdot\polylog(g)$, while 
	$O_r$ notation hides factors in $r$, e.g. $O_r(m)=O(m)\cdot f(r)$ for some function $f$ of $r$. All logarithms are at base $2$ (unless specified otherwise).
	
	We consider connected undirected graphs $G=(V,E)$ with edge weights $w_G: E \to \R_{\ge 0}$.
	A graph is called unweighted if all its edges have unit weight.
	Additionally, we denote $G$'s vertex set and edge set by $V(G)$ and $E(G)$, respectively. Often, we will abuse notation and write $G$ instead of $V(G)$. $d_{G}$ denotes the shortest path metric in $G$, i.e., $d_G(u,v)$ is the shortest distance between $u$ to $v$ in $G$. 
	Note that every metric space can be represented as the shortest path metric of a weighted complete graph. We will use the notions of metric spaces, and weighted graphs interchangeably.
	When the graph is clear from the context, we might use $w$ to refer to $w_G$, and $d$ to refer to $d_G$. 
	$G[S]$ denotes the induced subgraph by $S$. The diameter of $S$, denoted by $\dm(S)$, is $\max_{u,v \in S}d_{G[S]}(u,v)$.
\footnote{This is often called \emph{strong} diameter. A related notion is the \emph{weak} diameter of a cluster $S$ , defined to be $ \max_{u,v \in S}d_{G}(u,v)$. Note that for a metric space, weak and strong diameters are equivalent.}

An ultrametric $\left(X,d\right)$ is a metric space satisfying a
strong form of the triangle inequality, that is, for all $x,y,z\in X$,
$d(x,z)\le\max\left\{ d(x,y),d(y,z)\right\}$. The following definition is known to be an equivalent one (see \cite{BLMN05}).
\begin{definition}\label{def:ultra}
	An ultrametric is a metric space $\left(X,d\right)$ whose elements
	are the leaves of a rooted labeled tree $T$. Each $z\in T$ is associated
	with a label $\ell\left(z\right)\ge0$ such that if $x\in T$ is a
	descendant of $z$ then $\ell\left(x\right)\le\ell\left(z\right)$
	and $\ell\left(x\right)=0$ iff $x$ is a leaf. The distance between leaves $x,y\in X$ is defined as $d_{T}(x,y)=\ell\left(\mbox{lca}\left(x,y\right)\right)$
	where $\mbox{lca}\left(x,y\right)$ is the least common ancestor of
	$x$ and $y$ in $T$.
\end{definition}
 
\subsection{Metric Embeddings} 
	Classically, a metric embedding is defined as a function $f:X\rightarrow Y$ between the points of two metric spaces $(X,d_X)$ and $(Y,d_Y)$.
	A metric embedding $f$ is said to be \emph{dominating} if for every pair of points $x,y\in X$, it holds that $d_X(x,y)\le d_Y(f(x),f(y))$. 
	The distortion of a dominating embedding $f$ is $\max_{x \not= y\in X}\frac{d_Y(f(x),f(y))}{d_X(x,y)}$.	
	Here we will study a  more permitting generalization of metric embedding introduced by Cohen-Addad \etal \cite{CFKL20}, which is called \emph{one-to-many} embedding . 
\begin{definition}[One-to-many embedding]\label{def:one-to-many}
	A \emph{one-to-many embedding} is a function $f:X\rightarrow2^Y$  from the points of a metric space $(X,d_X)$ into non-empty subsets of points of a metric space $(Y,d_Y)$, where the subsets $\{f(x)\}_{x\in X}$ are disjoint.	
	$f^{-1}(x')$ denotes the unique point $x\in X$ such that $x'\in f(x)$. If no such point exists, $f^{-1}(x')=\emptyset$.
	A point $x'\in f(x)$ is called a \emph{copy} of $x$, while $f(x)$ is called the \emph{clan} of $x$.
	For a subset $A\subseteq X$ of vertices, denote $f(A)=\cup_{x\in A}f(x)$.
	
	We say that $f$ is \emph{dominating} if for every pair of points $x,y\in X$, it holds that $d_X(x,y)\le \min_{x'\in f(x),y'\in f(y)}d_Y(x',y')$.
	We say that $f$	has multiplicative distortion $t$, if it is dominating and $\forall x,y\in X$, it holds that $\max_{x'\in f(x),y'\in f(y)}d_Y(x',y')\le t\cdot d_X(x,y)$.
	Similarly, $f$ has additive distortion $\eps D$ if $f$ is dominating and $\forall x,y\in X$, $\max_{x'\in f(x),y'\in f(y)}d_Y(x',y')\le d_X(x,y)+\eps D$.
	
	A stochastic one-to-many embedding is a distribution $\mathcal{D}$ over dominating one-to-many embeddings. We say that a stochastic one-to-many embedding has expected multiplicative distortion $t$ if $\forall x,y\in X$, $\mathbb{E}[\max_{x'\in f(x),y'\in f(y)}d_Y(x',y')]\le t\cdot d_X(u,v)$.
	Similarly, $f$ has  expected additive distortion $\eps D$, if $\forall x,y\in X$, $\mathbb{E}[\max_{x'\in f(x),y'\in f(y)}d_Y(x',y')]\le d_X(x,y)+\eps D$.
	
	For a one-to-many embedding $f$ between weighted graphs $G=(V,E,w)$ and $H=(V',E',w')$, we say that $f$ is spanning if $V'=f(V)$ (i.e. $f$ is ``onto''), and for every edge $(u,v)\in E'$, it holds that $\left(f^{-1}(u),f^{-1}(v)\right)\in E$ and $w'(u,v)=w\left(f^{-1}(u),f^{-1}(v)\right)$.
\end{definition}

This paper is mainly devoted to the new notion of clan embeddings.
\begin{definition}[Clan embedding]\label{def:clan}
	A clan embedding from metric space $(X,d_X)$ into a metric space $(Y,d_Y)$ is a pair $(f,\chi)$ where $f:X\rightarrow2^{Y}$ is a dominating one-to-many embedding, and $\chi:X\rightarrow Y$ is a classic embedding. For every $x\in X$, we have that $\chi(x)\in f(x)$; here $f(x)$ called the clan of $x$, while $\chi(x)$ is referred to as the chief of the clan of $x$ (or simply the chief of $x$).

	\sloppy 
	We say that clan embedding $f$ has multiplicative distortion $t$ if for every
	$x,y\in X$, \mbox{$\min_{y'\in f(y)}d_{Y}(y',\chi(x))\le t\cdot d_{X}(x,y)$}.
	Similarly, $f$ has additive distortion $\epsilon D$ if for every $x,y\in X$,
	\mbox{$\min_{y'\in f(y)}d_{Y}(y',\chi(x))\le d_{X}(x,y)+\epsilon D$}. 
		
	A clan embedding $(f,\chi)$ is said to be spanning if $f$ is a spanning one-to-many embedding.
\end{definition}

We will construct embeddings for minor-free graphs using a divide-and-concur approach. First, we will construct embedding on each piece (see below). Then, in order to combine different embeddings into a single one, it will be important that these embeddings are \emph{clique-preserving}. 
\begin{definition}[Clique-copy]
	Consider a one-to-many embedding $f:G\rightarrow 2^H$, and a clique $Q$ in $G$. A subset $Q'\subseteq f(Q)$ is called clique copy of $Q$ if $Q'$ is a clique in $H$, and for every vertex $v\in Q$, $Q'\cap f(v)$ is a singleton.
\end{definition}
\begin{definition}[Clique-preserving embedding]
	A one-to-many embedding $f:G\rightarrow2^H$ is called clique-preserving embedding if for every clique $Q$ in $G$, $f(Q)$ contains a clique copy of $Q$.
	A clan embedding $(f,\chi)$ is clique-preserving if $f$ is clique preserving.
\end{definition}

	\subsection[Minor Structure Theorem]{Robertson-Seymour Structure Theorem 
	}\label{subsec:RobertsonSeymour}
	
   In this section, we review notation used in graph minor theory by Robertson and Seymour. 
	Informally speaking, the celebrated theorem of Robertson and Seymour (\Cref{thm:RS}, \cite{RS03}) said that every minor-free graph can be decomposed into a collection of graphs \emph{nearly embeddable} in the surface of constant genus, glued together into a tree structure by taking \emph{clique-sum}.  To formally state the Robertson-Seymour decomposition, we need additional notation.
	
	\begin{definition}[Tree/Path decomposition]
		A tree decomposition of $G(V,E)$, denoted by $\mathcal{T}$, is a tree satisfying the following conditions: 
	\begin{enumerate} [noitemsep,nolistsep]
	\item Each node $i \in V(\mathcal{T})$ corresponds to a subset of vertices $X_i$ of $V$  (called bags), such that $\cup_{i \in V(\mathcal{T})}X_i = V$.
	\item For each edge $uv \in E$, there is a bag $X_i$ containing both $u,v$.
	\item For a vertex $v \in V$, all the bags containing $v$ make up a subtree of $\mathcal{T}$.  
\end{enumerate}

The \emph{width} of a tree decomposition $\mathcal{T}$ is $\max_{ i \in V(\mathcal{T})}|X_i| -1$ and the treewidth of $G$, denoted by $\tw$, is the minimum width among all possible tree decompositions of $G$.  A \emph{path decomposition} of a graph $G(V,E)$ is a tree decomposition where the underlying tree is a path. The pathwidth of $G$, denoted by $\pw$, is defined accordingly.
	\end{definition}

	A \emph{vortex} is a  graph $W$ equipped with a pah decomposition $\{X_1,X_2,\ldots, X_t\}$ and a sequence of $t$ designated vertices $x_1,\ldots, x_t$, called the \emph{perimeter} of $W$, such that each $x_i \leq X_i$ for all $1\leq i \leq t$. The \emph{width} of the vortex is the width of its path decomposition.
	We say that a vortex $W$ is \emph{glued} to a face $F$ of a surface embedded graph $G$ if $W\cap F$ is the perimeter of $W$ whose vertices appear consecutively along the boundary of $F$. 
	
	\paragraph{Nearly $h$-embeddability}  A graph $G$ is nearly $h$-embeddable if  there is a set of at most $h$ vertices $A$, called \emph{apices}, such that $G\setminus A$ can be decomposed as $G_{\Sigma}\cup \{W_1, W_2,\ldots, W_{h}\}$ where $G_{\Sigma}$ is (cellularly) embedded on a surface $\Sigma$ of genus at most $h$ and each $W_i$ is a vortex of width at most $h$ glued to a face of $G_{\Sigma}$.

	\paragraph{$h$-Clique-sum} A graph $G$ is a $h$-clique-sum of two graphs $G_1,G_2$, denoted by $G  = G_1\oplus_h G_2$, if there are two cliques of size exactly $h$ each such that $G$ can be obtained by  identifying vertices of the two cliques and remove some clique edges of the resulting identification. 
	
	Note that clique-sum is not a well-defined operation since the clique-sum of two graphs is not unique due to the clique edge deletion step. We are ready now to state the decomposition theorem.

	\begin{theorem}[Theorem~1.3~\cite{RS03}] \label{thm:RS} There is a constant $h = O_r(1)$ such that any $K_r$-minor-free graph $G$ can be decomposed into a tree $\mathbb{T}$ where each node of $\mathbb{T}$ corresponds to a nearly $h$-embeddable graph such that $G = \cup_{X_iX_j \in E(\mathbb{T})} X_i \oplus_h X_j$.
	\end{theorem}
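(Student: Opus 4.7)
My plan is to follow the Robertson-Seymour strategy, which proceeds through a long chain of reductions developed across the Graph Minors series. I would start from the Excluded Grid Theorem: for every $t$ there is a function $f(t)$ such that any graph of treewidth at least $f(t)$ contains a $t \times t$ grid as a minor. Choosing $t$ large enough (depending on $r$) that a $t \times t$ grid together with ``rich'' additional structure would force a $K_r$ minor gives a dichotomy: either the treewidth of $G$ is already bounded by some $g(r)$, in which case the tree decomposition itself trivially witnesses the claim (each bag being small enough to be vacuously nearly $h$-embeddable), or a large flat wall must appear somewhere and impose strong geometric constraints on how the rest of $G$ can attach to it.

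In the large-treewidth case I would invoke the Flat Wall Theorem together with the theory of tangles of large order. A tangle is a consistent orientation of all small separators, and in a $K_r$-minor-free graph each tangle locally points to a piece of the graph that is close to being embeddable on a surface of bounded genus. The key technical step, occupying the bulk of Robertson-Seymour's Paper~XVI and its predecessors, is to show that after deleting a bounded set of \emph{apex} vertices (those violating local planarity) and confining the remaining non-embeddable structure to a bounded number of \emph{vortices} of bounded pathwidth glued along faces, the residual graph admits a cellular embedding on a surface of genus $O_r(1)$. The Linkage Theorem and the ``society'' framework of Robertson-Seymour are the machinery that makes the pathwidth bound on vortices go through.

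The clique-sum structure would then be extracted by iterating: whenever the graph contains a separator of order at most $h$ witnessed by two distinct tangles of large order, split along it, complete it into a clique of size $h$, and recurse on each side. Gluing the decomposition trees produced on the two sides along the completed clique yields the tree $\mathbb{T}$ of nearly $h$-embeddable pieces. The main obstacle in executing this plan is twofold: first, bounding the pathwidth of each vortex, which rests on deep arguments about societies, disjoint paths, and linkages; and second, bounding the genus, the number of vortices, and the number of apices in each piece by $O_r(1)$, which requires a delicate exploitation of $K_r$-minor freeness to rule out configurations that would otherwise produce a forbidden minor. Since \Cref{thm:RS} is used only as a black-box input to our embedding constructions, I would not attempt to reproduce the roughly 500-page Robertson-Seymour argument here and would simply invoke the statement as proven in~\cite{RS03}.
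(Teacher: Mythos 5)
Your proposal ends exactly where the paper does: Theorem~\ref{thm:RS} is an external result cited from \cite{RS03} and used as a black box, and the paper offers no proof of it. Your sketch of the Robertson--Seymour architecture (excluded grid, flat walls, tangles, apices, vortices, clique-sums) is a reasonable high-level summary, but the operative step — invoking \cite{RS03} without reproving it — is identical to the paper's treatment, so there is nothing to compare.
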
 
	
	The graphs corresponding to the nodes in the clique-sum decomposition above are referred to as \emph{pieces}.
	Note that the pieces in $\mathbb{T}$ may not be
    subgraphs of $G$, as in the clique-sum, some edges of a node, namely some edges of a nearly $h$-embeddable subgraph associated to a node,
    may not be present in $G$. 
    We will slightly modify the graph to ensure that this never happens.
    Specifically, for any pair $u,v$ of vertices used in a clique-sum for a piece $X$ of $\mathbb{T}$, that are not present in $G$,
    we add an edge $(u,v)$ to $G$ and set its weight to be $d_G(u,v)$. 
    In the decomposition of the resulting graph, the clique-sum operation does not remove any edge.
    Note that this operation does not
    change the Robertson-Seymour decomposition of the graph, nor its shortest path metric. Thus from a metric point of view, the two graphs are equivalent.
        
    Cohen-Addad \etal \cite{CFKL20} showed that every $n$-vertex $K_r$-minor free graph has a stochastic one-to-many embedding with expected additive distortion $\eps D$ into a graph with treewidth $O(\frac{\log n}{\eps^2})$.
    The only reason \cite{CFKL20} used randomness is due to apices. The following lemma \cite{CFKL20} states that nearly $h$-embeddable graphs without apices embed deterministically into bounded treewidth graphs. We will use this embedding in a black box manner.
    \begin{restatable}[Multiple Vortices and Genus, \cite{CFKL20}]{lemma}{embedGenusVortex}
       	\label{lm:embed-genus-vortex}
       	Consider a graph $G = G_{\Sigma}\cup  W_1\cup\dots \cup W_{h}$ of diameter $D$,  where $G_{\Sigma}$ is (cellularly) embedded on a surface $\Sigma$ of genus $h$, and each $W_i$ is a vortex of width at most $h$ glued to a face of $G_{\Sigma}$. There is a one-to-many clique-preserving embedding $f$ from $G$ to a graph $H$ of treewidth at most $O_h\left(\frac{\log n}{\epsilon}\right)$ with additive distortion $\epsilon D$.
    \end{restatable}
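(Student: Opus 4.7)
The plan is to build the embedding in three stages, following the structure of the nearly $h$-embeddable decomposition: first reduce the genus-$h$ surface to a planar instance while controlling additive error, then apply a planar-with-vortex embedding as a subroutine, and finally reconcile clique-preservation across the pieces. Since the statement is an existential one about a deterministic one-to-many embedding, the one-to-many flexibility will be used aggressively to paper over boundary effects that would otherwise cost too much in treewidth or distortion.

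For the first stage, I would fix an arbitrary root $r \in G_\Sigma$ and apply Baker-style layering: partition $V$ into annuli $L_0, L_1, \dots$ by distance from $r$, where each $L_i$ has width $\Theta(\eps D / h)$. Grouping $\Theta(h/\eps)$ consecutive annuli into a ``slab'' of diameter $O(D)$, I would cut $G_\Sigma$ restricted to each slab along $O(h)$ shortest non-contractible cycles; by the genus bound this many cuts suffice to planarize the slab, and each cut can be realized by duplicating the cycle's vertices so that every shortest path of length $\le D$ inside the slab is preserved to additive error $O(\eps D)$. Vortices glued to faces of $G_\Sigma$ are split according to which slab each bag of their path decomposition falls into, and boundary bags are duplicated across adjacent slabs. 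For the second stage, within each slab (now a planar graph with at most $h$ vortices of width $\le h$, glued to faces), I would invoke the planar-with-vortex embedding standard in this line of work, obtaining a clique-preserving one-to-many embedding into a graph of treewidth $O_h(\log n / \eps)$ with additive distortion $\eps D$; vortices get absorbed into the tree decomposition by attaching their path decomposition along the nodes whose bags correspond to the face they are glued to, which adds only $h$ to the treewidth. Finally I would stitch the slab embeddings by identifying duplicated boundary and cut-cycle vertices, yielding overall treewidth $O_h(\log n / \eps)$ after rescaling $\eps$ by a universal constant.

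The main obstacle is guaranteeing that clique-preservation survives every step. Any clique $Q$ in $G$ has $|Q| = O_h(1)$ by the nearly $h$-embeddable structure, and since cliques have diameter $0$ they lie within a single annulus and hence touch at most two consecutive slabs; using the one-to-many flexibility, I can place an entire clique copy of $Q$ into whichever slab it lies in, with the boundary duplications supplying the needed copies when $Q$ straddles a boundary. Within a slab, the planar-with-vortex subroutine is already clique-preserving, and cliques that meet a vortex are preserved because each vortex bag is itself of size $\le h$ and fits into a single tree-decomposition bag of the output graph. The delicate case is cliques lying on or near the non-contractible cut cycles used to planarize: here the duplicated cycle vertices provide enough redundancy that every such clique retains an intact clique copy on one side of every cut, which is exactly what clique-preservation in the sense of Definition~\ref{def:one-to-many} demands.
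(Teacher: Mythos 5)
There is nothing in the paper to compare your argument against: \Cref{lm:embed-genus-vortex} is imported verbatim from \cite{CFKL20} and used strictly as a black box (the text says so explicitly), so the burden of your proposal is to actually reprove the main deterministic construction of that paper. As written, it does not.

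Two concrete gaps. First, your opening layering stage is vacuous in this setting: $G$ already has diameter $D$, so every vertex lies within distance $D$ of $r$ and a ``slab'' of diameter $O(D)$ is the whole graph. (Layering with width $O(D/\delta)$ is genuinely needed elsewhere in the paper --- in \Cref{lem:Ramsey-almost-embeddable} and \Cref{lem:Clan-AlmostEmbedable}, where deleting the apices leaves a graph of unbounded diameter --- but there are no apices here, and you appear to have transplanted that step into the wrong lemma.) Second, and more seriously, cutting the surface along $O(h)$ non-contractible cycles and ``duplicating the cycle's vertices'' does not yield a one-to-many embedding with additive distortion $\eps D$ under \Cref{def:one-to-many}: that definition requires $\max_{u'\in f(u),v'\in f(v)}d_H(u',v')\le d_G(u,v)+\eps D$ over \emph{all} pairs of copies, whereas the two copies of a cut vertex land on opposite sides of the cut-open surface and can be arbitrarily far apart there; likewise a pair $u,v$ separated by the cut has its distance destroyed rather than preserved. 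Repairing this requires the portal machinery of \cite{CFKL20}: the cut graph is chosen as a union of $O(h)$ shortest paths, each path is subsampled into $O(1/\eps)$ portals at spacing $\eps D$, and these portals are inserted into every bag of the resulting tree decomposition so that any two copies are within $O(\eps D)$ of a common portal. Your sketch omits this entirely. Finally, you invoke ``the planar-with-vortex embedding standard in this line of work'' as a subroutine and assert that a vortex ``adds only $h$ to the treewidth'' by attaching its path decomposition to the face it is glued to. That subroutine \emph{is} the content of the lemma you are asked to prove --- the interaction between a width-$h$ vortex whose perimeter winds along a face and the tree decomposition of the planar part is precisely the hard case in \cite{CFKL20}, and it is where the $\log n$ factor in the treewidth originates --- so this step of your argument is circular.
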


\section[Clan embedding into an ultrametric (\Cref{thm:ClanUltrametric})]{Clan embedding into an ultrametric}\label{sec:Ultrametric}
This section is devoted to proving \Cref{thm:ClanUltrametric}. We restate it for convenience.
\ClanUltrametric*

First, we will prove a ``distributional'' version of \Cref{thm:ClanUltrametric}. That is, we will receive a distribution $\mu$ over the points, and deterministically construct a single clan embedding $(f,\chi)$ such that $\sum_{x\in X}\mu(x)|f(x)|$ will be bounded. Later, we will use the minimax theorem to conclude \Cref{thm:ClanUltrametric}.
We begin with some definitions: a \emph{measure} over a finite set $X$, is simply a function $\mu:X\rightarrow \R_{\ge 0}$. The measure of a subset $A\subseteq X$, is $\mu(A)=\sum_{x\in A}\mu(x)$. Given some function $f:X\rightarrow \R$, it's expectation w.r.t. $\mu$ is  $\mathbb{E}_{x\sim\mu}[f]=\sum_{x\in X}\mu(x)\cdot f(x)$.
We say that $\mu$ is a \emph{probability measure} if $\mu(X)=1$. We say that $\mu$ is a \emph{$(\ge1)$-measure} if for every $x\in X$, $\mu(x)\ge1$.
\begin{lemma}\label{lem:clanTreeMeasure}
	Given an $n$-point metric space $(X,d_{X})$, $(\ge1)$-measure $\mu:X\rightarrow\mathbb{R}_{\ge1}$,
	and integer parameter $k\ge1$, there is a clan embedding $(f,\chi)$  into an ultrametric with multiplicative distortion $16k$
	such that $\mathbb{E}_{x\sim\mu}[|f(x)|]\le\mu(X)^{1+\frac1k}$.
\end{lemma}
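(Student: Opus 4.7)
The plan is to prove the lemma by induction on $|X|$. The base case $|X|=1$ is immediate: take $f(v)=\chi(v)=v$ in the one-leaf ultrametric, so $\mu(v)\cdot|f(v)|=\mu(v)\le \mu(v)^{1+1/k}$ using $\mu(v)\ge 1$.

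For the inductive step set $D=\dm(X)$, $m=\mu(X)$, $\Delta=D/(16k)$. I will pick an arbitrary point $v\in X$ and a radius $R\in(\Delta,D-\Delta)$ (the choice of $R$ is the crux, discussed below) and partition $X$ into an interior cluster $A=B(v,R+\Delta)$ and an exterior cluster $B=X\setminus B(v,R-\Delta)$, with the belt $A\cap B$ consisting of points that will be copied into both sub-problems. Recursively apply the lemma to $(A,d_X|_A,\mu|_A)$ and $(B,d_X|_B,\mu|_B)$, obtaining clan embeddings $(f_A,\chi_A),(f_B,\chi_B)$ into ultrametrics $T_A,T_B$. Glue these into a single ultrametric $T$ by introducing a fresh root labeled $D$ with $T_A,T_B$ as its two children, and set $f(x)=f_A(x)\cup f_B(x)$ (treating $f_A(x)=\emptyset$ when $x\notin A$, and regarding copies in $T_A,T_B$ as distinct) and $\chi(x)=\chi_A(x)$ if $x\in A$, else $\chi(x)=\chi_B(x)$.

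Distortion is then routine: every ultrametric distance contributed by the new root equals $D$, and the only pairs for which this matters are those separated between $A\setminus B$ and $B\setminus A$. For such a pair, WLOG $x\in A\setminus B,\ y\in B\setminus A$, we have $d_X(v,x)\le R-\Delta$ and $d_X(v,y)>R+\Delta$, so $d_X(x,y)\ge 2\Delta=D/(8k)$, and the ultrametric distance between $\chi(x)\in T_A$ and any copy of $y$ in $T_B$ is $D\le 16k\cdot d_X(x,y)$, as required. Domination is immediate from $D\ge d_X(x,y)$, and pairs lying entirely inside one subtree inherit the $16k$ bound from the recursive call.

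The technical heart is choosing $R$ so that the measure bound closes. By induction, $\sum_x\mu(x)|f(x)|\le \mu(A)^{1+1/k}+\mu(B)^{1+1/k}$, and this must be $\le m^{1+1/k}$. Writing $\mu(A)+\mu(B)=m+q$ with $q=\mu(A\cap B)$, the desired inequality reduces to an elementary calculation whenever the split is (i) thin, i.e.\ $q=O(m/k)$, and (ii) balanced, i.e.\ $\mu(A),\mu(B)$ both bounded away from $0$ and from $m+q$, say within $[m/4,3m/4+q]$. Condition (i) is a one-dimensional pigeonhole: the $8k$ disjoint belts of width $2\Delta$ tiling $[0,D]$ together have mass at most $m$, so some belt has mass $\le m/(8k)$. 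The main obstacle is achieving (i) and (ii) simultaneously, since a priori the profile $\phi(r)=\mu(B(v,r))$ could jump past $m/2$ inside a single width-$2\Delta$ window; I would handle this by first locating the median radius $r^\star=\min\{r:\phi(r)\ge m/2\}$ and then restricting the pigeonhole search to a window around $r^\star$ wide enough to still furnish $\Omega(k)$ candidate belts. With constants tuned so that the closure inequality $(3/4+q/m)^{1+1/k}+(1/4)^{1+1/k}\le 1$ (or its analogue for the balance window chosen) holds for every $k\ge 1$, the induction carries through.
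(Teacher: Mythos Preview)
There are two problems with the sketch; the first is a quick fix, the second is the real gap.

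\textbf{Chief assignment.} Setting $\chi(x)=\chi_A(x)$ whenever $x\in A$ breaks the distortion bound. Take $x$ on the outer edge of the belt (so $d_X(v,x)=R+\Delta$) and $y$ just outside $A$ (so $d_X(v,y)=R+\Delta+\varepsilon$); then $\chi(x)\in T_A$ while $f(y)\subseteq T_B$, so the ultrametric distance is $D$, but $d_X(x,y)$ can be $\varepsilon$. Your sentence ``the only pairs for which this matters are those separated between $A\setminus B$ and $B\setminus A$'' is false: the pair $(x,y)$ with $x\in A\cap B$, $y\in B\setminus A$ also goes through the root. The fix (which is what the paper does) is to put the chief threshold at the \emph{midpoint} $R$ of the belt, so that $\chi(x)\in T_A$ forces $d_X(v,x)\le R$ and hence $d_X(x,y)\ge\Delta$ whenever $y\notin A$.

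\textbf{Measure bound.} The plan to secure $\mu(A)^{1+1/k}+\mu(B)^{1+1/k}\le m^{1+1/k}$ via ``(i) thin belt $q\le m/(8k)$'' and ``(ii) balance $\mu(A),\mu(B)\in[m/4,3m/4+q]$'' does not close, and the median-plus-window patch cannot save it. Consider the line metric $v,u_1,\ldots,u_{8k}$ with $d_X(v,u_i)=i\cdot D/(8k)$ and geometric masses $\mu(v)=1$, $\mu(u_i)=2^i$, so $m\approx 2^{8k+1}$. Every belt of width $2\Delta$ catches exactly one $u_j$ with $q=2^j$, and $\mu(A)\approx 2^{j+1}$. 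Condition (ii) forces $j\in\{8k-2,8k-1\}$, where $q\approx m/4$ and (i) fails; conversely (i) forces $j\le 8k-2-\log_2 k$, where $\mu(A)/m\approx 2^{-\log_2 k-2}\ll 1/4$ and (ii) fails. No window around the median helps: the profile doubles every $2\Delta$, so any window wide enough to contain $\Omega(k)$ belts already sees $\phi$ vary by a factor $2^{\Omega(k)}$, destroying balance. More generally, a first-order expansion shows that for $a=cm$, $b=(1-c)m+q$ the closure inequality needs $q/m\lesssim H(c)/k$ (binary entropy), so demanding $c$ bounded away from $0$ buys you nothing beyond $q/m=O(1/k)$, which the example violates at every balanced radius.

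The paper avoids this entirely by strengthening the inductive invariant to $\mathbb{E}_{x\sim\mu}[|f(x)|]\le \mu(X)\cdot\mu^*(X)^{1/k}$, where $\mu^*(A)=\max_{x\in A}\mu\bigl(B_A(x,\tfrac14\diam A)\bigr)$, and by choosing the center $v$ to minimise the ratio $\mu(B(v,D/4))/\mu(B(v,D/8))$. With this potential, a standard product-pigeonhole over $k$ concentric shells yields a radius for which $\mu(P)\le\mu(Q)\cdot(\mu^*(X)/\mu^*(P))^{1/k}$, and then the two recursive terms telescope exactly to $\mu(X)\mu^*(X)^{1/k}$ with no balance hypothesis at all. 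The point is that the drop in $\mu^*$ on the inner piece absorbs the overlap, which your invariant $\mu(X)^{1+1/k}$ cannot do on its own with an arbitrary center.
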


\hspace{-18pt}\emph{Proof.}
	Our proof is inspired by Bartal's lecture notes \cite{Bar11}, who provided a deterministic construction of Ramsey trees. Specifically, \Cref{clm:ClanTreePartition} bellow is due to \cite{Bar11}. 
	\Cref{lem:clanTreeMeasure} could also be proved using the techniques of Abraham {\etal} \cite{ACEFN20} (and indeed we will use their approach for our clan embedding into a spanning tree, see \Cref{lem:SpanningClanTreeMeasure}); however the proof based on \cite{Bar11} we present here is shorter. 
	For a subset $A\subseteq X$, denote by $B_{A}(x,r)\coloneqq B_X(x,r)\cap A$ the ball in the metric space $(X,d_X)$ restricted to $A$.
	Set $\mu^{*}(A)\coloneqq \max_{x\in A}\mu\left(B_{A}(x,\frac{\diam(A)}{4})\right)$.
	Note that $\mu^{*}$ is monotone: i.e. $A'\subseteq A$ implies $\mu^{*}(A')\le \mu^{*}(A)$, and $\forall A,$
	$\mu^{*}(A)\le\mu(A)$. The following claim is crucial for our construction;
	its proof appears below. See \Cref{fig:UltrametricClanEmbedding} for an illustration of the claim.
	\begin{claim}\label{clm:ClanTreePartition}
		\sloppy There is a point $v\in X$ and radius $R\in(0,\frac{\diam(X)}{2}]$,
		such that the sets ${P=B_{X}(v,R+\frac{1}{8k}\cdot\diam(X))}$,
		$Q=B_{X}(v,R)$, and $\bar{Q}=X\setminus Q$ satisfy $\mu(P)\le\mu(Q)\cdot\left(\frac{\mu^{*}(X)}{\mu^{*}(P)}\right)^{\frac{1}{k}}$.
	\end{claim}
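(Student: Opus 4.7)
My plan is a standard ball-growing argument. I will pick $v\in X$ to be a maximizer of $\mu(B_X(v,\diam(X)/4))$, so that by definition $\mu(B_X(v,\diam(X)/4))=\mu^*(X)$. Set $\Delta:=\diam(X)/(8k)$ and consider the $k$ candidate radii $R_j:=\diam(X)/4+j\Delta$ for $j=0,1,\dots,k-1$; all lie in the required range $(0,\diam(X)/2]$ since $R_{k-1}+\Delta=3\diam(X)/8$. Writing $P_j:=B_X(v,R_j)$, the candidate pair $(v,R_j)$ corresponds exactly to $Q=P_j$ and $P=P_{j+1}$ in the statement of the claim, and the sequence $P_0\subseteq P_1\subseteq\dots\subseteq P_k$ is nested.

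Proceed by contradiction: assume every $R_j$ violates the inequality, i.e.\ $\mu(P_{j+1})>\mu(P_j)\bigl(\mu^*(X)/\mu^*(P_{j+1})\bigr)^{1/k}$ for all $j\in\{0,\dots,k-1\}$. Rewriting this as $\mu(P_{j+1})^{k}\,\mu^*(P_{j+1})>\mu(P_j)^{k}\,\mu^*(X)$ and multiplying the $k$ inequalities, the $\mu(P_j)^{k}$ factors telescope. Using $\mu(P_0)=\mu^*(X)$, this would yield
\[
\mu(P_k)^{k}\prod_{j=1}^{k}\mu^*(P_j) \;>\; \mu^*(X)^{2k}.
\]
The plan is then to show the reverse inequality holds, obtaining the desired contradiction.

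The main obstacle is that the trivial monotonicity $\mu^*(P_j)\le\mu^*(X)$ alone is too weak: it only yields $\mu(P_k)>\mu^*(X)$, which is not a contradiction since $\mu(P_k)\le\mu(X)$ may well exceed $\mu^*(X)$. To close the argument I expect to need a finer, telescoping lower bound on the geometric mean of the $\mu^*(P_j)$. A natural route is to exploit that every $P_j$ contains the densest radius-$\diam(X)/4$ ball $P_0$ and that $\diam(P_j)\le 2R_j\le 3\diam(X)/4$, so a suitable centre $x\in P_j$ (not necessarily $v$) witnesses that $\mu^*(P_j)$ dominates the mass of a ball that in turn contains $P_{j-1}$ or a comparable piece; combined with the fact that the radii $R_j$ sit in the narrow window $[\diam(X)/4,3\diam(X)/8]$, this should force $\prod_{j=1}^{k}\mu^*(P_j)$ to be at least $\mu^*(X)^{k}$ up to factors that the $\mu(P_k)^k\le\mu(X)^k$ bound cannot absorb. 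If this direct bound turns out to be lossy, an adaptive choice of the $R_j$ at jumps of the non-decreasing function $R\mapsto\mu(B_X(v,R))$ (so that the annulus $P_{j+1}\setminus P_j$ is ``balanced'' relative to $\mu^*(P_{j+1})/\mu^*(X)$) is the natural fallback, in the spirit of Bartal's deterministic Ramsey-tree construction.
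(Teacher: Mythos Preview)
Your telescoping computation is correct, but the plan does not close, and the gap is precisely where you flag it. With your choice of $v$ (the maximizer of $\mu(B_X(v,\diam(X)/4))$) and radii starting at $\diam(X)/4$, the ``reverse inequality'' $\mu(P_k)^{k}\prod_{j=1}^{k}\mu^*(P_j)\le\mu^*(X)^{2k}$ that you need simply fails in general: using only $\mu^*(P_j)\le\mu^*(X)$ reduces it to $\mu(P_k)\le\mu^*(X)$, while in fact $\mu(P_k)\ge\mu(P_0)=\mu^*(X)$ with strict inequality whenever the annulus $P_k\setminus P_0$ carries any mass. Your suggested refinements do not help either: with radii in $[\diam(X)/4,3\diam(X)/8]$ one has $\diam(P_j)\le 3\diam(X)/4$, hence $\diam(P_j)/4\le 3\diam(X)/16$, and there is no reason the ball $B_{P_j}(x,\diam(P_j)/4)$ should contain $P_{j-1}$ (whose radius is at least $\diam(X)/4$) for any centre $x$; so one cannot lower-bound $\mu^*(P_j)$ by $\mu(P_{j-1})$. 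The ``adaptive radii'' fallback is the right instinct but misses the crucial ingredient.

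The paper's proof makes two changes that together close the argument cleanly. First, $v$ is chosen to \emph{minimize the ratio} $\mu(B_X(v,\diam(X)/4))/\mu(B_X(v,\diam(X)/8))$, not to maximize a single ball. Second, the radii are taken in $[\diam(X)/8,\diam(X)/4]$, so that the outer ball $P$ has $\diam(P)\le\diam(X)/2$ and hence $\diam(P)/4\le\diam(X)/8$. One then picks a single index $i$ minimizing $\mu(Q_{i+1})/\mu(Q_i)$ (so that this ratio is at most $(\mu(B_X(v,\diam(X)/4))/\mu(B_X(v,\diam(X)/8)))^{1/k}$ by telescoping), and lets $u_P$ realize $\mu^*(P)$. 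The diameter bound gives $\mu^*(P)\le\mu(B_X(u_P,\diam(X)/8))$; the ratio-minimality of $v$ gives
\[
\frac{\mu(B_X(v,\diam(X)/4))}{\mu(B_X(v,\diam(X)/8))}\;\le\;\frac{\mu(B_X(u_P,\diam(X)/4))}{\mu(B_X(u_P,\diam(X)/8))}\;\le\;\frac{\mu^*(X)}{\mu^*(P)},
\]
and the claim follows in one line. The point is that controlling $\mu^*(P)$ requires comparing balls at \emph{two} scales around an unknown point $u_P$, and the only way to transfer that comparison back to $v$ is to have chosen $v$ as an extremizer of a two-scale ratio, not of a single ball.
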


	The construction of the embedding is by induction on $n$, the number
	of points in the metric space. We assume that for a metric space $X$
	with strictly less than $n$ points, and arbitrary $(\ge1)$-measure $\mu$, we can construct a clan  embedding $(f,\chi)$ with distortion $16k$, such that $\mathbb{E}_{x\sim\mu}[|f(x)|]\le\mu(X)\mu^{*}(X)^{\frac1k}\le\mu(X)^{1+\frac1k}$.
	Find sets $P,Q,\bar{Q}\subseteq X$ using \Cref{clm:ClanTreePartition}.
	Let $\mu_{P}$ (resp. $\mu_{\bar{Q}}$) be the $(\ge1)$-measure $\mu$ restricted
	to $P$ (resp. $\bar{Q}$). 
	Using the induction hypothesis, construct clan embeddings
	$(f_{P},\chi_P)$ for $P$, and $(f_{\bar{Q}},\chi_{\bar{Q}})$ for $\bar{Q}$ into ultra-metrics $U_P,U_{\bar{Q}}$ respectively.
	Construct a new ultrametric $U$ by combining $U_P$ and $U_{\bar{Q}}$ by adding a new root node $r_U$ with label $\diam(X)$ and making roots of $U_P$ and $U_{\bar{Q}}$ children of $r_U$. For every $x\in X$ set $f(x)=f_P(x)\cup f_{\bar{Q}}(x)$. 
	If $d_X(v,x)\le R+\frac{1}{16k}\cdot\diam(X)$ set $\chi(x)=\chi_P(x)$, otherwise set  $\chi(x)=\chi_{\bar{Q}}(x)$. This finishes the construction; see \Cref{fig:UltrametricClanEmbedding} for an illustration. 
\begin{figure}[t]
	\centering
	\hspace{20pt}
	\begin{minipage}[b]{0.26\textwidth}
		\includegraphics[width=\textwidth]{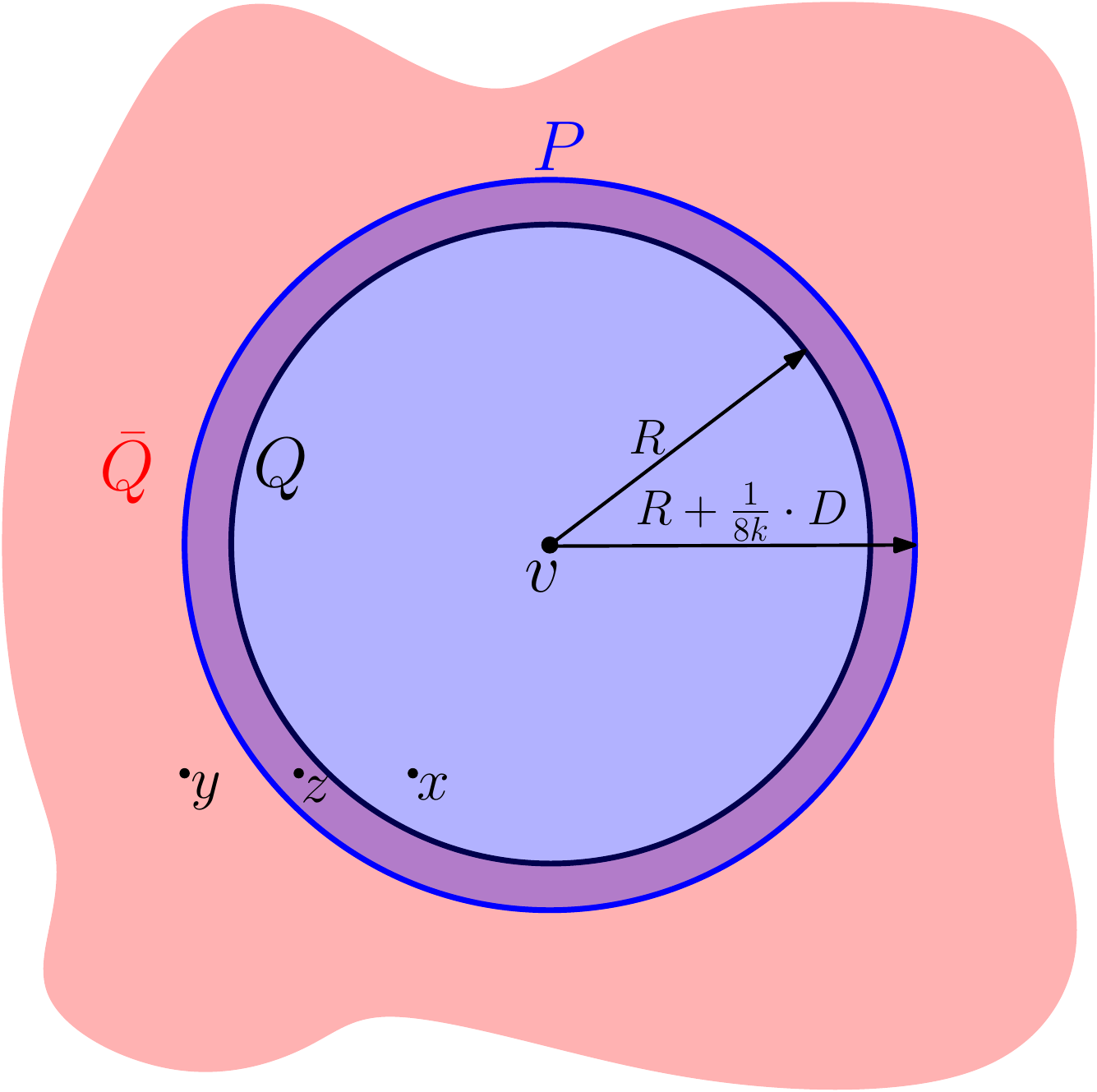}
	\end{minipage}
	\hspace{40pt}
	\begin{minipage}[b]{0.4\textwidth}
		\includegraphics[width=\textwidth]{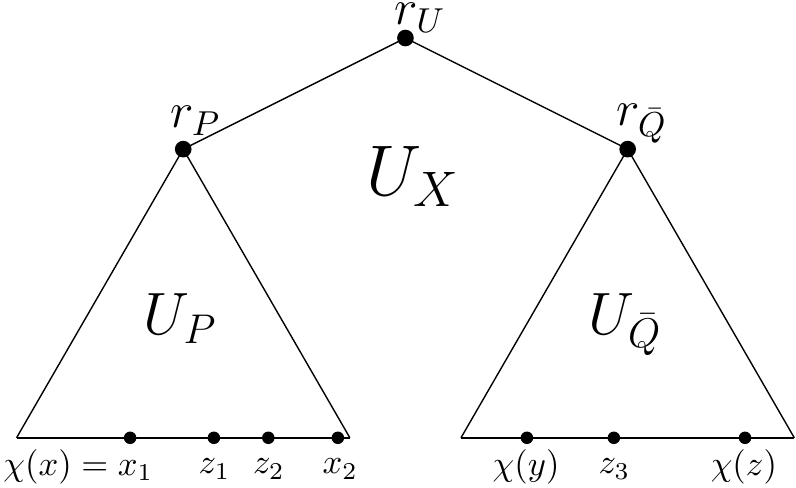}		
	\end{minipage}
	\begin{minipage}[b]{0.1\textwidth}\end{minipage}
	{\caption{\small
		On the left illustrated the clusters $P,Q,\bar{Q}$ from \Cref{clm:ClanTreePartition}.
		On the right we illustrate the clan embedding of the metric space $(X,d_X)$ into ultrametric $U$. 
		$r_U$ is the root of $U$, and its children are the roots of the ultrametrics  $U_P,U_{\bar{Q}}$ which were constructed recursively.
		The point $x\in P\cap Q$ has $f(x)=f_P(x)$ and $\chi(x)=\chi_P(x)$ (where $|f(x)|=2$). 
		The point $y$ is in $\bar{Q}\setminus P$ and thus $f(y)=f_{\bar{Q}}(y)$ and $\chi(y)=\chi_{\bar{Q}}(y)$ (there is a single copy of $y$). The point $z$ belongs to $P\cap\bar{Q}$, where $d_X(v,z)> R+\frac{1}{16}\cdot\diam(X)$, hence $f(z)=f_{P}(z)\cup f_{\bar{Q}}(z)$ and $\chi(z)=\chi_{\bar{Q}}(z)$. Note that $|f_P(z)|=|f_{\bar{Q}}(z)|=2$, and hence $|f(z)|=4$.
	}
	\label{fig:UltrametricClanEmbedding}}
\end{figure}

	Next, we argue that the clan embedding $(f,\chi)$ has multiplicative distortion $16k$. Consider a pair of points $x,y\in X$. We will show that $\min_{y'\in f(y)}d_{U}(y',\chi(x))\le16k\cdot d_{X}(x,y)$.
	Suppose first that $d_X(v,x)\le R+\frac{1}{16k}\cdot\diam(X)$. If $y\in P$, then by the induction hypothesis
	\[
	\min_{y'\in f(y)}d_{U}(y',\chi(x))\le\min_{y'\in f_{P}(y)}d_{U_{P}}(y',\chi_{P}(x))\le16k\cdot d_{P}(x,y)=16k\cdot d_{X}(x,y)~.
	\]
	Else, $y\notin P$, then $d_X(v,y)> R+\frac{1}{8k}\cdot\diam(X)$. Using the triangle inequality $d_{X}(x,y)\ge d_{X}(v,y)-d_{X}(v,x)\ge\frac{\diam(X)}{16}$.
	Note that the label of $r_U$ is $\diam(X)$, implying that $\min_{y'\in f(y)}d_{U}(y',\chi(x))\le\diam(X)\le16\cdot d_{X}(x,y)$.
	The case where  $d_X(v,x)> R+\frac{1}{16k}\cdot\diam(X)$ is symmetric (using $\bar{Q}$ instead of $P$).
	
	Next, we bound the weighted number of leafs in the ultrametric. Note that the process is deterministic and there is no probability involved. Using the induction hypothesis, it holds that 
	\begin{align*}
	\mathbb{E}_{x\sim\mu}[|f(x)|] & =\sum_{x\in X}\mu(x)\cdot\left(|f_{P}(x)|+|f_{\bar{Q}}(x)|\right)\\
	& =\mathbb{E}_{x\sim\mu_{P}}[|f_{P}(x)|]+\mathbb{E}_{x\sim\mu_{\bar{Q}}}[|f_{\bar{Q}}(x)|]\\
	& \le\mu_{P}(P)\mu_{P}^{*}(P)^{\frac{1}{k}}+\mu_{\bar{Q}}(\bar{Q})\mu_{\bar{Q}}^{*}(\bar{Q})^{\frac{1}{k}}\\
	& \le\mu(P)\mu^{*}(P)^{\frac{1}{k}}+\mu(\bar{Q})\mu^{*}(\bar{Q})^{\frac{1}{k}}\\
	& \overset{(*)}{\le}\mu(Q)\mu^{*}(X)^{\frac{1}{k}}+\mu(\bar{Q})\mu^{*}(X)^{\frac{1}{k}}=\mu(X)\mu^{*}(X)^{\frac{1}{k}}~,
	\end{align*}
	where in the inequality $(*)$ is due to \Cref{clm:ClanTreePartition} and the fact that $\mu^{*}(\bar{Q})\le\mu^{*}(X)$.
	\qed

\begin{proof}[Proof of \Cref{clm:ClanTreePartition}]
	Let $v$ be the point minimizing the ratio $\frac{\mu\left(B_{X}(v,\frac{\diam(X)}{4})\right)}{\mu\left(B_{X}(v,\frac{\diam(X)}{8})\right)}$.
	Set $\rho=\frac{\diam(X)}{8k}$, and for $i\in[0,k]$
	let $Q_{i}=B_{X}(v,\frac{\diam(X)}{8}+i\cdot\rho)$. Let $i\in[0,k-1]$
	be the index minimizing $\frac{\mu(Q_{i+1})}{\mu(Q_{i})}$. Then, 
	\[
	\left(\frac{\mu(Q_{k})}{\mu(Q_{0})}\right)^{\frac1k}=\left(\frac{\mu(Q_{1})}{\mu(Q_{0})}\cdot\frac{\mu(Q_{2})}{\mu(Q_{1})}\cdots\frac{\mu(Q_{k})}{\mu(Q_{k-1})}\right)^{\frac1k}\ge\left(\frac{\mu(Q_{i+1})}{\mu(Q_{i})}\right)^{k\cdot\frac{1}{k}}=\frac{\mu(Q_{i+1})}{\mu(Q_{i})}~.
	\]
	\sloppy Set $R=\frac{\diam(X)}{8}+i\cdot\rho$, then $P=B_{X}(v,R+\rho)$,
	$Q=B_{X}(v,R)$, $\bar{Q}=X\setminus Q$. Note that $\diam(P)\le2\cdot(\frac{\diam(X)}{8}+k\cdot\rho)=\frac{\diam(X)}{2}$.
	Let $u_{P}$ be the point defining $\mu^{*}(P)$, that is $\mu^{*}(P)	=\mu\left(B_{P}(u_{P},\frac{\diam(P)}{4}\right)\le\mu\left(B_{P}(u_{P},\frac{\diam(X)}{8}\right)$.
	Using the minimality of $v$, it holds that
	\[
	\frac{\mu(P)}{\mu(Q)}\le\left(\frac{\mu(Q_{k})}{\mu(Q_{0})}\right)^{\frac{1}{k}}=\left(\frac{\mu\left(B_{X}(v,\frac{\diam(X)}{4})\right)}{\mu\left(B_{X}(v,\frac{\diam(X)}{8})\right)}\right)^{\frac{1}{k}} \stackrel{(*)}{\le} \left(\frac{\mu\left(B_{X}(u_{P},\frac{\diam(X)}{4})\right)}{\mu\left(B_{X}(u_{P},\frac{\diam(X)}{8})\right)}\right)^{\frac{1}{k}}\le\left(\frac{\mu^{*}\left(X\right)}{\mu^{*}\left(P\right)}\right)^{\frac{1}{k}}~.
	\]
where $(*)$ is due to the choice of $v$.
\end{proof}

Next, we translate the language of $(\ge1)$-measures used in \Cref{lem:clanTreeMeasure} to probability measures:
\begin{lemma}\label{lem:clanTreeProbability}
	Given an $n$-point metric space $(X,d_{X})$, and probability measure
	$\mu:X\rightarrow\mathbb{R}_{\ge0}$, we can construct the two following clan embeddings $(f,\chi)$ into ultrametrics:
	\begin{enumerate}
		\item For every parameter $k\ge 1$, multiplicative distortion $16k$ such that $\mathbb{E}_{x\sim\mu}[|f(x)|]\le O(n^{\frac1k})$.
		\item For every parameter  $\epsilon\in(0,1]$, multiplicative distortion $O(\frac{\log n}{\eps})$ such that $\mathbb{E}_{x\sim\mu}[|f(x)|]\le1+\epsilon$.
	\end{enumerate}
\end{lemma}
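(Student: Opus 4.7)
}
The strategy is to reduce a probability measure $\mu$ to a $(\ge 1)$-measure $\mu'$, apply \Cref{lem:clanTreeMeasure} to $\mu'$, and then translate the resulting bound on $\mathbb{E}_{x\sim\mu'}[|f(x)|]$ back into a bound on $\mathbb{E}_{x\sim\mu}[|f(x)|]$. For a parameter $N>0$ to be chosen per case, define
\[
   \mu'(x) \;=\; N\,\mu(x) \,+\, 1\,.
\]
Then $\mu'(x)\ge 1$ for every $x$, so $\mu'$ is a valid $(\ge 1)$-measure, and $\mu'(X)=N+n$ since $\mu(X)=1$. Feeding $(X,d_X)$, $\mu'$, and the integer parameter $k$ into \Cref{lem:clanTreeMeasure} yields a single clan embedding $(f,\chi)$ with multiplicative distortion $16k$ such that
\[
   \sum_{x\in X}\mu'(x)\,|f(x)| \;\le\; \mu'(X)^{1+\frac{1}{k}} \;=\; (N+n)^{1+\frac{1}{k}}\,.
\]
Since $\mu'(x)\ge N\mu(x)$, the left-hand side is at least $N\cdot\mathbb{E}_{x\sim\mu}[|f(x)|]$, so
\[
   \mathbb{E}_{x\sim\mu}[\,|f(x)|\,] \;\le\; \frac{(N+n)^{1+\frac{1}{k}}}{N}\,.
\]

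For the first assertion (distortion $16k$, clan size $O(n^{1/k})$), I will choose $N=n$. Then the bound above becomes $(2n)^{1+1/k}/n = 2^{1+1/k}\,n^{1/k} = O(n^{1/k})$, which gives the desired conclusion directly.

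For the second assertion (distortion $O(\log n/\epsilon)$, clan size $\le 1+\epsilon$), I will pick $N = n/\epsilon'$ and $k = \lceil c\log n/\epsilon'\rceil$ for a sufficiently large absolute constant $c$ and a rescaled $\epsilon' = \Theta(\epsilon)$. With this choice,
\[
   \frac{(N+n)^{1+\frac{1}{k}}}{N} \;=\; (1+\epsilon')\,\bigl(N+n\bigr)^{\frac{1}{k}} \;=\; (1+\epsilon')\,\Bigl(\tfrac{n(1+\epsilon')}{\epsilon'}\Bigr)^{\frac{1}{k}}\,,
\]
and since $k\ge c\log n/\epsilon'$ we get $n^{1/k}\le 1+O(\epsilon'/c)$, $(1+\epsilon')^{1/k}\le 1+O(\epsilon')$, and $(1/\epsilon')^{1/k}\le 1+O(\epsilon')$ (the last bound using $\log(1/\epsilon')\le O(\log n)$, which one may assume WLOG since otherwise $\epsilon'<1/n$ and the statement is trivial). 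Multiplying these factors yields $\mathbb{E}_{x\sim\mu}[|f(x)|]\le 1+O(\epsilon')$, and absorbing the hidden constant into the choice of $\epsilon'=\epsilon/C$ gives $1+\epsilon$ exactly. The distortion is $16k = O(\log n/\epsilon)$ as required.

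The only nontrivial step is the bookkeeping in the second part: choosing $N$ and $k$ so that the factor $(N+n)/N$ contributes $1+\epsilon'$ while the $(1/k)$-power contributes only $1+O(\epsilon')$. The first part is a direct substitution. I do not anticipate any other obstacles since the remaining work is purely manipulation of the inequality coming from \Cref{lem:clanTreeMeasure}.
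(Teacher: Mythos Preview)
Your setup is essentially the paper's: the paper's shifted measure $\widetilde{\mu}_{\ge 1}(x)=1+n\mu(x)$ is exactly your $\mu'$ with $N=n$. Part~1 is correct as written.

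For part~2 there is a genuine gap. You use only the crude inequality $\mu'(x)\ge N\mu(x)$, throwing away the ``$+1$'' in $\mu'$, and then try to recover the loss by taking $N=n/\epsilon'$. This forces you to control the factor $(1/\epsilon')^{1/k}$, which you handle via the claim that one may assume $\log(1/\epsilon')=O(\log n)$ WLOG because ``otherwise $\epsilon'<1/n$ and the statement is trivial.'' That WLOG is not justified: nothing makes the statement trivial for tiny $\epsilon$ (a one-to-one embedding into an ultrametric can require distortion $\Omega(n)$ in the worst case, and the hidden constant in $O(\log n/\epsilon)$ is universal, so you cannot simply fall back to a classic embedding). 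Concretely, for $\epsilon=e^{-n}$ and your choice $k=\Theta(\log n/\epsilon)$ one computes $(N+n)^{1/k}=1+\Theta(n\epsilon/\log n)$, which is not $1+O(\epsilon)$.

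The fix is exactly what the paper does: keep the ``$+1$'' term. From $\sum_x(N\mu(x)+1)\,|f(x)|\le (N+n)^{1+1/k}$ together with $\sum_x|f(x)|\ge n$ (each clan is nonempty) you obtain
\[
\mathbb{E}_{x\sim\mu}[|f(x)|]\;\le\;\frac{(N+n)^{1+1/k}-n}{N}\,.
\]
Taking $N=n$ gives $2(2n)^{1/k}-1$, which is $\le 1+\epsilon$ as soon as $(2n)^{1/k}\le 1+\epsilon/2$, i.e.\ $k=\big\lceil \ln(2n)/\ln(1+\epsilon/2)\big\rceil=O(\log n/\epsilon)$. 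This works for every $\epsilon\in(0,1]$ with no WLOG needed; the extra degree of freedom $N$ is then unnecessary.
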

\begin{proof}	
	We define the following probability
	measure $\widetilde{\mu}$: $\forall x\in X$, $\widetilde{\mu}(x)=\frac{1}{2n}+\frac{1}{2}\mu(x)$.
	Set the following $(\ge1)$-measure $\widetilde{\mu}_{\ge1}(x)=2n\cdot \tilde{\mu}(x)$. Note that $\widetilde{\mu}_{\ge1}(X)=2n$.
	We execute \Cref{lem:clanTreeMeasure} w.r.t. the $(\ge1)$-measure $\widetilde{\mu}_{\ge1}$, and parameter $\frac1\delta\in\N$ to be determined later. It holds that
	\[
	\widetilde{\mu}_{\ge1}(X)\cdot\mathbb{E}_{x\sim\widetilde{\mu}}[|f(x)|]=\mathbb{E}_{x\sim\widetilde{\mu}_{\ge1}}[|f(x)|]\le\widetilde{\mu}_{\ge1}(X)^{1+\delta}=\widetilde{\mu}_{\ge1}(X)\cdot(2n)^{\delta}~,
	\]
	implying 
	\[
	(2n)^{\delta}\ge\mathbb{E}_{x\sim\widetilde{\mu}}[|f(x)|]=\frac{1}{2}\cdot\mathbb{E}_{x\sim\mu}[|f(x)|]+\frac{\sum_{x\in X}|f(x)|}{2n}\ge\frac{1}{2}\cdot\mathbb{E}_{x\sim\mu}[|f(x)|]+\frac{1}{2}~.
	\]
	\begin{enumerate}
		\item  Set
		$\delta=\frac1k$, then we have multiplicative distortion $\frac{16}{\delta}=16k$,
		and $\mathbb{E}_{x\sim\mu}[|f(x)|]\le2\cdot(2n)^{\delta}=O(n^{\frac{1}{k}})$.
		\item Choose $\delta\in(0,1]$ such that $\frac{1}{\delta}=\left\lceil \frac{\ln(2n)}{\ln(1+\epsilon/2)}\right\rceil $,
		note that $\delta\le\frac{\ln(1+\epsilon/2)}{\ln(2n)}$.
		Then we have multiplicative distortion $O(\frac{1}{\delta})=O(\frac{\log n}{\eps})$, and		
		$\mathbb{E}_{x\sim\mu}[|f(x)|]\le2\cdot(2n)^{\delta}-1\le1+\eps$.		
	\end{enumerate}
\end{proof}
\begin{remark}\label{remark:MaxSizeClanEmbedding}
	\Cref{lem:clanTreeProbability}, note that for the clan embedding $(f,\chi)$ returned by \Cref{lem:clanTreeProbability} for input $k$, it holds that $|f(X)|\le\tilde{\mu}_{\ge1}(X)^{1+\frac{1}{k}}=(2n)^{1+\frac{1}{k}}$.
	In particular, every $x\in X$ has at most $(2n)^{1+\frac{1}{k}}$ copies.
	Similarly, for input $\eps$, $|f(X)|\le\tilde{\mu}_{\ge1}(X)^{1+\delta}\le(2n)^{1+\frac{\ln(1+\eps/2)}{\ln2n}}=2n\cdot(1+\frac{\epsilon}{2})$. As for every $y\in X$, $f(y)\ne\emptyset$, it follows that for every $x\in X$, its number of copies is bounded by $|f(x)|=|f(X)|-|f(X\setminus\{x\})|\le2n\cdot(1+\frac{\epsilon}{2})-(n-1)=(1+\epsilon)n+1$.	
\end{remark}

Using the minimax theorem, as shown bellow, we show that there exists a distribution $\mathcal{D}$ of clan embeddings with distortion and expected clan size as specified by \Cref{thm:ClanUltrametric}.
Afterwards, in \Cref{subsec:clan-MWU}, using the multiplicative weights update (MWU) method, we explicitly construct such distributions efficiently, and with small support size.

\begin{proof}[Proof of \Cref{thm:ClanUltrametric} (exsistential agrument)]
	Let $\mu$ be an arbitrary probability measure over the points, and
	$\mathcal{D}$ be any distribution over clan embeddings $(f,\chi)$ of $(X,d_{X})$
	intro trees with multiplicative distortion $O(\frac{\log n}{\epsilon})$.
	Using \Cref{lem:clanTreeProbability} and the minimax theorem we have that
	\[
	\min_{\mathcal{D}}\max_{\mu}\mathbb{E}_{(f,\chi)\sim\mathcal{D},x\sim\mu}[|f(x)|]=\max_{\mu}\min_{(f,\chi)}\mathbb{E}_{x\sim\mu}[|f(x)|]\le1+\epsilon~.
	\]
	Let $\mathcal{D}$ be the distribution from above, denote by $\mu_{z}$
	the probability measure where $\mu_z(z)=1$ (and $\mu_z(y)=0$ for $y\ne z$). Then for every $x\in X$
	\[
	\mathbb{E}_{(f,\chi)\sim\mathcal{D}}[|f(z)|]=\mathbb{E}_{(f,\chi)\sim\mathcal{D},x\sim\mu_{z}}[|f(x)|]\le\max_{\mu}\mathbb{E}_{(f,\chi)\sim\mathcal{D},x\sim\mu}[|f(x)|]\le1+\epsilon~.
	\]
	
	The second claim of \Cref{thm:ClanUltrametric} could be proven using exactly  the same argument.
\end{proof}

\subsection{Constructive Proof of \Cref{thm:ClanUltrametric}
}\label{subsec:clan-MWU}

In this section, we efficiently construct a uniform distribution $\mathcal{D}$ as stated in \Cref{thm:ClanUltrametric}. Our construction relies on the multiplicative weights update method (MWU) \footnote{For an excellent introduction of the MWU method and its historical account, see the survey by Arora, Hazan and Kale~\cite{AHK12}.} and the notion of a $(\rho,\alpha,\beta)$-bounded \textsc{Oracle}.

\begin{definition}[$(\rho,\alpha,\beta)$-bounded \textsc{Oracle}]\label{def:Oracle} Given a probability measure $\mu$ over the metric points, a $(\rho,\alpha,\beta)$-bounded \textsc{Oracle} returns a clan embedding $(f,\chi)$ with multiplicative distortion $\beta$ such that:
	\begin{enumerate}
		\item $\mathbb{E}_{x\sim\mu}[|f(x)|] \leq \alpha$.
		\item $\max_{x\in V}|f(x)| \leq \rho$. 
	\end{enumerate}
\end{definition}

 In \Cref{lem:MWU-Distribution} below, we show that one can construct a uniform distribution $\mathcal{D}$ by making a polynomial number of oracle calls.

\begin{lemma}\label{lem:MWU-Distribution} Given a $(\rho,\alpha,\beta)$-bounded \textsc{Oracle}, and parameter $\eps\in(0,\frac12)$ one can construct a uniform distribution $\mathcal{D}$ over $O(\frac{\rho \alpha \log(n)}{\epsilon^2})$  clan embeddings with multiplicative distortion $\beta$ such that:
		\begin{equation*}
			\mbox{For every }x\in X,~~	\mathbb{E}_{(f,\chi)\sim\mathcal{D}}[|f(x)|] \leq \alpha + \epsilon
		\end{equation*}
Furthermore, the construction only makes $O(\frac{\rho \alpha \log(n)}{\epsilon^2})$ queries to the $(\rho,\alpha,\beta)$-bounded \textsc{Oracle}. 
\end{lemma}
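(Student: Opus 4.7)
The plan is to cast the construction as a two-player zero-sum game and run MWU against the \textsc{Oracle}. Think of the row player as nature choosing a probability measure $\mu$ over $X$, and of the column player as choosing a clan embedding; the payoff is the expected clan size $\mathbb{E}_{x\sim\mu}[|f(x)|]$. The \textsc{Oracle} is a best-response oracle for the column player: against any $\mu$, it returns a clan embedding whose expected clan size under $\mu$ is at most $\alpha$, and whose maximum clan size is at most $\rho$. Since we want a uniform mixed strategy for the column player that is simultaneously good against \emph{every} pure row strategy $\mu_x$ (equivalently, against every point $x$), this is exactly the setting where MWU on the rows yields such a mixture after polynomially many oracle calls.

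Concretely, I would initialize weights $w_1(x)=1$ for every $x\in X$, set $\eta=\epsilon/(2\alpha)$, and for $T=\lceil 4\rho\alpha\ln n/\epsilon^2\rceil$ iterations do the following: normalize $w_t$ to a probability measure $\mu_t(x)=w_t(x)/\sum_y w_t(y)$, query the \textsc{Oracle} on $\mu_t$ to obtain a clan embedding $(f_t,\chi_t)$ of distortion $\beta$, and update $w_{t+1}(x)=w_t(x)\cdot (1+\eta)^{|f_t(x)|/\rho}$. The normalization by $\rho$ ensures that the ``losses'' $L_t(x):=|f_t(x)|/\rho$ lie in $[0,1]$, which is the standard regime for the MWU regret bound. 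The distribution $\mathcal{D}$ is then the uniform distribution over $(f_1,\chi_1),\ldots,(f_T,\chi_T)$. Every embedding in the support has distortion $\beta$ and clan size at most $\rho$, and the number of oracle calls is exactly $T=O(\rho\alpha\log n/\epsilon^2)$.

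The analysis invokes the textbook MWU guarantee: for every point $x\in X$,
\[
\sum_{t=1}^T L_t(x)\;\le\;(1+\eta)\sum_{t=1}^T \mathbb{E}_{y\sim\mu_t}[L_t(y)]\;+\;\frac{\ln n}{\eta}.
\]
By the \textsc{Oracle} guarantee, $\mathbb{E}_{y\sim\mu_t}[|f_t(y)|]\le\alpha$, hence $\mathbb{E}_{y\sim\mu_t}[L_t(y)]\le\alpha/\rho$. Multiplying through by $\rho/T$ and dividing,
\[
\mathbb{E}_{(f,\chi)\sim\mathcal{D}}[|f(x)|]\;=\;\frac{1}{T}\sum_{t=1}^T |f_t(x)|\;\le\;(1+\eta)\alpha\;+\;\frac{\rho\ln n}{\eta T}\;\le\;\alpha+\frac{\epsilon}{2}+\frac{\epsilon}{2}\;=\;\alpha+\epsilon,
\]
by the choice of $\eta$ and $T$, which is exactly the stated bound.

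The only subtlety I expect is bookkeeping around the normalization by $\rho$: without the a priori cap $|f_t(x)|\le\rho$ provided by the \textsc{Oracle}, losses could be unbounded and the MWU regret bound would fail to give the claimed $T$. This is precisely the reason the \textsc{Oracle} definition includes the worst-case parameter $\rho$, and it is how $\rho$ enters the support size. Everything else (the distortion bound on $\mathcal{D}$, and the count of oracle queries) is immediate from the construction.
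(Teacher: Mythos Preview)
Your proposal is correct and follows essentially the same route as the paper: initialize uniform weights on points, at each round query the oracle on the induced probability measure, update multiplicatively by $(1+\eta)^{|f_t(x)|/\rho}$, and return the uniform distribution over the $T=O(\rho\alpha\log n/\epsilon^2)$ embeddings produced. The only difference is presentational---the paper derives the potential-function inequality from scratch (with learning rate $\delta=\epsilon/\alpha$) rather than invoking a packaged MWU regret bound---but the algorithm, the role of the $\rho$ cap in keeping the per-round ``gains'' in $[0,1]$, and the resulting guarantee are identical up to constants absorbed by the $O(\cdot)$.
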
 
\begin{proof}
	Let $\mathcal{O}$ be a $(\rho,\alpha,\beta)$-bounded \textsc{Oracle} and $\mathcal{O}(\mu)$ be the clan embedding returned by the oracle given a probability measure $\mu$. 
	We follow the standard set up of MWU: we have $n$ ``experts" where the $i$-th expert is associated with the $i$-th point $x_i \in X$. The construction happens in $T$ rounds. At the beginning of round $t$, we have a weight vector $\mathbf{w}^{t} =  (w_1^{t}, \ldots, w_n^{t})^\intercal$; at the first round, $\mathbf{w}^{1} = (1,1,\ldots, 1)^\intercal$. 
	
	The weight vector $\mathbf{w}^{t}$ induces a probability measure $\mu^{t} = (\frac{w^{t}_1}{W^{t}}, \ldots, \frac{w^t_n}{W^t})$ where $W^t  = \sum_{i=1}^n w^{t}_i$. We  construct a clan embedding $(f^t, \chi^t) = \mathcal{O}(\mu^t)$  by making an oracle call to $\mathcal{O}$ with $\mu_t$ as input. Let $g^{t}_i = \frac{|f^t(x_i)|}{\rho}$, and $\mathbf{g}^{t} = (g^t_1, \ldots, g^t_n)^\intercal$ be the ``penalty'' 
	vector for the  set of $n$ points (or experts). We then update:
	\begin{equation}\label{eq:update-MWU}
		w^{t+1}_i = (1+\delta)^{g^t_i}w^t_i \qquad \forall x_i \in X,
	\end{equation}
	for some small parameter $\delta$ chosen later. 

	The penalty for each additional copy of each point is proportional to the number of copies it has in the clan embeddings constructed in previous steps. This is because in the next round, we will increase the measure of points with a large number of copies. Hence the oracle will be ``motivated'' to reduce the number of copies of these points in the next outputted clan embedding.

	After $T$ rounds, we have a collection $\mathcal{D}_T = \{(f^1,\chi^1), \ldots, (f^T,\chi^T)\}$ of $T$ clan embeddings. The distribution $\mathcal{D}$ is constructed by  sampling an embedding from $\mathcal{D}_T$ uniformly at random. Note that the distortion bound follows directly from the fact that the distortion of every clan embedding returned by the oracle is $\beta$.	Our goal is to show that, by  setting $T = O(\frac{\rho \log(n)}{\epsilon^2})$, we have:
	
	\begin{equation}\label{eq:MWU-expected-size}
		\frac{1}{T}\cdot\sum_{t=1}^T |f^t(x_i)| ~\leq~ \alpha+\epsilon \qquad \forall x_i \in V
	\end{equation}	 
	 To that end, we first observe that:
	 \[
	 W^{t+1}~=~\sum_{i=1}^{n}w_{i}^{t+1}~=~\sum_{i=1}^{n}(1+\delta)^{g_{i}^{t}}w_{i}^{t}~\stackrel{(*)}{\leq}~\sum_{i=1}^{n}(1+\delta g_{i}^{t})w_{i}^{t}~=~(1+\sum_{i=1}^{n}\delta g_{i}^{t}\mu_{i}^{t})W^{t}~\leq~ e^{\delta\langle\mathbf{g}^{t},\mu^{t}\rangle}W^{t}
	 \]
where inequality $(*)$ follows from that $(1+x)^r \leq (1+rx)$ for any $x\geq 0$ and $r\in [0,1]$. 
Thus, we have:

\begin{equation}\label{eq:MWU-WT1}
	W^{T+1} ~\leq~ e^{\delta \sum_{t=1}^T \langle \mathbf{g}^t, \mu^t\rangle} W^1 ~=~ e^{\delta \sum_{t=1}^T \langle \mathbf{g}^t, \mu^t\rangle} n
\end{equation}
Observe that $W^{T+1} \geq w^{T+1}_i = (1+\delta)^{\sum_{t=1}^T g^t_i}w^1_i = (1+\delta)^{\sum_{t=1}^T g^t_i}$ and that:
\[
\sum_{t=1}^{T}\langle\mathbf{g}^{t},\mu^{t}\rangle~=~\sum_{t=1}^{T}\sum_{x\in X}\frac{|f^{t}(x)|}{\rho}\cdot\mu^{t}(v_{i})~=~\frac{1}{\rho}\cdot\sum_{t=1}^{T}\mathbb{E}_{x\sim\mu^{t}}[|f^{t}(x)|]~\leq~\frac{T\alpha}{\rho}~.
\]
Thus, by equation (\ref{eq:MWU-WT1}), it holds that:
\begin{align}\label{eq:MWU-gi-vs-overall}
	(1+\delta)^{\sum_{t=1}^T g^t_i} ~\leq~   e^{\frac{\delta T \alpha}{\rho}} n~.
\end{align}

Taking the natural logarithm from both sides we obtain that $\frac{\delta T\alpha}{\rho}+\ln n\ge\sum_{t=1}^{T}g_{i}^{t}\cdot\ln(1+\delta)=\frac{\ln(1+\delta)}{\rho}\cdot\sum_{t=1}^{T}|f^{t}(x_{i})|$,
and thus
\[
\frac{1}{T}\cdot\sum_{t=1}^{T}|f^{t}(x_{i})|~\leq~\frac{\rho}{T\cdot\ln(1+\delta)}\cdot\left(\frac{\delta T\alpha}{\rho}+\ln n\right)~=~\frac{\delta\alpha}{\ln(1+\delta)}+\frac{\rho\cdot\ln n}{T\cdot\ln(1+\delta)}~\leq~\alpha(1+\frac{\delta}{2})+\frac{2\rho\cdot\ln n}{T\cdot\delta}~,
\]
where the last inequality follows as $\frac{\delta}{\ln(1+\delta)}\leq(1+\frac{\delta}{2})$
and $\ln(1+\delta)\geq\frac{\delta}{2}$ for $\delta\in(0,\frac{1}{2})$.
By choosing $T=\frac{4\rho\alpha\ln n}{\epsilon^{2}}=O(\frac{\rho\alpha\log n}{\epsilon^{2}})$
and $\delta=\sqrt{\frac{4\rho\ln n}{T\alpha}}=\sqrt{\frac{\epsilon^{2}}{\alpha^{2}}}=\frac{\epsilon}{\alpha}<\frac{1}{2}$, we obtain that
\[
\frac{1}{T}\cdot\sum_{t=1}^{T}|f^{t}(x_{i})|~\leq~\alpha+\frac{\delta\cdot\alpha}{2}+\frac{\epsilon^{2}}{2\alpha\cdot\delta}~=~\alpha+\epsilon~,
\]
satisfying equation (\ref{eq:MWU-expected-size}), which completes our proof.
\end{proof}

Observe that \Cref{lem:clanTreeProbability}, combined with \Cref{remark:MaxSizeClanEmbedding}, provides an 
$(O(n), 1+\frac{\epsilon}{2}, O(\frac{\log n}{\epsilon}))$-bounded \textsc{Oracle} (when we apply \Cref{lem:clanTreeProbability} with parameter $\frac\eps2$).
Using \Cref{lem:MWU-Distribution} with parameter $\frac\eps2$  provides us with an efficiently computable distribution over clan embeddings with support size $O(\frac{n\log n}{\eps^2})$, distortion $O(\frac{\log n}{\eps})$, and such that for every $x\in X$, $\mathbb{E}_{(f,\chi)\sim\mathcal{D}}[|f(x)|] \leq 1 + \epsilon$.

Similarly, by applying \Cref{lem:clanTreeProbability} with parameter $k$, we get an 
$(O(n^{1+\frac{1}{k}}), O(n^{\frac{1}{k}}), 16k)$-bounded \textsc{Oracle}. Thus \Cref{lem:MWU-Distribution} will produce an efficiently computable distribution over clan embeddings with support size $O(n^{1+\frac{2}{k}}\log n)$, distortion $16k$, and such that for every $x\in X$, $\mathbb{E}_{(f,\chi)\sim\mathcal{D}}[|f(x)|] = O(n^{\frac1k})$. 
\Cref{thm:ClanUltrametric} now follows.

\addtocontents{toc}{\protect\setcounter{tocdepth}{1}}
\section[Clan Embedding into a Spanning Tree (\Cref{thm:ClanSpanningTree})]{Clan Embedding into a Spanning Tree}\label{sec:SpanningClan}
This section is devoted to proving \Cref{thm:ClanSpanningTree}. We restate it for convenience.

\ClanSpanningTree*

In this section, we construct spanning clan embeddings into trees. We will use the framework of petal decomposition proposed by Abraham and Neiman \cite{AN19}, who originally used it to construct a stochastic embedding of a graph into spanning trees with a bounded expected distortion. The framework was also previously used by Abraham \etal \cite{ACEFN20} to construct Ramsey spanning trees.
The petal decomposition is an iterative method to build a spanning tree of a given graph. At each level, the current graph is partitioned into smaller diameter pieces (called \emph{petals}), and a single central piece (called \emph{stigma}), which are then connected by edges in a tree structure. Each of the petals is a ball in a certain cone metric. When creating a petal from a cluster of diameter $\Delta$, one has the freedom to choose a radius from an interval of length $\Omega(\Delta)$. The crucial property is that, regardless of the radii choices during the execution of the algorithm, the framework guarantees that the diameter of the resulting tree will be $O(\Delta)$.

However, as we are constructing a clan embedding rather than a classical one, some vertices will have multiple copies. As a result, some mild changes will be introduced to the construction of \cite{AN19}. 
Once we establish the petal decomposition framework for clan embeddings, the proof of \Cref{thm:ClanSpanningTree} will follow the lines similar to \Cref{thm:ClanUltrametric}. The additional $\log\log n$ factor is a phenomenon also appearing in previous uses of the petal decomposition framework \cite{AN19,ACEFN20}. The reason is that, while similar embeddings into ultrametrics create clusters by growing balls around smartly chosen centers  (e.g. \cite{Bartal04,Bar11} and \Cref{thm:ClanUltrametric}), in the petal decomposition framework, we lack the freedom to choose the center of the petal.

\paragraph{Organization:} In \Cref{subsec:PetalDecompDesc}, we  describe the petal decomposition framework in general.
In \Cref{subsec:createPetal}, we describe our specific usage of it, i.e. the algorithm choosing the radii (with some leftovers in \Cref{subsec:MissingCreatePetal}).
Then in \Cref{subsec:SpanningClanCorrectness}, we prove \Cref{lem:SpanningClanTreeMeasure}, that appears below. \Cref{lem:SpanningClanTreeMeasure} is a ``distributional'' version of \Cref{thm:ClanSpanningTree}, and has a role parallel to \Cref{lem:clanTreeMeasure} in \Cref{sec:Ultrametric}. Finally, in \Cref{subsec:ProofOfSpanningTree}, we will deduce \Cref{thm:ClanSpanningTree} using \Cref{lem:SpanningClanTreeMeasure}.
\begin{lemma}\label{lem:SpanningClanTreeMeasure}
	Given an $n$-vertex weighted graph $G=(V,E,w)$, $(\ge1)$-measure $\mu:V\rightarrow\mathbb{R}_{\ge1}$,
	and integer parameter $k\ge1$, there is a spanning clan embedding $(f,\chi)$  into a tree with multiplicative distortion $O(k\log\log \mu(V))$
	such that $\mathbb{E}_{v\sim\mu}[|f(v)|]\le\mu(V)^{1+\frac1k}$.
\end{lemma}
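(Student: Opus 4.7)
The plan is to port the recursive ball-growing argument behind Lemma~\ref{lem:clanTreeMeasure} to the spanning setting via the petal decomposition framework of Abraham--Neiman. Recall that in the ultrametric construction, for a cluster $X$ of diameter $\Delta$ we selected a center $v$ and a radius $R$ so that the inner ball $Q=B(v,R)$, the expanded ball $P=B(v,R+\Delta/(8k))$, and its complement $\bar Q=X\setminus Q$ satisfied $\mu(P)\le\mu(Q)\cdot(\mu^*(X)/\mu^*(P))^{1/k}$; we then recursed on $P$ and on $\bar Q$, and the overlapping ``belt'' $P\setminus Q$ was precisely the set of vertices receiving copies in both sub-embeddings. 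In the spanning regime we cannot simply delete vertex sets, so I would instead let petal decomposition do the carving: at each recursive level, the current cluster $X$ of diameter $\Delta$ is partitioned into a stigma $X_0$ and petals $X_1,\dots,X_s$, where each petal is a ball in a suitable cone metric and we may choose its radius from an interval of length $\Omega(\Delta)$. The framework guarantees that, no matter how the radii are chosen, the final spanning tree has diameter $O(\Delta)$.

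The clan-embedding modification is as follows. When the framework carves a petal $X_j$ with chosen radius $R_j$, I attach a belt $B_j$ of width $\Delta/(ck)$ (measured in the same cone metric) lying just outside the petal. The recursive spanning clan embedding is then performed on $X_j\cup B_j$ and on $X\setminus X_j$ (the cluster being updated for the next iteration); vertices in $B_j$ will therefore receive copies in both sub-embeddings, exactly paralleling $P\setminus Q$ in the ultrametric proof. The chief of each vertex is assigned to whichever sub-embedding contains the bulk of the vertex's neighborhood, following the rule used in the ultrametric case. This produces a one-to-many map $f:V\to 2^{V(T)}$ together with a chief map $\chi$; the spanning property follows because petal decomposition only reuses edges of $G$.

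The argument then splits into the usual four ingredients. First, a petal-averaging lemma analogous to Claim~\ref{clm:ClanTreePartition}: slicing the admissible radius interval into $k$ equally-spaced values and telescoping yields a choice of $R_j$ such that $\mu(X_j\cup B_j)\le\mu(X_j)\cdot(\mu^*(X)/\mu^*(X_j\cup B_j))^{1/k}$ for an appropriately defined $\mu^*$ functional on cone-metric balls. Second, a distortion check at the current level: any pair separated by the belt is at distance $\Omega(\Delta/k)$ in $G$, while the petal-decomposition diameter guarantee bounds its tree distance by $O(\Delta)$, for a per-level multiplicative loss of $O(k)$; pairs that stay together are handled by the recursion. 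Third, induction on $\mu(X)$ across all petals and the stigma, using the same telescoping calculation as in Lemma~\ref{lem:clanTreeMeasure}, gives $\mathbb{E}_{v\sim\mu}[|f(v)|]\le \mu(V)\cdot\mu^*(V)^{1/k}\le \mu(V)^{1+1/k}$. Fourth, we absorb an extra $O(\log\log\mu(V))$ distortion factor inherent to petal decomposition, precisely as in~\cite{AN19,ACEFN20}, yielding the claimed $O(k\log\log\mu(V))$ multiplicative distortion.

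The main obstacle will be the petal-averaging step. In the ultrametric proof we were free to choose the center $v$ minimizing $\mu(B(v,\Delta/4))/\mu(B(v,\Delta/8))$, which was essential to obtain the $\mu^*$-normalized ratio bound; here the center and cone-metric shape of each petal are dictated by the framework, leaving the radius as our only knob. I expect the correct definition of $\mu^*$ for this setting to be the maximum of $\mu(B(y,\Delta/4))$ taken over points $y$ realizing the cone geometry, and verifying that this $\mu^*$ is monotone along the recursion (so that $\mu^*(X_j\cup B_j)\le\mu^*(X)$) will require a careful compatibility argument with the cone metric. Once that is in place, the remaining bookkeeping mirrors~\cite{AN19,ACEFN20} and composes cleanly with the belt trade-off to produce the distortion/clan-size guarantee stated in Lemma~\ref{lem:SpanningClanTreeMeasure}.
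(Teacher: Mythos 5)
Your high-level architecture (petal decomposition, a belt of width $\Theta(\Delta/(k\log\log\mu))$ that gets duplicated, per-level distortion $O(\Delta)$ versus separation $\Omega(\Delta/(k\log\log\mu))$, telescoping induction on the measure) matches the paper's. But the one step you flag as "the main obstacle" --- the petal-averaging lemma --- is exactly the step you have not supplied, and your proposed fix via a cone-metric analogue of $\mu^*$ is not how the paper resolves it and is unlikely to go through. The difficulty is real: $\mu^*(P)\le\mu(B(u_P,\diam(X)/8))$ was usable in the ultrametric proof only because the center $v$ was chosen to \emph{minimize} $\mu(B(v,\Delta/4))/\mu(B(v,\Delta/8))$ over all points, so the ratio at $v$ could be compared to the ratio at the (unknown) point $u_P$ realizing $\mu^*(P)$. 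In the petal framework the center and target are forced, so there is no comparison point, and no monotone $\mu^*$-type functional on cone-metric balls makes the analogous inequality hold.

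The paper abandons $\mu^*$ entirely in the spanning case and replaces the center choice by a \emph{second level of region growing over the radius interval}: it first selects a sub-interval $[a,b]\subseteq[\lo,\hi]$ of length $R/(2L)$ with $L=\lceil 1+\log\log\mu(Y)\rceil$ satisfying $w_a\ge w_b^{2}/\mu(Y)$ (this exists by repeated squaring: if every such sub-interval violated it, then $w_{\midP}>\mu(Y)^{1-2^{-(L-1)}}w_{\lo}^{2^{-(L-1)}}\ge\mu(Y)/2$), and only then runs the standard $k$-step telescoping inside $[a,b]$ to get $w_{r+\frac{b-a}{2k}}\le w_{r-\frac{b-a}{2k}}(w_b/w_a)^{1/k}$. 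The two inequalities together give $\mu(\widetilde X_1)^{1+1/k}\le w_{r-\frac{b-a}{2k}}\cdot\mu(X)^{1/k}$ directly, with no $\mu^*$ in the induction hypothesis; and the belt width $\Theta(R/(Lk))$ is what produces the $\log\log$ in the distortion, so that factor is not merely ``inherited from \cite{AN19}'' but is created by this interval-selection step. A second point your sketch omits: the argument above only works when $w_{\midP}\le\mu(Y)/2$; when the petal would swallow more than half the measure one must cut ``backward'' in $[\midP,\hi]$ and run the identical argument on the complement measure $q_r=\mu(Y\setminus W_r)$, charging the duplicated belt to $Y_1$ rather than to $\widetilde X_1$. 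Without this case split the telescoping bound $\mu(\widetilde X_1)^{1+1/k}+\mu(Y_1)^{1+1/k}\le\mu(X)^{1+1/k}$ fails. So the proposal is not yet a proof: it correctly names the obstruction but the mechanism that overcomes it is missing.
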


\subsection{Petal Decomposition Framework}\label{subsec:PetalDecompDesc}
We begin with some notations specific to this section.
For a subset $S\subseteq G$ and a center vertex $x_0\in S$, the radius of $S$ w.r.t $x_0$, $\Delta_{x_0}(S)$, is the minimal $\Delta$ such that $B_{G[S]}(x_0,\Delta)=S$.
(If for every $\Delta$, $B_{G[S]}(x_0,\Delta)\neq S$ --- this can happen iff $G[S]$ is not connected --- we say that $\Delta_{x_0}(S)=\infty$.)
When the center $x_0$ is clear from the context or is not relevant, we will omit it.
Given two vertices $u,v$, $P_{u,v}(X)$ denotes the shortest path between them in $G[X]$, the graph induced by $X$ (we will assume that every pair has a unique shortest path; this can be arranged by tiny perturbation of the edge weights.). 

\begin{wrapfigure}{r}{0.26\textwidth}
	\begin{center}
		\vspace{-30pt}
		\includegraphics[scale=0.42]{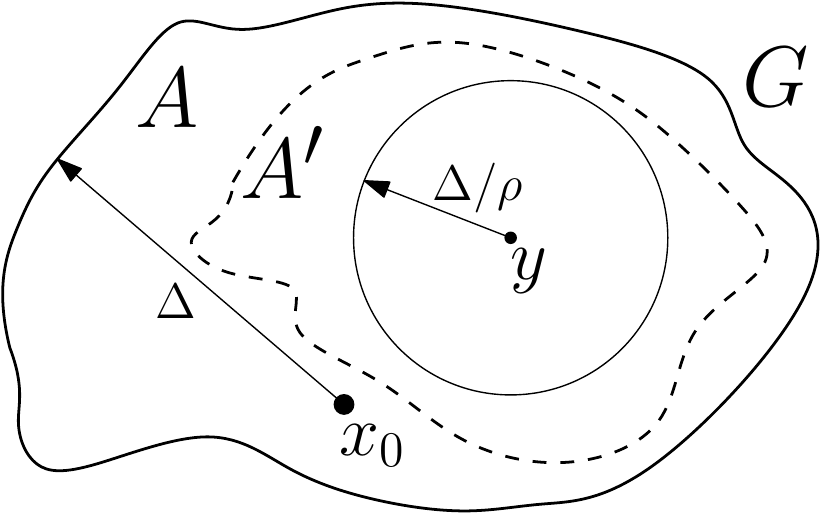}
		\vspace{-5pt}
	\end{center}
	\vspace{-10pt}
\end{wrapfigure}
Given a graph $G=(V,E,w)$ and a cluster $A\subseteq V$ (with center $x_0$), we say that a vertex $y\in A$ is {\em $\rho$-padded} by the cluster $A'\subseteq A$ (w.r.t $A$) if $B(y,\Delta_{x_0}(A)/\rho,G)\subseteq A'$. See an illustration on the right.

\sloppy Next, we provide a concise description of the petal decomposition algorithm, focusing on the main properties we will use.
For proofs and further details, we refer readers to \cite{AN19}.
The presentation here differs slightly from \cite{AN19} as our goal is to construct a spanning clan embedding into a tree rather than a classic one. However, the changes are straightforward, and no new ideas are required. 

The \texttt{hierarchical-petal-decomposition} (see \Cref{alg:h-petal}) is a recursive algorithm. The input is $G[X]$ (a graph $G=(V,E,w)$ induced over a set of vertices $X\subseteq V$), a center $x_0\in X$, a target $t\in X$, and the radius $\Delta=\Delta_{x_0}(X)$.\footnote{Rather than inferring $\Delta=\Delta_{x_0}(X)$ from $G[X]$ and $x_0$ as in \cite{AN19}, we will follow \cite{ACEFN20} and think of $\Delta$ as part of the input. We shall allow any $\Delta\ge\Delta_{x_0}(X)$. We stress that, in fact, in the algorithm, we always use $\Delta_{x_0}(X)$, and consider this degree of freedom only in the analysis.}  
The algorithm invokes the \texttt{petal-decomposition} procedure to create clusters $\widetilde{X}_0,\widetilde{X}_1,\dots,\widetilde{X}_s$ of $X$ (for some integer $s$), and also provides a set of edges $\{(x_1,y_1),\dots,(x_s,y_s)\}$ and targets $t_0,t_1,\dots,t_s$.
The \texttt{hierarchical-petal-decomposition} algorithm now recurses on each $(G[\widetilde{X}_j],x_j,t_j,\Delta_{x_j}(\widetilde{X}_j))$ for $0\le j\le s$, to get trees $\{T_j\}_{0\le j\le s}$ (and clan embeddings $\{(f_j,\chi_j)\}_{0\le j\le s}$), which are then connected by the edges $\{(x_j,y_j)\}_{1\le j\le s}$ to form a tree $T$ (the recursion ends when $X_j$ is a singleton).
The one-to-many embedding $f$ simply defined as the union of the one-to-many embeddings $\{f_j\}_{0\le j\le s}$.
Note, however, that the clusters  $\widetilde{X}_0,\widetilde{X}_1,\dots,\widetilde{X}_s$ are not disjoint. Therefore, in addition, for each cluster $\widetilde{X}_j$ the \texttt{petal-decomposition} procedure will also provide us with sub-clusters $\underline{X}_j\subseteq X_j\subseteq \widetilde{X}_j$ that will be used to determine the chiefs (i.e. $\chi$ part) of the clan embedding.

\begin{algorithm}[t]
	\caption{$(T,f,\chi)=\texttt{hierarchical-petal-decomposition}
		(G[X],x_{0},t,\Delta)$}	\label{alg:h-petal}
	\DontPrintSemicolon
	\If{$|X|=1$}{\Return $G[X]$}
	Let $\left(\left\{ \underline{X}_j,X_j,\widetilde{X}_j,x_j,t_{j},\Delta_{j}\right\} _{j=0}^{s},\left\{ (y_{j},x_{j})\right\} _{j=1}^{s}\right)=\texttt{petal-decomposition}(G[X],x_{0},t,\Delta)$\;
	\For{each $j\in[0,\dots,s]$}{
		$(T_{j},f_j,\chi_j)=\texttt{hierarchical-petal-decomposition}$ $(G[\widetilde{X}_j],x_{j},t_{j},\Delta_j)$\;
	}
	\For{each $z\in X$}{
		Set $f(x)=\cup_{j=0}^sf_j(z)$\;
		\If{$\exists j>0$ such that $z\in X_j$}{
			Let $j>0$ be the minimal index such that $z\in X_j$. Set $\chi(z)=\chi_j(z)$\;}
		\Else{Set $\chi(z)=\chi_0(z)$\;}
	}
	Let $T$ be the tree formed by connecting $T_{0},\dots,T_{s}$ using the edges $\{\chi(y_{1}),\chi(x_{1})\},\dots,\{\chi(y_{s}),\chi(x_{s})\}$\;
	\Return $(T,f,\chi)$\;
\end{algorithm}

Next, we describe the \texttt{petal-decomposition} procedure (see \Cref{alg:petal-d}).
Initially it sets $Y_0=X$, and for $j=1,2,\dots,s$, it carves out the petal $\widetilde{X}_j$ from the graph induced on $Y_{j-1}$, and sets $Y_j=Y_{j-1}\backslash \underline{X}_j$, where $\underline{X}_j$ is a sub-petal of $\widetilde{X}_j$, consisting of all the vertices which are padded by $\widetilde{X}_j$.
The idea is that $Y_j$ is defined w.r.t. to a smaller set than the petal itself;  thus, by duplicating some vertices, we will be able to guarantee that each vertex is padded somewhere. 
In order to control the radius increase, the first petal might be carved using different parameters (see \cite{AN19} for details and explanation of this subtlety \footnote{One may notice that in \cref{line:edgeWeightReduce} of the \texttt{petal-decomposition} procedure, 	the weight of some edges is changed by a factor of 2. This can happen at most once for each copy of every edge throughout the  \texttt{hierarchical-petal-decomposition} execution, thus it may affect the padding parameter by a factor of at most 2. This re-weighting is ignored here for simplicity. We again refer readers to \cite{AN19} for details and further explanation.}). 
The definition of petal guarantees that the radius $\Delta_{x_0}(Y_j)$ is non-increasing, and when at step $s$ it becomes at most $3\Delta/4$, define $X_0=Y_s$ and then the \texttt{petal-decomposition} routine ends. In carving of the petal $\widetilde{X}_j\subseteq Y_{j-1}$, the algorithm chooses an arbitrary target $t_j\in Y_{j-1}$ (at distance at least $3\Delta/4$ from $x_0$) and a range $[\lo,\hi]$ of size $\hi-\lo\in\{\Delta/8,\Delta/4\}$ which are passed to the sub-routine \texttt{create-petal}.

\begin{algorithm}[t]
	\caption{\mbox{$\left(\left\{\underline{X}_j,X_j,\widetilde{X}_j,x_j,t_{j},\Delta_{j}\right\} _{j=0}^{s},\left\{ (y_{j},x_{j})\right\} _{j=1}^{s}\right)=\texttt{petal-decomposition}(G[X],
		x_{0},t,\Delta)$}}	\label{alg:petal-d}
	\DontPrintSemicolon
	Let $Y_0=X$\;
	Set $j=1$\;
	
	\If{$d_{X}(x_{0},t)\ge\Delta/2$}
	{Let
		$(\underline{X}_1,X_1,\widetilde{X}_1)=\texttt{create-petal}(G[Y_0],[d_{X}(x_{0},t)-\Delta/2,d_{X}(x_{0},t)-\Delta/4],x_0,t)$\;
		$Y_{1}=Y_{0}\backslash \underline{X}_1$\;
		Let $\{x_1,y_{1}\}$ be the unique edge on the shortest path $P_{x_{0}t}$ from $x_0$ to $t$ in $Y_0$, where $x_1\in X_1$ and $y_1\in Y_1$\;
		Set $t_{0}=y_{1}$, $t_{1}=t$; $j=2$\;}
	\Else
	{set $t_{0}=t$\;}
	
	\While{$Y_{j-1}\backslash B_{X}(x_{0},\frac{3}{4}\Delta)\neq\emptyset$}{
		Let $t_{j}\in Y_{j-1}$ be an arbitrary vertex satisfying $d_{X}(x_{0},t_{j})>\frac{3}{4}\Delta$\;
		Let	$(\underline{X}_j,X_j,\widetilde{X}_j)=\texttt{create-petal}(G[Y_{j-1}],[0,\Delta/8],x_0,t_j)$\;
		$Y_{j}=Y_{j-1}\backslash \underline{X}_{j}$\;
		Let $\{x_j,y_{j}\}$ be the unique edge on the shortest path $P_{x_{j}t_j}$ from $x_0$ to $t_j$ in $Y_{j-1}$, where $x_j\in \widetilde{X}_j$ and $y_j\in Y_j$\;
		Consider $G_j=G[\widetilde{X}_j]$ the graph induced by $\widetilde{X}_j$. For each edge $e\in P_{x_{j}t_{j}}(\widetilde{X}_j)$, set its weight to be $w(e)/2$ \label{line:edgeWeightReduce}\;
		Let $j=j+1$\;
	}
	
	Let $s=j-1$\;
	Let $\underline{X}_0=X_{0}=\widetilde{X}_{0}=Y_{s}$\;
	\Return $\left(\left\{ \underline{X}_j,X_j,\widetilde{X}_j,x_{j},t_{j},\Delta_{x_j}(\widetilde{X}_j)\right\} _{j=0}^{s},\left\{ (y_{j},x_{j})\right\} _{j=1}^{s}\right)$\;
\end{algorithm}

Both \texttt{hierarchical-petal-decomposition} and \texttt{petal-decomposition} are essentially the algorithms that appeared in \cite{AN19}. The only technical difference is that in \cite{AN19} $\widetilde{X}_j=\underline{X}_j$ for every $j$ (as they created actually spanning tree, while we are constructing a clan embedding).
The more important difference lies in the \texttt{create-petal} procedure, depicted in \Cref{alg:create-petal}. It carefully selects a radius $r\in[\lo,\hi]$, which determines the petal $\widetilde{X}_j$ together with a connecting edge $(x_j,y_j)\in E$, where $x_j\in \widetilde{X}_j$ is the center of $\widetilde{X}_j$ and $y_j\in Y_j$. It is important to note that the target $t_0\in X_0$ of the central cluster $X_0$ is determined during the creation of the first petal $X_1$. 
The petals are created using an alternative metric on the graph, known as the {\em cone-metric}:
\begin{definition}[Cone-metric] Given a graph $G=(V,E)$, a subset $X\subseteq V$
	and points $x,y\in X$, define the $\emph{cone-metric}$ $\rho=\rho(X,x,y):X^{2}\to\mathbb{R}^{+}$
	as $\rho(u,v)=\left|\left(d_{X}(x,u)-d_{X}(y,u)\right)-\left(d_{X}(x,v)-d_{X}(y,v)\right)\right|$.
\end{definition}
The cone-metric is in fact a pseudo-metric, i.e., distances between distinct points are allowed to be 0.
The ball $B_{(X,\rho)}(y,r)$ in the cone-metric
$\rho=\rho(X,x,y)$, contains all vertices $u$ whose shortest path to $x$ is increased (additively) by at most $r$ if forced to go through $y$.
In the $\texttt{create-petal}$
algorithm, while working in a subgraph $G[Y]$ with two specified vertices: a center $x_{0}$
and a target $t$, we define $W_{r}\left(Y,x_{0},t\right)=\bigcup_{p\in P_{x_{0}t}:\ d_{Y}(p,t)\le r}B_{(Y,\rho(Y,x_{0},p))}(p,\frac{r-d_{Y}(p,t)}{2})$
which is union of balls in the cone-metric, where any vertex $p$
in the shortest path from $x_{0}$ to $t$ of distance at most $r$
from $t$ is a center of a ball with radius $\frac{r-d_{Y}(p,t)}{2}$. See \Cref{fig:PetalExample} for an illustration.
The parameters $(Y,x_{0},t)$ are usually clear from the context and hence  omitted.
The following fact from \cite{AN19} demonstrates that petals are similar to balls.

\begin{figure}[t]
	\centering
	\includegraphics[width=0.7\textwidth]{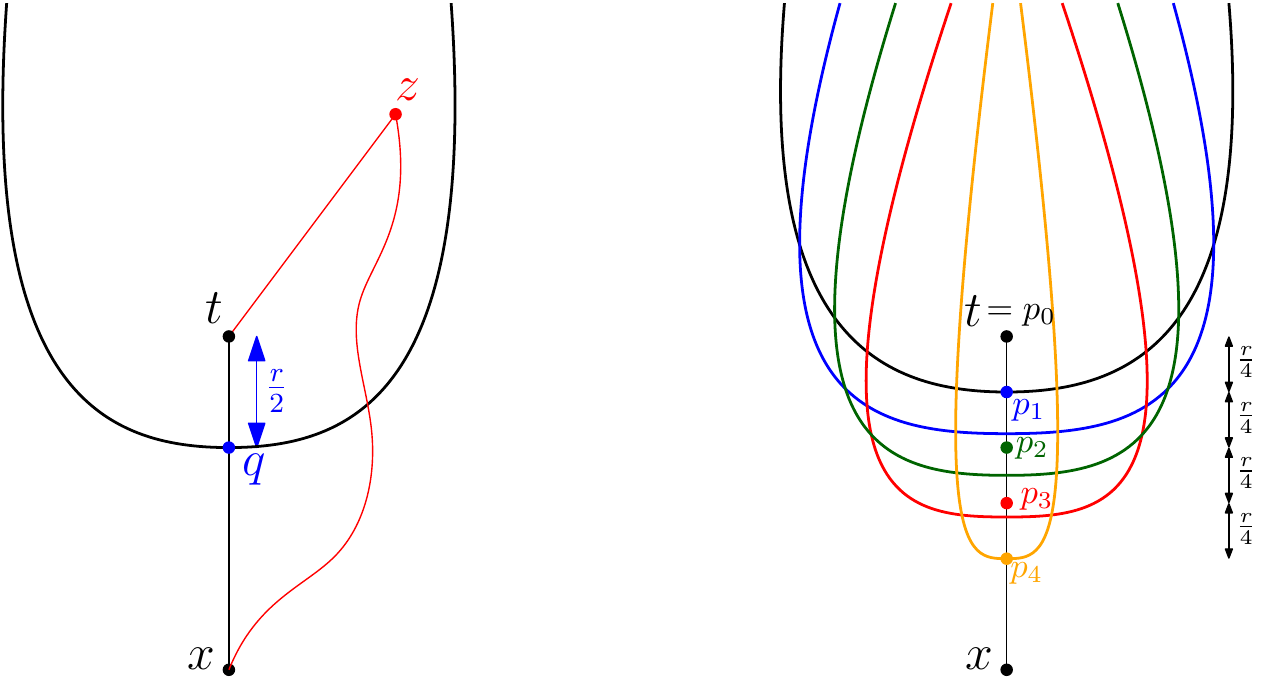}
	{\caption{\small
		On the left, we illustrate the ball $B_{(X,\rho)}(t,r)$ in the cone-metric
		$\rho=\rho(X,x,t)$ containing all vertices $u$ whose shortest path to $x$ is increased (additively) by at most $r$ if forced to go through $t$. The red vertex $z$ joins $B_{(X,\rho)}(t,r)$ as $d_X(z,t)+d_X(t,x)\le d_X(z,x)+r$. The blue point $q$ on the path $P_{t,x}$ at distance $\frac{r}{2}$ from $t$ is the last point on $P_{t,x}$ to join $B_{(X,\rho)}(t,r)$.\newline
		On the right, we illustrate the petal $W_{r}\left(X,x,t\right)=\bigcup_{p\in P_{xt}:\ d_{Y}(p,t)\le r}B_{(Y,\rho(X,x,p))}(p,\frac{r-d_{X}(p,t)}{2})$. In the illustration, the point $p_i$ is at distance $\frac{i}{4}r$ from $t$, and is the center of a ball of radius $\frac{4-i}{8}r$ in the respective cone metric.
		}
		\label{fig:PetalExample}}
\end{figure}

\begin{fact}[\cite{AN19}]\label{fact:W_rProp}
	For every $y\in W_{r}\left(Y,x_{0},t\right)$ and $l\ge 0$, $B_{G[Y]}(y,l)\subseteq W_{r+4l}\left(Y,x_{0},t\right)$.
\end{fact}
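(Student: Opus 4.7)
The plan is to reuse, for every $z \in B_{G[Y]}(y,l)$, the same shortest-path witness $p$ that originally certified membership of $y$ in the petal. Fix $y\in W_r(Y,x_0,t)$; by definition there exists $p\in P_{x_0 t}$ with $d_Y(p,t)\le r$ such that $y\in B_{(Y,\rho(Y,x_0,p))}(p,s)$, where $s := (r-d_Y(p,t))/2$. I will argue that, for every $z\in B_{G[Y]}(y,l)$, this same $p$ witnesses $z\in W_{r+4l}(Y,x_0,t)$, namely $z\in B_{(Y,\rho(Y,x_0,p))}(p,s+2l)$. Because $s+2l = (r+4l-d_Y(p,t))/2$ and $d_Y(p,t)\le r\le r+4l$, the point $p$ is an admissible witness for the enlarged petal, so $z$ lies in $W_{r+4l}(Y,x_0,t)$, as required.

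The only ingredient needed is that the cone-metric $\rho := \rho(Y,x_0,p)$ is $2$-Lipschitz with respect to $d_Y$. Writing $\phi(u) := d_Y(x_0,u) - d_Y(p,u)$, the definition reduces to $\rho(u,v) = |\phi(u)-\phi(v)|$, so $\rho$ is already a pseudo-metric for free. Since both $d_Y(x_0,\cdot)$ and $d_Y(p,\cdot)$ are $1$-Lipschitz in $G[Y]$ by the triangle inequality, I get $|\phi(u)-\phi(v)|\le 2\,d_Y(u,v)$; specializing to $u=y$, $v=z$ yields $\rho(y,z)\le 2l$. Combining with $\rho(y,p)\le s$ via the triangle inequality of $\rho$ gives $\rho(z,p)\le s+2l$, which is exactly the required cone-ball bound.

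There is essentially no obstacle: once the witness $p$ is fixed, the whole argument collapses to a one-line Lipschitz estimate for $\phi$ plus the pseudo-metric triangle inequality for $\rho$. The factor $4$ in the conclusion (as opposed to $2$) is tight for this argument, since the $2l$ slack accumulated in the cone-metric estimate is doubled by the halving in the cone-ball radius $(r-d_Y(p,t))/2$ baked into the definition of $W_r$.
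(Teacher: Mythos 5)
Your argument is correct and complete. The paper does not reprove this fact --- it is imported as a black box from \cite{AN19} --- and your derivation (fixing the original witness $p\in P_{x_0t}$, observing that the cone potential $\phi(u)=d_Y(x_0,u)-d_Y(p,u)$ is $2$-Lipschitz so that $\rho(y,z)\le 2l$, and noting that the admissible radius around $p$ grows from $s=\frac{r-d_Y(p,t)}{2}$ to $s+2l=\frac{(r+4l)-d_Y(p,t)}{2}$ while $d_Y(p,t)\le r\le r+4l$ keeps $p$ admissible) is exactly the standard argument given there.
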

Note that \Cref{fact:W_rProp} implies that $W_r$ is monotone in $r$, i.e., for $r\le r'$, it holds that $W_{r}\subseteq W_{r'}$.

\sloppy For each $j$, the clusters $\underline{X}_j,X_j,\widetilde{X}_j$ returned by the \texttt{create-petal} procedure executed on $(G[Y_{j-1}],[\lo,\hi],x_0,t_j)$ will all be petals of the form $W_{r}(Y_{j-1},x_0,t_j)$ for $r\in [\lo,\hi]$. Specifically, we will chose some $r_1,r_2,r_3\in [\lo,\hi]$ such that $\underline{X}_j=W_{r_1}(Y_{j-1},x_0,t_j)$, $X_j=W_{r_2}(Y_{j-1},x_0,t_j)$ and $\widetilde{X}_j=W_{r_3}(Y_{j-1},x_0,t_j)$ while $r_2-r_1=r_3-r_2=\Theta(\frac{\hi-\lo}{k\log\log \mu(Y_{j-1})})$.

The following facts were proven in \cite{AN19} regarding the \texttt{petal-decomposition} procedure. They hold in our version of the algorithm using exactly the same proofs.
\begin{fact}[\cite{AN19}]\label{Fact:PetaDecomp} 
	Consider the \texttt{petal-decomposition} procedure executed on $X$ with center $x_0$, target $t$ and radius $\Delta$. It creates clusters $(\underline{X_0},X_0,\widetilde{X_0}),(\underline{X_1},X_1,\widetilde{X_1}),\dots,(\underline{X_s},X_s,\widetilde{X_s})$. During the process, we had temporary metrics $Y_0=X$, and $Y_j=Y_{j-1}\backslash \underline{X_j}$. For $j\ge 1$ the cluster $\widetilde{X}_j$ had center $x_j$ connected to $y_j\in Y_j$ and target $t_j\in \widetilde{X}_j$.
	Throughout the execution, the following hold:
	\begin{enumerate}
		\item For every $j$ and $z\in Y_j$, $P_{z,x_0}(X)\subseteq G[Y_j]$. In particular, the radius of the $Y_j$'s is monotonically non-increasing: $\Delta_{x_0}(Y_0)\ge\Delta_{x_0}(Y_1)\ge\dots\ge\Delta_{x_0}(Y_s)$.
		In particular $X_0$ is a connected cluster with radius at most $3\Delta/4$.
		\item For each $j\ge 0$, $\widetilde{X}_j$ is a connected cluster with center $x_j$, target $t_j$ such that $\Delta_{x_j}(X_j)\le 3\Delta/4$. In particular, the entire shortest path from $x_j$ to $t_j$ (in $Y_{j-1}$) is in $\widetilde{X}_j$.
		\item If a special first cluster is created, then $y_1\in X_0$ and $P_{x_0,t}(X)\subseteq G[X_0\cup X_1]$. If no special first cluster is created, then $P_{x_0,t}(X)\subseteq G[X_0]$.
	\end{enumerate}
\end{fact}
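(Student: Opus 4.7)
My plan is to prove all three assertions simultaneously by induction on the iteration index $j$ of the \texttt{petal-decomposition} loop, using as the main lever the monotonicity of $W_r$ in $r$ together with \Cref{fact:W_rProp}, which says $B_{G[Y]}(y,l)\subseteq W_{r+4l}$. The existence of the gaps $r_2-r_1=r_3-r_2=\Theta(\tfrac{\hi-\lo}{k\log\log\mu(Y_{j-1})})$ between the three petal radii defining $\underline{X}_j\subseteq X_j\subseteq\widetilde{X}_j$ is what will let me separate the ``deleted'' inner part from the rest.

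For Claim 1, I would proceed by induction. The base $j=0$ is trivial since $Y_0=X$. For the step, by the inductive hypothesis every shortest path in $X$ from $x_0$ to any $z\in Y_{j-1}$ stays in $G[Y_{j-1}]$, so it suffices to prove that such a path, for $z\in Y_j=Y_{j-1}\setminus\underline{X}_j$, does not enter $\underline{X}_j=W_{r_1}(Y_{j-1},x_0,t_j)$. Suppose for contradiction some vertex $u$ of $P_{z,x_0}(Y_{j-1})$ lies in $\underline{X}_j$. Then the sub-path from $u$ toward $z$ exits $\underline{X}_j$ at some first vertex $z'\notin\underline{X}_j$ whose predecessor is still in $\underline{X}_j$; applying \Cref{fact:W_rProp} to this predecessor with $l$ equal to one edge weight shows $z'\in W_{r_1+4l}$, and more importantly, continuing along the path and iterating \Cref{fact:W_rProp} we conclude $z\in W_{r_1+4 d_{Y_{j-1}}(u,z)}$. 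Choosing the edge crossing the boundary to lie strictly inside the gap $r_3-r_1$ forces $z\in\widetilde{X}_j$, and in fact, because one can always route through $x_0$, such a shortest path would have to re-enter $\widetilde{X}_j$ all the way out, contradicting that it is a shortest path once $z\notin\underline{X}_j$. Monotonicity of the radii $\Delta_{x_0}(Y_j)$ then follows because every $z\in Y_j$ achieves its distance from $x_0$ via a path that stays in $Y_j$.

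Claim 2 is essentially built into the definition of the petal $\widetilde{X}_j=W_{r}(Y_{j-1},x_0,t_j)$. Connectedness is immediate: every constituent cone-ball $B_{(Y_{j-1},\rho(Y_{j-1},x_0,p))}(p,\tfrac{r-d(p,t_j)}{2})$ contains the path-vertex $p$, and all such $p$ lie on the shortest path $P_{x_0,t_j}(Y_{j-1})$, which is itself connected and sits inside $\widetilde{X}_j$ because every $p$ with $d(p,t_j)\le r$ is a center of a nontrivial cone-ball. For the radius bound $\Delta_{x_j}(\widetilde{X}_j)\le 3\Delta/4$, I would combine: (i) $x_j$ lies on $P_{x_0,t_j}$ just inside the petal, so $d(x_j,t_j)\le r\le\Delta/4$; (ii) every $u\in\widetilde{X}_j$ is at cone-distance at most $r/2$ from some $p$ on that shortest-path segment, so $d_{G[\widetilde{X}_j]}(x_j,u)\le 2r+d(x_j,p)+$ (the excess over the direct $x_0$-$u$ path); routine manipulation using $\hi-\lo\le\Delta/4$ and the upper bound $d(x_0,u)\le\Delta$ gives $\Delta_{x_j}(\widetilde{X}_j)\le 3\Delta/4$. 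For the central cluster $X_0=Y_s$, the loop terminates once $Y_{s}\subseteq B_X(x_0,\tfrac{3}{4}\Delta)$, yielding the bound.

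Claim 3 follows by tracing the construction. When $d_X(x_0,t)\ge\Delta/2$, the first petal $\widetilde{X}_1$ is created so that every $p\in P_{x_0,t}$ with $d(p,t)\le d_X(x_0,t)-\Delta/4$ lies in it; thus $y_1$, the first vertex on this path outside $\widetilde{X}_1$, satisfies $d_X(x_0,y_1)<\Delta/2\le 3\Delta/4$, which together with Claim 1 implies $y_1\in Y_s=X_0$ (it can never be absorbed into a later $\underline{X}_j$, since subsequent targets $t_j$ satisfy $d_X(x_0,t_j)>3\Delta/4$ and Claim 1's padding argument shields $y_1$). The path $P_{x_0,t}$ then splits into its $\widetilde{X}_1$ portion and its $X_0$ portion, giving $P_{x_0,t}(X)\subseteq G[X_0\cup X_1]$. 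When no special first petal is created, $d_X(x_0,t)<\Delta/2$, and by Claim 1 the path from $x_0$ to $t$ lies in $Y_s=X_0$. I expect the main technical obstacle to be making the padding/gap argument in Claim 1 fully rigorous: one must carefully bookkeep the cone-metric distances across the three nested petals and confirm that the linear slack $4l$ in \Cref{fact:W_rProp} is safely absorbed by the prescribed gap $r_3-r_1=\Theta(\tfrac{\hi-\lo}{k\log\log\mu})$, accounting also for the edge re-weighting in \cref{line:edgeWeightReduce}.
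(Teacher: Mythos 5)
First, note that the paper does not actually prove this statement: it is imported verbatim from \cite{AN19} with the remark that ``the same proofs'' apply, so your attempt must be measured against the arguments in \cite{AN19} rather than against anything in this paper.

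The central step of your argument for Claim~1 does not work. You try to show that a shortest path $P_{x_0,z}$ with $z\in Y_j$ cannot enter $\underline{X}_j=W_{r_1}$ by iterating \Cref{fact:W_rProp} along the path from the entry point $u$ to $z$; this only yields $z\in W_{r_1+4d_{Y_{j-1}}(u,z)}$, and $d_{Y_{j-1}}(u,z)$ can be as large as $\Delta$, so the additive blow-up $4d(u,z)$ is not ``safely absorbed'' by the gap $r_3-r_1=\Theta(\frac{\hi-\lo}{k\log\log\mu})$ -- it dwarfs it. Worse, even if you could force $z\in\widetilde{X}_j$, that is not a contradiction: $Y_j=Y_{j-1}\setminus\underline{X}_j$, so vertices of $\widetilde{X}_j\setminus\underline{X}_j$ legitimately remain in $Y_j$ (they are exactly the duplicated vertices). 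The correct mechanism, and the one used in \cite{AN19}, is that cone-balls -- and hence petals -- are closed under \emph{extending} shortest paths from $x_0$, with no loss in radius: if $u\in B_{(Y,\rho(Y,x_0,p))}(p,s)$, i.e.\ $d(p,u)+d(p,x_0)\le d(x_0,u)+s$, and $u$ lies on $P_{x_0,z}(Y)$, then $d(p,z)+d(p,x_0)\le d(p,u)+d(u,z)+d(p,x_0)\le d(x_0,u)+d(u,z)+s=d(x_0,z)+s$, so $z$ lies in the same cone-ball and hence in $W_{r_1}$ itself. This immediately contradicts $z\notin\underline{X}_j$; no gap between $r_1,r_2,r_3$ is needed for Claim~1 (the gaps are used only in the distortion analysis, \Cref{lem:paddedImpliesStretch}). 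Your phrase ``because one can always route through $x_0$, such a shortest path would have to re-enter $\widetilde{X}_j$'' is not an argument.

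Two secondary issues: in Claim~2 your step (i), $d(x_j,t_j)\le r\le\Delta/4$, is false for the special first petal, where $r\in[d_X(x_0,t)-\Delta/2,\,d_X(x_0,t)-\Delta/4]$ can be as large as $3\Delta/4$ -- this is precisely the case for which the edge-halving in \cref{line:edgeWeightReduce} exists, and the ``routine manipulation'' you defer is where the real work of the radius bound lives. In Claim~3, the assertion that ``Claim~1's padding argument shields $y_1$'' from later petals needs an actual computation (every vertex of a later petal $W_{r'}(Y_{j-1},x_0,t_j)$ with $r'\le\Delta/8$ and $d(x_0,t_j)>\frac34\Delta$ satisfies $d(x_0,\cdot)\ge\frac34\Delta-\frac{\Delta}{8}-\frac{\Delta}{16}>\frac{\Delta}{2}>d(x_0,y_1)$, using the cone condition); as written it is an appeal to the very claim whose proof is broken.
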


Next, we cite the relevant properties regarding the  \texttt{hierarchical-petal-decomposition} procedure. The proofs follow  almost the same lines as \cite{AN19}, with slight and natural adaptations due to the embedding being a clan embedding with duplicate copies for some vertices. In any case, no new ideas are required and we will skip the proof.
\begin{fact}[\cite{AN19}]\label{Fact:HierachicalPetaDecomp} 
	Consider the \texttt{hierarchical-petal-decomposition} procedure executed on $X$ with center $x_0$, target $t$ and radius $\Delta$.
	The following properties hold:
	\begin{enumerate}
		\item The algorithm returns a spanning clan embedding into a tree $T$.
		\item The tree $T$ has radius at most $4\Delta_{x_0}(X)$. That is 
		\[
		\Delta_{x_0}(T)\le 4\Delta_{x_0}(X)~.
		\]
	\end{enumerate}
\end{fact}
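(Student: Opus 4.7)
The plan is to prove both parts by induction on $|X|$; the base case $|X|=1$ is immediate since $T=G[X]$ is a single vertex with radius $0$ and the identity clan embedding.

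For Part~1, I would assume inductively that each recursive call on $(G[\widetilde{X}_j], x_j, t_j, \Delta_j)$ returns a spanning clan embedding $(T_j, f_j, \chi_j)$, and then verify Definitions~\ref{def:one-to-many} and \ref{def:clan} for the combined $(f, \chi)$. The map $f(z) := \bigcup_j f_j(z)$ is one-to-many because the $V(T_j)$'s are disjoint (being joined only by the $s$ connecting edges) and each $f_j$ is one-to-many on $\widetilde{X}_j$; moreover $f(X) = V(T)$ since $X = \bigcup_j \widetilde{X}_j$. Every edge of $T$ is either internal to some $T_j$ (hence, by IH, corresponds to a $G$-edge of the same weight) or is a connecting edge $\{\chi(y_j), \chi(x_j)\}$, which spans the $G$-edge $\{y_j, x_j\}$ produced by \texttt{petal-decomposition}. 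The chief map $\chi$ is well-defined because $X = X_0 \cup \bigcup_{j\ge 1} X_j$, so every $z \in X$ lies in some $X_{j^*}$ and then $\chi_{j^*}(z) \in f_{j^*}(z) \subseteq f(z)$. $T$ is a tree because we glue $s+1$ trees by exactly $s$ edges. Finally, $(f,\chi)$ is dominating because $T$ is spanning: any $T$-path between $u' \in f(u)$ and $v' \in f(v)$ projects edge-by-edge to a walk of the same weight in $G$, so $d_T(u',v') \ge d_G(u,v)$.

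For Part~2, let $\Delta = \Delta_{x_0}(X)$. By Fact~\ref{Fact:PetaDecomp}(2), $\Delta_{x_j}(\widetilde{X}_j) \le \tfrac{3}{4}\Delta$ for all $j$, so by IH each $T_j$ has radius at most $3\Delta$ from $\chi_j(x_j) = \chi(x_j)$ (the equality uses that $x_j \in X_j$ for $j \ge 1$ and $x_0 \in X_0$, per the definition of $\chi$). To bound $d_T(\chi(x_0), v')$ for an arbitrary $v' \in V(T)$, let $j$ be the unique index with $v' \in V(T_j)$. If $j=0$ we are done. If $j \ge 1$, I route in $T$ from $\chi(x_0)$ through an appropriate chain of ancestor sub-trees to $\chi(x_j)$, then inside $T_j$ to $v'$. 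The inner leg costs at most $3\Delta$; the remaining legs, combined, must be shown to cost at most $\Delta$. This is done by exhibiting a $T$-path from $\chi(x_0)$ to $\chi(x_j)$ whose total weight equals the $G$-weight of the shortest path $P_{x_0,x_j}(X) \subseteq P_{x_0,t_j}(X)$: Fact~\ref{Fact:PetaDecomp}(3) and the monotonicity in (1) guarantee this path lies in $G[Y_{j-1}]$, so its consecutive edges either lie inside some earlier petal $\widetilde{X}_{j'}$ (and are covered chief-to-chief by a sub-path in $T_{j'}$, using the IH, modulo the up-to-factor-$2$ re-weighting in \cref{line:edgeWeightReduce}) or equal one of the connecting edges $\{y_{j'}, x_{j'}\}$. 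Since $d_G(x_0, x_j) \le \Delta$, the first legs telescope to at most $\Delta$, giving the desired bound.

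The main obstacle is precisely the bookkeeping in the last step: one must argue that the concatenated chief-to-chief sub-paths across the intermediate $T_{j'}$'s faithfully telescope along $P_{x_0, x_j}(X)$ without blow-up beyond the re-weighting factor, using the fact that the chief $\chi(z)$ of any vertex $z$ on this path is the copy sitting in the innermost petal that contains $z$. This is where the Abraham--Neiman argument transfers to the clan-embedding setting: the connecting chain is described in terms of the $\chi(\cdot)$ copies rather than unique images, but since $\chi$ was defined to pick exactly the innermost petal containing each vertex, these chiefs coincide with the unique images used in the classical spanning-tree analysis along the relevant path, so no genuinely new ideas are required and the factor-$4$ bound is preserved.
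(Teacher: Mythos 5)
First, note that the paper itself does not prove this fact: it is imported from \cite{AN19} with the remark that only ``slight and natural adaptations'' are needed for the clan-embedding setting, so there is no in-paper proof to match your sketch against line by line. Your treatment of Part~1 is the routine adaptation one would expect and is essentially fine (modulo the unverified assertion that $\chi(x_j)=\chi_j(x_j)$, i.e.\ that $j$ is the \emph{minimal} index with $x_j\in X_j$, which is what makes the connecting edges actually attach $T_j$ to the rest).

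For Part~2 there is a genuine gap in the step you yourself flag as the main obstacle. You reduce the radius bound to showing that the $T$-path from $\chi(x_0)$ to $\chi(x_j)$ has weight at most $\Delta$, and you justify this by claiming its weight ``equals the $G$-weight of $P_{x_0,x_j}(X)$'' because each edge of that path lying inside an earlier petal $\widetilde{X}_{j'}$ is ``covered chief-to-chief by a sub-path in $T_{j'}$, using the IH.'' But your induction hypothesis only gives that $T_{j'}$ is a dominating spanning clan embedding of bounded radius; a spanning tree does not preserve arbitrary edges or sub-paths isometrically, so $d_{T_{j'}}(\chi(u),\chi(w))$ can be far larger than $w(u,w)$ for consecutive vertices $u,w$ of $P_{x_0,x_j}(X)$, and the telescoping to $d_G(x_0,x_j)\le\Delta$ does not follow. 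The Abraham--Neiman argument closes exactly this hole by carrying a \emph{stronger} invariant through the recursion: the returned tree preserves the shortest path from the center to the designated \emph{target} (up to the controlled effect of the halving in \cref{line:edgeWeightReduce}). This is the entire reason the target $t$ is part of the input, why $t_0$ is set to $y_1$ and $t_1$ to $t$ in \Cref{alg:petal-d}, and why the connecting edge $(y_j,x_j)$ is chosen on $P_{x_0 t_j}$. Without adding that path-preservation invariant to your induction hypothesis (and restating it in terms of the chiefs $\chi(\cdot)$), the chain-of-ancestor-subtrees step cannot be completed, so the sketch as written does not yield the $4\Delta$ bound.
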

Note that it follows from \Cref{Fact:HierachicalPetaDecomp}, that the distance between every pair of vertices in the tree $T$ is at most $8\Delta_{x_0}(X)$.

We will need the following observation. Roughly speaking, it says that when the \texttt{petal-decomposition} algorithm is carving out $(\underline{X}_{j+1},X_{j+1},\widetilde{X}_{j+1})$, it is oblivious to the past petals, 
edges and targets -- it only cares about $Y_j$ and the original diameter $\Delta$.
\begin{observation}\label{ob:delta}
	Assume that \texttt{petal-decomposition} on input $(G\left[X\right],x_{0},t,\Delta_{x_0}(X))$ returns as output $\left(\left\{ \underline{X}_{j},X_{j},\widetilde{X}_{j},x_{j},t_{j},\Delta_{j}\right\} _{j\in\{0,\dots,s\}},\left\{ (y_{j},x_{j})\right\} _{j\in\{1,\dots,s\}}\right)$. \\
	Then running \texttt{petal-decomposition} on input $(G\left[Y_l\right],x_{0},t_0,\Delta_{x_0}(X))$ will output
	$\left(\left\{ \underline{X}_{j},X_{j},\widetilde{X}_{j},x_{j},t_{j},\Delta_{j}\right\} _{j\in\{0,l+1,\dots,s\}},\left\{ (y_{j},x_{j})\right\} _{j\in\{l+1,\dots,s\}}\right)$.
\end{observation}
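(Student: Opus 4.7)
The plan is to walk through the execution of \texttt{petal-decomposition} on input $(G[Y_l], x_0, t_0, \Delta_{x_0}(X))$ symbolically, and verify that each step reproduces the corresponding step of the original run from iteration $l+1$ onward. A single observation underlies everything: by \Cref{Fact:PetaDecomp}(1), every shortest path from $z \in Y_l$ to $x_0$ in $G[X]$ lies inside $G[Y_l]$, so $d_{Y_l}(x_0, z) = d_X(x_0, z)$ for every $z \in Y_l$; in particular, the ball $B(x_0, r)$ and the candidate target sets coincide across the two runs.

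The first key step is to show that the new run does \emph{not} create a special first cluster, i.e.\ $d_{Y_l}(x_0, t_0) < \Delta/2$, where $\Delta := \Delta_{x_0}(X)$. If the original run already skipped the special cluster, then $t_0 = t$ with $d_X(x_0, t) < \Delta/2$ by the algorithm's case split, and the claim follows from distance-preservation. Otherwise $t_0 = y_1$; here I would use that the inner special petal was carved as $\underline{X}_1 = W_{r_1}$ with $r_1 \ge d_X(x_0, t) - \Delta/2$ (the lower endpoint of the \texttt{create-petal} interval), and that $y_1$ lies on $P_{x_0, t}(X)$ immediately outside $\underline{X}_1$. Since the cone-metric definition of $W_{r_1}$ already contains every point $p \in P_{x_0, t}$ with $d_X(p, t) \le r_1$ (via the trivial cone-ball of radius $\tfrac{r_1 - d_X(p,t)}{2}$ around $p$), the fact $y_1 \notin W_{r_1}$ forces $d_X(y_1, t) > r_1$; collinearity on the shortest path then gives $d_X(x_0, y_1) = d_X(x_0, t) - d_X(y_1, t) < \Delta/2$.

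The second step is an induction on $j = l+1, \ldots, s$ showing that the $(j-l)$-th iteration of the while loop in the new run reproduces the $j$-th iteration of the original. At the start of iteration $j$, both runs have working set $Y_{j-1}$, and by distance-preservation the set of eligible targets $\{z \in Y_{j-1} : d(x_0, z) > 3\Delta/4\}$ is the same, so any fixed deterministic tie-breaking rule selects the same $t_j$. The subsequent call $\texttt{create-petal}(G[Y_{j-1}], [0, \Delta/8], x_0, t_j)$ is then identical in both runs and returns the same $(\underline{X}_j, X_j, \widetilde{X}_j)$ and edge $(y_j, x_j)$. Both runs exit the loop when $Y_s \subseteq B(x_0, 3\Delta/4)$, yielding the same central cluster $\widetilde{X}_0 = Y_s$ and the same $(t_0, \Delta_0)$.

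The main obstacle is the special-cluster sub-case of the first step, where one must unfold the cone-metric definition of $W_{r_1}$ and exploit the precise location of $y_1$ on $P_{x_0,t}$ to pin down $d_X(x_0, y_1) < \Delta/2$. Everything that follows is routine bookkeeping, made possible by the fact that restricting to $Y_l$ preserves all shortest paths to $x_0$.
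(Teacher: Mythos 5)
Your argument is correct, and it is exactly the verification the paper leaves implicit (the observation is stated without proof, justified only by the remark that the algorithm ``only cares about $Y_j$ and the original diameter $\Delta$''). The one genuinely non-routine point — that the rerun on $Y_l$ skips the special first cluster because $y_1\in P_{x_0,t}$, $y_1\notin W_{r_1}$ with $r_1\ge d_X(x_0,t)-\Delta/2$ forces $d_X(x_0,y_1)<\Delta/2$ — is handled correctly, and the rest follows from \Cref{Fact:PetaDecomp}(1) ensuring $d_{Y_l}(x_0,\cdot)=d_X(x_0,\cdot)$ on $Y_l$, together with consistent tie-breaking in the (formally ``arbitrary'') choice of the targets $t_j$.
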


\subsection{Choosing a Radius}\label{subsec:createPetal}

Fix some $1\le j\le s$, and consider carving the petal $(\underline{X}_{j},X_{j},\widetilde{X}_{j})$ from the graph induced on $Y=Y_{j-1}$. Our choice of radius bears similarities to the one in \cite{ACEFN20}. The properties of the petal decomposition described above (in \Cref{subsec:PetalDecompDesc}), together with \Cref{Fact:PetaDecomp} and \Cref{Fact:HierachicalPetaDecomp},  hold for any radius picked from a given interval. We will now describe the method to select a radius that suits our needs. The \texttt{petal-decomposition} algorithm provides an interval $[\lo,\hi]$ of size at least $\Delta/8$, and for each $r\in[\lo,\hi]$ let $W_{r}(Y,x_0,t)\subseteq Y$ denote the petal of radius $r$ (usually we will omit $(Y,x_0,t)$.).

Our algorithm will return three clusters: $\underline{X}_{j}\subseteq X_{j}\subseteq \widetilde{X}_{j}$ which will correspond to three petals $W_{r-\frac{R}{4Lk}}\subseteq W_r\subseteq W_{r+\frac{R}{4Lk}}$ respectively, where $\frac{R}{4Lk}=\Theta(\frac{\hi-\lo}{k\log\log \mu(Y)})=\Theta(\frac{\Delta}{k\log\log \mu(Y)})$. 
The algorithm will be executed recursively on $\widetilde{X}_{j}$, while $\underline{X}_j$ will be removed from $Y$. The cluster $X_j$ will only be used in order to define $\chi$ (during the \texttt{hierarchical-petal-decomposition} procedure). 
\Cref{fact:W_rProp} implies that the vertices in $X_j$ are padded by $\widetilde{X}_j$, while the vertices in $Y\backslash X_j$ are padded by $Y\backslash \underline{X}_j$.
If a pair of vertices $u,v$ do not belong to the same cluster (e.g. $u\in \underline{X}_j$ and $v\notin \widetilde{X}_j$) then $d_Y(u,v)=\Omega(\frac{\Delta}{k\log\log\mu(Y)})$. 
By \Cref{Fact:HierachicalPetaDecomp}, the diameter of the final tree will be $O(\Delta)$. In particular, the distance in the embedded tree between every copy of $u$ and $v$ will be bounded by $O(\Delta)=O(k\log\log\mu(Y))d_Y(u,v)$.
Note that only the vertices in $\widetilde{X}_j\backslash \underline{X}_j$ are duplicated. Thus, our goal  is to choose a radius $r$ such that the measure of the duplicated vertices would be small.

Our algorithm to select a radius is based on region growing techniques as in \cite{ACEFN20}, which is more involved than the region growing in \Cref{thm:ClanUltrametric}.
In the petal decomposition framework, we cannot pick as the center a vertex maximizing the "small ball" (as the target $t_j$ must be at distance $\frac34$ from $x_0$). We first choose an appropriate range that mimics that choice (see \cref{line:a-b} in \Cref{alg:create-petal}) --- this is the reason for the extra factor of $\log\log \mu(Y)$.
The basic idea in region growing is to charge the measure of the duplicated vertices (i.e. $\widetilde{X}_j\backslash\underline{X}_j$), to all the vertices in the cluster $\widetilde{X}_j$. 
In order to avoid a range in $[\lo,\hi]$ that contains more than half of the measure, we will cut either in $[\lo,\midP]$ or in $[\midP,\hi]$ where $\midP=(\hi+\lo)/2$.
Specifically, in the case where $W_\midP$ has measure at least $\mu(Y)/2$, we "cut backward" in the regime $[\midP,\hi]$, and charge the measure of duplicated vertices to the remaining graph $Y_j$, rather than to $\widetilde{X}_j$.

\begin{algorithm}[h]
	\caption{$(\underline{X},X,\widetilde{X})=\texttt{create-petal}(G[Y],\mu,[\lo,\hi],x_0,t)$}	\label{alg:create-petal}
	\DontPrintSemicolon
	$L=\lceil1+\log\log \mu(Y)\rceil$\;
	$R=\hi-\lo$; $\midP=(\lo+\hi)/2=\lo+R/2$\;
	For every $r$, denote $W_{r}=W_{r}(Y,x_0,t)$,   $w_r=\mu(W_r)$\;
	\If{$w_{\midP}\le \frac{\mu(Y)}{2}$}{
		Choose $\left[a,b\right]\subseteq\left[\lo,\midP\right]$\label{line:a-b}
		such that $b-a=\frac{R}{2L}$ and $w_a\ge w_b^2/\mu(Y)$
		\tcp*{see \Cref{lem:interval-choose}}	  
		
		Pick $r\in\left[a+\frac{b-a}{2k},b-\frac{b-a}{2k}\right]$ such that
		$w_{r+\frac{b-a}{2k}}\le w_{r-\frac{b-a}{2k}}\cdot\left(\frac{w_{b}}{w_{a}}\right)^{\frac{1}{k}}$\tcp*{see \Cref{lem:radiuosRG}}
	}
	\Else{
		For every $r\in[\lo,\hi]$, denote  $q_r=\mu(Y\backslash W_r)$\;
		Choose $\left[b,a\right]\subseteq\left[\midP,\hi\right]$
		such that $a-b=\frac{R}{2L}$ and $q_a\ge q_b^2/\mu(Y)$
		\tcp*{see \Cref{lem:interval-choose-Ver2}}
		
		Pick $r\in\left[b+\frac{b-a}{2k},a-\frac{b-a}{2k}\right]$ such that
		$q_{r-\frac{a-b}{2k}}\le q_{r+\frac{a-b}{2k}}\cdot\left(\frac{q_{b}}{q_{a}}\right)^{1/k}$
		\tcp*{see \Cref{lem:radiuosRG-Ver2}}
	}
	\Return $(W_{r-\frac{R}{4Lk}},W_r,W_{r+\frac{R}{4Lk}})$\;
\end{algorithm}

\subsection{Proof of \Cref{lem:SpanningClanTreeMeasure}: the distributional case}\label{subsec:SpanningClanCorrectness}
Let $u,v\in V$ be a pair of vertices, let $(f,\chi)$ be the spanning clan embedding into a tree $T$  returned by calling \texttt{hierarchical-petal-decomposition} on $(G[V],z,z,\Delta_z(V))$ for arbitrary $z\in V$.

\begin{lemma}\label{lem:paddedImpliesStretch}
	The clan embedding $(f,\chi)$ has distortion $O(\rho)=O(k\log\log \mu(V))$.
\end{lemma}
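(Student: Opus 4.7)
The plan is to prove, by induction on $|X|$, the strengthened statement that for the clan embedding $(f,\chi)$ produced by \texttt{hierarchical-petal-decomposition}$(G[X],x_0,t,\Delta)$ with $\Delta=\Delta_{x_0}(X)$, every pair $u,v\in X$ satisfies $\min_{v'\in f(v)} d_T(v',\chi(u)) \le c'\rho\cdot d_{G[X]}(u,v)$ for some absolute constant $c'$. At the outermost call $X=V$, so $d_{G[X]}=d_G$ and this yields the lemma; the base case $|X|=1$ is immediate. For the inductive step I split on whether $d_{G[X]}(u,v)$ is large or small compared with $\Delta/\rho$, where $\rho=\Theta(Lk)=\Theta(k\log\log\mu(V))$ is the uniform padding scale from the algorithm (at every recursion level, $L\le 1+\log\log\mu(V)$).

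If $d_{G[X]}(u,v)\ge \Delta/\rho$, \Cref{Fact:HierachicalPetaDecomp} gives $\Delta_{x_0}(T)\le 4\Delta$, so the diameter of $T$ is at most $8\Delta$. Then for any $v'\in f(v)$, $d_T(v',\chi(u))\le 8\Delta \le 8\rho\cdot d_{G[X]}(u,v)$, which suffices with $c'\ge 8$.

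The interesting case is $d_{G[X]}(u,v)<\Delta/\rho$. Let $j^*$ be the index chosen in \Cref{alg:h-petal} so that $\chi(u)=\chi_{j^*}(u)$; either $j^*\ge 1$ with $u\in X_{j^*}$, or $j^*=0$ with $u\in Y_s=\widetilde{X}_0$. I will show that the entire shortest $G[X]$-path $P$ from $u$ to $v$ lies in $\widetilde{X}_{j^*}$. Once this is established, $d_{G[\widetilde{X}_{j^*}]}(u,v)=d_{G[X]}(u,v)$; since $T_{j^*}$ is a subtree of $T$ with $\chi(u)=\chi_{j^*}(u)\in T_{j^*}$ and $f_{j^*}(v)\subseteq f(v)$, the induction hypothesis applied to the recursive call on $\widetilde{X}_{j^*}$ yields some $v'\in f_{j^*}(v)\subseteq f(v)$ with $d_T(v',\chi(u))=d_{T_{j^*}}(v',\chi_{j^*}(u))\le c'\rho\cdot d_{G[\widetilde{X}_{j^*}]}(u,v)=c'\rho\cdot d_{G[X]}(u,v)$.

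The main obstacle, and the only place where the petal-decomposition padding is used, is establishing $P\subseteq \widetilde{X}_{j^*}$. \Cref{fact:W_rProp} combined with the choices $\underline{X}_j=W_{r-R/(4Lk)}\subseteq X_j=W_r\subseteq \widetilde{X}_j=W_{r+R/(4Lk)}$ yields two padding guarantees in $G[Y_{j-1}]$ with radius $\Omega(\Delta/\rho)$: for $w\in X_j$, $B_{G[Y_{j-1}]}(w,\Omega(\Delta/\rho))\subseteq\widetilde{X}_j$; and for $w\in Y_{j-1}\setminus X_j$, $B_{G[Y_{j-1}]}(w,\Omega(\Delta/\rho))\cap \underline{X}_j=\emptyset$. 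Suppose, for contradiction, that $P$ visits some $\underline{X}_{j'}$ with $j'<j^*$ (or any $j'\ge 1$ when $j^*=0$), and let $w$ be the first such vertex on $P$. By the choice of $w$, no vertex on $P$ before $w$ lies in $\underline{X}_1\cup\dots\cup\underline{X}_{j^*-1}$, so the prefix of $P$ from $u$ to $w$ lies in $Y_{j^*-1}\subseteq Y_{j'-1}$; consequently $d_{G[Y_{j'-1}]}(u,w)\le |P|<\Delta/\rho$. But since $j'<j^*$ or $j^*=0$, we have $u\in Y_{j'-1}\setminus X_{j'}$, contradicting the second padding guarantee. Hence $P\subseteq Y_{j^*-1}$; when $j^*=0$ this already gives $P\subseteq Y_s=\widetilde{X}_0$, and when $j^*\ge 1$ the first padding guarantee applied at $u\in X_{j^*}$ (noting every vertex of $P$ has $G[Y_{j^*-1}]$-distance at most $|P|<\Delta/\rho$ from $u$) places all of $P$ inside $\widetilde{X}_{j^*}$.
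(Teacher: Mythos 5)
Your proof is correct and follows essentially the same route as the paper's: the same case split on $d(u,v)$ versus $\Delta/\rho$, the same use of \Cref{Fact:HierachicalPetaDecomp} for the far case, and the same padding argument via \Cref{fact:W_rProp} showing that the relevant neighborhood of $u$ survives into $Y_{j^*-1}$ and is then absorbed by $\widetilde{X}_{j^*}$, after which the recursive call finishes the job. The only cosmetic differences are that you induct on $|X|$ rather than on the radius (valid, since with $\Delta=\Delta_{x_0}(X)$ every recursive call is on a proper subset) and that you track the shortest $u$--$v$ path via a first-offending-vertex argument where the paper tracks the whole ball $B_X(u,\Delta/\rho)$ via a minimal-index argument.
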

\begin{proof}
	The proof is by induction on the radius $\Delta$ of the graph (w.r.t. the center). The basic case is where the graph is a singleton and $\Delta=0$ is trivial.
	For the general case, consider a pair of vertices $u,v$. 
	Let $\left(\left\{\underline{X}_{j},X_{j},\widetilde{X}_{j},x_j,t_{j},\Delta_{j}\right\} _{j=0}^{s},\left\{ (y_{j},x_{j})\right\} _{j=1}^{s}\right)$ be the output of the call to the \texttt{petal-decomposition} procedure on $X,x_0$.
	For each $j\ge 1$, let $Y_{j-1}$ be the graph held during the $j$'th stage of the algorithm. Note that $Y_s=X_0$. Then we created the petals $(\underline{X}_{j},X_{j},\widetilde{X}_{j})=(W_{r_{j}-\frac{R}{4Lk}},W_{r_{j}},W_{r_{j}+\frac{R}{4Lk}})$, and $Y_{j}=Y_{j-1}\backslash \underline{X}_{j}$, where $L=\lceil1+\log\log \mu(Y_j)\rceil$, and $R\ge\frac{\Delta}{8}$. 
	Set $\rho=128\left\lceil 1+\log\log\mu(V)\right\rceil\cdot k=O(k\log\log\mu(V))$. Note that for every execution of the \texttt{create-petal} procedure at this stage, it holds that $\frac{\Delta}{\rho}\le\frac14\cdot\frac{R}{4Lk}$.
	
	First, consider the case where $d_G(u,v)\ge\frac\Delta\rho$. By \Cref{Fact:HierachicalPetaDecomp}, the distance between any pair of vertices in $T$ is $O(\Delta)$. In particular
	\[
	\min_{v'\in f(v)}d_{T}(v',\chi(u)) = O(\Delta)= O(\rho)\cdot d_{G}(u,v)~.
	\]
	
	Otherwise, $d_G(u,v)<\frac\Delta\rho$. Let  $B=B_X(u,\frac{\Delta}{\rho})$.
	For ease of notation, set $\underline{X}_{s+1}=X_{s+1}=\widetilde{X}_{s+1}=X_0=Y_s$.
	Let $j_u\in[1,s+1]$ be the minimal index such that $u\in X_j$. 
	We argue that $B\subseteq Y_{j_u-1}$. Assume otherwise, and let $j\in[1,j_u-1]$ be the minimal index such that $B\nsubseteq Y_{j}$. Thus, there is a vertex $u'\in B\cap \underline{X}_j\subseteq W_{r_j-\frac{R}{4Lk}}$, while by the minimality of $j$, it holds that $B\subseteq Y_{j-1}$.
	Using \Cref{fact:W_rProp}, it follows that 
	\[
	u\in B_{Y_{j-1}}(u',\frac{\Delta}{\rho})\subseteq W_{r_j-\frac{R}{4Lk}+4\cdot\frac{\Delta}{\rho}}\subseteq W_{r_j}=X_{j}~,
	\]
	a contradiction to the minimality of $j_u$.
	
	Next, we argue that $B\subseteq \widetilde{X}_{j_u}$.
	If $j_u=s+1$, then we have $B\subseteq Y_s=X_0=\widetilde{X}_{s+1}$ and done. Otherwise, as $u\in X_{j_u}=W_{r_{j_u}}$, using \Cref{fact:W_rProp} again we obtain
	\[
	B=B_{X}(u,\frac{\Delta}{\rho})=B_{Y_{j_u-1}}(u,\frac{\Delta}{\rho})\subseteq W_{r_{j_u}+4\cdot\frac{\Delta}{\rho}}\subseteq W_{r_{j_u}+\frac{R}{4Lk}}=\widetilde{X}_{j_u}~.
	\]
	
	In the \texttt{hierarchical-petal-decomposition} algorithm, we create a clan embedding $(f_{j_u},\chi_{j_u})$ of $\widetilde{X}_{j_u}$ into a tree $T_{j_u}$. The tree $T_{j_u}$ is incorporated into a global tree $T$, where $f(u)=\cup_jf_j(u)$,  $f(v)=\cup_jf_j(v)$, and $\chi(u)=\chi_{j_u}(u)$ by the definition of $j_u$.
	As $d_G(u,v)<\frac\Delta\rho$, it holds that $v\in B$. In particular, the shortest path from $v$ to $u$ in $G$ belongs to $B$, thus $d_{G[X_{j_u}]}(u,v)=d_G(u,v)$. By \Cref{Fact:PetaDecomp}, the radius of $X_{j_u}$ is at most $\frac34\Delta$; hence, using the induction hypothesis, we conclude that:
	\[
	\min_{v'\in f(v)}d_{T}(v',\chi(u))\le\min_{v'\in f_{j_{u}}(v)}d_{T_{j_{u}}}(v',\chi_{j_{u}}(u))=O(\rho)\cdot d_{G[X_{j_{u}}]}(u,v)=O(\rho)\cdot d_{G}(u,v)~.
	\]
	
\end{proof}

\begin{lemma}\label{lem:SavedTerminals}
	$\mathbb{E}_{v\sim\mu}[|f(v)|]\le \mu(V)^{1+1/k}$.
\end{lemma}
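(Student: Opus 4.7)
The plan is to mirror the strategy of \Cref{lem:clanTreeMeasure}, but adapted to the petal-decomposition recursion. I will use strong induction on $|V|$. The base case $|V|=1$ is immediate since $|f(v)|=1\le \mu(v)^{1+1/k}$ (using $\mu(v)\ge 1$). For the inductive step, note that each recursive piece $\widetilde{X}_j$ satisfies $|\widetilde{X}_j|<|V|$: the center $x_0$ is never contained in any $\widetilde{X}_j$ with $j\ge 1$ (since $d_X(x_0,t_j)>3\Delta/4$ exceeds any permissible petal radius $r\le \hi$), and $\widetilde{X}_0=Y_s$ is a proper subset whenever the main loop executed at least once. Because $f(v)=\bigcup_{j:\,v\in\widetilde{X}_j}f_j(v)$ is a disjoint union across the recursive trees, the inductive hypothesis applied to each $\widetilde{X}_j$ gives
\[
\mathbb{E}_{v\sim\mu}[|f(v)|] \;=\; \sum_{j=0}^{s}\mathbb{E}_{v\sim\mu|_{\widetilde{X}_j}}[|f_j(v)|] \;\le\; \sum_{j=0}^{s}\mu(\widetilde{X}_j)^{1+1/k},
\]
so the work reduces to the purely numerical inequality $\sum_{j=0}^{s}\mu(\widetilde{X}_j)^{1+1/k}\le \mu(V)^{1+1/k}$.

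Abbreviate $M_j=\mu(Y_j)$, $P_j=\mu(\widetilde{X}_j)$, $U_j=\mu(\underline{X}_j)$, and $T_j=M_{j-1}-P_j$, so $M_j=M_{j-1}-U_j$, $M_0=\mu(V)$, and $P_0=U_0=M_s$. The plan is to establish the telescoping inequality
\[
P_j^{1+1/k} + M_j^{1+1/k} \;\le\; M_{j-1}^{1+1/k} \qquad \text{for every } j\in\{1,\dots,s\}, \qquad (\star)
\]
and then telescope: summing $(\star)$ over $j=1,\dots,s$ gives $\sum_{j=1}^{s}P_j^{1+1/k}\le M_0^{1+1/k}-M_s^{1+1/k}$, after which adding $P_0^{1+1/k}=M_s^{1+1/k}$ yields the target bound.

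To prove $(\star)$ I split by the two branches of \texttt{create-petal}. In Case 1 ($w_\midP\le M_{j-1}/2$), the radius-selection guarantees (\Cref{lem:interval-choose} and \Cref{lem:radiuosRG}) yield $w_a\le U_j\le P_j\le w_b$, $P_j\le U_j(w_b/w_a)^{1/k}$, and $w_b^2\le w_a M_{j-1}$. Multiplying these gives $P_j^{k+1}\le U_j^k\cdot w_b^2/w_a\le U_j^k M_{j-1}$, i.e., $P_j^{1+1/k}\le U_j\,M_{j-1}^{1/k}=(M_{j-1}-M_j)\,M_{j-1}^{1/k}$, and $(\star)$ follows from $M_j\cdot M_{j-1}^{1/k}\ge M_j^{1+1/k}$. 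In Case 2 ($w_\midP>M_{j-1}/2$), the symmetric argument with $q_r=M_{j-1}-w_r$ (using \Cref{lem:interval-choose-Ver2}, \Cref{lem:radiuosRG-Ver2}, together with $q_b\ge M_j$) gives $M_j^{k+1}\le T_j^k\cdot q_b^2/q_a\le T_j^k M_{j-1}$, i.e., $M_j^{1+1/k}\le T_j M_{j-1}^{1/k}=(M_{j-1}-P_j)\,M_{j-1}^{1/k}$; combined with $P_j^{1+1/k}\le P_j\,M_{j-1}^{1/k}$ (since $P_j\le M_{j-1}$), this proves $(\star)$.

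The main obstacle is spotting the asymmetric form of $(\star)$: Case 1 naturally caps the petal mass $P_j$, while Case 2 naturally caps the \emph{residual} mass $M_j$, mirroring the forward-versus-backward region growing in \texttt{create-petal}. Once the correct potential $\Phi_j=M_j^{1+1/k}$ is identified and the two algorithmic guarantees are multiplied in the right order, both cases collapse to the elementary monotonicity $x\le y\Rightarrow x\,y^{1/k}\ge x^{1+1/k}$, and the $\log\log$ factor in the distortion of \Cref{lem:paddedImpliesStretch} never enters the analysis of the clan size.
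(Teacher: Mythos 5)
Your proof is correct and follows essentially the same route as the paper's: your per-petal inequality $(\star)$ is exactly the paper's Case~1/Case~2 computation for $j=1$ (multiplying the guarantee of \Cref{lem:radiuosRG} or \Cref{lem:radiuosRG-Ver2} with that of \Cref{lem:interval-choose} or \Cref{lem:interval-choose-Ver2}), and your telescoping over $j=1,\dots,s$ is just an unrolled version of the paper's appeal to \Cref{ob:delta} followed by induction on $Y_1$. The only loose end is the degenerate call in which no petal is carved and $\widetilde{X}_0=X$, where your induction on $|V|$ alone does not decrease; the paper handles this by inducting on both $|X|$ and $\Delta$ (the radius of $X_0$ drops to at most $3\Delta/4$), which is a one-line fix to your argument.
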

\begin{proof}	
	We prove by induction on $|X|$ and $\Delta$ that the one-to-many embedding $f$ constructed using the \texttt{hierarchical-petal-decomposition} algorithm  w.r.t. any  $(\ge1)$-measure $\mu$ fulfills $\mathbb{E}_{v\sim\mu}[|f(v)|]\le \mu(X)^{1+1/k}$.
	The base case where $X$ is a singleton is trivial. For the inductive step, assume we call \texttt{petal-decomposition} on $(G[X],x_0,t,\Delta)$ with $\Delta\ge\Delta_{x_0}(X)$ and measure $\mu$.
	
	Assume that the \texttt{petal-decomposition} algorithm does a non-trivial clustering of $X$ 
	to $\widetilde{X}_0,\widetilde{X}_1,\dots,\widetilde{X}_s$. (If it is the case that all vertices are sufficiently close to $x_0$, then no petal will be created, and the \texttt{hierarchical-petal-decomposition} will simply recurse on $(G[X],x_0,t,\Delta_{x_0}(X))$, so we can ignore this case.)
	Let $\widetilde{X}_1=W_{r+\frac{R}{4Lk}}$ be the first petal created by the \texttt{petal-decomposition} algorithm, and $Y_1=X\backslash \underline{X}_1$, where $\underline{X}_1=W_{r-\frac{R}{4Lk}}$.
	Denote by $\mu_{\widetilde{X}_j}$ the measure $\mu$ restricted to $\widetilde{X}_j$, and by $f_{\widetilde{X}_j}$ the one-to-many embedding our algorithm constructs for $\widetilde{X}_j$.

	By \Cref{ob:delta}, we can consider the remaining execution of \texttt{petal-decomposition} on $Y_1$ as a new recursive call of \texttt{petal-decomposition} with input $(G[Y_1],x_0,t_0,\Delta)$. 
	In particular, the recursive calls on $\widetilde{X}_0,\widetilde{X}_2,\dots,\widetilde{X}_s$ are completely independent from $\widetilde{X}_1$. Denote $f_{Y_1}=\cup_{j=0,2,\dots,s}f_{\widetilde{X}_j}$, and by $\mu_{{Y}_1}$ the measure $\mu$ restricted to $Y_1$. 
	Since $|\widetilde{X}_1|,|Y_1|<|X|$, the induction hypothesis implies that
	$\mathbb{E}_{v\sim\mu_{\widetilde{X}_1}}[|f_{\widetilde{X}_1}(v)|]\le \mu_{\widetilde{X}_1}(\widetilde{X}_1)^{1+\frac1k}=\mu(\widetilde{X}_1)^{1+\frac1k}$ and 	$\mathbb{E}_{v\sim\mu_{Y_1}}[|f_{Y_1}(v)|]\le \mu_{Y_1}(Y_1)^{1+\frac1k}=\mu(Y_1)^{1+\frac1k}$.
	Note that by our construction, 
	\[
	\mathbb{E}_{v\sim\mu}[|f(v)|]=\sum_{j=0}^{s}\mathbb{E}_{v\sim\mu_{\widetilde{X}_{j}}}[|f_{j}(v)|]=\mathbb{E}_{v\sim\mu_{\widetilde{X}_{1}}}[|f_{1}(v)|]+\mathbb{E}_{v\sim\mu_{Y_{1}}}[|f_{Y_{1}}(v)|]~.
	\]
	The rest of the proof is by case analysis according to the choice of radii in \Cref{alg:create-petal}. Recall that $w_{r'}=\mu(W_{r'})$ and $q_{r'}=\mu(Y\setminus W_{r'})=\mu(X\setminus W_{r'})$ for every parameter $r'$.

	\begin{enumerate}
		\item {\bf Case 1:} $w_{\midP}\le \mu(X)/2 $. In this case, we pick $a,b\in[\lo,\hi]$ where $b-a=R/(2L)$, and $r\in\left[a+\frac{b-a}{2k},b-\frac{b-a}{2k}\right]$ such that
		\begin{equation*}
		w_a>w_b^2/\mu(X)\qquad\qquad\mbox{and}\qquad\qquad w_{r+\frac{b-a}{2k}}\le w_{r-\frac{b-a}{2k}}\cdot\left(\frac{w_{b}}{w_{a}}\right)^{1/k}~.
		\end{equation*} 
		Here $\widetilde{X}_1=W_{r+\frac{b-a}{2k}}$, while $Y_1=X\backslash\underline{X}_1=X\backslash W_{r-\frac{b-a}{2k}}$.
		Using these two inequalities, we have that
		\[
		\mu(\widetilde{X}_1)^{1+\frac{1}{k}}=w_{r+\frac{b-a}{2k}}\cdot w_{r+\frac{b-a}{2k}}^{\frac{1}{k}}\le w_{r-\frac{b-a}{2k}}\cdot\left(\frac{w_{b}}{w_{a}}\right)^{\frac{1}{k}}\cdot w_{r+\frac{b-a}{2k}}^{\frac{1}{k}}\le w_{r-\frac{b-a}{2k}}\cdot\left(\frac{\mu(X)}{w_{b}}\right)^{\frac{1}{k}}\cdot w_{r+\frac{b-a}{2k}}^{\frac{1}{k}}\le w_{r-\frac{b-a}{2k}}\cdot\mu(X)^{\frac{1}{k}}~,
		\]
		where we used the fact that $r+\frac{b-a}{2k}\le b$ (and that $w_r$ is monotone). Using the induction hypothesis, we conclude that
		\begin{align*}
		\mathbb{E}_{x\sim\mu}[|f(x)|] & =\mathbb{E}_{x\sim\mu_{\widetilde{X}_{1}}}[|f_{\widetilde{X}_{1}}(x)|]+\mathbb{E}_{x\sim\mu_{Y_{1}}}[|f_{Y_{1}}(x)|]\\
		& \le\mu(\widetilde{X}_{1})^{1+\frac{1}{k}}+\mu(Y_{1})^{1+\frac{1}{k}}\\
		& \le w_{r-\frac{b-a}{2k}}\cdot\mu(X)^{\frac{1}{k}}+\mu(Y_{1})\cdot\mu(X)^{\frac{1}{k}}\\
		& =\left(\mu(W_{r-\frac{b-a}{2k}})+\mu(X\backslash W_{r-\frac{b-a}{2k}})\right)\cdot\mu(X)^{\frac{1}{k}}=\mu(X)^{1+\frac{1}{k}}~,
		\end{align*}
		where  the second inequality is because $\mu(Y_1)\le \mu(X)$.

		\item {\bf Case 2:} $w_{\midP}> \mu(X)/2$. This case is completely symmetric.
		Denoting $q_r=\mu(X\setminus W_r)$, we picked $b,a\in[\lo,\hi]$ so that $a-b=R/(2L)$ and $r\in\left[b+\frac{b-a}{2k},a-\frac{b-a}{2k}\right]$ such that
		\[q_a\ge q_b^2/\mu(X)\qquad\qquad\mbox{and}\qquad\qquad q_{r-\frac{a-b}{2k}}\le q_{r+\frac{a-b}{2k}}\cdot\left(\frac{q_{b}}{q_{a}}\right)^{1/k}~,
		\]
		Here $\widetilde{X}_1=W_{r+\frac{b-a}{2k}}$, while $Y_1=X\backslash W_{r-\frac{b-a}{2k}}$.
		Note that $\mu(Y_1)=q_{r-\frac{a-b}{2k}}$ while $\mu(\widetilde{X}_1)=\mu(X)-q_{r+\frac{a-b}{2k}}$.
		Using this two inequalities we have that
		\[
		\mu(Y_{1})^{1+\frac{1}{k}}=q_{r-\frac{b-a}{2k}}\cdot q_{r-\frac{b-a}{2k}}^{\frac{1}{k}}\le q_{r+\frac{b-a}{2k}}\cdot\left(\frac{q_{b}}{q_{a}}\right)^{\frac{1}{k}}\cdot q_{r-\frac{b-a}{2k}}^{\frac{1}{k}}\le q_{r+\frac{b-a}{2k}}\cdot\left(\frac{\mu(X)}{q_{b}}\right)^{\frac{1}{k}}\cdot q_{r-\frac{b-a}{2k}}^{\frac{1}{k}}\le q_{r+\frac{b-a}{2k}}\cdot\mu(X)^{\frac{1}{k}}
		\]
		where we used the fact that $b\le r-\frac{b-a}{2k}$. Following previous calculations, we conclude  that:
		\begin{align*}
		\mathbb{E}_{x\sim\mu}[|f(x)|] & \le\mu(\widetilde{X}_{1})^{1+\frac{1}{k}}+\mu(Y_{1})^{1+\frac{1}{k}}\\
		& \le\mu(\widetilde{X}_{1})\mu(X)^{\frac{1}{k}}+q_{r+\frac{b-a}{2k}}\cdot\mu(X)^{\frac{1}{k}}\\
		& =\left(\mu(W_{r+\frac{a-b}{2k}})+\mu(X\backslash W_{r+\frac{a-b}{2k}})\right)\cdot\mu(X)^{\frac{1}{k}}=\mu(X)^{1+\frac{1}{k}}~.
		\end{align*}
	\end{enumerate}
	
\end{proof}

\Cref{lem:SpanningClanTreeMeasure} follows by the combination of \Cref{lem:paddedImpliesStretch} and \Cref{lem:SavedTerminals}.

\subsection{Missing proofs from the \texttt{create-petal} procedure (\Cref{alg:create-petal})}\label{subsec:MissingCreatePetal}
		In this section we prove that the choices made in the \texttt{create-petal} procedure are all legal. In all lemmas in this section, we shall use the notation in \Cref{alg:create-petal}.
		\begin{lemma}\label{lem:interval-choose}
			If $w_{\midP}\le\mu(Y)/2$ 
			then there is $\left[a,b\right]\subseteq\left[\lo,\midP\right]$
			such that $b-a=\frac{R}{2L}$ and $w_a\ge w_b^2/\mu(Y)$.
		\end{lemma}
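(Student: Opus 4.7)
The plan is a pigeonhole on a dyadic chain of radii: tile $[\lo, \midP]$ by $L$ consecutive sub-intervals of length exactly $R/(2L)$ (this is allowed since $\midP - \lo = R/2$), with endpoints $\lo = a_0 < a_1 < \cdots < a_L = \midP$, and abbreviate $w_i := w_{a_i}$. I will show that some pair $[a,b] = [a_i, a_{i+1}]$ must satisfy $w_i \ge w_{i+1}^2/\mu(Y)$, or else the growth of the $w_i$'s from $w_0$ to $w_L$ is so explosive that it contradicts the hypothesis $w_{\midP} \le \mu(Y)/2$.

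Assume for contradiction that $w_i < w_{i+1}^2/\mu(Y)$ for every $i \in \{0, \ldots, L-1\}$. Normalize by setting $p_i := w_i / \mu(Y) \in (0,1]$; the assumption becomes $p_{i+1} > \sqrt{p_i}$. Taking $q_i := \log_2(1/p_i) \ge 0$, this is $q_{i+1} < q_i/2$, which iterates to
\[
q_L \;<\; q_0 / 2^L.
\]

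Now I bound the two ends. For the lower bound at the top: the target $t$ always lies in $W_{\lo}$ (indeed it lies in $W_0$, and the petals are monotone in $r$), and $\mu$ is a $(\ge 1)$-measure, so $w_0 = w_{\lo} \ge \mu(\{t\}) \ge 1$, giving $q_0 \le \log_2 \mu(Y)$. For the upper bound at the bottom: the hypothesis $w_{\midP} \le \mu(Y)/2$ gives $p_L \le 1/2$, hence $q_L \ge 1$. Combining,
\[
1 \;\le\; q_L \;<\; \frac{\log_2 \mu(Y)}{2^L}.
\]
On the other hand, by the definition $L = \lceil 1 + \log_2 \log_2 \mu(Y)\rceil$, we have $2^L \ge 2 \log_2 \mu(Y)$, yielding $\log_2\mu(Y)/2^L \le 1/2$ and the contradiction $1 \le 1/2$.

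The only subtlety is ensuring the chain of inequalities is actually defined, i.e.\ that $\mu(Y) \ge 2$ so that $\log_2\log_2 \mu(Y)$ makes sense and $L \ge 1$. This follows from the context in which \texttt{create-petal} is invoked: a petal is only carved when $Y$ contains more than one vertex (otherwise \texttt{petal-decomposition} simply stops), and since $\mu$ is a $(\ge 1)$-measure, $\mu(Y) \ge |Y| \ge 2$; the ceiling in the definition of $L$ safely absorbs the boundary case $\mu(Y) = 2$. Thus the main ``obstacle'' is purely bookkeeping — matching the geometric-chain rate of growth against the definition of $L$ — and the proof requires nothing beyond the two endpoint estimates $w_{\lo} \ge 1$ and $w_{\midP} \le \mu(Y)/2$ plus the telescoping of $q_{i+1} < q_i/2$.
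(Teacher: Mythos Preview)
Your proof is correct and takes essentially the same approach as the paper: both argue by contradiction, iterate the inequality $w_b > \sqrt{\mu(Y)\,w_a}$ along the chain of sub-intervals, and close with the endpoint estimates $w_{\lo}\ge 1$ and $w_{\midP}\le \mu(Y)/2$ together with $2^{L}\ge 2\log_2\mu(Y)$. Your logarithmic change of variables $q_i=\log_2(\mu(Y)/w_{a_i})$ makes the geometric decay $q_{i+1}<q_i/2$ especially transparent; the paper instead unrolls the recursion multiplicatively and uses only $L-1$ of the $L$ sub-intervals, but this is a cosmetic difference.
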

		\begin{proof}
			Seeking contradiction, assume that for every such $a,b$ with $b-a=\frac{R}{2L}$ it holds that $w_b>\sqrt{\mu(Y)\cdot w_a}$.
			Applying this on $b=\midP-\frac{i R}{2L}$ and $a=\midP-\frac{(i+1) R}{2L}$ for every $i=0,1,\dots,L-2$, we have that
			\[
			w_{\midP}>\mu(Y)^{1/2}\cdot w_{\midP-\frac{R}{2L}}^{1/2}>\dots>\mu(Y)^{1-2^{-(L-1)}}\cdot w_{\midP-\frac{(L-1)R}{2L}}^{2^{-(L-1)}}\ge\mu(Y)\cdot2^{-1}\cdot w_{\lo}^{2^{-(L-1)}}\ge\frac{\mu(Y)}{2}~,
			\]
			where we used that $\log\log \mu(Y)\le L-1$ and $\midP=\lo+R/2$. In the last inequality, we also used that $W_{\lo}$ contains at least one vertex, thus $w_{\lo}\ge 1$.
			The contradiction follows.
		\end{proof}
		\begin{lemma}\label{lem:radiuosRG}
			There is $r\in\left[a+\frac{b-a}{2k},b-\frac{b-a}{2k}\right]$ such that $w_{r+\frac{b-a}{2k}}\le w_{r-\frac{b-a}{2k}}\cdot\left(\frac{w_{b}}{w_{a}}\right)^{\frac{1}{k}}$.
		\end{lemma}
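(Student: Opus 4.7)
The plan is to use the standard telescoping / pigeonhole argument for region growing. Partition the interval $[a,b]$ into $k$ equal sub-intervals of length $\frac{b-a}{k}$ by the points $s_i = a + \frac{i(b-a)}{k}$ for $i = 0, 1, \ldots, k$, so that $s_0 = a$ and $s_k = b$. For each $i \in \{1, \ldots, k\}$, let $r_i$ be the midpoint of the $i$-th sub-interval, i.e., $r_i = a + \frac{(2i-1)(b-a)}{2k}$. Then $r_i - \frac{b-a}{2k} = s_{i-1}$ and $r_i + \frac{b-a}{2k} = s_i$, and every such $r_i$ lies in the allowed range $[a+\frac{b-a}{2k}, b-\frac{b-a}{2k}]$.

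I claim at least one $r_i$ satisfies the desired inequality $w_{s_i} \le w_{s_{i-1}} \cdot (w_b/w_a)^{1/k}$. Suppose for contradiction that $w_{s_i} > w_{s_{i-1}} \cdot (w_b/w_a)^{1/k}$ for every $i \in \{1, \ldots, k\}$. Multiplying these $k$ inequalities (all terms are positive since each $W_{s_i}$ contains at least the point $t$, so $w_{s_i} \ge 1$) telescopes to
\[
\frac{w_{s_k}}{w_{s_0}} \;=\; \prod_{i=1}^{k} \frac{w_{s_i}}{w_{s_{i-1}}} \;>\; \left(\frac{w_b}{w_a}\right)^{k \cdot \frac{1}{k}} \;=\; \frac{w_b}{w_a},
\]
which contradicts $s_0 = a$ and $s_k = b$. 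Hence some $r_i$ works.

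I do not anticipate a serious obstacle here; the main thing to double-check is that every candidate radius $r_i$ is indeed in the permitted range $[a+\frac{b-a}{2k}, b-\frac{b-a}{2k}]$ (this is immediate since $r_1 = a + \frac{b-a}{2k}$ and $r_k = b - \frac{b-a}{2k}$), and that the monotonicity/positivity of $w_r = \mu(W_r)$ makes the multiplicative manipulation valid (which it does, since $w_r$ is monotone non-decreasing in $r$ and bounded below by $\mu(\{t\}) \ge 1$ because $\mu$ is a $(\ge 1)$-measure). The symmetric lemma for the ``cut backward'' case (\Cref{lem:radiuosRG-Ver2}) would follow by the same argument applied to the monotone non-increasing sequence $q_r$ in the reverse direction along $[b,a]$.
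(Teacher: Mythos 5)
Your proof is correct and is essentially the paper's own argument: the paper also assumes all candidate radii fail and chains the $k$ inequalities at the midpoints $r=b-(i+\frac12)\frac{b-a}{k}$ (the same points as your $r_i$, just indexed from $b$ downward) to telescope to the contradiction $w_b>w_b$. Your added remarks on monotonicity and on $w_r\ge 1$ (via $t\in W_r$) are accurate but not needed for this particular lemma.
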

		\begin{proof}
			Seeking contradiction, assume there is no such choice of $r$. Then applying the inequality for $r=b-(i+1/2)\cdot\frac{b-a}{k}$ for $i=0,1,\dots,k-1$ we get
			\[
			w_{b}  >w_{b-\frac{b-a}{k}}\cdot\left(\frac{w_{b}}{w_{a}}\right)^{1/k}
			>\cdots>w_{b-k\cdot\frac{b-a}{k}}\cdot\left(\frac{w_{b}}{w_{a}}\right)^{k/k}=w_{a}\cdot\frac{w_{b}}{w_{a}}=w_{b}~,
			\]	
			a contradiction.
		\end{proof}
		The following two lemmas are symmetric to the two lemmas above.
		\begin{lemma}\label{lem:interval-choose-Ver2}
			If $w_{\midP}>\frac{\mu(Y)}{2}$ (implies $q_{\midP}\le\frac{\mu(Y)}{2}$), 
			then there is $\left[b,a\right]\subseteq\left[\midP,\hi\right]$ such that $a-b=\frac{R}{2L}$ and $q_a\ge q_b^2/\mu(Y)$.
		\end{lemma}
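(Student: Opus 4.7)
My plan is to follow the proof of \Cref{lem:interval-choose} essentially verbatim, with the ``inside'' measure $w_r = \mu(W_r)$ replaced by the ``outside'' measure $q_r = \mu(Y \setminus W_r) = \mu(Y) - w_r$. Since $W_r$ is monotone in $r$ (by the monotonicity of the cone-metric petals, noted after \Cref{fact:W_rProp}), $q_r$ is monotonically non-increasing in $r$. The hypothesis $w_{\midP} > \mu(Y)/2$ is exactly $q_{\midP} \le \mu(Y)/2$, which plays the same role for $q$ that $w_{\midP} \le \mu(Y)/2$ played for $w$ in \Cref{lem:interval-choose}.

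Assume for contradiction that no desired interval exists; i.e., for every $[b,a] \subseteq [\midP,\hi]$ with $a-b = R/(2L)$, we have $q_a < q_b^2/\mu(Y)$, equivalently $q_b > \sqrt{\mu(Y)\cdot q_a}$. Define $r_i = \midP + iR/(2L)$ for $i=0,1,\dots,L$; note $r_L = \midP + R/2 = \hi$. Apply the assumed inequality to each consecutive pair $(r_i, r_{i+1})$ for $i=0,\dots,L-2$ and chain the square roots: this yields
\[
q_{\midP} \;>\; \mu(Y)^{1 - 2^{-(L-1)}} \cdot q_{r_{L-1}}^{2^{-(L-1)}}.
\]
Since $L = \lceil 1 + \log\log \mu(Y)\rceil$, we have $2^{-(L-1)} \le 1/\log\mu(Y)$, so $\mu(Y)^{1 - 2^{-(L-1)}} \ge \mu(Y)/2$.

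The only nontrivial step in transplanting the original proof is verifying $q_{r_{L-1}} \ge 1$, which is needed to conclude $q_{r_{L-1}}^{2^{-(L-1)}} \ge 1$ and hence $q_{\midP} > \mu(Y)/2$, contradicting the hypothesis. This holds because $\mu$ is a $(\ge 1)$-measure, so any non-empty set has measure at least $1$, and $Y \setminus W_{r_{L-1}}$ is non-empty: in the \texttt{petal-decomposition} context, the target $t = t_j$ satisfies $d_Y(x_0,t) > \frac{3}{4}\Delta$ (or is at the prescribed distance in the special first-petal case), while any vertex $p$ contributing to $W_r$ lies on the shortest path $P_{x_0 t}$ within distance $r \le \hi \le \Delta/4 < d_Y(x_0, t)$ of $t$, so $t \notin W_{\hi}$ and by monotonicity $t \notin W_{r_{L-1}}$. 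Combining these facts produces the contradiction and completes the proof. I expect no real obstacle beyond this bookkeeping, since the lemma is structurally identical to its counterpart for $w_r$.
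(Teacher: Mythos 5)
Your overall strategy is exactly the intended one: the paper offers no proof of this lemma, stating only that it is symmetric to \Cref{lem:interval-choose}, and your chaining argument with $q_r$ in place of $w_r$ (using that $q_r$ is non-increasing and that $q_{\midP}\le\mu(Y)/2$) is the correct transplant, including the bound $\mu(Y)^{1-2^{-(L-1)}}\ge\mu(Y)/2$ from $L=\lceil 1+\log\log\mu(Y)\rceil$.

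However, the one step you single out as nontrivial is justified incorrectly. You claim $t\notin W_{\hi}$, but the target $t$ always lies in $W_r$ for every $r\ge 0$: taking $p=t$ in the definition $W_{r}=\bigcup_{p\in P_{x_{0}t}:\,d_{Y}(p,t)\le r}B_{(Y,\rho(Y,x_{0},p))}(p,\frac{r-d_{Y}(p,t)}{2})$, the vertex $t$ is the center of one of the constituent cone-metric balls (of radius $r/2\ge 0$), so $t\in W_r$; indeed \Cref{Fact:PetaDecomp} records that the entire shortest path from $x_j$ to $t_j$ lies in the petal. Your intermediate inequality $\hi\le\Delta/4$ also fails in the special first-petal case, where $\hi=d_X(x_0,t)-\Delta/4$ can be as large as $3\Delta/4$. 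The conclusion $q_{r_{L-1}}\ge q_{\hi}\ge 1$ is nonetheless true, but the correct witness is the center $x_0$ rather than the target: $x_0\in W_r$ would require a point $p\in P_{x_0t}$ with $d_Y(p,t)\le r$ and $d_Y(x_0,p)\le\frac{r-d_Y(p,t)}{4}$, and in the while-loop case ($\hi=\Delta/8$, $d_Y(x_0,t)>\frac34\Delta$) this forces $d_Y(x_0,t)\le\Delta/32+\Delta/8$, a contradiction, while in the first-petal case it forces $d_Y(p,t)>d_Y(x_0,t)$, also impossible. With $x_0\in Y\setminus W_{\hi}$ and $\mu$ a $(\ge 1)$-measure, $q_{\hi}\ge 1$ follows and the rest of your argument goes through.
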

		\begin{lemma}\label{lem:radiuosRG-Ver2}
			There is $r\in\left[b+\frac{b-a}{2k},a-\frac{b-a}{2k}\right]$ such that
			$q_{r-\frac{a-b}{2k}}\le q_{r+\frac{a-b}{2k}}\cdot\left(\frac{q_{b}}{q_{a}}\right)^{1/k}$.
		\end{lemma}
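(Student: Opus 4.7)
The plan is to mirror the proof of \Cref{lem:radiuosRG} by contradiction and telescoping, using that $q_r=\mu(Y\setminus W_r)$ is monotonically non-increasing in $r$. Suppose, toward contradiction, that for every admissible $r$ the strict reverse inequality $q_{r-\frac{a-b}{2k}}>q_{r+\frac{a-b}{2k}}\cdot(q_b/q_a)^{1/k}$ holds. I would then evaluate this hypothesis at the $k$ equispaced midpoints
\[
r_i \;=\; b+\bigl(i+\tfrac12\bigr)\cdot\tfrac{a-b}{k},\qquad i=0,1,\ldots,k-1,
\]
for which $r_i-\tfrac{a-b}{2k}=b+i\cdot\tfrac{a-b}{k}$ and $r_i+\tfrac{a-b}{2k}=b+(i+1)\cdot\tfrac{a-b}{k}$, so that consecutive shifted values match up and form a partition of $[b,a]$ into $k$ equal sub-intervals.

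Telescoping the $k$ contradiction inequalities in order would then yield
\[
q_b \;=\; q_{r_0-\frac{a-b}{2k}} \;>\; q_{r_1-\frac{a-b}{2k}}\cdot\!\left(\tfrac{q_b}{q_a}\right)^{1/k} \;>\; \cdots \;>\; q_{r_{k-1}+\frac{a-b}{2k}}\cdot\!\left(\tfrac{q_b}{q_a}\right)^{k/k} \;=\; q_a\cdot\tfrac{q_b}{q_a} \;=\; q_b,
\]
which is the desired contradiction. Hence some $r$ in the interval must satisfy the claimed inequality.

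The only delicate point is keeping track of the direction of monotonicity: since $q_r$ is non-increasing in $r$, the ratio $q_b/q_a\ge 1$ (as $b\le a$) plays exactly the symmetric role that $w_b/w_a\ge 1$ played in \Cref{lem:radiuosRG}, so the telescoping collapses cleanly in the same way. I would also flag what appears to be a sign typo in the stated interval $[b+\tfrac{b-a}{2k},a-\tfrac{b-a}{2k}]$: under the natural symmetry $a\leftrightarrow b$ with the Case~1 lemma, the intended interval is $[b+\tfrac{a-b}{2k},a-\tfrac{a-b}{2k}]$, so that both shifted values $r\pm\tfrac{a-b}{2k}$ lie inside $[b,a]$ and can be legitimately compared with $q_b$ and $q_a$; the midpoints $r_i$ above already live in this corrected interval, so the argument requires no further adjustment and I expect no genuine obstacle beyond this cosmetic fix.
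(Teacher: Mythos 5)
Your proof is correct and is exactly the argument the paper intends: the paper omits the proof of this lemma, stating only that it is symmetric to \Cref{lem:radiuosRG}, and your contradiction-plus-telescoping at the $k$ equispaced midpoints $r_i=b+(i+\tfrac12)\tfrac{a-b}{k}$ is precisely that symmetric argument. Your observation about the sign typo in the stated interval is also right — the intended range is $\left[b+\tfrac{a-b}{2k},\,a-\tfrac{a-b}{2k}\right]$ (consistent with how the radius is used in Case~2 of \Cref{lem:SavedTerminals}), and your midpoints all lie in it.
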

		
		\subsection{Grand finale: proof of \Cref{thm:ClanSpanningTree}}\label{subsec:ProofOfSpanningTree}
		The proof of \Cref{thm:ClanSpanningTree} using \Cref{lem:SpanningClanTreeMeasure} follows the same lines as the proof of \Cref{thm:ClanUltrametric} from \Cref{lem:clanTreeMeasure}. 
		First we transform the language of $(\ge1)$-measure to that of probability measure.		
		\begin{lemma}\label{lem:clanTreeSpanningProbability}
			Given an $n$-point weighted graph $G=(V,E,w)$ and probability measure
			$\mu:V\rightarrow\mathbb{R}_{\ge0}$, we can construct the two following spanning clan embeddings $(f,\chi)$ into a tree:
			\begin{enumerate}
				\item For integer $k\ge 1$, multiplicative distortion $O(k\log\log n)$ such that $\mathbb{E}_{x\sim\mu}[|f(x)|]\le O(n^{\frac1k})$.
				\item For  $\epsilon\in(0,1]$, multiplicative distortion $O(\frac{\log n\log\log n}{\eps})$ such that $\mathbb{E}_{x\sim\mu}[|f(x)|]\le1+\epsilon$.
			\end{enumerate}
		\end{lemma}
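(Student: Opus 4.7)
\textbf{Proof plan for \Cref{lem:clanTreeSpanningProbability}.} The plan is to mimic the proof of \Cref{lem:clanTreeProbability} verbatim, replacing the invocation of \Cref{lem:clanTreeMeasure} by \Cref{lem:SpanningClanTreeMeasure}. The only differences are that the distortion incurred by \Cref{lem:SpanningClanTreeMeasure} carries an extra $\log\log\mu(V)$ factor, and that \Cref{lem:SpanningClanTreeMeasure} produces a \emph{spanning} tree; both facts are preserved by the reduction from $(\ge 1)$-measures to probability measures.

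\textbf{Step 1: lift $\mu$ to a $(\ge 1)$-measure.} Define the averaged probability measure $\widetilde{\mu}(v) = \frac{1}{2n} + \frac{1}{2}\mu(v)$, and then scale it up to $\widetilde{\mu}_{\ge 1}(v) = 2n\cdot\widetilde{\mu}(v)$. By construction, $\widetilde{\mu}_{\ge 1}(v)\ge 1$ for every $v\in V$, and $\widetilde{\mu}_{\ge 1}(V)=2n$. Apply \Cref{lem:SpanningClanTreeMeasure} to $G$ with this $(\ge 1)$-measure and parameter $k'=\lceil 1/\delta\rceil$ for a $\delta\in(0,1]$ to be chosen below. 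This yields a single spanning clan embedding $(f,\chi)$ into a tree with multiplicative distortion $O(\tfrac{1}{\delta}\cdot\log\log\widetilde{\mu}_{\ge 1}(V)) = O(\tfrac{1}{\delta}\log\log n)$ and
\begin{equation*}
\widetilde{\mu}_{\ge 1}(V)\cdot \mathbb{E}_{v\sim\widetilde{\mu}}[|f(v)|]
= \mathbb{E}_{v\sim\widetilde{\mu}_{\ge 1}}[|f(v)|]
\le \widetilde{\mu}_{\ge 1}(V)^{1+\delta}
= \widetilde{\mu}_{\ge 1}(V)\cdot(2n)^{\delta}.
\end{equation*}
Dividing through and using $\mathbb{E}_{v\sim\widetilde{\mu}}[|f(v)|] = \tfrac12\mathbb{E}_{v\sim\mu}[|f(v)|] + \tfrac{1}{2n}\sum_{v}|f(v)| \ge \tfrac12\mathbb{E}_{v\sim\mu}[|f(v)|] + \tfrac12$, we obtain
\begin{equation*}
\mathbb{E}_{v\sim\mu}[|f(v)|]\le 2\cdot(2n)^{\delta}-1.
\end{equation*}

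\textbf{Step 2: tune $\delta$ for the two claims.} For the first claim, set $\delta=1/k$, so that the distortion is $O(k\log\log n)$ and $\mathbb{E}_{v\sim\mu}[|f(v)|]\le 2(2n)^{1/k}=O(n^{1/k})$. For the second claim, pick $\delta\in(0,1]$ with $1/\delta=\lceil\ln(2n)/\ln(1+\eps/2)\rceil$, so that $(2n)^{\delta}\le 1+\eps/2$; the distortion becomes $O(\tfrac{\log n\log\log n}{\eps})$ and $\mathbb{E}_{v\sim\mu}[|f(v)|]\le 2(1+\eps/2)-1=1+\eps$.

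\textbf{Main obstacle and remarks.} There is essentially no obstacle: the argument is a direct parallel of \Cref{lem:clanTreeProbability}, and the only care needed is to track that the $\log\log\mu(V)$ factor in \Cref{lem:SpanningClanTreeMeasure} evaluates to $\log\log n$ after the lift (since $\widetilde\mu_{\ge 1}(V)=2n$). The spanning property transfers trivially because \Cref{lem:SpanningClanTreeMeasure} already produces a spanning clan embedding, and the reduction does not alter the underlying embedding, only the measure used to bound expected clan size. As in \Cref{remark:MaxSizeClanEmbedding}, one also obtains the worst-case bound $|f(V)|\le (2n)^{1+\delta}$, which will be useful when subsequently plugging this lemma into the MWU framework of \Cref{lem:MWU-Distribution} to deduce \Cref{thm:ClanSpanningTree}.
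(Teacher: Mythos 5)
Your proposal is correct and matches the paper's intended argument exactly: the paper itself states that the proof is identical to that of \Cref{lem:clanTreeProbability} with \Cref{lem:clanTreeMeasure} replaced by \Cref{lem:SpanningClanTreeMeasure}, noting only that $\widetilde{\mu}_{\ge1}(V)=2n$ makes the extra factor evaluate to $\log\log n$. Your Step 1 and Step 2 reproduce that reduction and parameter tuning faithfully.
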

	The proof of \Cref{lem:clanTreeSpanningProbability} is exactly identical to that of \Cref{lem:clanTreeProbability} and we will skip it. The only subtlety to note is the  $(\ge1)$-measure $\widetilde{\mu}_{\ge1}$ constructed during the proof of \Cref{lem:clanTreeProbability} fulfills $ \widetilde{\mu}_{\ge1}(V)=2n$, and thus the multiplicative distortion guarantee from \Cref{lem:SpanningClanTreeMeasure} will be $O(k\log\log n)$.
	\Cref{thm:ClanSpanningTree} now follows from the minimax theorem (in the exact same way as the proof of \Cref{thm:ClanUltrametric}).

\section[Lower Bound for Clan Embeddings into Trees (\Cref{thm:ClnUltrametricLB})]{Lower Bound for Clan Embeddings into Trees}\label{sec:LB}
This section is devoted to proving \Cref{thm:ClnUltrametricLB} that we restate below.
\LBClanTree*

The girth of an unweighted graph $G$ is the length of the shortest cycle in $G$. The Erd\H{o}s' girth conjecture states that for any $g$ and $n$, there exists an $n$-vertex graph with girth $g$ and $\Omega(n^{1+\frac{2}{g-2}})$ edges. The conjecture is known to holds for $g=4,6,8,12$ (see \cite{Benson66,Wenger91}). However, the best known  lower bound for general $k$ is due to Lazebnik \etal \cite{LUW95}.
\begin{theorem}[\cite{LUW95}]\label{thm:girth}
	For every even $g$, and $n$, there exists an unweighted graph with girth $g$ and $\Omega(n^{1+\frac43\cdot\frac{1}{g-2}})$ edges.
\end{theorem}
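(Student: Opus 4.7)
\textbf{Proof proposal for \Cref{thm:girth}.} The plan is to follow the algebraic incidence-geometry construction of Lazebnik, Ustimenko, and Woldar, which produces bipartite graphs $D(k,q)$ indexed by tuples over a finite field $\mathbb{F}_q$ whose edges are defined by a carefully chosen sequence of polynomial relations. Fix an even target girth $g$, choose $k$ roughly as a small constant times $g$ (to be optimized at the end), and let $q$ be a prime power (which we may always guarantee exists by Bertrand's postulate, at the cost of constant factors). Define the vertex set as two disjoint copies of $\mathbb{F}_q^{k}$, the points $P$ and the lines $L$. Declare $p = (p_1,\dots,p_k) \in P$ adjacent to $\ell = (\ell_1,\dots,\ell_k) \in L$ iff a system of $k-1$ equations of the form $\ell_i - p_i = \ell_1 \cdot (\text{a previous coordinate})$ holds, with the specific LUW pattern that alternates between multiplying by $p_{i-1}$ and $\ell_{i-1}$. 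This yields a bipartite graph on $2q^k$ vertices.

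The first step is to verify regularity and count edges: given any line $\ell$, solving the system for $p$ sequentially shows that $p_1$ is free and then $p_2,\dots,p_k$ are uniquely determined, so each line (and symmetrically each point) has exactly $q$ neighbors. Consequently the total edge count is $q^{k+1} = \tfrac{q}{2}\cdot n$, where $n = 2q^k$. The second and main step is the girth bound. The key is to show that any closed walk of length less than $g$ in $D(k,q)$ would force a nontrivial algebraic identity among the coordinate differences that is impossible given the staircase structure of the defining equations. This is usually done by tracking an appropriate ``invariant'' along the walk (a linear combination of coordinates that must be constant on any cycle) and showing that the system of equations forces too many coordinates to coincide, contradicting distinctness of consecutive vertices.

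The third step is to extract a large dense subgraph. The graph $D(k,q)$ is not connected in general, but the connected component structure is well understood via the invariant mentioned above: the number of components is $q^{s}$ for some $s = s(k)$ that grows linearly in $k$, and each component has the same order $q^{k-s}$ and average degree $q$. Taking one component gives $n' = 2q^{k-s}$ vertices and $\tfrac12 q^{k-s+1}$ edges, i.e., $\Omega(n'\cdot q) = \Omega((n')^{1+1/(k-s)})$ edges. Optimizing the relation between $s$ and $k$ (this is where the factor $4/3$ in the exponent appears: the precise counting gives that the largest guaranteed girth is roughly $\tfrac{4(k-s)}{3}+2$), we set $k$ so that the resulting girth is at least $g$ while $k-s$ is as small as possible, which yields $k-s = \tfrac{3(g-2)}{4}$ and hence edge count $\Omega(n^{1+\frac{4}{3(g-2)}})$.

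The main obstacle is the girth lower bound, which is the technical heart of the argument: one must show that the polynomial system defining adjacency is ``generic enough'' that short cycles cannot close up. This is handled by exhibiting an explicit system of invariants preserved along walks and showing algebraically that any cycle of length $<g$ would violate these invariants. Once that is in place, regularity, edge counting, and the component-splitting optimization are routine. A minor caveat is handling arbitrary $n$ (not just $n = 2q^{k-s}$ for a prime power $q$): this is resolved by choosing the largest prime power $q$ with $2q^{k-s} \leq n$ and padding with isolated vertices, which affects only constants in the $\Omega$ notation.
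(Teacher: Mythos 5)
The paper does not prove this statement at all: it is quoted verbatim from \cite{LUW95} and used as a black box, so there is no internal proof to compare against. Your proposal is a reconstruction of the proof in the cited source itself, and at the level of strategy it is the right one: the bipartite graphs $D(k,q)$ on two copies of $\mathbb{F}_q^k$ with the staircase system of $k-1$ polynomial adjacency equations, $q$-regularity by sequentially solving for the free coordinate, the girth lower bound of roughly $k$ via algebraic invariants preserved along walks, and then passing to a connected component $CD(k,q)$ to boost the density from $\Omega(n^{1+1/k})$ to $\Omega(n^{1+\frac{4}{3(g-2)}})$ because the number of components grows like $q^{\Theta(k)}$. The arithmetic at the end ($k-s=\tfrac{3(g-2)}{4}$, girth $\approx \tfrac{4(k-s)}{3}+2$) is consistent with the exponent in the statement, and the padding/prime-density remark for arbitrary $n$ is fine since the hidden constants may depend on $g$.

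That said, as a standalone proof your write-up has two genuine gaps, and they are exactly the technical content of \cite{LUW95}. First, the girth lower bound: you say one "tracks an invariant" and derives a contradiction from a short closed walk, but you neither exhibit the invariants nor carry out the algebraic argument; without this there is no girth bound at all. Second, the component analysis: you assert that the number of components is $q^{s}$ with $s=s(k)$ linear in $k$ and that all components have the same order $q^{k-s}$. What \cite{LUW95} actually proves (and what suffices) is a \emph{lower} bound of $q^{t-1}$ on the number of components with $t=\lfloor (k+2)/4\rfloor$, equivalently an \emph{upper} bound on the order of each component; the fact that all components are isomorphic is a later result of Lazebnik and Viglione and should not be assumed. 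You should either prove the component-size upper bound (again via the invariants) or phrase the density step so that only the upper bound on component order is needed: a $q$-regular component on $n'\le 2q^{k-t+1}$ vertices has $\tfrac12 n' q \ge \Omega\bigl((n')^{1+1/(k-t+1)}\bigr)$ edges. With those two steps filled in, the argument is complete and matches the cited theorem.
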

From the upper bound perspective, the (generalized) Moore's bound \cite{AHL02,BR10} states that every $n$ vertex graph with girth $g$ has at most $n^{1+\frac{2}{g-2}}$ edges for $g\le 2\log n$, and at most $n\left(1+(1+o(1))\frac{\ln(m-n+1)}{g}\right)$ edges for larger $g$; here  $m$ is the number of edges. 

We will be able to use \Cref{thm:girth} to prove the second assertion in \Cref{thm:ClnUltrametricLB}. That is, any clan embedding into a tree with distortion $O(k)$ must have $\sum_{x\in X}|f(x)|\ge\Omega(n^{1+\frac1k})$.
However, the first assertion requires a much tighter lower bound of $(1+\eps)n$ on the number of edges. Therefore, the asymptotic nature of \Cref{thm:girth} is unfortunately not strong enough for our needs.
We begin by showing that for large enough $n$ and $\eps\in(0,1)$, there exists an $n$-vertex graph with $(1+\eps)n$ edges and girth $\Omega(\frac{\log n}{\eps})$. We are not aware of this very basic fact to previously appear in the literature.
Note that \Cref{lem:lognepsGirth} matches Moore's upper bound (up to a constant dependency on the girth $g$).

\begin{lemma}\label{lem:lognepsGirth}
	For every fixed $\eps\in (0,1)$ and large enough $n$, there exists a graph with at least $(1+\eps)n$ edges and girth $\Omega(\frac{\log n}{\eps})$.
\end{lemma}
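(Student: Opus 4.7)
The plan is to build the graph by edge-subdivision of a sparse graph supplied by \Cref{thm:girth}. A naive one-shot application of \Cref{thm:girth} with $g = \Theta(\log n/\eps)$ produces only $\Omega(n\cdot 2^{\Theta(\eps)})$ edges, which may fall short of $(1+\eps)n$ once the hidden constant is accounted for. Subdivision separates the two knobs: the base graph's girth contributes a $\log m$ factor, while the subdivision length $k$ multiplies that by $1/\eps$, and the two can be tuned independently.

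Fix absolute constants $c>0$ and $\beta>1$ (both independent of $\eps$) such that \Cref{thm:girth} with even girth parameter $g_H\sim c\log m$ on $m$ vertices yields a graph $H$ with at least $\beta m$ edges; by discarding edges if necessary one may assume $|E(H)|=\lceil\beta m\rceil$, since deleting edges only increases the girth. Such constants exist because $c_0\cdot m^{1+4/(3(g_H-2))}=c_0 m\cdot 2^{\Theta(1/c)}$ exceeds $\beta m$ once $c$ is sufficiently small relative to $\beta/c_0$. Given $\eps$ and $n$, set $k=\lfloor(1+\eps)(\beta-1)/(\eps\beta)\rfloor=\Theta(1/\eps)$, and let $m$ be the largest positive integer with $m+(k-1)\lceil\beta m\rceil\le n$, so $m=\Theta(\eps n)$. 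Let $H^{(k)}$ denote the $k$-subdivision of $H$, in which every edge of $H$ is replaced by a path of length $k$ through $k-1$ fresh internal vertices. Finally pad $H^{(k)}$ with $n-|V(H^{(k)})|=O(k)=O(1/\eps)$ isolated vertices to obtain a graph $G$ on exactly $n$ vertices.

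The choice of $k$ makes the edge-to-vertex ratio of $H^{(k)}$ satisfy $k|E(H)|/(m+(k-1)|E(H)|)=k\beta/(1+(k-1)\beta)\ge 1+\eps$, and the $O(1/\eps)$ padding vertices reduce this ratio by only $O(1/(\eps n))$, so $|E(G)|\ge(1+\eps)n$ as soon as $n$ is large relative to $1/\eps$. Since isolated vertices create no cycles and subdivision multiplies the girth by $k$, we also have $\mathrm{girth}(G)=k\cdot g_H=\Omega\bigl((1/\eps)\log(\eps n)\bigr)=\Omega(\log n/\eps)$ for fixed $\eps$ as $n\to\infty$. The main technical point --- and the reason the one-shot use of \Cref{thm:girth} does not suffice --- is tuning the absolute constants $c$ and $\beta$ so that the base graph is strictly edge-heavier than vertex-heavier even after absorbing \Cref{thm:girth}'s hidden constant; once that is done, the rest is a direct counting argument.
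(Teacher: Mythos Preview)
Your strategy---take a base graph $H$ on $m=\Theta(\eps n)$ vertices with $\beta m$ edges and girth $\Omega(\log m)$, then subdivide each edge into a path of length $\Theta(1/\eps)$---is exactly the paper's. The only difference is how you obtain $H$: you invoke \Cref{thm:girth} with $g_H\sim c\log m$, whereas the paper supplies its own self-contained random-graph claim (sample $G(m,8/(m-1))$, show that with positive probability it has at least $3m$ edges and at most $m$ cycles of length below $\tfrac{1}{3}\log m$, then delete one edge from each short cycle) producing an $m$-vertex graph with exactly $2m$ edges and girth at least $\tfrac{1}{3}\log m$.

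The reason the paper does this is precisely the soft spot in your invocation of \Cref{thm:girth}: as stated, the hidden constant $c_0$ is not asserted to be uniform in $g$, and you are applying it with $g_H=\Theta(\log m)\to\infty$. If $c_0=c_0(g)\to 0$, then no fixed choice of $c$ makes $c_0(g_H)\cdot 2^{\Theta(1/c)}\ge\beta$, and your base graph need not exist. The paper's random-graph claim sidesteps this entirely and delivers exactly the absolute constants your subdivision step requires. A secondary loose end: your padding step needs strict slack $f(k)>1+\eps$ to absorb the $O(1/\eps)$ isolated vertices, but with $k=\lfloor k^*\rfloor$ where $k^*=(1+\eps)(\beta-1)/(\eps\beta)$ you get $f(k)=1+\eps$ on the nose whenever $k^*\in\mathbb{N}$; this is trivially patched (e.g., do not trim $|E(H)|$ down to $\lceil\beta m\rceil$, or take $k$ one smaller).
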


\begin{remark}[Ultra sparse spanners]\label{rem:UltraSparseSpanners}
	Given a graph $G=(V,E,w)$, a $t$-spanner is a subgraph $H$ of $G$ such that for every pair of vertices $u,v\in V$, $d_H(u,v)\le t\cdot d_G(u,v)$.
	For every fixed $\eps\in(0,1)$, Elkin and Neiman \cite{EN19} constructed ultra-sparse spanners with $(1+\eps)n$ edges and stretch $O(\frac{\log n}{\eps})$. 
	Even though they noted that the sparsity of their spanner matches the Moore's bound, it remained open whether one can construct better spanners. 
	As the only $(g-2)$-spanner of a graph with girth $g$ is the graph itself, 
	\Cref{lem:lognepsGirth} implies that the ultra sparse spanner from \cite{EN19} is tight (up to a constant in the stretch).
\end{remark}

For the case of girth $\Omega(\log n)$, the first step is to replace the asymptotic notation in the lower bound on the number of edges from \Cref{thm:girth} with explicit bound. 
\begin{claim}\label{clm:lognGirthGraph}
	For every $n\in N$, there exist an $n$-vertex graph with $2n$ edges and girth $\Omega(\log n)$.
\end{claim}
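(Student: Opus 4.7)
\emph{Proof proposal for \Cref{clm:lognGirthGraph}.} The plan is to use the standard probabilistic method with deletion, originally due to Erd\H{o}s, which is well suited to getting a linear number of edges together with girth $\Omega(\log n)$. First I would sample an Erd\H{o}s--R\'enyi random graph $G \sim G(n,p)$ with edge probability $p = C/n$ for a sufficiently large absolute constant $C$ (say $C = 8$). The expected number of edges is $\binom{n}{2} p = \Theta(Cn)$, and a Chernoff bound yields $|E(G)| \ge 3n$ with probability $1-o(1)$ for $n$ sufficiently large.

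Next I would estimate the expected number of short cycles. Since a cycle of length $k$ is specified by a cyclic sequence of $k$ distinct vertices (modulo the $2k$ rotations and the reflection), the expected number of $k$-cycles in $G$ is at most $\frac{n^k}{2k} \cdot p^k = \frac{C^k}{2k}$. Summing over $k = 3, 4, \dots, g$ with $g = \alpha \log_2 n$ for a small enough constant $\alpha$ (chosen so that $\alpha \log_2 C < 1$), the expected number of cycles of length at most $g$ is at most $O(C^g/g) = O(n^{\alpha \log_2 C}) = o(n)$. By Markov's inequality, with probability at least $1/2$ the graph $G$ contains at most $o(n)$ such short cycles.

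Combining the two events, with strictly positive probability $G$ simultaneously satisfies $|E(G)| \ge 3n$ and has fewer than $n/2$ cycles of length $\le g$. Fix such a $G$ and delete one edge from each short cycle; this destroys every cycle of length $\le g$ while removing only $o(n)$ edges. The resulting subgraph has girth strictly greater than $g = \Omega(\log n)$ and still at least $2n$ edges, and one may further delete arbitrary edges to obtain exactly $2n$ edges (which can only increase the girth). I do not expect any serious obstacle here: the only delicate step is the balancing of constants $C$ and $\alpha$ so that the Chernoff lower bound on edges and the Markov upper bound on short cycles hold simultaneously with positive probability, which is a routine computation.
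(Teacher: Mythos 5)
Your proposal is correct and follows essentially the same route as the paper: sample $G(n,p)$ with $p=\Theta(1/n)$, use a Chernoff bound to get at least $3n$ edges, bound the expected number of cycles of length at most $c\log n$ by a first-moment computation, apply Markov's inequality, and delete one edge per short cycle before trimming to exactly $2n$ edges. The only differences are in the choice of constants, which, as you note, is a routine balancing act.
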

\begin{proof}
	Set $p=\frac{4n}{{n\choose 2}}=\frac{8}{n-1}$. Consider a graph $G=(V,E)$ sampled according to $G(n,p)$ (that is, each edge sampled to $G$ i.i.d. with probability $p$.). It holds that $\mathbb{E}[|E|]={n\choose 2}\cdot p=4n$. By Chernoff bound,
	\[
	\Pr\left[|E|<3n\right]\le e^{-\frac{1}{32}\mathbb{E}[E]}=e^{-\frac{n}{8}}~.
	\]
	On the other hand, for $t\ge 3$, denote by $C_t$ the set of cycles of length exactly $t$. Then,
	\[
	\mathbb{E}\left[|C_{t}|\right]\le n(n-1)\cdots(n-t+1)\cdot p^{t}=\frac{n(n-1)\cdots(n-t+1)}{(n-1)^{t}}\cdot4^{t}<4^{t}~.
	\]
	Denote by $\mathcal{C}$ the set of all cycles of length smaller than $\frac13\log n$. Then
	\[
	\mathbb{E}\left[|\mathcal{C}|\right]=\sum_{t=3}^{\frac{1}{3}\log n-1}\mathbb{E}\left[|C_{t}|\right]\le\sum_{t=3}^{\frac{1}{3}\log n-1}4^{t}<4^{\frac{1}{3}\log n}=n^{\frac23}~.
	\]
	By Markov inequality, $\Pr\left[|\mathcal{C}|\ge n\right]\le\frac{\mathbb{E}\left[|\mathcal{C}|\right]}{n}<n^{-\frac13}<\frac{1}{2}$.
	By union bound, there exists a graph $G$ with at least $3n$ edges, and at most $n$ cycles of length less than $\frac{1}{3}\log n$. Let $G'$ be the graph obtained by deleting an arbitrary single edge from each cycle. Continue deleting edges until $G'$ has exactly $2n$ edges. 
	We conclude that $G'$ has $2n$ edges and girth at least $\frac13\log n$ as required.
\end{proof}
\begin{proof}[Proof of \Cref{lem:lognepsGirth}]
	Fix $\delta=\frac{1-\eps}{2\eps}$. Set $n'=\eps n=\frac{n}{1+2\delta}$.
	We ignore issues of integrality during the proof. Such issues could be easily fixed as we don't state an explicit bound on the girth.	
	Using \Cref{clm:lognGirthGraph}, construct a graph $G'$ with $n'$ vertices, $2n'$ edges,
	and girth $\Omega(\log n')$.
	
	Let $G$ be the graph obtained from $G'$ by replacing each edge with 
	a path of length $\delta+1$. Then: 
	\begin{align*}
	|V(G)| & =|V(G')|+\delta\cdot|E(G')|=n'+\delta\cdot2n'=n'(1+2\delta)=n\\
	|E(G)| & =(\delta+1)\cdot|E(G')|=(\delta+1)\cdot2n'=n\cdot\frac{2(1+\delta)}{1+2\delta}=(1+\eps)n~,
	\end{align*}
	where the last equality follows by the definition of $\delta$.
	Note that the girth of $G$ is at least $\Omega((1+\delta)\log n')=\Omega(\frac{\log\eps n}{\eps})=\Omega(\frac{\log n}{\eps})$, for $n$ large enough.
\end{proof}

The \emph{Euler characteristic} of a graph $G$ is defined as $\chi(G)\coloneqq|E(G)|-|V(G)|+1$. Our lower bound is based on the following theorem by Rabinovich and Raz \cite{RR98}.
\begin{theorem}[\cite{RR98} ]\label{thm:RR98}
	Consider an unweighted graph $G$ with girth $g$, and consider a (classic) embedding $f:G\rightarrow H$ of $G$ into a weighted graph $H$, such that $\chi(H)<\chi(G)$. Then $f$ has multiplicative distortion at least $\frac{g}{4}-\frac{3}{2}$.
\end{theorem}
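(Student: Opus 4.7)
The plan is to prove the distortion bound by a cycle-space (homological) argument, which is the natural framework for exploiting the hypothesis $\chi(H) < \chi(G)$. For each edge $e = (u,v) \in E(G)$, fix a shortest path $P_{e}$ in $H$ from $f(u)$ to $f(v)$; since $G$ is unweighted and $f$ is dominating with multiplicative distortion $D$, the length of $P_{e}$ is at most $D$. Extend the assignment $e \mapsto \sum_{h \in P_{e}} h$ to an $\mathbb{F}_{2}$-linear map $\Phi : C_{1}(G;\mathbb{F}_{2}) \to C_{1}(H;\mathbb{F}_{2})$ on edge-chain groups. A telescoping boundary computation gives $\partial \Phi(e) = f(u) + f(v)$; summing over any $c \in Z_{1}(G;\mathbb{F}_{2})$ and using that every vertex of $G$ has even degree in $c$, one sees $\partial \Phi(c) = 0$. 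Hence $\Phi$ restricts to a well-defined linear map on cycle spaces $\Phi : Z_{1}(G;\mathbb{F}_{2}) \to Z_{1}(H;\mathbb{F}_{2})$.

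Next I would use the hypothesis to extract a large-support kernel element. Since $\dim Z_{1}(G;\mathbb{F}_{2}) = \chi(G) > \chi(H) = \dim Z_{1}(H;\mathbb{F}_{2})$, the restricted $\Phi$ has a nontrivial kernel. Pick a minimum-support nonzero element $c \in \ker \Phi$. Any nonzero element of $Z_{1}(G;\mathbb{F}_{2})$ is an even-degree subgraph of $G$ and therefore contains a simple cycle of $G$; by the girth hypothesis $|\mathrm{supp}(c)| \ge g$. The condition $\Phi(c) = 0$ expresses that every edge of $H$ is used an even number of times in the multiset $\bigsqcup_{e \in \mathrm{supp}(c)} P_{e}$, so the paths containing each such shared $H$-edge can be grouped into pairs.

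The final step converts this pairing into a distortion lower bound. For any paired pair $(P_{e}, P_{e'})$ sharing an $H$-edge $h = \{w,w'\}$, each endpoint of $e$ or $e'$ is within $H$-distance at most $D$ of $w$ or $w'$, so suitable endpoints $u$ of $e$ and $u'$ of $e'$ satisfy $d_{H}(f(u), f(u')) \le 2D$ and hence $d_{G}(u, u') \le 2D$ by dominance. Restricting to a simple sub-cycle of $c$ of length $\ell \ge g$, these constraints force any two paired cycle-edges to be cyclically within roughly $2D$ positions of one another on the length-$\ell$ cycle. Charging the total path budget $\sum_{e \in c} |P_{e}| \le D \ell$ against the $\ell$ cyclic slots, each of which can only be ``paired'' within a cyclic window of width $O(D)$, and iterating around the cycle, yields after careful accounting the bound $D \ge g/4 - 3/2$. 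The principal obstacle is this last quantitative step: one must argue that the local ``shared-edge implies $G$-close'' constraints cannot simultaneously cover all $\ell \ge g$ cyclic positions unless $D \ge g/4 - 3/2$, and the minimality of $c$ must be invoked to exclude degenerate configurations where the kernel element splits across multiple short subcycles. The factor $1/4$ arises by compounding a factor of $2$ (two endpoints per shared $H$-edge, each contributing $D$ to the $G$-distance bound) with a factor of $2$ (the cycle has two arcs on either side of each paired location that must be covered).
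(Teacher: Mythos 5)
The paper does not prove this statement at all --- it is imported as a black box from Rabinovich and Raz \cite{RR98} --- so your proposal has to stand on its own, and as written it does not. Your setup is sound and is indeed the spirit of the Rabinovich--Raz argument: the map $\Phi$ sending each $G$-edge to the $\mathbb{F}_2$-indicator of a shortest connecting path does carry $Z_1(G;\mathbb{F}_2)$ into $Z_1(H;\mathbb{F}_2)$ (your telescoping boundary computation is correct), the dimension count $\chi(G)>\chi(H)$ does force a nonzero kernel element $c$, and $c$ is an even subgraph of $G$ with $|\mathrm{supp}(c)|\ge g$ whose associated paths cover every $H$-edge an even number of times. But everything after that --- the only part that actually produces the bound $g/4-3/2$ --- is not a proof, and you say so yourself (``the principal obstacle is this last quantitative step'').

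Three concrete problems with the final step. First, the ``pairing'' is local to each $H$-edge: a single path $P_e$ can be paired with different partners on different $H$-edges it traverses, so there is no global matching of the $G$-edges in $\mathrm{supp}(c)$, yet your charging argument treats paired $G$-edges as matched. Second, you cannot ``restrict to a simple sub-cycle of $c$'': minimality of $c$ in $\ker\Phi$ does not make $c$ a simple cycle (a minimal kernel element can be a sum of edge-disjoint cycles none of which lies in the kernel), and the even-covering condition holds only for the full multiset $\{P_e\}_{e\in\mathrm{supp}(c)}$, not for the sub-multiset coming from one constituent cycle. Third, even granting a single simple cycle of length $\ell\ge g$ with the even-covering property, the deduction ``paired edges are cyclically within $O(D)$ positions, hence charging the slots forces $D\ge g/4-3/2$'' does not close: a pairing in which each position is matched to a position $2D$ away is consistent with all the local constraints you state and yields no contradiction. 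The missing ingredient is a genuinely global step --- in \cite{RR98} this is a topological ``folding'' lemma showing that a closed curve in a graph whose homology class vanishes must revisit a point in a way that separates the parametrizing circle into two long arcs, which is what produces a pair $u,u'$ with $d_G(u,u')$ close to $g/2$ but $d_H(f(u),f(u'))$ small, contradicting domination. Without that lemma (or a correct combinatorial substitute), the argument does not reach the stated bound.
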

Next, we transfer the language of classic embeddings into graphs used in \Cref{thm:RR98} to that of clan embeddings into trees.
\begin{lemma}\label{lem:ClanRR98}
	Consider an unweighted, $n$-vertex graph $G=(V,E)$ with girth $g$, and let $(f,\chi)$ be a clan embedding of $G$ into a tree $T$ with multiplicative distortion $t<\frac{g}{4}-\frac{3}{2}$. Then necessarily $\sum_{v\in V}|f(v)|\ge n+\chi(G)$.
\end{lemma}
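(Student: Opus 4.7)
The plan is to reduce the clan embedding statement to the classical embedding lower bound of Rabinovich and Raz (\theoremref{thm:RR98}) via a vertex-merging construction, following the sketch in the paper overview. Set $\kappa = \sum_{v \in V} |f(v)|$. First I would build a weighted graph $H$ by starting from the tree $T$ and identifying all vertices in each clan $f(v)$ into a single node $\tilde{v}$; the edges of $H$ are those inherited from $T$ (we may keep parallel edges without changing shortest path distances, or retain only the minimum-weight one between each pair). Define the classical map $\phi: V \to V(H)$ by $\phi(v) = \tilde{v}$.

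Next I would verify that $\phi$ is a dominating embedding of $G$ into $H$ with multiplicative distortion at most $t$. For the expansion bound, given $u,v \in V$, pick the copy $v' \in f(v)$ realizing $d_T(v', \chi(u)) \le t \cdot d_G(u,v)$; the unique $T$-path from $\chi(u)$ to $v'$ projects to a walk in $H$ from $\tilde{u}$ to $\tilde{v}$ of the same total weight, yielding $d_H(\tilde{u}, \tilde{v}) \le t \cdot d_G(u,v)$. For the non-contracting direction, take any shortest path $\tilde{u} = \tilde{x}_0, \tilde{x}_1, \ldots, \tilde{x}_k = \tilde{v}$ in $H$; each edge $(\tilde{x}_i, \tilde{x}_{i+1})$ lifts to a $T$-edge between copies $x_i' \in f(x_i)$ and $x_{i+1}'' \in f(x_{i+1})$, whose weight is at least $\min_{a \in f(x_i),\, b \in f(x_{i+1})} d_T(a,b) \ge d_G(x_i, x_{i+1})$ by the dominating property of the clan embedding; summing along the path and applying the triangle inequality in $G$ gives $d_H(\tilde{u}, \tilde{v}) \ge \sum_i d_G(x_i, x_{i+1}) \ge d_G(u,v)$.

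Then I would bound the Euler characteristic of $H$: the tree $T$ has exactly $\kappa$ vertices and $\kappa - 1$ edges, so after identification $H$ is a connected graph on $n$ vertices with at most $\kappa - 1$ edges, giving $\chi(H) \le (\kappa - 1) - n + 1 = \kappa - n$. Finally, invoking the contrapositive of \theoremref{thm:RR98} with the embedding $\phi$ of distortion $\le t < g/4 - 3/2$, we must have $\chi(H) \ge \chi(G)$, and combining yields $\kappa \ge n + \chi(G)$, as required.

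The step I expect to need the most care is the dominating verification: one must rule out the possibility that the identification of clans creates shortcut walks in $H$ that beat the original $G$-distance. The key observation that makes this go through is local: every single edge of $T$ joining copies of two distinct source vertices already dominates the corresponding source distance, so no concatenation of such edges along a walk in $H$ can undercut $d_G$. Everything else is bookkeeping of vertex and edge counts, plus a clean application of \theoremref{thm:RR98}.
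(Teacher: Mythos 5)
Your proposal is correct and follows essentially the same route as the paper: merge each clan into a single vertex, check the resulting classical embedding is dominating with distortion at most $t$, invoke \theoremref{thm:RR98} to get $\chi(H)\ge\chi(G)$, and count. The only (harmless) simplification is your assumption that $V(T)=f(V)$, i.e.\ that $T$ has exactly $\kappa$ vertices; in general $T$ may contain vertices outside every clan, but since $\chi(T)=0$ the bound $\chi(H)\le \kappa-n$ survives, and the domination argument goes through by grouping path segments between consecutive merged vertices.
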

\begin{proof}
	Let $H$ be the graph obtained from $T$ by merging all the copies of each vertex. Specifically, arbitrarily order the vertices in $V$: $v_1,v_2,\dots,v_n$. Iteratively construct a series of graphs $H_0=T,H_1,H_2,\dots,H_n$ with one-to-many embeddings $f_i:G\rightarrow H_i$.
	In the $i$'th iteration, we create $H_i,f_i$ out of $H_{i-1},f_{i-1}$ by replacing all the vertices in $f_{i-1}(v_i)$ by a single vertex $\tilde{v}_i$. For a vertex $u\in H_{i-1}$, we add an edge from $u$ to $\tilde{v}_i$ if there was an edge from $u$ to some vertex in $f_{i-1}(v)$. If an edge $\{u,\tilde{v}_i\}$ is added, its weight is defined to be $\min_{v'\in f_{i-1}(v)}w_{H_{i-1}}(v',u)$.
	Set $H=H_n$, and $\tilde{f}=f_n$. 	
	Clearly, distances in $H$ can only decrease compared to $T$.
	This is because for every $u,v\in V$, $d_H(\tilde{u},\tilde{v})\le \min_{u'\in f(u),~v'\in f(v)}d_T(u',v')\le \min_{u'\in f(u)}d_T(u',\chi(v))\le t\cdot d_G(u,v)$.
	On the other hand, by induction (and the triangle inequality), since $f$ is a dominating embedding, one can show that $\tilde{f}$ is also dominating. That is $\forall u,v\in V$, $d_H(\tilde{u},\tilde{v})\ge d_G(u,v)$.
	
	We conclude that $\tilde{f}$ is a classic embedding of $G$ with a  multiplicative distortion at most $t<\frac{g}{4}-\frac32$. By \Cref{thm:RR98}, it follows that $\chi(H)\ge\chi(G)$.	
	For every $i$, it holds that 
	\[
	\chi(H_{i})=|E(H_{i})|-|V(H_{i})|-1\le|E(H_{i-1})|-\left(|V(H_{i-1})|-|f(v_{i})|+1\right)-1=\chi(H_{i-1})+|f(v_{i})|-1
	\]
	As the Euler characteristic of a tree equals $0$, we obtain 
	\[
	\chi(G)\le\chi(H)=\chi(H_{n})\le\sum_{i}(|f(v_{i})|-1)+\chi(T)=\sum_{v\in V}|f(v)|-n~,
	\]
as desired.
\end{proof}

We are now ready to prove \Cref{thm:ClnUltrametricLB}.
\begin{proof}[Proof of \Cref{thm:ClnUltrametricLB}]

	For the first assertion,  
	using \Cref{lem:lognepsGirth}, let $G$ be an unweighted graph with girth  $g=\Omega(\frac{\log n}{\eps})$ and $(1+\eps)n$ edges.
	Consider a clan embedding of $G$ into a tree with distortion smaller than $\frac{g}{4}-\frac32=\Omega(\frac{\log n}{\eps})$.
	By \Cref{lem:ClanRR98}, it holds that 
	\[
	\sum_{v\in V}|f(v)|\ge n+\chi(G)=|E(G)|+1>(1+\eps)n~.
	\]
		
	The second assertion follows similar lines. Set $g=2\cdot\left\lfloor \frac{\frac{4}{3}k+2}{2}\right\rfloor$. Note that $g$ is largest even number up to $\frac{4}{3}k+2$. Using \Cref{thm:girth}, let $G$ be an unweighted graph with girth $g$ and $\Omega(n^{1+\frac{4}{3}\cdot\frac{1}{g-2}})\ge\Omega(n^{1+\frac{1}{k}})$ edges.
	Consider a clan embedding of $G$ into a tree with distortion smaller than $\frac{g}{4}-\frac32=\Omega(k)$.
	By \Cref{lem:ClanRR98}, it holds that $$\sum_{v\in V}|f(v)|\ge n+\chi(G)=|E(G)|+1=\Omega(n^{1+\frac{1}{k}})~.$$
\end{proof}

\section[Ramsey Type Embedding for Minor-Free Graphs (\Cref{thm:Ramsey-minor-Free})]{Ramsey Type Embedding for Minor-Free Graphs}\label{sec:RamseyMinorFree}

This section is devoted to proving the following theorem,
\Ramsey*
We begin by proving \Cref{thm:Ramsey-minor-Free} for the special case of nearly-$h$-embeddable graphs.
\begin{lemma}\label{lem:Ramsey-almost-embeddable}
	Given a nearly $h$-embeddable $n$-vertex graph $G=(V,E,w)$ of diameter $D$, and parameters $\eps\in(0,\frac14)$, $\delta\in(0,1)$ , there is a distribution over one-to-many, clique preserving, dominating embeddings $f$ into treewidth $O_{h}(\frac{\log n}{\eps\delta})$ graphs, such that there is a subset $M\subseteq V$ of vertices for which the following claims hold:
	\begin{enumerate}
		\item For every clique $Q\subseteq V$, $\Pr[Q\subseteq M]\ge 1-\delta$. \label{property:MLargeAlmostEmbeddable}
		\item For every $u\in M$ and $v\in V$, $\max_{u'\in f(u),v'\in f(v)}d_H(u',v'))\le d_G(u,v)+\eps D$.\label{property:MSmallDistortionAlmostEmbeddable}
	\end{enumerate}
\end{lemma}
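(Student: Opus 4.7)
My plan is to follow the paper's overview: peel off the apices, impose a randomly shifted BFS-layering on the remaining surface-embedded graph with vortices to reduce each layer to Lemma~\ref{lm:embed-genus-vortex}, and glue the per-layer embeddings back together with the apices re-attached as universal vertices. The random shift is what lets most vertices land safely in the interior of a single layer, yielding the Ramsey-type satisfied set $M$.

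In detail, let $\Psi$ be the (at most $h$) apices and $G' = G[V\setminus\Psi]$, which is a surface-embedded graph with at most $h$ vortices. I fix an arbitrary root $r \in V\setminus\Psi$, set the layer width $W = 4D/\delta$, sample $t \in [0,W)$ uniformly, and define layers $L_i = \{v : d_G(r,v) \in [iW + t,\,(i+1)W + t)\}$. I then take $M$ to consist of $\Psi$ together with every non-apex vertex $v$ for which $B_G(v, D) \subseteq L_{i(v)}$ (the $D$-padded vertices). A direct boundary-avoidance argument over the shift $t$ shows any fixed $v$ is even $2D$-padded with probability at least $1 - 4D/W = 1-\delta$, and since all pairwise $G$-distances inside any clique are at most $D$, the $2D$-padding of a single clique vertex forces $D$-padding of the rest; this gives property~(\ref{property:MLargeAlmostEmbeddable}).

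For each nonempty layer I invoke Lemma~\ref{lm:embed-genus-vortex} with distortion parameter $\eps' = \Theta(\eps\delta)$. To hand the lemma a bounded-diameter nearly-$h$-embeddable graph, I augment $G[L_i]$ with a virtual root $r_i$ joined to each $v \in L_i$ by an edge of weight $d_G(r,v) - (iW + t) \le W$; the augmented graph is still nearly $h$-embeddable (one may treat $r_i$ as an extra apex of that piece), has diameter $O(W) = O(D/\delta)$, and, crucially, preserves the true $G$-distance between any $D$-padded $u \in L_i$ and every vertex within $D$ of $u$, because those short $G$-paths stay inside $L_i$. The lemma returns a one-to-many clique-preserving dominating embedding $f_i$ of $G[L_i]$ into a graph $H_i$ of treewidth $O_h(\log n/\eps') = O_h(\log n/(\eps\delta))$ and additive distortion $\eps'\cdot O(W) = O(\eps D)$, absorbable into $\eps D$ by choice of the constant in $\eps'$.

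Finally, I assemble $H$ as the disjoint union of the $H_i$'s (discarding the virtual roots), add the apex vertices $\Psi$, and install an edge $(a, u')$ of weight $d_G(a,u)$ for each apex $a \in \Psi$ and each copy $u'$ of each non-apex $u$, together with edges of weight $d_G$ among apices. With $f(v) = f_{i(v)}(v)$ for $v \notin \Psi$ and $f(a) = \{a\}$, clique preservation follows from the per-layer guarantee plus the universal apex edges, and $\tw(H) \le \max_i \tw(H_i) + |\Psi| = O_h(\log n/(\eps\delta))$ since apices are universal. For property~(\ref{property:MSmallDistortionAlmostEmbeddable}), when $u \in M$ any $G$-shortest $u$-$v$ path either hops through an apex (in which case the direct apex edges of weight $d_G$ combine with the layer's additive bound to give $d_G(u,v) + \eps D$) or stays inside $L_{i(u)}$ by the $D$-padding of $u$, where the per-layer additive distortion gives the bound directly; dominance follows from per-layer dominance and the triangle inequality applied through apex hops. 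The main obstacle will be this diameter-control step: once apices are removed, $G[L_i]$ need not even be connected, and one must justify that the virtual-root augmentation (or alternatively a three-layer slab) both preserves near-$h$-embeddability and leaves intact precisely the short paths the Ramsey guarantee relies on.
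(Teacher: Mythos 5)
Your high-level plan is the paper's plan (random shift of a BFS layering of the apex-free part, per-layer application of \Cref{lm:embed-genus-vortex} with parameter $\Theta(\eps\delta)$, apices re-attached as universal vertices), but two steps fail as written. First, your layers $L_i$ are \emph{disjoint}, so a clique $Q$ whose $d(r,\cdot)$-values straddle a boundary $iW+t$ is split between the disconnected pieces $H_{i-1}$ and $H_i$, and $f(Q)$ then contains no clique copy of $Q$. The random shift only makes this unlikely (probability $\le D/W$) for each fixed clique; it cannot be ruled out, whereas the lemma requires the embedding to be clique-preserving deterministically — this is exactly the property the clique-sum recursion in \Cref{lem:ramsey-enhanced-minor-free} uses to identify joint cliques when gluing pieces. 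The paper's fix is to make the layers overlap: each piece is built on a widened slab $U^{+}_{j,\sigma}$ extending the disjoint layer $U_{j,\sigma}$ by $D$ on each side, so every clique (having diameter at most $D$) lies entirely inside some $U^{+}_{j,\sigma}$ and inherits a clique copy from that layer's embedding; the set $M$ is then taken a further $2D$ inside each layer so that the distortion argument still works.

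Second, the obstacle you flag at the end is real and your workaround does not resolve it: attaching a virtual root $r_i$ with an edge to \emph{every} vertex of $L_i$ creates an apex, and \Cref{lm:embed-genus-vortex} applies only to surface-embedded graphs with vortices and \emph{no} apices — apices are precisely what forces randomness in the first place, so you cannot "treat $r_i$ as an extra apex of that piece." The paper's resolution is to define the layering by $d_{G'}$ (not $d_G$), take the subgraph induced by $\bigcup_{j'\le j}U^{+}_{j',\sigma}$, and contract the connected union of all earlier layers into the original vertex $r$; being a minor of $G'$, the result is still nearly $h$-embeddable with no apices, only the vertices having a neighbor in an earlier layer receive an edge to $r$ (of weight $d_G(r,v)\le D$), and the $O(D/\delta)$ radius bound follows because the $d_{G'}$-value is monotone along a $G'$-shortest path from $r$. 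Only \emph{after} applying \Cref{lm:embed-genus-vortex} are the copies of $r$ merged and made adjacent to everything, which costs $+1$ in treewidth. With these two repairs — overlapping slabs, and contraction in place of a universal virtual root — the rest of your argument (the $1-\delta$ padding probability via the shift, and the apex-vs-intra-layer case analysis for property~(\ref{property:MSmallDistortionAlmostEmbeddable})) goes through as in the paper.
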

\begin{proof}
	Consider a nearly $h$-embedded graph $G=(V,E,w)$. Assume w.l.o.g. that $D=1$, otherwise we will scale accordingly. 
	We assume that $1/\delta$ is an integer, otherwise, we solve for $\delta'$ such that $\frac{1}{\delta'}=\lceil\frac{1}{\delta}\rceil$.
	Let $\varPsi$ be the set of apices. 
	We will construct $q=\frac{5}{\delta}$ embeddings, all satisfying  \propref{property:MSmallDistortionAlmostEmbeddable} of \Cref{lem:Ramsey-almost-embeddable}.
	The final embeddings will be obtained by choosing one of these  embeddings uniformly at random.
	We first create a new graph
	$G'=G[V\setminus\varPsi]$ by deleting all the apex vertices $\Psi$. 
	In the tree decomposition of $H$ to be constructed, the set $\Psi$ will belong to all the bags (with edges towards all the vertices). Thus we can assume that $G'$ is connected, since otherwise, we can simply solve the problem on each connected component separately  and combine the solutions by taking the union of all graphs/embeddings.
	
	Let $r\in G'$ be an arbitrary vertex. For $\sigma\in\left\{ 1,\dots,\frac{5}{\delta}\right\} $
	set $\mathcal{I}_{-1,\sigma}=[0,\sigma]$, $\mathcal{I}_{-1,\sigma}^{+}=[0,\sigma+1]$,
	and 
	$$\mbox{for $j\ge0$, set}\qquad
	\mathcal{I}_{j,\sigma}=\left[\frac{5j}{\delta}+\sigma,\frac{5(j+1)}{\delta}+\sigma\right),\quad\mbox{and}\quad
	\mathcal{I}_{j,\sigma}^{+}=\left[\frac{5j}{\delta}+\sigma-1,\frac{5(j+1)}{\delta}+\sigma+1\right),
	$$
	Set $U_{j,\sigma}=\left\{ v\in G'\mid d_{G'}(r,v)\in\mathcal{I}_{j,\sigma}\right\} $
	and similarly $U_{j,\sigma}^{+}$ 
	w.r.t. $I_{j,\sigma}^{+}$. 
	Let $G_{j,\sigma}$ be the graph
	induced by $U^{+}_{j,\sigma}$, plus the vertex $r$.
	In addition, for every vertex $v\in U^{+}_{j,\sigma}$ who has a neighbor in $\cup_{j'<j}U^{+}_{j',\sigma}\setminus U_{j,\sigma}^{+}$, we add an edge to $r$ of weight $d_G(v,r)$.
	Equivalently, $G_{j,\sigma}$  can be constructed by taking the graph induced by $\cup_{j'\le j}U^{+}_{j',\sigma}$, and contracting all the internal edges out of $U_{j,\sigma}^{+}$ into $r$. See \Cref{fig:GraphLayers} (in \Cref{sec:ClanMinorFree}) for an illustration.
	Note that all the edges towards $r$ have weight at most $D=1$, thus $G_{j,\sigma}$ is a nearly $h$-embedded graph with diameter
	at most $2\cdot(\frac{5}{\delta}+3)=O(\frac{1}{\delta})$ and no apices. 
	
	Fix some $\sigma$ and $j$. Using \Cref{lm:embed-genus-vortex} with parameter $\Theta(\eps\cdot\delta)$,
	we construct a one-to-many embedding $f_{j,\sigma}$, of $G_{j,\sigma}$ into a graph $H_{j,\sigma}$ with treewidth $O_{h}(\frac{\log n}{\eps\cdot\delta})$, such that $f_{j,\sigma}$ is clique preserving and has additive distortion $\Theta(\eps\cdot\delta)\cdot O(\frac1\delta)=\eps$.
	After the application of \Cref{lm:embed-genus-vortex}, we will merge all copies of $r$, and add edges from $r$ to all the other vertices (where the weight of a new edge $\left(r,v\right)$ is $d_G(r,v)$). Note that this increases the treewidth by at most $1$. Furthermore, we will assume that there is a bag containing only the vertex $r$ (as we can simply add such a bag). 
	Next, fix $\sigma$. Let $H'_{\sigma}$ be a union of the graphs $\cup_{j\ge -1} H_{j,\sigma}$. We identify the vertex $r$ with itself, but for all the other vertices that participate in more  than one graph, their copies in each graph remain separate. Formally, we define a one-to-many embedding $f_\sigma$, where $f_\sigma(r)$ equals to the unique $r$, and for every other vertex $v\in V\setminus \Psi$, $f_\sigma(v)=\bigcup_{j\ge -1} f_{j,\sigma}(v)$. 
	Note that $H'_{\sigma}$ has a tree decomposition of width $O_{h}(\frac{\log n}{\eps\cdot\delta})$, by identifying the bag containing
	only $r$ in all the graphs. 
	Finally, we create the graph $H_{\sigma}$ by adding the set $\Psi$ with edges towards all the vertices in $H'_{\sigma}$,
	where the weight of a new edge $\left(u',v\right)$ 
	is $d_{G}(u,v)$. For $v\in \Psi$, set $f_\sigma(v)=\{v\}$. 
	As $\Psi=O_h(1)$, $H_{\sigma}$ has treewidth $O_{h}(\frac{\log n}{\eps\cdot\delta})$.	Finally, set $M_{j,\sigma}=\left\{ v\in G'\mid d_{G'}(r,v)\in\big[\frac{5j}{\delta}+\sigma+2,\frac{5(j+1)}{\delta}+\sigma-2\big)\right\}$
	, and $M_\sigma=\Psi\cup\{r\}\cup\bigcup_{j\ge-1}M_{j,\sigma}$.	
	This finishes the construction.	
	
	Observe that the one-to-many embedding $f_\sigma$ is dominating. This follows from the triangle inequality since every edge $\{u',v'\}$ for $u'\in f_\sigma(u),v'\in f_\sigma(v)$ in the graph has weight $d_G(u,v)$. 
	Next we argue that $f_\sigma$ is clique-preserving. Consider a clique $Q$ in $G$, and let $\tilde{Q}=Q\setminus \Psi$ be the non apex vertices in $Q$.
	We will show that $f_\sigma$ contains a clique copy of $\tilde{Q}$. As the apices have edges towards all the other vertices, it will imply that $f_\sigma$ is clique-preserving.
	Let $v\in \tilde{Q}$ be some arbitrary vertex, and $j$ be the unique index such that  $v\in U_{j,\sigma}$. For every $u\in \tilde{Q}$, $d_{G'}(v,u)=d_{G}(v,u)\le 1$, implying $u\in U^+_{j,\sigma}$. We conclude that all $\tilde{Q}$ vertices belong to $G_{j,\sigma}$. As $f_{j,\sigma}$ is clique-preserving, it follows that there is a bag in $H_{j,\sigma}$, and thus also in $H_\sigma$, containing a clique copy of $\tilde{Q}$.
	
	Next, we argue that property (\ref{property:MLargeAlmostEmbeddable}) holds.
	We say that $f$ fails on a vertex $v\in V$ if $v\notin M$, and we say that $f$ fails on a clique $Q$ if $Q\nsubseteq M$.
	Consider some clique $Q$; we can assume w.l.o.g. that $Q$ does not contain any apex vertices (as $f$ never fails on  an apex vertex). 
	Let $s_Q,t_Q\in Q$ be the closest and farthest vertices from $r$ in $G'$, respectively.
	Then $d_{G'}(r,t_Q)-d_{G'}(r,s_Q)\le d_{G'}(s_Q,t_Q)\le 1$. $f_\sigma$ fails on $Q$ iff there is a non-empty intersection between the interval $[d_{G'}(r,t_Q),d_{G'}(r,s_Q))$ and the interval $[\frac{5j}{\delta}+\sigma-2,\frac{5j}{\delta}+\sigma+2)$ for some $j$. Note that there are at most $5$ values of $\sigma$ for which this intersection is non-empty.
	As we constructed $q=\frac5\delta$ embeddings,	
	\[
	\Pr_{\sigma}\left[Q\subseteq M_{\sigma}\right]=\frac{\left|\left\{ \sigma\in[q]\mid Q\subseteq M_{\sigma}\right\} \right|}{q}\geq\frac{q-5}{q}=1-\delta
	\]
	
	Finally, we show that $f_\sigma$ has additive distortion $\eps D$ w.r.t. $M_\sigma$. Consider a pair of vertices $u\in M_\sigma$ and $v\in V$. If one of $u,v$ belongs to $\Psi\cup \{r\}$ then for every $u'\in f_\sigma(u)$ and  $v'\in f_\sigma(v)$, $d_{H_\sigma}(u',v')=d_G(u,v)$.
	Otherwise, if $d_{G'}(u,v)>d_{G}(u,v)$, then it must be that the shortest path between $u$ to $v$ in $G$ goes through an apex vertex $z\in \Psi$. In $H_\sigma$, $f_\sigma(z)$ is a singleton that have an edge towards every other vertex. It follows that 
	$\max_{u'\in f_{\sigma}(u),v'\in f_{\sigma}(v)}d_{H_{\sigma}}(u',v')\le\max_{u'\in f_{\sigma}(u),v'\in f_{\sigma}(v)}d_{H_{\sigma}}(u',f_{\sigma}(z))+d_{H_{\sigma}}(f_{\sigma}(z),v')=d_{G}(u,z)+d_{G}(z,v)=d_{G}(u,v)~.$
	
	Else, $d_{G'}(u,v)=d_G(u,v)\le D=1$. 
	Let $j$ be the unique index such that $u\in U_{j,\sigma}$. As $u\in M_{j,\sigma}$, it implies that there is no index $j'\ne j$ such that $v\in U^+_{j',\sigma}$. In particular, all the vertices in the shortest path between $u$ to $v$ in $G$ are in $U_{j,\sigma}$. Thus, we have
	\[
	\max_{u'\in f_{\sigma}(u),v'\in f_{\sigma}(v)}d_{H_{\sigma}}(u',v')\le\max_{u'\in f_{j,\sigma}(u),v'\in f_{j,\sigma}(v)}d_{H_{j,\sigma}}(u',v')\le d_{G_{j,\sigma}}(u,v)+\eps D=d_{G}(u,v)+\eps D~,
	\]
as desired.
\end{proof}

	Consider a $K_r$-minor-free graph $G$, and let $\mathbb{T}$ be its clique-sum decomposition. That is
	$G = \cup_{(G_i,G_j) \in E(\mathbb{T})}G_i \oplus_{h(r)} G_j$
	where each $G_i$ is a nearly $h(r)$-embeddable graph.  
	We call the clique involved in the clique-sum of $G_i$ and $G_j$ the \emph{joint set} of the two graphs. Here $h(r)$ is a function depending on $r$ only. 
	Let $\phi_h$ be some function depending only on $h$ such that the treewidth of the graphs constructed in \Cref{lem:Ramsey-almost-embeddable} is bounded by $\phi_h\cdot \frac{\log n}{\eps\cdot\delta}$.
	
	The embedding of $G$ is defined recursively, where some vertices from former levels will be added to future levels as apices. In order to  control the number of such apices, we will use the following concept. 
	\begin{definition}[Enhanced minor-free graph]\label{def:enhancedMinorFree}
		A graph $G$ is called $(r,s,t)$\emph{-enhanced minor free graph} if there is a set $S$ of  at most $s$ vertices, called elevated vertices, such that every elevated vertex $u\in S$ has edges towards all the other vertices and $G\setminus S$ is a $K_r$-minor-free graph that has a clique-sum decomposition with $t$ pieces.
	\end{definition}
	 We will prove the following claim by induction on $t$:
	\begin{lemma}\label{lem:ramsey-enhanced-minor-free}
		Given an $n$-vertex $(r,s,t)$-enhanced minor-free graph $G$ of diameter $D$ with a set $S$ of elevated vertices, and parameter $\eps\in(0,\frac14)$, 
 		there is a distribution over one-to-many, clique-preserving, dominating embeddings $f$ into graphs $H$ of treewidth $\phi_{h(r)}\cdot\frac{\log n}{\eps\cdot\delta} + s+h(r)\cdot \log t$, such that there is a subset $M\subseteq V$ of vertices for which the following hold:
		\begin{enumerate}
			\item For every $v\in V$, $\Pr[v\in M]\ge 1-\delta\cdot \log2t$. \label{property:RamseyEnhancedFailureProbability}
			\item For every $u\in M$ and $v\in V$, $\max_{u'\in f(u),v'\in f(v)}d_{H_{\sigma}}(u',v')\le d_G(u,v)+\eps D$.\label{property:RamseyEnhancedDistoriton}
		\end{enumerate}
	\end{lemma}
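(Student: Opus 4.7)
The proof will go by induction on $t$, following the divide-and-conquer strategy outlined in the paper's overview. In the base case $t\le 1$, the graph $G\setminus S$ is a single nearly $h(r)$-embeddable piece, so $G$ itself is nearly embeddable with combined apex set $\Psi\cup S$; applying \Cref{lem:Ramsey-almost-embeddable} to $G$ with this apex set (the elevated part $S$ contributes only additively to the treewidth, as it is added to every bag) yields treewidth $\phi_{h(r)}\frac{\log n}{\eps\delta}+s$, additive distortion $\eps D$, and $\Pr[v\in M]\ge 1-\delta$, matching the claim with $\log t=0$.

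For the inductive step $t\ge 2$, I will pick (by a standard centroid argument on the clique-sum decomposition tree $\mathbb{T}$) a central piece $\tilde G$ whose removal leaves subtrees $\mathbb{T}_1,\dots,\mathbb{T}_k$ of size at most $t/2$; let $Q_i$ (of size at most $h(r)$) denote the joint clique between $\tilde G$ and $\mathbb{T}_i$'s adjacent piece. I apply \Cref{lem:Ramsey-almost-embeddable} to $\tilde G$ with apex set $\Psi_{\tilde G}\cup S$ to obtain a clique-preserving embedding $f_{\tilde G}\colon \tilde G\to 2^{\tilde H}$ with satisfied set $\tilde M$ and $\Pr[Q_i\subseteq \tilde M]\ge 1-\delta$. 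For each $i$, I form $G_i$ as the union of $\mathbb{T}_i$'s pieces and declare $S\cup Q_i$ elevated (giving each such vertex direct edges of weight $d_G$ to all other vertices), so that $G_i$ becomes $(r,s+h(r),t/2)$-enhanced minor-free of diameter at most $D$; recursion produces $f_i\colon G_i\to 2^{H_i}$ with satisfied set $M_i$. I combine by identifying, for each $i$, the clique copy of $Q_i$ in $\tilde H$ (guaranteed by clique-preservation) with the singleton apex copies of $Q_i$ in $H_i$, and set $M=\tilde M\cup\bigcup_i\{v\in M_i:Q_i\subseteq \tilde M\}$. Gluing the tree decompositions along these identified $Q_i$-bags gives treewidth $\phi_{h(r)}\frac{\log n}{\eps\delta}+s+h(r)\log t$; clique-preservation and dominance are inherited from the components; and a union bound over $\{v\in M_i\}$ (probability $\ge 1-\delta\log t$ by induction) and $\{Q_i\subseteq \tilde M\}$ (probability $\ge 1-\delta$) yields $\Pr[v\in M]\ge 1-\delta\log(2t)$.

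The hard part will be the additive distortion: one might fear that routing across a clique identification compounds the $\eps D$ errors from $\tilde H$ and $H_j$. The crucial structural observation is that because $Q_j$ is \emph{elevated} in $G_j$, each $q\in Q_j$ is a singleton apex in $H_j$ with direct edges of weight $d_{G_j}(q,\cdot)\le d_G(q,\cdot)$; hence $d_{H_j}(u',q)\le d_G(u,q)$ exactly, with \emph{no} additive error. Consequently the $\eps D$ term accrues at most once per path: for $u\in\mathbb{T}_j\cap M$ and $v\in\tilde G$, choosing $q\in Q_j$ on a shortest path from $u$ to $v$ in $G$ gives
\[
d_H(u',v')\le d_{H_j}(u',q)+d_{\tilde H}(q,v')\le d_G(u,q)+d_{\tilde G}(q,v)+\eps D\le d_G(u,v)+\eps D,
\]
where the middle inequality uses $q\in Q_j\subseteq\tilde M$ and the last uses $d_{\tilde G}(q,v)\le d_G(q,v)$ (any $G$-path detouring through some subtree $\mathbb{T}_k$ can be shortcut by the edges inside the joint clique $Q_k\subseteq\tilde G$, since joint cliques are present in each piece with $d_G$-weighted edges). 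The remaining cases ($u,v$ both in $\tilde G$, which uses only \Cref{lem:Ramsey-almost-embeddable} applied to $\tilde G$; and $u,v$ in distinct subtrees routed via two elevated gateways $q_j\in Q_j$ and $q_k\in Q_k$) are analogous, with each elevated gateway contributing zero additive error and only the single satisfied segment inside $\tilde H$ contributing $\eps D$.
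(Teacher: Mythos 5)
Your proposal is correct and follows essentially the same route as the paper: induction on $t$ with a centroid piece of the clique-sum tree, elevating each joint clique $Q_i$ in the corresponding subtree, recursing, gluing along clique copies, defining $M$ by requiring both $v\in M_i$ and $Q_i\subseteq\tilde M$, and bounding the failure probability by a union bound giving $\delta\log(2t)$. Your key observation for the distortion --- that elevated gateway vertices are singleton apices joined by exact-distance edges, so the $\eps D$ error accrues only once along the satisfied segment --- is precisely the mechanism behind the paper's six-case analysis (which you should still spell out in full, including the case where $u$ and $v$ lie in the same subtree and the case where a shortest path passes through $S$).
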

	W now show that \Cref{lem:ramsey-enhanced-minor-free} implies \Cref{thm:Ramsey-minor-Free}:
	\begin{proof}[Proof of \Cref{thm:Ramsey-minor-Free}]
	Note that every $K_r$-minor-free graph is $(r,0,n)$-enhanced minor free. Apply \Cref{lem:ramsey-enhanced-minor-free} using parameters $\eps$ and $\delta'=\frac{\delta}{\log 2n}$ to obtain a distribution of embeddings. For each embedding $f$ in the distribution, define another embedding $g$ by setting $g(v)$ for each $v\in V$ to be an arbitrary vertex from $f(v)$.
	We obtain a distribution over embeddings into treewidth
	$\phi_{h(r)}\cdot\frac{\log n}{\eps\cdot\delta'} + 0+h(r)\cdot \log n=O_r(\frac{\log^2 n}{\eps \delta})$ graphs with distortion $\eps D$, such that for every vertex $v\in V$, $\Pr[v\in M]\ge 1-\delta'\cdot \log2n=1-\delta$.
	\end{proof}

	\begin{proof}[Proof of \Cref{lem:ramsey-enhanced-minor-free}]
	It follows from \Cref{lem:Ramsey-almost-embeddable} that the claim holds for the base case $t=1$.
	We now turn to the induction step. Consider an $(r,s,t)$-enhanced minor-free graph $G$. Let $G'$ be a $K_r$-minor-free graph obtained from $G$ by removing a set $S$ of elevated vertices. Let $\mathbb{T}$ be the clique-sum decomposition of $G'$ with $t$ pieces.
	We use the following lemma to pick a central piece $\mathcal{G}$ of $\mathbb{T}$.	
	\begin{lemma}[\cite{Jordan69}]\label{lm:tree-sep} Given a tree $T$ of $n$ vertices, there is a vertex $v$ such that every connected component of $T\setminus \{v\}$  has at most $\frac{n}{2}$ vertices. 
	\end{lemma}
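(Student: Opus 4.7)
I would prove Jordan's classical centroid theorem by a descent argument: starting from an arbitrary vertex, iteratively move into the ``heaviest'' component and show that a suitable integer invariant strictly decreases at each step, so the process must terminate at a vertex with the required balance property.

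For a vertex $u$ of $T$, let $\mathcal{C}(u)$ denote the collection of connected components of $T\setminus\{u\}$ and set
\[
f(u)\;=\;\max_{C\in\mathcal{C}(u)}|C|.
\]
The plan is to show that the minimum of $f$ over $V(T)$ is at most $n/2$, witnessed by a vertex reachable via the descent described below. First I would observe a uniqueness fact: since the components in $\mathcal{C}(u)$ are pairwise disjoint and their sizes sum to $n-1$, at most one of them can have size strictly greater than $n/2$. Hence whenever $f(u)>n/2$ there is a unique ``heavy'' component $C^\star(u)\in\mathcal{C}(u)$.

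The main step is the descent. Pick any $v_0\in V(T)$. If $f(v_0)\le n/2$ we are done. Otherwise let $C=C^\star(v_0)$ and let $v_1$ be the neighbor of $v_0$ in $T$ that lies in $C$ (such a neighbor exists and is unique because $C$ is the component of $T\setminus\{v_0\}$ containing the edge from $v_0$ into $C$). I would now compare the components of $T\setminus\{v_1\}$ with those of $T\setminus\{v_0\}$. Exactly one component of $T\setminus\{v_1\}$, call it $C'$, contains $v_0$; removing the edge $\{v_0,v_1\}$ from $T$ shows that $C'$ consists of $v_0$ together with every component of $T\setminus\{v_0\}$ other than $C$, so
\[
|C'|\;=\;1+\bigl((n-1)-|C|\bigr)\;=\;n-|C|\;<\;n/2,
\]
using $|C|=f(v_0)>n/2$. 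Every other component of $T\setminus\{v_1\}$ is contained in $C\setminus\{v_1\}$, hence has size at most $|C|-1$. Consequently
\[
f(v_1)\;\le\;\max\bigl(n-|C|,\;|C|-1\bigr)\;<\;|C|\;=\;f(v_0).
\]

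Since $f$ takes values in the positive integers and strictly decreases with each descent step, the process must terminate after finitely many iterations at some vertex $v$ with $f(v)\le n/2$, which is exactly the conclusion of the lemma. The only mildly subtle point is the uniqueness of the heavy component $C^\star(u)$, which is what guarantees that the descent is well-defined; everything else is an elementary size computation, so I do not anticipate a serious obstacle.
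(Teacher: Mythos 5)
Your descent argument is correct: the uniqueness of the heavy component (two components of size $>n/2$ would sum to more than $n-1$), the computation $|C'|=n-|C|<n/2$, and the strict decrease of $f$ are all sound, so the process terminates at a centroid. The paper itself gives no proof of this lemma --- it simply cites Jordan's 1869 result --- so there is nothing to compare against; your argument is the standard textbook proof and fills that gap correctly.
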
 	
	Let $G_1,\dots,G_p$ be the neighbors of $\tilde{G}$ in $\mathbb{T}$. Note that $\mathbb{T}\setminus \tilde{G}$ contains $p$ connected components $\mathbb{T}_1,\dots,\mathbb{T}_p$, where $G_i\in \mathbb{T}_i$, and  $\mathbb{T}_i$ contains at most $|\mathbb{T}|/2=t/2$ pieces.
	Let $Q_i$ be the clique used in the clique-sum of $G_i$ with $\tilde{G}$ in $\mathbb{T}$.  
	For every $i$, we will add edges between $Q_i$ vertices to all the vertices in $\mathbb{T}_i$. That is, we add $Q_i$ to the set of elevated vertices in the graph induced by pieces in $\mathbb{T}_i$.  Every new edge $\{u,v\}$ will have the weight $d_G(u,v)$.
	Let $\mathcal{G}_i$ be the graph induced on vertices of $\mathbb{T}_i\cup S$ (and the newly added edges). Note that $\mathcal{G}_i$ is an $(r,s',t')$-enhanced minor-free graph for $t'\le \frac t2$ and $|s'|\le |S|+|Q_i|\le s+h(r)$. Furthermore, for every $u,v\in  \mathcal{G}_i$, it holds that $d_{\mathcal{G}_i}(u,v)=d_{G}(u,v)$. Thus, each $\mathcal{G}_i$ has diameter at most $D$.
	Using the inductive hypothesis on $\mathcal{G}_i$, we sample a dominating embedding $f_i$ into $H_i$, and a subset $M_i\subseteq\mathcal{G}_i$ of vertices. Note that properties (\ref{property:RamseyEnhancedFailureProbability})-(\ref{property:RamseyEnhancedDistoriton}) hold, and $H_i$ has treewidth $\phi_{h(r)}\cdot\frac{\log n}{\eps\cdot\delta} + s'+h(r)\cdot \log 2t'\le \phi_{h(r)}\cdot\frac{\log n}{\eps\cdot\delta} + s+h(r)\cdot \log 2t$.
	
	Let $\tilde{\mathcal{G}}$ be the graph induced on $\tilde{G}\cup S$. Note that $\tilde{\mathcal{G}}$ has diameter at most $D$. We apply \Cref{lem:Ramsey-almost-embeddable} to $\tilde{\mathcal{G}}$ to sample a dominating embedding $\tilde{f}$ into $\tilde{H}$, and a subset $\tilde{M}$ of vertices. Note that properties (\ref{property:MLargeAlmostEmbeddable})-(\ref{property:MSmallDistortionAlmostEmbeddable}) hold, 
	in particular, the treewidth of $\tilde{H}$ is bounded by $\phi_{h(r)}\cdot\frac{\log n}{\eps\cdot\delta} + s$ (as the construction first will delete the elevated vertices and eventually add them to all the bags).
	
	As the embeddings $\tilde{f},f_1,\dots,f_p$ are clique-preserving embeddings into  $\tilde{H},H_1,\dots,H_p$, there is a natural way to combine them into a single graph $H$ of treewidth $\phi_{h(r)}\cdot\frac{\log n}{\eps\cdot\delta} + s+h(r)\cdot \log 2t$.
	In more detail, initially, we just take a disjoint union of all the graphs $\tilde{H},H_1,\dots, H_p$, keeping all copies of the different vertices separately. Next, we identify all the copies of the elevated vertices. Finally, for each $i$, as both $\tilde{f}$ and $f_i$ are clique-preserving, we simply take two clique copies of $Q_i$ from  $\tilde{f}$ and $f_i$, and identify the respective vertices in this two clique copies. Note that every vertex $v\in Q_i$ is an elevated vertex in $\mathcal{G}_i$, and thus $f_i(v)$ is unique.
	The embedding $f$ is defined as follows: 
	For $v\in\tilde{\mathcal{G}}$, set $f(v)=\tilde{f}(v)$, while for $v\in\mathcal{G}_i\setminus \tilde{\mathcal{G}}$ for some $i$, set $f(v)=f_i(v)$.
	
	We now define the subset $M\subseteq V$. Every vertex $v\in\tilde{M}$ joins $M$. A vertex $v\in \mathcal{G}_i\setminus \tilde{\mathcal{G}}$ join $M$ if and only if $v\in M_i$ and $Q_i\subseteq \tilde{M}$. Note that for vertices in $\tilde{\mathcal{G}}$, property (\ref{property:RamseyEnhancedFailureProbability}) holds trivially, while for $v\in \mathcal{G}_i\setminus\tilde{\mathcal{G}}$, using the induction hypothesis and union bound
	\[
	\Pr\left[v\notin M\right]\le\Pr\left[v\notin M_{i}\right]+\Pr\left[Q_{i}\nsubseteq\tilde{M}\right]\le\delta\cdot\log2t'+\delta\le\delta\cdot\log2t~.
	\]
	Hence property (\ref{property:RamseyEnhancedFailureProbability}) holds.
	Note that $f$ is clique-preserving as every clique must be contained in either $\tilde{\mathcal{G}}$ or some $\mathcal{G}_i$.	
	Finally, we show that property (\ref{property:RamseyEnhancedDistoriton}) holds. Consider a vertex $u\in M$ and $v\in V$. We proceed by case analysis.
	\begin{itemize}
		\item If a shortest path from $u$ to $v$ goes through a vertex $z\in S$ (this includes the case where either $u$ or $v$ is in $S$). Then for every $u'\in f(u)$ and $v'\in f(v)$, it holds that  $d_{H}(u',v')\le d_{H}(u',f(z))+d_{H}(f(z),v')=d_{G}(u,z)+d_{G}(z,v)=d_{G}(u,v)$.
		\item Else, if both $u,v\in \tilde{G}$, then by \Cref{lem:Ramsey-almost-embeddable}, $\max_{u'\in f(u),v'\in f(v)}d_{H}(u',v')\le\max_{u'\in\tilde{f}(u),v'\in\tilde{f}(v)}d_{\tilde{H}}(u',v')\le d_{\mathcal{\tilde{G}}}(u,v)+\eps D=d_{G}(u,v)+\eps D$.
		\item Else, if there is an $i\in[p]$ such that both $u,v\in \mathcal{G}_i\setminus\tilde{G}$, then by the induction hypothesis
		$\max_{u'\in f(u),v'\in f(v)}d_{H}(u',v')\le\max_{u'\in f_{i}(u),v'\in f_{i}(v)}d_{H_{i}}(u',v')\le d_{\mathcal{G}_{i}}(u,v)+\eps D=d_{G}(u,v)+\eps D$.
		\item Else, if $u\in \tilde{G}$ and there is an $i\in[p]$ such that $v\in \mathcal{G}_i$.
		There is necessarily a vertex $x\in Q_i$ such that there is a shortest path from $u$ to $v$ in $G$ going through $x$. 
		Let $\hat{x}$ be the copy of $x$ used to connect between $\tilde{H}$ and $H_i$. Note that there is an edge between $\hat{x}$ to every copy $v'\in f_i(v)$ in $H_i$. In addition, as $u\in \tilde{M}$, by the second case it holds that $\max_{u'\in f(u)}d_{H}(u',\hat{x})\le\max_{u'\in f(u),x'\in f(x)}d_{H}(u',\hat{x})\le d_{G}(u,x)+\eps D$. We conclude
		\begin{align}
		\max_{u'\in f(u),v'\in f(v)}d_{H}(u',v') & 
		\le\max_{u'\in f(u)}d_{H}(u',\hat{x})+\max_{v'\in f(v)}d_{H}(\hat{x},v')\nonumber\\
		& \le d_{G}(u,x)+\eps D+d_{G}(x,v)=d_{G}(u,v)+\eps D~.\label{eq:Ramsey-stretch}
		\end{align}
		\item  Else, if $v\in \tilde{G}$ and there is an $i\in[p]$ such that $u\in \mathcal{G}_i\setminus\tilde{G}$.
		There is necessarily a vertex $x\in Q_i$ such that there is a shortest path from $u$ to $v$ in $G$ going through $x$. 
		As $u\in M$ it follows that $x\in \tilde{M}\subseteq M$. Let $\hat{x}$ be the copy of $x$ used to connect between $\tilde{H}$ and $H_i$; we observe that inequality (\ref{eq:Ramsey-stretch}) holds in this case.
		\item  Else, there are $i\ne j$ such that $u\in \mathcal{G}_i\setminus\tilde{G}$ and $v\in \mathcal{G}_j\setminus\tilde{G}$. There is necessarily a vertex $x\in Q_i$ such that there is a shortest path from $u$ to $v$ in $G$ going through $x$. As $u\in M$ it follows that $x\in \tilde{M}\subseteq M$. Let $\hat{x}$ be the copy of $x$ used to connect between $\tilde{H}$ and $H_i$. By the forth case, it holds that $\max_{x'\in f(x),v'\in f(v)}d_{H}(x',v')\le d_{G}(x,v)+\eps D$. Thus, \begin{align*}
		\max_{u'\in f(u),v'\in f(v)}d_{H}(u',v') & \le\max_{u'\in f(u)}d_{H}(u',\hat{x})+\max_{v'\in f(v)}d_{H}(\hat{x},v')\\
		& \le d_{G}(u,x)+d_{G}(x,v)+\eps D=d_{G}(u,v)+\eps D~.
		\end{align*}
	\end{itemize}
\end{proof}

\section[Clan Embedding for Minor-Free Graphs (\Cref{thm:Clan-Embedding})]{Clan Embedding for Minor-Free Graphs}\label{sec:ClanMinorFree}
This section is devoted to proving \Cref{thm:Clan-Embedding} (restated below for convenience).
The proof of \Cref{thm:Clan-Embedding} builds upon a similar approach to \Cref{thm:Ramsey-minor-Free}, however, it is more delicate and considerably more involved. We present the proof here without assuming familiarity with the proof of \Cref{thm:Ramsey-minor-Free}. Nonetheless, 
we recommend the reader to first understand the proof of \Cref{thm:Ramsey-minor-Free} before reading this section.
\Clan*

\begin{remark}
	Note that \Cref{thm:Clan-Embedding} implies a weak version \Cref{thm:Ramsey-minor-Free}, where the distortion guarantee is for pairs $u,v\in M$ rather than than for $u\in M$ and $v\in V$: simply use the chief part $\chi$ as a Ramsey type embedding and set $M=\{v\mid |f(v)|=1\}$. Interestingly, this weaker version is still sufficient for our application to the metric $\rho$-independent set problem (\Cref{thm:isolated}). 
\end{remark}

	We begin with \Cref{lem:Clan-AlmostEmbedable}, which is a special case of nearly-embeddable graphs. Later, we will generalize to minor-free graphs via clique-sums. Specifically, inductively we will use \Cref{lem:Clan-AlmostEmbedable} for each piece, and integrate it to the general embedding. However, for this integration to go through, we will need the intermediate embedding to be clique-preserving. As a consequence, we will not attempt to bound the size of $f$ directly. Instead, for every vertex $v$, $f(v)$ will be the union of two sets $\chi(v)$ and $\psi(v)$. Eventually, for the clan embedding, we will take \emph{one copy from each set}. We will say that the embedding succeeds on a vertex $v$ if $\psi(v)=\emptyset$. (In the following lemma, $\bigcupdot$ denotes the disjoint-union operation.)

	\begin{lemma}\label{lem:Clan-AlmostEmbedable}
	Consider a nearly $h$-embeddable $n$-vertex graph $G=(V,E,w)$ with set of apices $\Psi$, diameter $D$, and parameters $\eps\in(0,\frac14)$, $\delta\in(0,1)$.
	Then there is a distribution over one-to-many, dominating embeddings $f$ into treewidth $O_{h}(\frac{\log n}{\eps\delta})$ graphs, such that for every vertex $v\in V$, $f(v)$ can be partitioned into sets $\chi(v),\psi(v)$ where $\chi(v)\bigcupdot \psi(v)= f(v)$.
	It holds that:
	\begin{enumerate}
		\item\label{property:ClanAlmostEmbeddableDistortion} For every pair of vertices $u,v$, \footnote{Note that $\psi(v)$ might be an empty set. A maximum over an empty set is defined to be $\infty$.}
		\begin{equation}
		\min\left\{ \max_{u'\in\chi(u),v'\in\chi(v)}d_{H}(u',v'),\max_{u'\in\psi(u),v'\in\chi(v)}d_{H}(v',u')\right\} \le d_{G}(u,v)+\eps D~.\label{eq:LemmaDominatingStretchGurantee}
		\end{equation}
		\item\label{property:ClanAlmostEmbeddableFailProbability} We say that $f$ fails on a vertex $v$ if $\psi(v)\ne \emptyset$.
		For a clique $Q\subseteq V$, we say that $f$ fails on $Q$ if it fails on some vertex in $Q$. For every clique $Q\subseteq V$, $\Pr[f\mbox{ fails on }Q]\le \delta$. 
		\item\label{property:ClanAlmostEmbeddableCliquePreserve} Consider a clique $Q$, one of the following holds:
		\begin{enumerate}
			\item $f$ succeeds on $Q$. In particular $\chi(Q)$ contains a clique copy of $Q$.
			\item $f$ fails on $Q$, and $\chi(Q)$ contains a clique copy of $Q$. In addition, consider the set\\ $Q^F=\{v\in Q\mid\psi(v)\ne0\}$, then $\psi(Q^F)$ contains a clique copy of $Q^F$.
			\item $f$ fails on $Q$, and $f(Q)$ contains two cliques copies $Q^1,Q^2$ of $Q$ such that for every vertex $v\in Q\setminus \Psi$, both $\chi(v)\cap(Q^1\cup Q^2)$ and $\psi(v)\cap(Q^1\cup Q^2)$ are singletons. In this case, in addition to equation (\ref{eq:LemmaDominatingStretchGurantee}), it also holds that for every $u\in V$ and $v\in Q\setminus \Psi$,
			\begin{equation}
			\min\left\{ \max_{u'\in\chi(u),v'\in\psi(v)}d_{H}(u',v'),\max_{u'\in\psi(u),v'\in\psi(v)}d_{H}(u',v')\right\} \le d_{G}(u,v)+\eps D~.\label{eq:LemmaDominatingStretchGuranteePSI}
			\end{equation}
		\end{enumerate}
	\end{enumerate}
	\end{lemma}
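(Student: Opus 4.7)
The plan is to imitate the layered construction in \Cref{lem:Ramsey-almost-embeddable} but, instead of \emph{discarding} the vertices that fall near a layer boundary, to \emph{duplicate} them into both adjacent layers so that the resulting one-to-many embedding has a worst-case distortion bound for every pair. Scaling so that $D=1$, I remove the apex set $\Psi$, pick an arbitrary root $r\in G'$, and take a uniform random shift $\sigma\in\{1,\dots,5/\delta\}$ that defines \emph{core} layers $\mathcal{I}_{j,\sigma}=[\tfrac{5j}{\delta}+\sigma,\tfrac{5(j+1)}{\delta}+\sigma)$ together with the \emph{extended} layers $\mathcal{I}_{j,\sigma}^{+}=[\tfrac{5j}{\delta}+\sigma-1,\tfrac{5(j+1)}{\delta}+\sigma+1)$ that include a unit-width buffer (the maximum clique diameter) on each side. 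The per-layer auxiliary graphs $G_{j,\sigma}$, the application of \Cref{lm:embed-genus-vortex} with distortion $\Theta(\eps\delta)$ to obtain clique-preserving one-to-many embeddings $f_{j,\sigma}\colon G_{j,\sigma}\to H_{j,\sigma}$ of treewidth $O_{h}(\log n/(\eps\delta))$, the identification of root copies across layers, and the reattachment of $\Psi$ as apex universals all proceed exactly as in the Ramsey proof, so the target graph $H_\sigma$ has the right treewidth bound.

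The novelty lies in the partition $f_\sigma(v)=\chi(v)\bigcupdot\psi(v)$. For $v\in\Psi\cup\{r\}$ I take $\chi(v)=f_\sigma(v)$ and $\psi(v)=\emptyset$. Otherwise, let $j_v$ be the unique index with $d_{G'}(r,v)\in\mathcal{I}_{j_v,\sigma}$ (the \emph{home} layer), put $\chi(v)=f_{j_v,\sigma}(v)$, and collect into $\psi(v)$ all other copies $f_{j,\sigma}(v)$ for $j\ne j_v$ with $v\in U^{+}_{j,\sigma}$; by construction $\psi(v)\ne\emptyset$ exactly when $d_{G'}(r,v)$ is within $1$ of a layer boundary. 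Property~\ref{property:ClanAlmostEmbeddableFailProbability} then reduces to a one-line averaging bound: a clique $Q$ has $G$-diameter at most $1$, so its root-distance interval has length at most $1$, and this interval meets the width-$2$ ``within $1$ of boundary'' window for at most $3$ of the $5/\delta$ equiprobable shifts, giving failure probability at most $3\delta/5\le\delta$. Property~\ref{property:ClanAlmostEmbeddableCliquePreserve} is a case analysis on where $Q$'s root-distance interval sits: case~(a) holds when the interval avoids all buffers and a single clique-preserving application of $f_{j,\sigma}$ suffices; case~(c) holds when the interval straddles some boundary $\tfrac{5j}{\delta}+\sigma$, in which case $Q\subseteq U^{+}_{j,\sigma}\cap U^{+}_{j+1,\sigma}$ and the clique-preservations of $f_{j,\sigma}$ and $f_{j+1,\sigma}$ produce two disjoint clique copies $Q^{1},Q^{2}$ with each non-apex vertex contributing exactly one endpoint to each (its chief in its home layer and its psi-copy in the other); case~(b) is the remaining situation where all of $Q$ shares a single home layer $j$ but some vertices lie in the lower \emph{or} (exclusively) upper buffer, with $f_{j,\sigma}$ supplying the $\chi$-clique copy and $f_{j\pm1,\sigma}$ restricted to $Q^{F}\subseteq U^{+}_{j\pm1,\sigma}$ supplying the $\psi$-clique copy.

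For the distortion bounds~(\ref{eq:LemmaDominatingStretchGurantee}) and~(\ref{eq:LemmaDominatingStretchGuranteePSI}) I case-split on whether the $G$-shortest path from $u$ to $v$ uses an apex: if yes, routing through that apex (which is adjacent to every copy with edge weight $d_{G}$) yields distance exactly $d_{G}(u,v)$ for every pair of images, so both terms of the outer $\min$ are already satisfied; if not, $d_{G'}(u,v)=d_{G}(u,v)\le1$, hence $u\in U^{+}_{j,\sigma}$ for every layer $j$ hosting a copy of $v$ (chief layer $j_v$ for~(\ref{eq:LemmaDominatingStretchGurantee}), and the adjacent layer in case~(c) for~(\ref{eq:LemmaDominatingStretchGuranteePSI})), so $u$ and $v$ share a sub-embedding $H_{j,\sigma}$ in which the $\eps$-additive distortion of \Cref{lm:embed-genus-vortex} delivers the bound, and the relevant $\min$ selects either the $\chi(u)$-branch (when $j_u$ matches the hosting layer of the relevant $v$-copy) or the $\psi(u)$-branch (otherwise). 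The main obstacle I anticipate is purely a bookkeeping one: making sure the clique copies produced by the clique-preservation of each $f_{j,\sigma}$ are chosen consistently so that in case~(c) every $v\in Q\setminus\Psi$ has $|\chi(v)\cap(Q^{1}\cup Q^{2})|=|\psi(v)\cap(Q^{1}\cup Q^{2})|=1$; this should follow by picking for each layer the canonical clique copy promised by $f_{j,\sigma}$ and tracing through the definitions of $\chi$ and $\psi$, but it is the step most prone to hidden inconsistencies.
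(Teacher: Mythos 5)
Your overall strategy is exactly the paper's: keep the Ramsey-style layering, but duplicate the boundary vertices into both adjacent layers and designate the home-layer copies as $\chi(v)$ and the others as $\psi(v)$. The construction of the per-layer graphs, the application of \Cref{lm:embed-genus-vortex}, the gluing through $r$ and the apices, the averaging bound for property~\ref{property:ClanAlmostEmbeddableFailProbability}, and the three-way case analysis for property~\ref{property:ClanAlmostEmbeddableCliquePreserve} all match the paper's proof. However, there is a genuine quantitative gap: your buffer has width $1$ (you take $\mathcal{I}^{+}_{j,\sigma}$ to extend the core layer by $1=D$ on each side), and this is not enough for equation~(\ref{eq:LemmaDominatingStretchGuranteePSI}) in case~3(c). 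In that case the clique $Q$ straddles a boundary $\beta=\tfrac{5(j+1)}{\delta}+\sigma$, so a vertex $v\in Q\cap U_{j,\sigma}$ can have $d_{G'}(r,v)$ as small as $\beta-1$, and its $\psi$-copies live in $H_{j+1,\sigma}$. For an arbitrary $u$ with $d_{G}(u,v)\le 1$ whose shortest path avoids the apices, you get only $d_{G'}(r,u)\ge \beta-2$, which does \emph{not} place $u$ in $U^{+}_{j+1,\sigma}$ when the buffer is $1$. Your claim that ``$u\in U^{+}_{j,\sigma}$ for every layer $j$ hosting a copy of $v$ \dots\ the adjacent layer in case (c)'' is therefore false: $u$ may have no copy at all in $H_{j+1,\sigma}$, so neither term of the outer $\min$ in~(\ref{eq:LemmaDominatingStretchGuranteePSI}) can be bounded via the within-layer distortion, and the only route from $f(u)$ to $\psi(v)$ goes through $r$ or an apex, costing up to $\Theta(D)$ rather than $d_G(u,v)+\eps D$. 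Equation~(\ref{eq:LemmaDominatingStretchGurantee}) survives with buffer $1$ (there $v$ sits in its \emph{core} layer, so every vertex of the $u$--$v$ path stays within distance $1$ of $\mathcal{I}_{j,\sigma}$), which is presumably why the narrower buffer looked adequate.

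The fix is exactly what the paper does: widen the buffer to $2D$ (the paper uses $\mathcal{I}^{+}_{j,\sigma}=[\tfrac{8j}{\delta}+\sigma-2,\tfrac{8(j+1)}{\delta}+\sigma+2)$ with $8/\delta$ shifts), so that a vertex $v$ of a straddling clique, which lies within distance $D$ of the boundary, together with any $u$ within distance $D$ of $v$, both land in $U^{+}_{j+1,\sigma}$ along with the entire shortest path between them. This only changes constants (the failure window becomes width $4$, hit by at most $5$ of the $8/\delta$ shifts, still giving probability at most $\delta$), but it is essential for~(\ref{eq:LemmaDominatingStretchGuranteePSI}), which is in turn used in the clique-sum composition of \Cref{lem:Clan-enhanced-minor-free} when a failed joint clique is attached via its $\psi$-copy. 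The rest of your argument, including the bookkeeping for the two clique copies in case~(c), goes through once the buffer is corrected.
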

\begin{proof}
	Consider a nearly $h$-embedded graph $G=(V,E,w)$. Assume w.l.o.g. that $D=1$, otherwise we can scale accordingly. 
	We assume that $1/\delta$ is an integer, otherwise we solve for $\delta'$ such that $\frac{1}{\delta'}=\lceil\frac{1}{\delta}\rceil$.
	We will construct $q=\frac{8}{\delta}$ embeddings satisfying \propref{property:ClanAlmostEmbeddableDistortion} of \Cref{lem:Clan-AlmostEmbedable}.	
	The final embedding will be obtained by choosing one of these embeddings uniformly at random.
	Denote by $G'=G[V\setminus\Psi]$ the induced subgraph obtain by removing the apices. 
	In the tree decomposition of $H$ we will construct, the set $\Psi$ will belong to all the bags (with edges towards all the vertices). Thus we can assume that $G'$ is connected, as otherwise we can simply solve the problem on each connected component separately, and combine the solutions by taking the union of all graphs/embeddings.
	
	Let $r\in G'$ be an arbitrary vertex. For $\sigma\in\left\{ 4,\dots,\frac{8}{\delta}\right\}$, 
	set $\mathcal{I}_{-1,\sigma}=[0,\sigma]$, $\mathcal{I}_{-1,\sigma}^{+}=[0,\sigma+2]$,
	and for $j\ge0$, set $\mathcal{I}_{j,\sigma}=\left[\frac{8j}{\delta}+\sigma,\frac{8(j+1)}{\delta}+\sigma\right)$, and
	$\mathcal{I}_{j,\sigma}^{+}=\left[\frac{8j}{\delta}+\sigma-2,\frac{8(j+1)}{\delta}+\sigma+2\right)$.
	Set $U_{j,\sigma}=\left\{ v\in G'\mid d_{G'}(r,v)\in\mathcal{I}_{j,\sigma}\right\} $
	and similarly $U_{j,\sigma}^{+}$ w.r.t. $I_{j,\sigma}^{+}$. 
	Note that by the triangle inequality, for every pair of neighboring vertices $u,v$ it holds that $d_G(u,v)\le D=1$; thus, $u\in U_{j,\sigma}$ implies $v\in U^+_{j,\sigma}$.
	Let $G_{j,\sigma}$ be the graph
	induced by $U^{+}_{j,\sigma}$, plus the vertex $r$.
	In addition, we add edges from the vertex $r$ towards all the vertices with neighbors in $(\cup_{q<j}U^{+}_{q,\sigma})\setminus U^{+}_{j,\sigma}$ (where the weight of a new edge $\left(r,v\right)$ is $d_G(r,v)$).
	Equivalently, $G_{j,\sigma}$  can be constructed by taking the graph induced by $\cup_{j'\le j}U^{+}_{j',\sigma}$ and contracting all the internal edges out of $U_{j,\sigma}^{+}$ into $r$. 
	Note that all the edges towards $r$ have weight at most $D=1$.
	Furthermore, for every vertex $v\in G_{j,\sigma}$, $d_{G_{j,\sigma}}(v,r)< 1+\frac{8}{\delta}+4$. Thus $G_{j,\sigma}$ is a nearly $h$-embedded graph with diameter at most $\frac{16}{\delta}+10=O(\frac{1}{\delta})$ and no apices. See \Cref{fig:GraphLayers} for an illustration.
	
	\begin{figure}[t]
		\centering
		\includegraphics[scale=0.38]{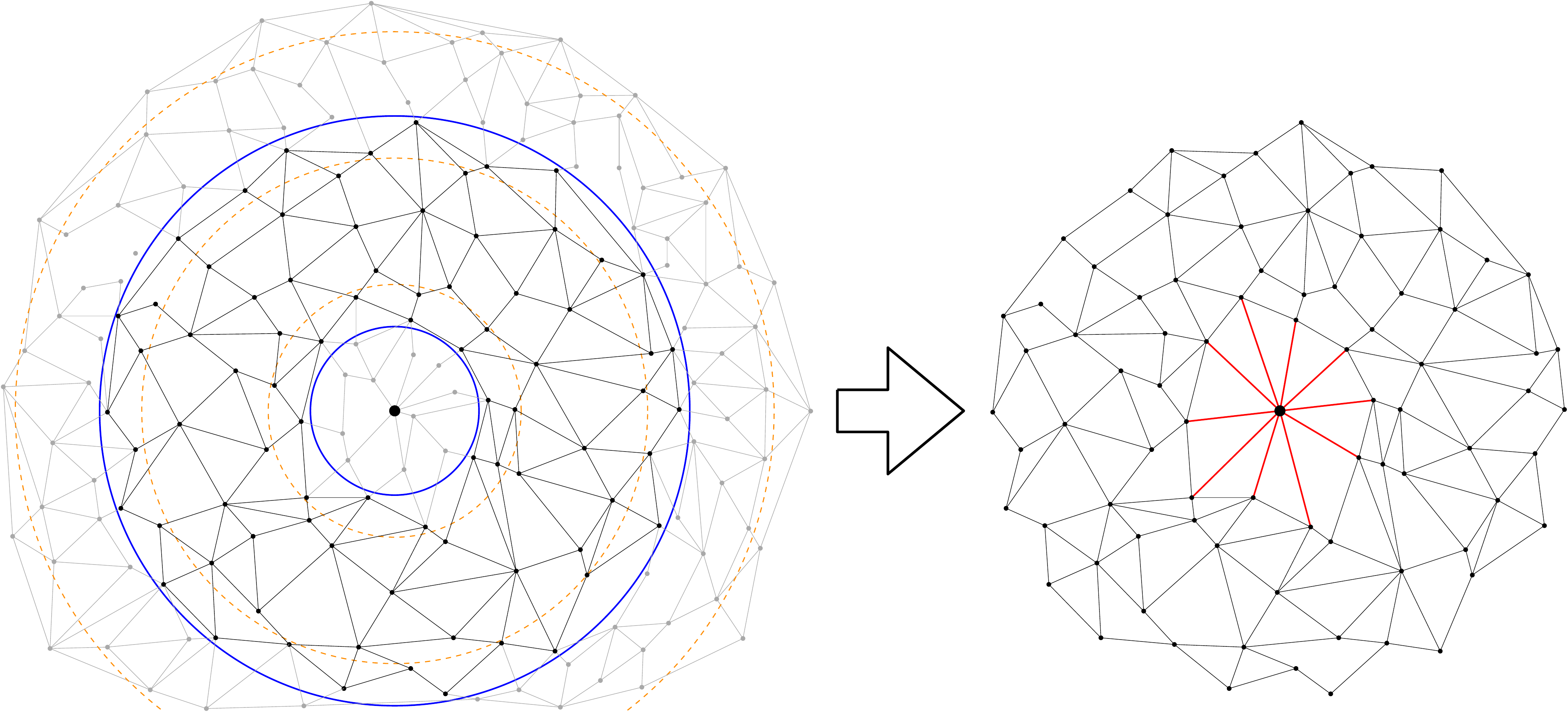}
		\caption{\small On the left is the graph $G'$. $r$ is the big black vertex in the middle. The dashed orange lines separate between the layers of $U_{-1,\sigma},U_{0,\sigma},U_{1,\sigma},\dots$. The two blue lines are the boundaries of $U^+_{0,\sigma}$. All the vertices in $U^+_{0,\sigma}$ (and the edges between them) are black, while all other vertices (and the edges incident on them) are gray. 
		On the right is the graph $G_{0,\sigma}$ with vertex set $U^+_{0,\sigma}\cup\{r\}$, where the edges added from $r$ to vertices with neighbors in $U^+_{-1,\sigma}\setminus U^+_{0,\sigma}$ are marked in red.
		}
		\label{fig:GraphLayers}
	\end{figure}

	Fix some $\sigma$ and $j$. Using \Cref{lm:embed-genus-vortex} with parameter $\Theta(\eps\cdot\delta)$,
	we construct a dominating one-to-many embedding $f_{j,\sigma}$, of $G_{j,\sigma}$ into a graph $H_{j,\sigma}$ with treewidth $O_{h}(\frac{\log n}{\eps\cdot\delta})$, such that $f_{j,\sigma}$ is clique preserving and has additive distortion $\Theta(\eps\cdot\delta)\cdot O(\frac1\delta)=\eps$.
	After the application of \Cref{lm:embed-genus-vortex}, we will add edges from $r$ to all the other vertices (where the weight of a new edge $\left(r,v\right)$ is $d_G(r,v)$). Note that this increases the treewidth by at most $1$. Further, we will assume that there is a bag containing only the vertex $r$ (as we can simply add such a bag). 
	Next, fix $\sigma$. Let $H'_{\sigma}$ be a union of the graphs $\cup_{j\ge -1} H_{j,\sigma}$. We identify the vertex $r$ with itself, but all copies of other vertices that participate in more that a single graph will remain separate. Formally, we define a one-to-many embedding $f_\sigma$, where $f_\sigma(r)$ equals to the unique vertex $r$, and for every other vertex $v\in V\setminus \Psi$, 	$f_\sigma(v)=\bigcup_{j\ge -1} f_{j,\sigma}(v)$. 
	Note that $H'_{\sigma}$ has a tree decomposition of width $O_{h}(\frac{\log n}{\eps\cdot\delta})$, by identifying the bag containing only $r$ in all the graphs. 
	Finally, we create the graph $H_{\sigma}$ by adding the set $\Psi$ with edges towards all the vertices in $H'_{\sigma}$,
	where the weight of a new edge $\left(u',v\right)$ for $u\in f_\sigma(u)$ and $v\in \Psi$ is $d_{G}(u,v)$. For $v\in \Psi$, set $f_\sigma(v)=\{v\}$. 
	As $\Psi=O_h(1)$, $H_{\sigma}$ has treewidth $O_{h}(\frac{\log n}{\eps\cdot\delta})$.
	The one-to-many embedding $f_\sigma$ is dominating. This follows by the triangle inequality as every edge $\{u,v\}$ in the graph has weight $d_G(u,v)$. 
	Finally, the embedding $f$ is chosen to equal $f_\sigma$, for $\sigma$ chosen uniformly at random. This concludes the definition of the embedding $f$.
	
	Next, we define the partition $\chi_{\sigma}(v)\bigcupdot \psi_\sigma(v)$ of $f_\sigma(v)$ for each vertex $v\in V$ as follows:
	\begin{itemize}
		\item If $v\in \Psi\cup\{r\}$, then there is a single copy of $v$ in $f_\sigma$. Set $\chi_\sigma(v)=f_\sigma(v)$ and $\psi_{\sigma}(v)=\emptyset$.
		\item Else, let $j$ be the unique index such that $v\in U_{j,\sigma}$. Set $\chi_\sigma(v)=f_{j,\sigma}(v)$. If there is another index $j'$ such that $v\in U^+_{j',\sigma}$, set $\psi_\sigma(v)=f_{j',\sigma}(v)$, otherwise set $\psi_{\sigma}(v)=\emptyset$.
	\end{itemize} 
	Clearly, as there are at most $2$ indices $j$ such that $v\in U^{+}_{j,\sigma}$,  $\chi_{\sigma}(v)\bigcupdot \psi_\sigma(v)=f_\sigma(v)$. 
	
	Next, we prove \propref{property:ClanAlmostEmbeddableDistortion}- the stretch bound. Consider a pair of vertices $u,v\in V$. If $v\in\Psi\cup\{r\}$ then $f_\sigma(v)$ is a singleton with an edge towards every copy of $u$, thus \propref{property:ClanAlmostEmbeddableDistortion} holds. The same argument holds also if $u\in \Psi\cup\{r\}$.
	Otherwise, if $d_{G'}(u,v)>d_{G}(u,v)$, then the shortest path between $u$ to $v$ in $G$ goes through an apex vertex $z\in \Psi$. In particular, $f_\sigma(z)$ is a singleton with an edge towards every other vertex. It follows that  in $H_\sigma$, the distance between every two copies in $f_{\sigma}(v)$ and $f_{\sigma}(u)$ is exactly $d_{G}(u,z)+d_{G}(z,v)=d_{G}(u,v)$.	
	Else, $d_{G'}(u,v)=d_{G}(u,v)$.
	Let $j$ be the unique index such that $v\in U_{j,\sigma}$, then $u\in U^{+}_{j,\sigma}$.
	Furthermore, $d_{G_{j,\sigma}}(u,v)=d_{G'}(u,v)$ since the entire shortest path between them is in $U^{+}_{j,\sigma}$.
	By \Cref{lm:embed-genus-vortex},
	\begin{align*}
	& \min\left\{ \max_{u'\in\chi_{\sigma}(u),v'\in\chi_{\sigma}(v)}d_{H_{\sigma}}(u',v'),\max_{u'\in\psi_{\sigma}(u),v'\in\chi_{\sigma}(v)}d_{H_{\sigma}}(v',u')\right\} \\
	& \qquad\le\max_{u'\in f_{j,\sigma}(u),v'\in f_{j,\sigma}(v)}d_{H_{j,\sigma}}(u',v')~\le~ d_{G_{j,\sigma}}(u,v)+\eps D~=~d_{G'}(u,v)+\eps D~=~d_{G}(u,v)+\eps D~.
	\end{align*}
	Next we argue \propref{property:ClanAlmostEmbeddableFailProbability}- the failure probability of a clique. Recall that $f,\chi,\psi$ will equal to $f_\sigma,\chi_\sigma,\psi_\sigma$ for $\sigma\in \{1,\dots,\frac8\delta\}$ chosen uniformly at random.  
	Consider some clique $Q$, we can assume w.l.o.g. that $Q$ does not contain any apex vertices (as $f$ never fails on apex vertex). Let $s_Q,t_Q\in Q$ be the closest and farthest vertices from $r$ in $G'$, respectively.
	Then $d_{G'}(r,t_Q)-d_{G'}(r,s_Q)\le d_{G'}(s_Q,t_Q)\le D=1$. $f_\sigma$ fails on $Q$ iff there is a non-empty intersection between the interval $[d_{G'}(r,t_Q),d_{G'}(r,s_Q))$ (of length at most $1$) and interval $[\frac{8j}{\delta}+\sigma-2,\frac{8j}{\delta}+\sigma+2)$ for some $j$. Note that there are at most $5$ choices of $\sigma$ on which this happens. We conclude that $\Pr[f\mbox{ fails on }Q]\le\frac{5}{\nicefrac{8}{\delta}-3}\le\delta$.

	\begin{figure}[t]
		\floatbox[{\capbeside\thisfloatsetup{capbesideposition={left,top},capbesidewidth=7.3cm}}]{figure}[\FBwidth]
		{\caption{\footnotesize Illustration of the different cases in  \propref{property:ClanAlmostEmbeddableCliquePreserve}. The green area marks all the  vertices in $U^+_{j,\sigma}$. The vertices in $U_{j,\sigma}$ are enclosed between the two black semicircles. The vertices in $U^+_{j,\sigma}\cap U^+_{j+1,\sigma}$ (resp. $U^+_{j-1,\sigma}\cap U^+_{j,\sigma}$) are enclosed between the red (resp. orange) dashed semicircles.
				In the first case \textbf{(a)}, all the vertices of $Q$ are in $U_{j,\sigma}$ and no vertex failed. 
				In the second case \textbf{(b)}, all the vertices of $Q$ are in $U_{j,\sigma}$ and some  vertices failed.
				In the third case \textbf{(c)}, the vertices of $Q$ non-trivially partitioned between $U_{j,\sigma}$ and $U_{j+1,\sigma}$, and all of them failed.
			}\label{fig:ClanLemmaCliqueCases}}
		{
			\includegraphics[width=8.4cm]{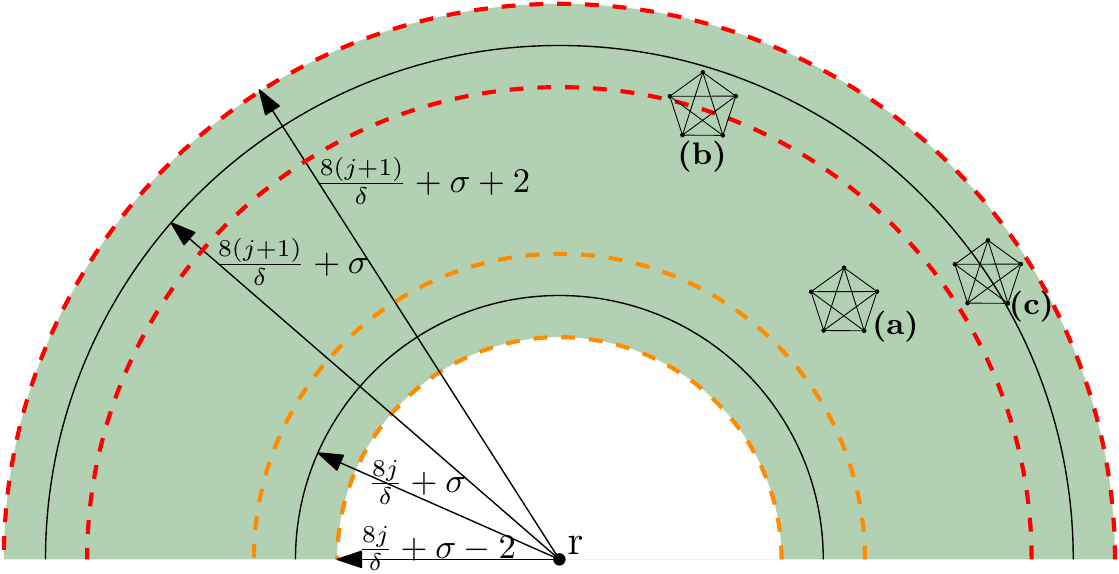}
		}
	\end{figure}
	Finally, we prove \propref{property:ClanAlmostEmbeddableCliquePreserve}- clique preservation. Consider a clique $Q$, note that we can assume that $Q\subseteq G'$, as $f_\sigma$ will not fail on any apex. Furthermore, if $r\in Q$, then no vertex in $Q$ fails as $Q\subseteq B_{G'}(r,1)\subseteq U_{-1,\sigma}\setminus U^+_{0,\sigma}$. Thus we can assume that $r\notin Q$. We proceed by case analysis; the cases are illustrated in \Cref{fig:ClanLemmaCliqueCases}.
	\begin{itemize}
		\item[(a)] if $f_\sigma$ succeeds on $Q$, then $f_\sigma(Q)=\chi_\sigma(Q)$. In particular there is a unique $j$ such that $Q\subseteq  U_{j,\sigma}$.  As $f_{j,\sigma}$ is clique-preserving, it contains a clique copy of $Q$. In particular,   $\chi_\sigma(Q)$ contain a clique copy of $Q$.
	\end{itemize}	
		Otherwise, $f_\sigma$ fails on $Q$. Then, there is a unique index $j$ such that the intersection of $Q$ with both $U^+_{j,\sigma}$ and $U^+_{j+1,\sigma}$ is non-empty.
	\begin{itemize}
		\item[(b)]  First, consider the case that $Q\subseteq U_{j,\sigma}$ (the case $Q\subseteq U_{j+1,\sigma}$ is symmetric). Here $\chi_\sigma(Q)=f_{j,\sigma}(Q)$, and $\psi_\sigma(Q)=\psi_\sigma(Q^F_\sigma)=f_{j+1,\sigma}(Q^F_\sigma)$, where $Q^F_\sigma=\{v\in Q\mid\psi_\sigma(v)\ne0\}$. As $f_{j,\sigma}$ and $f_{j+1,\sigma}$ are clique-preserving, $\chi_\sigma(Q)$ contain a clique copy of $Q$, while $\psi_\sigma(Q^F_\sigma)$ contains a clique copy of $Q^F_\sigma$.
		\item[(c)]	Finally, consider the case where $Q$ intersect both $U_{j,\sigma}$ and $U_{j+1,\sigma}$. It holds that $d_{G'}(r,s_Q)<\frac{8(j+1)}{\delta}+\sigma\le d_{G'}(r,t_Q)$, hence $\frac{8(j+1)}{\delta}+\sigma-1\le d_{G'}(r,s_Q)$ and $d_{G'}(r,t_Q)< \frac{8(j+1)}{\delta}+\sigma+1$ (here $s_Q,t_Q\in Q$ are the closest and farthest vertices from $r$, respectively).
		Necessarily, $Q\subseteq U^+_{j,\sigma}\cap U^+_{j+1,\sigma}$. In particular, as $f_{j,\sigma}(Q)$, and $f_{j+1,\sigma}(Q)$ are clique-preserving, they contain clique copies $Q_1,Q_2$ of $Q$ (respectively). Furthermore, $Q_1,Q_2\subseteq f_\sigma(Q)$, and for every vertex $v\in Q$, both $\chi(v)\cap(Q^1\cup Q^2)$ and $\psi(v)\cap(Q^1\cup Q^2)$ are singletons.\\	
		It remains to prove the additional stretch guarantee. Consider a vertex $v\in Q$, suppose that $v\in  U_{j,\sigma}$ (the case $v\in  U_{j+1,\sigma}$ is symmetric).
		Here $\chi_\sigma(v)=f_{j,\sigma}(v)$ and $\psi_\sigma(v)=f_{j+1,\sigma}(v)$.
		Consider some vertex $u\in V$,
		in similar manner to the general distortion argument, if either $u\in\Psi\cup\{r\}$, or the shortest path from $u$ to $v$ in $G$ goes through $\Psi\cup\{r\}$,  then the distance between every two copies in $f_\sigma(v)$ and $f_\sigma(u)$ is exactly $d_{G}(u,v)$, and hence equation (\ref{eq:LemmaDominatingStretchGuranteePSI}) holds. 
		Else, $d_{G'}(u,v)=d_{G}(u,v)$, and it holds that $d_{G'}(r,u)\ge d_{G'}(r,v)-d_{G'}(u,v)\ge d_{G'}(r,s_Q)-1\ge\frac{8(j+1)}{\delta}+\sigma-2$, thus $u\in U^+_{j+1,\sigma}$.
		Furthermore, $d_{G_{j+1,\sigma}}(u,v)=d_{G'}(u,v)$ (as the entire shortest path between them is in $U^{+}_{j+1,\sigma}$).
		By \Cref{lm:embed-genus-vortex},
		\begin{align*}
		& \min\left\{ \max_{u'\in\chi(u),v'\in\psi(v)}d_{H}(u',v'),\max_{u'\in\psi(u),v'\in\psi(v)}d_{H}(u',v')\right\} \\
		& \quad\le\max_{u'\in f_{j+1,\sigma}(u),v'\in f_{j+1,\sigma}(v)}d_{H_{j+1,\sigma}}(u',v')\,\le\,d_{G_{j+1,\sigma}}(u,v)+\epsilon D\,=\,d_{G'}(u,v)+\epsilon D\,=\,d_{G}(u,v)+\eps D~.
		\end{align*}
	\end{itemize}
\end{proof}

Consider a $K_r$-minor-free graph $G$, and let $\mathbb{T}$ be its clique-sum decomposition. That is
$G = \cup_{(G_i,G_j) \in E(\mathbb{T})}G_i \oplus_h G_j$
where each $G_i$ is a nearly $h(r)$-embeddable graph.  
We call the clique involved in the clique-sum of $G_i$ and $G_j$ the \emph{joint set} of the two graphs.
Let $\phi_h$ be some function depending only on $h$ such that the treewidth of the graphs constructed in \Cref{lem:Clan-AlmostEmbedable} is bounded by $\phi_h\cdot \frac{\log n}{\eps\cdot\delta}$.
The embedding of $G$ is defined recursively, where some vertices from former levels will be added to future levels as apices. In order to control the number of such apices, we will use the concept of enhance minor-free graphs introduced in \Cref{def:enhancedMinorFree} in \Cref{sec:RamseyMinorFree}.
We will prove the following lemma by induction on $t$:

\begin{lemma}\label{lem:Clan-enhanced-minor-free}
	Given an $(r,s,t)$-enhanced minor-free graph $G$ of diameter $D$ with a specified set $S$ of elevated vertices, and parameters $\eps\in(0,\frac14)$,$\delta\in(0,1)$, 
	there is a distribution over one-to-many, clique-preserving, dominating embeddings $f$ into graphs of treewidth $\phi_{h(r)}\cdot\frac{\log n}{\eps\cdot\delta} + s+h(r)\cdot \log t$, 
	such that for every vertex $v\in V$, $f(v)$ can be partitioned into sets $g_1(v),g_2(v),\dots$ where $\bigcupdot_{j\ge1} g_j(v)= f(v)$. Furthermore, 
	\begin{enumerate}
		\item\label{prop:enhancedClmSparsity} For every $v\in V$, let $q_v$ be the maximal index $j$ such that $g_{j}(v)\ne\emptyset$, then
	 $\mathbb{E}[q_v]\le (1+\delta)^{\log2t}$. 
		In addition, if $v\in S$ then $|f(v)|=1$ and thus $q_v=1$.
		\item\label{prop:enhancedClmDistortion} 
		For every pair of vertices $u,v$, $\min_{j}\max_{u'\in g_{j}(u),v'\in g_{1}(v)}d_{H}(u',v')\le d_{G}(u,v)+\eps D$. 	
	\end{enumerate}
\end{lemma}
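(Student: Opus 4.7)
The proof will proceed by induction on $t$, directly analogous to the proof of \Cref{lem:ramsey-enhanced-minor-free} but using \Cref{lem:Clan-AlmostEmbedable} (together with its delicate three-case clique preservation statement) in place of \Cref{lem:Ramsey-almost-embeddable}. The group $g_1$ will play the role of the ``chief'' (the primary copy), while the total number of nonempty groups $q_v$ quantifies the growth of the clan.

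For the base case $t=1$, I will apply \Cref{lem:Clan-AlmostEmbedable} to $G\setminus S$ to obtain $\tilde{f}$ with partition $\chi(v)\sqcup\psi(v)=\tilde{f}(v)$ into a treewidth-$O_h(\frac{\log n}{\epsilon\delta})$ graph, then add $S$ back to every bag with direct edges of weight $d_G$ (raising the treewidth by $s$). Setting $g_1(v)=\chi(v)$ and $g_2(v)=\psi(v)$ when nonempty yields $q_v\le 1+\mathbf{1}[\tilde{f}\text{ fails on }v]$, hence $\mathbb{E}[q_v]\le 1+\delta=(1+\delta)^{\log 2}$ by \propref{property:ClanAlmostEmbeddableFailProbability}, while \propref{property:ClanAlmostEmbeddableDistortion} gives the distortion bound with $g_1=\chi$.

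For the inductive step, I will use \Cref{lm:tree-sep} to pick a central piece $\tilde{G}\in\mathbb{T}$ so that each component $\mathbb{T}_i$ of $\mathbb{T}\setminus\tilde{G}$ has at most $t/2$ pieces, with joint clique $Q_i$. Let $\tilde{\mathcal{G}}$ be the $(r,s,1)$-enhanced subgraph on $\tilde{G}\cup S$, and $\mathcal{G}_i$ the $(r,s+h(r),t/2)$-enhanced subgraph on $\mathbb{T}_i\cup S$ with elevated set $S\cup Q_i$. Apply \Cref{lem:Clan-AlmostEmbedable} to $\tilde{\mathcal{G}}$ to get $\tilde{f}=(\tilde\chi,\tilde\psi)$ into $\tilde{H}$, and recursively obtain $f_i$ with partition $(g_{i,j})_j$ into $H_i$ for each $\mathcal{G}_i$. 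Because $Q_i$ is elevated in $\mathcal{G}_i$, each $x\in Q_i$ has a unique copy $x^i\in H_i$ and these form a unique clique copy of $Q_i$. Glue $H_i$ into $\tilde{H}$ at $Q_i$ according to the case of \propref{property:ClanAlmostEmbeddableCliquePreserve} that applies to $\tilde{f}$ on $Q_i$: in cases (a)/(b) identify $H_i$'s $Q_i$-copy with the unique clique copy inside $\tilde\chi(Q_i)$; in case (c) make two disjoint copies $H_i^1,H_i^2$ of $H_i$ and glue them to the two clique copies $\tilde{Q}_i^1\subseteq\tilde\chi(Q_i)$ and $\tilde{Q}_i^2\subseteq\tilde\psi(Q_i)$ respectively. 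The treewidth bound $\phi_{h(r)}\frac{\log n}{\epsilon\delta}+s+h(r)\log t$ follows exactly as in the Ramsey proof. Define $g_1(v)=\tilde\chi(v)$ and $g_2(v)=\tilde\psi(v)$ for $v\in\tilde{G}$; for $v\in\mathcal{G}_i\setminus\tilde{G}$, set $g_j(v)=g_{i,j}(v)$ in cases (a)/(b), and concatenate the groups from $H_i^1$ followed by those from $H_i^2$ in case (c). Thus $q_v=q_v^i$ in cases (a)/(b) and $q_v=2q_v^i$ in case (c); since $\Pr[\text{case (c)}]\le\Pr[\tilde{f}\text{ fails on }Q_i]\le\delta$ by \propref{property:ClanAlmostEmbeddableFailProbability},
\[
\mathbb{E}[q_v]\le(1+\delta)\cdot\mathbb{E}[q_v^i]\le(1+\delta)^{\log 2t}.
\]

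The main technical obstacle will be verifying the distortion property, which I will do by case analysis on the locations of $u$ and $v$, mirroring the six cases in the Ramsey proof. The key structural point is that $g_1$ is globally consistent: for $x\in Q_i\subseteq\tilde{G}$, the unique $H_i^1$-copy $x^i$ is always identified with a vertex of $\tilde\chi(x)=g_1(x)$ (in cases (a)/(b) by construction, in case (c) because $\tilde{Q}_i^1\subseteq\tilde\chi(Q_i)$). Crucially, the additive error $\epsilon D$ does not compound across recursion levels, because every elevated vertex has direct edges of weight $d_G$ to every other vertex in its graph, giving $d_{H_i}(x^i,v')=d_G(x,v)$ exactly for any $v'\in f_i(v)$. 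Thus for the cross-subgraph case $u\in\tilde{G}$, $v\in\mathcal{G}_i\setminus\tilde{G}$ with shortest $G$-path through $x\in Q_i$: \propref{property:ClanAlmostEmbeddableDistortion} yields some $j$ with $\max_{u'\in g_j(u)}d_{\tilde{H}}(u',x^i)\le d_G(u,x)+\epsilon D$, and composing with the direct-edge equality to $g_1(v)$ gives $d_G(u,v)+\epsilon D$. The last subtlety is when the chief-side of $v$ sits in $H_i^2$ (case (c)): this is exactly the scenario handled by the extra guarantee~\eqref{eq:LemmaDominatingStretchGuranteePSI} of \propref{property:ClanAlmostEmbeddableCliquePreserve}, which ensures that $\tilde\psi(u)$ reaches $\tilde{Q}_i^2$ within $d_G(u,x)+\epsilon D$. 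The remaining cross cases ($u\in\mathcal{G}_i,v\in\mathcal{G}_j$ with $i\ne j$) reduce to the above by an extra hop through $Q_j$.
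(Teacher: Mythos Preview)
Your overall skeleton is right, but there is a genuine gap in how you treat case~(b) of \Cref{lem:Clan-AlmostEmbedable}. You attach only one copy of $H_i$, identified with the clique copy inside $\tilde\chi(Q_i)$. This breaks the distortion analysis in the direction $v\in\tilde{\mathcal G}$, $u\in\mathcal G_i\setminus\tilde{\mathcal G}$. Here $g_1(v)=\tilde\chi(v)$, and in your construction every copy of $u$ lives in the single $H_i$, so any $u'$--$v'$ path in $H$ must pass through an identified vertex $\hat y\in\tilde\chi(y)$ for some $y\in Q_i$. But equation~\eqref{eq:LemmaDominatingStretchGurantee} applied to the pair $(x,v)$ only promises that \emph{one} of $\tilde\chi(x)$ or $\tilde\psi(x)$ is within $d_G(x,v)+\eps D$ of $\tilde\chi(v)$. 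When $v$ lies in the layer adjacent to the layer of $Q_i$ (precisely the configuration that forces $Q_i^F\ne\emptyset$ and puts us in case~(b)), the close side is $\tilde\psi(x)$; the $\tilde H$-distance from $\tilde\chi(x)$ to $\tilde\chi(v)$ can be $\Theta(D/\delta)$, since these sit in disjoint pieces $H_{j,\sigma}$, $H_{j+1,\sigma}$ joined only through $r$ and the apices. The paper therefore also duplicates $H_i$ in case~(b), attaching $H_i^2$ at the partial clique copy in $\tilde\psi(Q_i^F)$, so that some $g_{j''}(u)$ reaches $\tilde\chi(v)$ via $\tilde\psi(x)$. Your expectation argument survives this fix unchanged: you now double $q_v^i$ whenever $\tilde f$ fails on $Q_i$ (cases (b) and (c) together), and that event still has probability $\le\delta$, so $\mathbb{E}[q_v]\le(1+\delta)\,\mathbb{E}[q_v^i]$ as before.

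There is also a misreading of case~(c): it is \emph{not} true that $\tilde Q_i^1\subseteq\tilde\chi(Q_i)$ and $\tilde Q_i^2\subseteq\tilde\psi(Q_i)$. \Cref{lem:Clan-AlmostEmbedable} only guarantees that for each $v\in Q_i\setminus\Psi$, exactly one of $Q^1,Q^2$ meets $\tilde\chi(v)$ and the other meets $\tilde\psi(v)$, with no global alignment across $Q_i$ (the two clique copies come from $f_{j,\sigma}(Q_i)$ and $f_{j+1,\sigma}(Q_i)$, while $\tilde\chi$/$\tilde\psi$ are assigned per vertex by which of $U_{j,\sigma},U_{j+1,\sigma}$ it lies in). Consequently, in the case $u\in\tilde{\mathcal G}$, $v\in\mathcal G_i\setminus\tilde{\mathcal G}$, the identified copy $\hat x\in Q_i^1$ may lie in $\tilde\psi(x)$, and you must invoke \eqref{eq:LemmaDominatingStretchGuranteePSI} (not \eqref{eq:LemmaDominatingStretchGurantee}) to bound $\min_j\max_{u'\in g_j(u)}d_{\tilde H}(u',\hat x)$; your ``global consistency of $g_1$'' claim is false as stated.
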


Assuming \Cref{lem:Clan-enhanced-minor-free}, \Cref{thm:Clan-Embedding} easily follows.
\begin{proof}[Proof of \Cref{thm:Clan-Embedding}]
	Note that every $K_r$-minor-free graph is  $(r,0,n)$-enhanced minor-free. We apply \Cref{lem:Clan-enhanced-minor-free} using parameters $\eps$ and $\delta'=\frac{\delta}{2\log 2n}$.
	For every vertex $v\in V$, let $g(v)\subseteq f(v)$ be a set containing a single copy from each non empty set $g_j(v)$. Let $\chi(v)=g(v)\cap g_1(v)$ be the copy in $g(v)$ from $g_1(v)$.
	The distortion guarantee is straightforward to verify. 
	The treewidth of the resulting graph is $\phi_{h(r)}\cdot\frac{\log n}{\eps\cdot\delta'}+0+h(r)\cdot\log n=O_{r}(\frac{\log^{2}n}{\epsilon^{2}})$.
	Finally, for every vertex $v\in V$, it holds that $\mathbb{E}[|g(v)|]\le (1+\frac{\delta}{2\log 2n})^{\log2n}< e^{\frac\delta2}<1+\delta$.
\end{proof}
The rest of the section is devoted to proving \Cref{lem:Clan-enhanced-minor-free}.
\begin{proof}[Proof of \Cref{lem:Clan-enhanced-minor-free}]
	The claim is proved by induction on $t$.
	It follows from \Cref{lem:Clan-AlmostEmbedable} that \Cref{lem:Clan-enhanced-minor-free} holds for the base case $t=1$; the treewidth will be $\phi_{h(r)}\cdot\frac{\log n}{\eps\cdot\delta} + s$ since we add all elevated vertices to every bag.

	We now turn  to the induction step. Consider an $(r,s,t)$-enhanced minor-free graph $G$. Let $G'$ be a $K_r$-minr-free graph obtained from $G$ by removing the set $S$ (of size at most $s$). Let $\mathbb{T}$ be the clique-sum decomposition of $G'$ with $t$ pieces. 	
	Using \Cref{lm:tree-sep}, choose a central piece $\tilde{G}\in\mathbb{T}$ of $\mathbb{T}$.	
	Let $G_1,\dots,G_p$ be the neighbors of $\tilde{G}$ in $\mathbb{T}$. Note that $\mathbb{T}\setminus \tilde{G}$ contains $p$ connected components $\mathbb{T}_1,\dots,\mathbb{T}_p$, where $G_i\in \mathbb{T}_i$, and  $\mathbb{T}_i$ contains at most $|\mathbb{T}|/2=t/2$ pieces. 
	Let $Q_i$ be the clique used in the clique-sum of $G_i$ with $\tilde{G}$ in $\mathbb{T}$. 
	For every $i$, we will add edges between vertices of $Q_i$ to all the vertices in $\mathbb{T}_i$; this is equivalent to making $Q_i$ into apices. Every new edge $\{u,v\}$ will have weight $d_G(u,v)$.
	Let $\mathcal{G}_i$ be the graph induced on the vertices of $\mathbb{T}_i\cup S$ (and the newly added edges). Note that $\mathcal{G}_i$ is an $(r,s',t')$-enhanced minor-free graph for $t'\le \frac t2$ and $|s'|\le s+|Q_i|\le s+h(r)$. Further, for every $u,v\in  \mathcal{G}_i$, it holds that $d_{\mathcal{G}_i}(u,v)=d_{G}(u,v)$, and thus $\mathcal{G}_i$ has diameter at most $D$.
	Applying the inductive hypothesis to $\mathcal{G}_i$, we sample a dominating embedding $f_i$ into $H_i$, such that for every $v\in\mathcal{G}_i$ we have $f_i(v)=\bigcupdot_{j\ge1} g_{i,j}(v)$.
	We denote by $q_v^i$ the maximal index such that $g_{i,q_v^i}(v)\ne\emptyset$.
	Note that properties (\ref{prop:enhancedClmSparsity}) and (\ref{prop:enhancedClmDistortion}) hold and furthermore, $H_i$ has treewidth $\phi_{h(r)}\cdot\frac{\log n}{\eps\cdot\delta} + s'+h(r)\cdot \log 2t'\le \phi_{h(r)}\cdot\frac{\log n}{\eps\cdot\delta} + s+h(r)\cdot \log 2t$.
	In addition, for a vertex $v\in S\cup Q_i$, $|f_i(v)|=1$ (thus $q_v^i=1$), while for every vertex $v\in V$, $\mathbb{E}[q_v^i]\le (1+\delta)^{\log2t'}\le (1+\delta)^{\log t}$. 
	
	Let $\tilde{\mathcal{G}}$ be the graph induced on $\tilde{G}\cup S$. We apply \Cref{lem:Clan-AlmostEmbedable} to $\tilde{\mathcal{G}}$ to sample a dominating one-to-many embedding $\tilde{f}$ into $\tilde{H}$, such that for each vertex $v\in \tilde{\mathcal{G}}$, $\tilde{f}(v)$ is partitioned into $\tilde{\chi}(v)$ and $\tilde{\psi}(v)$.
	$\tilde{H}$ has treewidth $\phi_{h(r)}\cdot\frac{\log n}{\eps\cdot\delta}+s$ (this is by  \Cref{lem:Clan-AlmostEmbedable}, we first remove all apices and then add them back). Note also that properties (\ref{property:ClanAlmostEmbeddableDistortion}), (\ref{property:ClanAlmostEmbeddableFailProbability}), and (\ref{property:ClanAlmostEmbeddableCliquePreserve}) hold.
	
	We next describe how to combine the different parts into a single  embedding.
	The graph (and the induced embedding) will be created by identifying some vertices in $\tilde{H}$ with vertices in each $H_i$. Some of the graphs $H_i$ will be duplicated and we will have two copies of them (depending on whether $Q_i$ failed in $\tilde{f}$).
	Note that the set $S$ has a single copy everywhere, and thus for every $v\in S$, we will simply identify all the vertices $\tilde{f}(v),f_{1}(v),\dots,f_{p}(v)$.
	$$\mbox{For a vertex}~ v\in\tilde{\mathcal{G}}\mbox{, set } \quad g_1(v)=\tilde{\chi}(v) \quad \mbox{ and } \quad g_2(v)=\tilde{\psi}(v)~.$$
	Consider some $i\in[p]$. Note that the clique $Q_i$ belongs to $S_i$. In particular, for every vertex $v\in Q_i$, $f_i(v)$ is a singleton, and $f_i(Q_i)$ is a clique.
	We continue w.r.t. the $3$ cases in \Cref{lem:Clan-AlmostEmbedable} (see \Cref{fig:ClanLemmaCliqueCases} for an illustration of the cases):
	\begin{itemize}
		\item \textbf{$\tilde{f}$ succeeds on $Q_i$}: Here $\tilde{\psi}(Q_i)=\emptyset$, and $\tilde{\chi}(Q_i)=g_q(Q_i)$ contains a clique copy $Q^1_i\subseteq \tilde{\chi}(Q_i)$ of $Q_i$.
		We simply identify each vertex in $f_i(Q_i)$ with the corresponding copy in $Q^1_i$. We will abuse notation and refer to $H_i$ as $H_i^1$, to $f_i$ as $f_i^1$, and to $g_{i,j}$ as $g^1_{i,j}$.
		$$\mbox{For a vertex}~ v\in \mathcal{G}_i\setminus\tilde{\mathcal{G}},\quad\mbox{ for every } j\ge 1 \quad\mbox{ set } \quad g_j(v)=g^1_{i,j}(v)~.$$
		\item \textbf{$\tilde{f}$ fails on $Q_i$, and $\tilde{\chi}(Q_i)$ contains a clique copy of $Q_i$}:
		Denote by $Q_i^1\subseteq \tilde{\chi}(Q_i)$ the promised clique copy of $Q_i$.
		In addition, $\tilde{\psi}(Q^F_i)$ is guaranteed to contain a clique copy $Q_i^2$ of $Q^F_i=\{v\in Q_i\mid\tilde{\psi}(v)\ne\emptyset\}$.
		We duplicate $H_i$ into two graphs $H_i^1$ and $H_i^2$ with respective duplicate embeddings $f_i^1,f_i^2$. However, the vertices of $Q_i\setminus Q_i^F$ are removed from $H_i^2$ and $f_i^2$. We combine $\tilde{H}$ with $H_i^1$ (resp. $H_i^2$) by combining a clique copy from $\tilde{\chi}(Q_i)$ (resp. $\tilde{\psi}(Q^F_i)$) with the corresponding vertices from $f_i^1(Q_i)$ (resp. $f_i^2(Q^F_i)$) (recall that they are apices and thus have a single copy). 
		\begin{itemize}
			\item For every vertex $v\in \mathcal{G}_i\setminus\tilde{\mathcal{G}}$ where $q_v^i$ is the maximal index $j$ such that $g_{i,j}(v)\ne\emptyset$. 
			For every $j\in[1,q_v^i]$, set $g_j(v)=g^1_{i,j}(v)$ to be the corresponding copies from $f^1_i(v)$, and $g_{q_v^i+j}(v)=g^2_{i,j}(v)$ be the corresponding copies from $f^2_i(v)$.
		\end{itemize}
	
		\item \textbf{$\tilde{f}$ fails on $Q_i$, and $\tilde{f}(Q_i)$ contains two clique copies $Q_i^1,Q_i^2$ of $Q_i$ such that for every $v\in Q_i$, $Q_i^1\cup Q_i^2$ intersects both $\tilde{\chi}(v)$ and $\tilde{\psi}(v)$}:
		We duplicate $H_i$ into two graphs $H_i^1$ and $H_i^2$ with respective duplicate embeddings $f_i^1,f_i^2$. We combine $\tilde{H}$ with $H_i^1$ (resp. $H_i^2$) by identifying $Q_i^1$ (resp. $Q_i^2$) with $f_i^1(Q_i)$ (resp. $f_i^2(Q_i)$) (recall that they are apices and thus have a single copy). 
		\begin{itemize}
			\item For every vertex $v\in \mathcal{G}_i\setminus\tilde{\mathcal{G}}$ where $q_v^i$ is the maximal index $j$ such that $g_{i,j}(v)\ne\emptyset$. 
			For every $j\in[1,q_v^i]$, set $g_j(v)=g^1_{i,j}(v)$ be the corresponding copies from $ f^1_{i,j}(v)$, and $g_{q_v^i+j}(v)=g^2_{i,j}(v)$ be the corresponding copies from $f^2_{i,j}(v)$.
		\end{itemize}
	\end{itemize}

	We claim next that $f,g_1,g_2,\dots$ fulfill all the required properties.
	First, note that $f$ is clique-preserving as every clique must be contained in either $\tilde{\mathcal{G}}$ or some $\mathcal{G}_i$.
	Second, clearly $f$ is dominating as the weight of every edge between a vertex in $f(v)$ and $f(u)$ is $d_G(u,v)$.
	Third, as we only identify between cliques, the graph $H$ has treewidth
	\[
	\max\left\{ \phi_{h(r)}\cdot\frac{\log n}{\eps\cdot\delta}+s+h(r)\cdot\log2t\quad,\quad\phi_{h(r)}\cdot\frac{\log n}{\eps\cdot\delta}+s\right\} \quad=\quad\phi_{h(r)}\cdot\frac{\log n}{\eps\cdot\delta}+s+h(r)\cdot\log2t
	\]
	Forth, it holds by definition that for every vertex $v\in V$, $f(v)=\bigcupdot_jg_j(v)$.
	
	Next, we prove \propref{prop:enhancedClmSparsity}. Clearly, for a vertex $v\in S$, we identify between all its copies and thus $f(v)$ is a singleton. Consider a vertex $v\in V$, if $v\in \tilde{\mathcal{G}}$, then by \Cref{lem:Clan-AlmostEmbedable} $$\mathbb{E}[q_v]=1+\Pr\left[\tilde{f}\text{ fails on }v\right]\le1+\delta~.$$ 
	Else, consider $v\in \mathcal{G}_i\setminus \tilde{\mathcal{G}}$ for some $i$, and denote by $q_{v}^{i}$ the maximal index $j$ such that $g_{i,j}$ is non-empty. We have 
	\begin{align*}
	\mathbb{E}[q_{v}] & =\mathbb{E}[q_{v}^{i}]\cdot\Pr\left[\tilde{f}\text{ succeeds on }Q_{i}\right]+\mathbb{E}[2q_{v}^{i}]\cdot\Pr\left[\tilde{f}\text{ fails on }Q_{i}\right]\\
	& =\mathbb{E}[q_{v}^{i}]\cdot\left(1+\Pr\left[\tilde{f}\text{ fails on }Q_{i}\right]\right)\\
	& \le(1+\delta)^{\log 2\cdot\frac t2}\cdot(1+\delta)=(1+\delta)^{\log2t}~,
	\end{align*}
	where the first equality is because we have two copies of $H_i$ iff $\tilde{f}$ fails on $Q_i$. The second equality is because $\Pr\left[f\text{ succeeds on }Q_{i}\right]=1-\Pr\left[f\text{ fails on }Q_{i}\right]$. The final inequality follows by the induction hypothesis and \Cref{lem:Clan-AlmostEmbedable}.
	
	Finally, we prove \propref{prop:enhancedClmDistortion}. Consider a pair of vertices $u,v\in V$. We proceed by case analysis.
	\begin{itemize}
		\item \textbf{If a shortest path from $u$ to $v$ goes through a vertex $z\in S$} (this includes the case where either $u$ or $v$ are in $S$): Then 
		\begin{align*}
		& \min_{j}\max_{u'\in g_{j}(u),v'\in g_{1}(v)}d_{H}(u',v')\quad\le\quad\max_{u'\in f(u),v'\in f(v)}d_{H}(u',v')\\
		& \qquad\quad\le\quad\max_{u'\in f(u),v'\in f(v)}d_{H}(u',f(z))+d_{H}(f(z),v')\quad=\quad d_{G}(u,z)+d_{G}(z,v)\quad=\quad d_{G}(u,v)~.
		\end{align*}
	\end{itemize}	
		For the remaining cases, we assume that $d_{G'}(u,v)=d_G(u,v)$ (recall that $G'=G[V\setminus S]$).
	\begin{itemize}	
		\item \textbf{Else, if both $u,v\in \tilde{\mathcal{G}}$}: Then by \Cref{lem:Clan-AlmostEmbedable}, 
		\begin{align*}
		\min_{j}\max_{u'\in g_{j}(u),v'\in g_{1}(v)}d_{H}(u',v') & \le\min\left\{ \max_{u'\in\chi(u),v'\in\chi(v)}d_{H}(u',v'),\max_{u'\in\psi(u),v'\in\chi(v)}d_{H}(u',v')\right\} \\
		& \le d_{G}(u,v)+\eps D~.
		\end{align*}
		
		\item \textbf{Else, if $u\in \tilde{\mathcal{G}}$ and there is an $i\in[p]$ such that $v\in \mathcal{G}_i\setminus\tilde{\mathcal{G}}$}:
		There is necessarily a vertex $x\in Q_i$ such that there is a shortest path from $u$ to $v$ in $G$ going through $x$. Note by the construction that (a) the copy $g_1(v)$ belongs to $H^1_i$ (a copy of $H_i$), (b) there is an edge from $g_1(v)$ to a copy of $x$ in $f_i^1(Q_i)$ and (c)  a clique copy $Q_i^1\subseteq \tilde{f}(Q)$  of $Q$  is identified with $f_i^1(Q_i)$ (a set of singletons). We continue by case analysis:
		\begin{itemize}
			\item If either $\tilde{f}$ succeeds on $Q_i$, or $Q_i^1\subseteq \tilde{\chi}(Q_i)$.
			Then there is a copy $\hat{x}$ of $x$ in $g_1(x)\cap Q_i^1$. It holds that 
			\begin{align}
			\min_{j}\max_{u'\in g_{j}(u),v'\in g_{1}(v)}d_{H}(u',v') & \le\min_{j}\left(\max_{u'\in g_{j}(u)}d_{H}(u',\hat{x})+\max_{v'\in g_{1}(v)}d_{H}(\hat{x},v')\right)\nonumber\\
			& \le d_{G}(u,x)+\eps D+d_{G}(x,v)=d_{G}(u,v)+\eps D~.\label{eq:DominatingStretchCase3}
			\end{align}
			where the second inequality follows by the second case (as $x\in\tilde{\mathcal{G}}$), and the fact that there is an edge in $H$ between $\hat{x}$ to every vertex in $g_1(v)$.
			
			\item Else, $\tilde{f}(Q_i)$ contains two clique copies $Q_i^1,Q_i^2$ of $Q_i$. Note that $\hat{x}$ can belong to either $g_1(x)=\tilde{\chi}(x)$ or $g_2(x)=\tilde{\psi}(x)$. Nevertheless, by using either equation (\ref{eq:LemmaDominatingStretchGurantee}) or (\ref{eq:LemmaDominatingStretchGuranteePSI}) we have that $\min_{j}\max_{u'\in g_{j}(u)}d_{H}(u',\hat{x})\le d_{G}(u,x)+\eps D$.
			As there is edge in $H$ between $\hat{x}$ to every vertex in $g_1(v)$, we conclude that equation (\ref{eq:DominatingStretchCase3}) holds.
		\end{itemize}

		\item  \textbf{Else, if $v\in \tilde{\mathcal{G}}$ and there is an $i\in[p]$ such that $u\in \mathcal{G}_i\setminus{\mathcal{G}}$}:
		There is necessarily a vertex $x\in Q_i$ such that there is a shortest path from $u$ to $v$ in $G$ going through $x$.
		By the second case, there is an index $j'$ such that $\max_{x'\in g_{j'}(x),v'\in g_{1}(v)}d_{H}(x',u')\le d_{G}(x,v)+\eps D$. 
		As $x\in\tilde{\mathcal{G}}$, $j'\in\{1,2\}$. In any case, a copy of $H_i$ was assigned to $\tilde{H}$ by identifying clique vertices. In particular, some vertex $\hat{x}\in g_{j'}(x)$ was identified with the apex vertex $f_i(x)$ (from the relevant copy). Therefore there is an index $j''$ such that $\hat{x}$ has edges towards all the vertices in $g_{j''}(u)$. We conclude,
		\begin{align*}
		\min_{j}\max_{u'\in g_{j}(u),v'\in g_{1}(v)}d_{H}(u',v') & \le\max_{u'\in g_{j''}(u),v'\in g_{1}(v)}d_{H}(u',\hat{x})+d_{H}(\hat{x},v')\\
		& \le d_{G}(u,x)+\max_{x'\in g_{j'}(x),v'\in g_{1}(v)}d_{H}(x',v')\\
		& \le d_{G}(u,x)+d_{G}(x,v)+\eps D=d_{G}(u,v)+\eps D~.
		\end{align*}
		
		\item \textbf{Else, if there is an $i\in[p]$ such that $u,v\in \mathcal{G}_i\setminus{\mathcal{G}}$}:
		There is a copy of $H_i$ which embedded as is into $H$ and contains all the vertices in $g_1(v)$. By the induction hypothesis 
		\[
		\min_{j}\max_{u'\in g_{j}(u),v'\in g_{1}(v)}d_{H}(u',v')\le\min_{j}\max_{u'\in g_{i,j}(u),v'\in g_{i,1}(v)}d_{H_{i}}(u',v')\le d_{\mathcal{G}_{i}}(u,v)+\eps D= d_{G}(u,v)+\eps D~.
		\]
			
		\item  \textbf{Else, there are $i\ne i'\in[p]$ such that $u\in \mathcal{G}_i\setminus{\mathcal{G}}$ and $v\in \mathcal{G}_{i'}\setminus{\mathcal{G}}$}:
		There are necessarily vertices $y\in Q_i$ and $x\in Q_{i'}$ such that there is a shortest path from $u$ to $v$ in $G$ going through $y$ and $x$. 
		Note that the copy $H^1_{i'}$ of $H_{i'}$ containing $g_1(v)$ was added to $H$ by identifying $f^1_{i'}(Q_{i'})$ with a clique copy $Q_{i'}^1$ of $Q_{i'}$. In particular, there is a copy $\hat{x}\in Q_{i'}^1$ of $x$ which has edges towards all the vertices in $g_1(v)$. There are two cases:
		\begin{itemize}
			\item If $\hat{x}\in g_1(x)$, then by the third case there is an index $j$ such that $\max_{u'\in g_{j}(u)}d_{H}(u',\hat{x})\le\max_{u'\in g_{j}(u),x'\in g_{1}(x)}d_{H}(u',x')\le d_{G}(u,x)+\eps D$. As there is an edge from $\hat{x}$ to every copy of $v$ in $g_1(v)$, we conclude that $\max_{u'\in g_{j}(u),v'\in g_{1}(v)}d_{H}(u',v')\le\max_{u'\in g_{j}(u)}d_{H}(u',\hat{x})+\max_{v'\in g_{1}(x)}d_{H}(\hat{x},v')\le d_{G}(u,x)+\eps D+d_{G}(x,v)=d_{G}(u,v)+\eps D$.
			\item Else, $\hat{x}\in g_2(x)$. Necessarily $\tilde{f}$ failed on $Q_{i'}$ and $\tilde{f}(Q_{i'})$ contains two clique copies $Q_{i'}^1,Q_{i'}^2$ of $Q_{i'}$. 
			It holds that $g_2(x)=\tilde{\psi}(x)$, thus by \Cref{lem:Clan-AlmostEmbedable} (case 3.(c))
			there is an index $j\in\{1,2\}$ such that $\max_{y'\in g_{j}(y)}d_{H}(y',\hat{x})\le\max_{y'\in g_{j}(y),x'\in \tilde{\psi}(x)}d_{H}(y',x')\le d_{G}(x,y)+\eps D$.
			Let $\hat{y}\in Q_i^j\subseteq g_j(Q_i)$ be the copy of $y$ from the corresponding clique copy.
			Note that there is an edge from $\hat{x}$ to every copy of $v$ in $g_1(v)$. Farther, there is an index $j''$ such that $\hat{y}$ has edges towards all the vertices in $g_{j''}(u)$. We conclude,
			\begin{align*}
			\min_{j}\max_{u'\in g_{j}(u),v'\in g_{1}(v)}d_{H}(u',v') & \le\max_{u'\in g_{j''}(u)}d_{H}(u',\hat{y})+d_{H}(\hat{y},\hat{x})+\max_{v'\in g_{1}(v)}d_{H}(\hat{x},v')\\
			& \le d_{G}(u,y)+\max_{y'\in g_{j}(y),x'\in g_{2}(x)}d_{H}(y',x')+d_{G}(x,v)\\
			& \le d_{G}(u,y)+d_{G}(y,x)+\eps D+d_{G}(x,v)=d_{G}(u,v)+\eps D~.
			\end{align*}
		\end{itemize}
	\end{itemize}
\end{proof}

\begin{remark}\label{remark:ClanImpliesWeakRamsey}
	The clan embedding in \Cref{thm:Clan-Embedding} directly implies a weaker version of \Cref{thm:Ramsey-minor-Free}, where the only difference is that the distortion is only for pairs where both $u,v\in M$ and not only $u\in M$. Note that this weaker version is still strong enough for our application to the $\rho$-independent set problem in \Cref{thm:isolated}.\\
	Sketch: sample a clan embedding $(f,\chi)$ using \Cref{thm:Clan-Embedding}.  Return $g=\chi$ with the set $M=\{v\in V~\mid~|f(v)|=1\}$. The weaker distortion guarantee and failure probability are straightforward.
\end{remark}

\addtocontents{toc}{\protect\setcounter{tocdepth}{2}}
\section{Applications}\label{sec:applications}
Organization: in Sections \ref{subsec:indepdendent}, \ref{subsec:dominating} and \ref{subsec:routing} we provide the algorithms (and proofs) to our QPTAS $^{\ref{foot:approximationSchemes}}$ for metric $\rho$-independent set problem, QPTAS for metric $\rho$-dominating set problem, and compact routing scheme, respectively. 

We begin with a discussion on approximation schemes for metric $\rho$-dominating/independent set problems in bounded treewidth graphs.
In the $(k,r)$-center problem we are given a graph $G=(V,E,w)$, and the goal is to find a set $S$ of centers of cardinality at most $r$ such that every vertex $v\in V$ is at distance at most $r$ from some center $u\in S$.
Katsikarelis, Lampis and Paschos \cite{KLP19} provided a PTAS $^{\ref{foot:approximationSchemes}}$ for the $(k,r)$-center problem in treewidth $\tw$ graphs using a dynamic programming approach. Specifically, for any parameters $k,r\in\N$ and $\eps\in(0,1)$, they provided an algorithm running in $O(\frac{\tw}{\eps})^{\tw}\cdot\poly(n)$ time that either returns a solution to the $(k,(1+\eps)r)$-center problem, or (correctly) declares that there is no valid solution to the $(k,r)$-center problem in $G$.
This dynamic programming can be easily generalized to the case where there is a measure $\mu:V\rightarrow \R^+$, and terminal set ${\cal K}\subset V$. Specifically, the algorithm will either return a set $S$ of measure $\mu(S)\le k$, such that every vertex $v\in {\cal K}$ is at distance at most $(1+\eps)r$ from $S$, or will declare there is no set $S$ of measure at most $k$ at distance at most $r$ from every vertex in ${\cal K}$.

As was observed by Fox-Epstein \etal \cite{FKS19}, using \cite{KLP19} one can construct a bicriteria PTAS for the metric $\rho$-dominating set problem in treewidth $\tw$ graphs with $O(\frac{\tw}{\eps})^{\tw}\cdot\poly(n)$ running time. \cite{FKS19} studied the basic version (with uniform measure and ${\cal K}=V$), however this observation holds for the general case as well.
In a follow-up paper, Katsikarelis \etal \cite{KLP20} constructed a similar dynamic programming for the $\rho$-independent problem with the same  $O(\frac{\tw}{\eps})^{\tw}\cdot\poly(n)$ running time. It could also be generalized to work with a measure $\mu$. This dynamic programming was also promised to appear in the full version of \cite{FKS19}.
We conclude this discussion:
\begin{theorem}[\cite{KLP19,KLP20}]\label{thm:treewidthPTAS}
	There is a bicriteria polynomial approximation scheme (PTAS) for both metric $\rho$-independent set and $\rho$-dominating set problems in treewidth $\tw$ graphs with running time $O(\frac{\tw}{\eps})^{\tw}\cdot\poly(n)$.
\end{theorem}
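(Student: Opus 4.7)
}
The plan is to run a dynamic program on a tree decomposition of $G$ of width $\tw$, with states indexed by rounded distance vectors on the current bag. I will sketch the argument for the metric $\rho$-dominating set problem; the independent set case is analogous.

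First, I would preprocess a nice tree decomposition $(\mathcal{T},\{X_i\})$ of width $\tw$, so that each node is an introduce, forget, or join node. Fix a discretization parameter $\eta=\eps\rho/\tw$, so that relevant distances (i.e., those up to $(1+\eps)\rho$) fall into at most $O(\tw/\eps)$ buckets $\{0,\eta,2\eta,\dots\}$ plus a special symbol $\infty$ for distances exceeding $(1+\eps)\rho$. For each bag node $i$, the DP state is a function $\phi\colon X_i\to\{0,\eta,2\eta,\dots\}\cup\{\infty\}$ recording, for every vertex $v\in X_i$, the currently known (rounded) distance from $v$ to the closest chosen center in the processed subtree. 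Together with a bit per bag vertex indicating whether it has been selected, this yields at most $O(\tw/\eps)^{\tw}$ states per node. For each state we store the minimum total $\mu$-cost of a partial selection realizing it.

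Next, I would specify transitions. At an introduce node adding vertex $v$, I extend $\phi$ by either selecting $v$ as a center, paying $\mu(v)$ and setting $\phi(v)=0$, or not selecting it and setting $\phi(v)=\infty$; in both subcases I then relax $\phi(u)\gets\min\{\phi(u),\phi(v)+w(u,v)\}$ along edges inside $X_i$, rounding up to the nearest multiple of $\eta$. A forget node removes $v$ from $X_i$ but first checks the coverage condition $v\in\mathcal{K}\Rightarrow\phi(v)\le(1+\eps)\rho$ (since $v$ will never reappear). A join node combines two states from the two children by pointwise minimum of $\phi$ on the shared bag, summing costs but discounting doubly counted selections. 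Leaves have the trivial empty state of cost $0$. The algorithm returns the minimum cost at the root over states whose $\phi$ values satisfy the coverage test on terminals in $X_{\mathrm{root}}$.

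The heart of the correctness argument, and the main obstacle, is the error analysis of rounding. I would show by induction on the decomposition that, given an optimal solution $S^*$, there is a DP trajectory realizing a state $\phi$ satisfying $\phi(v)\le d_G(v,S^*)+k_v\cdot\eta$ on each $v\in X_i$, where $k_v$ is the number of edges along a witness shortest path from $v$ to its closest center in $S^*$ that cross from outside to inside of some ancestor bag. Standard accounting on nice tree decompositions (as in Katsikarelis--Lampis--Paschos) bounds $k_v\le \tw$ on any path witnessed by the DP, so the additive rounding error is at most $\tw\cdot\eta=\eps\rho$. Hence the coverage check $\phi(v)\le(1+\eps)\rho$ admits $S^*$, giving a feasible $(1+\eps)\rho$-dominating set $S$ with $\mu(S)\le\mu(S^*)$. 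Generalizing from the basic version of \cite{KLP19} (unit $\mu$ and $\mathcal{K}=V$) is bookkeeping: replace counting by $\mu$-summation, and enforce the coverage test only on terminals.

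For the metric $\rho$-independent set version, the DP state instead records, for each $v\in X_i$, whether $v$ is chosen and, if so, a rounded distance to the closest \emph{other} chosen vertex seen so far; introduce/relax steps and forget steps are defined symmetrically, and any partial configuration in which two chosen vertices attain rounded distance $<\rho$ is rejected. The same rounding analysis yields a $(1-\eps)\rho$-independent set of $\mu$-measure at least $(1-\eps)\,\mu(S^*)$, as required. In both problems, the running time is dominated by the per-node state count $O(\tw/\eps)^{\tw}$ times the polynomial work needed for introduce/forget/join combinations, yielding the claimed $O(\tw/\eps)^{\tw}\cdot\poly(n)$ bound.
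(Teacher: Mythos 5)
First, note that the paper does not actually prove this statement: it is imported wholesale from \cite{KLP19,KLP20}, together with the observation (attributed to \cite{FKS19}) that adding a general measure $\mu$ and a terminal set $\mathcal{K}$ to the dynamic program is routine bookkeeping. Your proposal follows the same overall route as those cited works --- a dynamic program over a nice tree decomposition whose states are distance vectors on the bag, rounded to granularity $\Theta(\eps\rho/\tw)$ --- so there is no methodological divergence to compare; the only question is whether your sketch of that DP is sound.

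As written, it is not, in the dominating-set case. Your state records, for each $v\in X_i$, the rounded distance to the closest \emph{already selected} center in the processed subtree, and your forget node rejects unless $\phi(v)\le(1+\eps)\rho$ at the moment $v$ is forgotten. But a vertex may be dominated in the optimal solution by a center that is introduced only \emph{after} $v$ has been forgotten: the separator property guarantees that the shortest path from $v$ to that center passes through the bag at forget time, yet the length of the suffix of that path (from the bag to the not-yet-introduced center) is unknown to your DP, so $\phi(v)$ can legitimately still equal $\infty$ when $v$ is forgotten. Your forget-node test would therefore reject optimal solutions. The fix --- and the actual content of the \cite{KLP19} construction --- is to enlarge the state so that each bag vertex also carries a \emph{guessed} (rounded) distance to the nearest center in the \emph{final} solution, i.e.\ a ``promise'' that some yet-unseen center will be that close; these guesses are verified for consistency at introduce and join nodes, and the coverage test at a forget node is performed against the guessed value rather than the realized one. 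This roughly doubles the per-vertex alphabet but keeps it at $O(\tw/\eps)$, so the claimed running time is unaffected. A milder instance of the same oversight appears in your independent-set DP: to certify that a newly introduced chosen vertex is not within distance $\rho$ of a previously chosen, already forgotten vertex, the state must record the distance from \emph{every} bag vertex (not only the chosen ones) to the nearest chosen vertex in the processed part. Your rounding analysis ($k_v\le\tw$ relays, each costing $\eta=\eps\rho/\tw$) is consistent with \cite{KLP19}, but it presupposes the corrected state space.
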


\subsection[Metric $\rho$-Independent Set (\Cref{thm:isolated})]{QPTAS for the $\rho$-Independent Set Problem in Minor-Free Graphs}\label{subsec:indepdendent}
This subsection is devoted to proving the following theorem:
\isolated*

\begin{proof}
	Create a new graph $G'$ from $G$ by adding a single vertex $\psi$ at distance $\frac34\rho$ from all the other vertices. 
	$G'$ is $K_{r+1}$-minor free. Note that for every $u,v\in Y$, it holds that $d_{G'}(u,v)=\min\{\frac32\rho,d_{G}(u,v)\}$. Thus $G'$ has diameter at most $\frac{3}{2}\rho$. Furthermore, for every $\rho'\in(0,\frac32\rho)$, a set $S\subseteq V$ is a $\rho'$-independent set in $G$ if and only if $S$ is a $\rho'$-independent set in $G'$.
	Using \Cref{thm:Ramsey-minor-Free} with parameters $\eps'=\frac\eps2$ and $\delta=\frac\eps4$, let $g$ be an embedding of $G'$ into a treewidth-$O_r(\frac{\log^2n}{\eps^2})$ graph $H$ with a set $M\subseteq V\cup\{\psi\}$ such that (1) for every $u,v\in M$, $d_H(g(u),g(v))\le d_{G'}(u,v)+\frac{\eps}{2}\cdot\frac32\rho<d_{G'}(u,v)+\eps\rho$, and (2) for every $v\in V$, $\Pr[v\in M]\ge 1-\frac\eps4$. 
	
	Define a new measure $\mu_H$ in $H$, where for each $v\in G'$,
	\[
	\mu_{H}(v')=\begin{cases}
	0 & v'\notin g(V\cap M)\\
	\mu(v) & \text{else, }g(v)=v'\text{ for some }v\in M\setminus\{\psi\}
	\end{cases}\qquad.
	\]
	In particular, $\mu_H(g(\psi))=0$.
	Using \Cref{thm:treewidthPTAS}, we find a $(1-\frac\eps2)\rho$-independent set $S_H$ w.r.t. $\mu_H$, such that for every $\rho$-independent set $\tilde{S}$ in $H$ it holds that $\mu_H(S_H)\ge(1-\frac\eps2)\mu_H(\tilde{S})$.
	We can assume that $S_H\subseteq g(M)$, as the measure of all vertices out of $g(M)$ is $0$. We will return $S=g^{-1}(S_H)$; note that $S\subseteq M$. 
	First, we argue that $S$ is a $(1-\eps)\rho$-independent set. 
	For every $u,v\in S$, $g(u),g(u)\in S_H$ thus
	\[
	(1-\frac{\eps}{2})\rho\le d_{H}(g(u),g(v))\le d_{G'}(u,v)+\frac{\eps}{2}\rho\le d_{G}(u,v)+\frac{\eps}{2}\rho~,
	\]
	implying $d_{G}(u,v)\ge(1-\eps)\rho$.
	
	\sloppy Let $S_{\opt}$ be a $\rho$-independent set  w.r.t. $d_G$ of maximal measure.
	As $g$ is dominating embedding, $g(S_{\opt}\cap M)$ is a $\rho$-independent set in $H$. By linearity of expectation 
	${\mathbb{E}[\mu(S_{\opt}\setminus M)]=\sum_{v\in S_{\opt}}\mu(v)\cdot\Pr\left[v\notin M\right]\le\frac{\eps}{4}\cdot\mu(S_{\opt})}$. Using Markov inequality 
	\[
	\Pr\left[\mu(S_{\opt}\cap M)<(1-\frac{\eps}{2})\mu(S_{\opt})\right]=\Pr\left[\mu(S_{\opt}\setminus M)\ge\frac{\eps}{2}\mu(S_{\opt})\right]\le\frac{\mathbb{E}[\mu(S_{\opt}\setminus M)]}{\frac{\eps}{2}\mu(S_{\opt})}\le\frac{1}{2}~.
	\]
	Thus, with probability at least $\frac12$, $H$ contains a $\rho$-independent set $g(S_{\opt}\cap M)$ of measure ${\mu_{H}(g(S_{\opt}\cap M))=\mu(S_{\opt}\cap M)\ge(1-\frac{\eps}{2})\mu(S_{\opt})}$. If this event indeed occurs, the independent set $S_H$ returned by \cite{FKS19} algorithm will be of measure greater than 	$(1-\frac{\eps}{2})(1-\frac{\eps}{2})\mu(S_{\opt})>(1-\eps)\mu(S_{\opt})$.
	High probability could be obtained by repeating the above algorithm $O(\log n)$ times and returning the independent set of maximal  cardinality among the observed solutions.	 
\end{proof}
\begin{remark}\label{rem:IsolatedDeterministic}
	The algorithm above can be derandomized as follows: first note that the algorithm from \Cref{thm:treewidthPTAS} is deterministic. Next, during the construction in the proof of \Cref{thm:Ramsey-minor-Free}, each time we execute \Cref{lem:Ramsey-almost-embeddable} we pick $\sigma\in O(\frac{1}{\delta})$ uniformly at random, where $\delta=\Theta(\frac{\eps}{\log n})$.
	As we bound the probability of $\Pr[v\notin M]$ using a simple union bound, it will still hold if we pick the same $\sigma$ in all the executions of \Cref{lem:Ramsey-almost-embeddable}. We conclude that we can sample the embedding of \Cref{thm:Ramsey-minor-Free} from a distribution with support size $O(\frac{\log n}{\eps})$. A derandomization follows.
\end{remark}

\subsection[Metric $\rho$-Dominating Set (\Cref{thm:dominatingSet})]{QPTAS for the $\rho$-Dominating Set Problem in Minor-Free Graphs}\label{subsec:dominating}
We restate the main theorem in this section for convenience. 
\dominating*
\begin{proof}
	Similarly to \Cref{thm:isolated}, we start by constructing an auxiliary graph $G'$ from $G$ by adding a single vertex $\psi$ at distance $2\rho$ from all the other vertices. 
	Extend the measure $\mu$ to $\psi$ by setting $\mu(\psi)=\infty$.
	For every $u,v\in V$ it holds that $d_{G'}(u,v)=\min\{4\rho,d_{G}(u,v)\}$. It follows that $G'$ is a $K_{r+1}$-minor-free graph with diameter bounded by $4\rho$.
	In particular, for every $\rho'\in(0,2\rho)$, a set $S\subseteq V$ is $\rho'$-dominating set (w.r.t. ${\cal K}$) in $G$ if and only if $S$ is $\rho'$ dominating set in $G'$ (w.r.t. ${\cal K}$).
	Using \Cref{thm:Clan-Embedding} with parameters $\eps'=\frac{\eps}{12}$ and $\delta=\frac\eps 6$, let $(f,\chi)$ be a clan embeddings of $G'$ into a treewidth $O_r(\frac{\log^2n}{\eps^2})$ graph $H'$ with additive distortion $\eps'\cdot4\rho=\frac\eps3\rho$. 
	Define a new measure $\mu_H$ in $H$, where for each $v'\in H$,
	\[
	\mu_{H}(v')=\begin{cases}
	\infty & v'\notin f(V)\\
	\mu(v) & v'\in f(v)
	\end{cases}
	\]
	Set also ${\cal K}_H=\chi({\cal K})\subseteq H$ to be our set of terminals.
	Using \Cref{thm:treewidthPTAS}, 
	we find a $(1+\frac{\eps}{3})^2\rho$-dominating set $A_H$, such that for every $\chi(v)\in {\cal K}_H$, $d_H(\chi(v),A_H)\le(1+\frac\eps3)^2\rho$, and for every $(1+\frac\eps3)\rho$-dominating set $\tilde{A}$ w.r.t. ${\cal K}_H$ it holds that $\mu_H(A_H)\le(1+\frac\eps3)\mu_H(\tilde{A})$.
	We can assume that $A_H$ contains only vertices from $f(V)$ (as all other vertices have measure $\infty$, while ${\cal K}_H$ itself is legal solution of finite measure). We will return $A=f^{-1}(A_H)=\{u\in V\mid f(u)\cap A_H\ne\emptyset\}$. 
	
	First, we argue that $A$ is a $(1+\eps)\rho$-dominating set. For every vertex $v\in {\cal K}$, $\chi(v)\in {\cal K}_H$. Therefore there is a vertex $\hat{u}\in A_H$ such that $d_H(\chi(v),\hat{u})\le (1+\frac\eps3)^2\rho$. In particular, our solution $A$ contains the vertex $u$ such that $\hat{u}\in f(u)$. As $(f,\chi)$ is 
	dominating embedding 
	we conclude 
	\[
	d_{G}(u,v)\le\min_{u'\in f(u)}d_{H}(u',\chi(v))\le d_{H}(\hat{u},\chi(v))\le(1+\frac{\eps}{3})^{2}\rho<(1+\eps)\rho~.
	\]
	
	Second, we argue that $A$ has nearly optimal measure. Let $A_{\opt}$ be a $\rho$-dominating set in $G$ w.r.t. ${\cal K}$ of minimal measure.
	As $(f,\chi)$ has additive distortion $\frac\eps3\rho$, 
	$f(A_{\opt})$ is a $(1+\frac\eps 3)\rho$-dominating set in $H$ (w.r.t. ${\cal K}_H$). Indeed, consider a vertex $\chi(v)\in {\cal K}_H$ (for $v\in {\cal K}$). There is a vertex $u\in A_{\opt}$ such that $d_G(u,v)\le \rho$. It holds that 
	\[
	d_{H}(f(A_{\opt}),\chi(v))\le\min_{u'\in f(u)}d_{H}(u',\chi(v))\le d_{G}(u,v)+\frac{\eps}{3}\rho\le (1+\frac{\eps}{3})\rho
	\]
	By \Cref{thm:treewidthPTAS}, we will find a $(1+\frac\eps3)^2\rho$-dominating set of measure at most $(1+\frac\eps3)\mu_H(f(A_\opt))$ in $H$.
	By linearity of expectation,
	\[
	\mathbb{E}\left[\mu_{H}(f(A_{\opt})\right]=\sum_{u\in A_{\opt}}\mu(u)\cdot\mathbb{E}\left[\left|f(u)\right|\right]\le(1+\frac{\eps}{6})\cdot\mu(A_{\opt})~.
	\]
	From the other hand, $\mu_{H}(f(A_{\opt}))\ge\mu_{H}(\chi(A_{\opt}))=\mu(A_{\opt})$. 
	Using Markov inequality,
	\begin{align*}
	\Pr\left[\mu_{H}(f(A_{\opt}))\ge(1+\frac{\eps}{3})\cdot\mu(A_{\opt})\right] & =\Pr\left[\mu_{H}(f(A_{\opt}))-\mu(A_{\opt})\ge\frac{\eps}{3}\mu(A_{\opt})\right]\\
	& \le\frac{\mathbb{E}[\mu_{H}(f(A_{\opt}))-\mu(A_{\opt})]}{\frac{\eps}{3}\mu(A_{\opt})}\le\frac{\frac{\eps}{6}}{\frac{\eps}{3}}=\frac{1}{2}~.
	\end{align*}
	Thus with probability at least $\frac12$, $H$ contains  $(1+\frac\eps3)\rho$-dominating set of measure $(1+\frac\eps3)\mu(A_{\opt})$. If this event indeed occurs, the dominating set $A_H$ returned by \Cref{thm:treewidthPTAS} will be of measure at most $(1+\frac{\eps}{3})^2\mu(A_{\opt})<(1+\eps)\mu(A_{\opt})$.
	High probability could be obtained by repeating the algorithm above $O(\log n)$ times and returning the set of minimum measure among the observed dominating sets.	 
\end{proof}

\subsection[Compact Routing Scheme (\Cref{thm:route})]{Compact Routing Scheme}\label{subsec:routing}
We restate the main theorem of this subsection for convenience.
We begin by presenting a result of Thorup and Zwick \cite{TZ01b} regarding routing in a tree.
\begin{theorem}[\cite{TZ01b}]\label{thm:tree-routh}
	For any tree $T=(V,E)$ (where $|V|=n$), there is a routing scheme with stretch $1$ that has routing tables of size $O(1)$ and labels (and headers) of size $O(\log n)$.
\end{theorem}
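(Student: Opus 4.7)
The plan is to prove \Cref{thm:tree-routh} by combining a heavy-path decomposition of $T$ with DFS pre-order interval labeling, using the packet header to carry information that would otherwise blow up the local tables. First, root $T$ at an arbitrary vertex $r$. For each non-leaf $v$, designate a child with a largest subtree as the \emph{heavy child} and the rest as \emph{light}. The heavy edges partition the vertices into \emph{heavy paths}, and, by the usual halving argument, every root-to-leaf path crosses at most $\lceil\log_2 n\rceil$ light edges. Compute a DFS pre-order that always descends into the heavy child first, so the subtree of $v$ occupies a contiguous interval $[\pi(v),\pi(v)+s(v)-1]$ in which the heavy subtree is a prefix.

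Next, I would design the label $L(v)$ as a string of $O(\log n)$ bits encoding (i) the pair $(\pi(v),s(v))$, and (ii) a \emph{descent descriptor} that lists, for each of the $O(\log n)$ light edges on the path $r\to v$, the position along the current heavy path at which one departs and the light-port index to use. For heavy paths $P_1,\dots,P_k$ met along the descent, the position within $P_i$ is encoded in $\lceil\log|P_i|\rceil$ bits; since $\sum_i |P_i|\le n$ and only one position per $P_i$ is stored, a variable-length prefix code makes the total label length $O(\log n)$, i.e.\ $O(1)$ machine words. The routing table at $v$ stores only its own descriptor --- its interval, the heavy-subtree interval, its position on its heavy path, and the port to its parent --- but crucially it does \emph{not} store intervals for its possibly $\Omega(n)$ light children.

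Routing uses the header (initially $L(u)$) and the local table. At $v$, if $\pi(u)\notin[\pi(v),\pi(v)+s(v)-1]$ the packet is sent to the parent; else if $\pi(u)$ lies in $v$'s heavy-child interval it is sent along the heavy edge; otherwise the destination lies in a light subtree hanging off the current heavy path, so $v$ reads the pair (heavy-path-position, light-port) recorded in the header for this heavy-path level, and either forwards along the heavy edge (if that position is farther down) or, if $v$ is the recorded position, forwards along the named light port and consumes that entry from the header. By induction on the $v\to\mathrm{lca}(u,v)\to u$ path, the packet is forced onto exactly the unique tree path, giving stretch $1$; the label and table size bounds hold by construction.

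The main obstacle is the downward descent: naively, to pick the correct light child $v$ would have to store intervals of all its light children, costing $\Omega(\deg(v))$ words. Overcoming this is the whole point of offloading the descent instructions into the label and having them consumed along heavy paths, and the delicate part is showing the descent descriptor fits in $O(\log n)$ bits. This relies on two ingredients coming from heavy-path decomposition: the number of light edges on any root-to-leaf path is $O(\log n)$, and the per-heavy-path position encodings telescope to $O(\log n)$ total bits when combined with a variable-length code --- after which the remainder of the argument is routine.
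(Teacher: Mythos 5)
This theorem is not proved in the paper at all --- it is imported verbatim from Thorup and Zwick \cite{TZ01b} and used as a black box --- so there is no in-paper argument to compare against. Your reconstruction (heavy-path decomposition, DFS pre-order intervals with the heavy subtree as a prefix, and offloading the downward descent instructions into the destination's label so that tables stay constant-size) is essentially the Thorup--Zwick / Fraigniaud--Gavoille interval-routing scheme for trees, and as a proof of the theorem \emph{as stated in this paper} (sizes measured in machine words of $\Theta(\log n)$ bits, so the label budget is $O(\log n)$ words $=O(\log^2 n)$ bits) it goes through: at most $\lceil\log_2 n\rceil$ descriptor entries of $O(1)$ words each, plus $O(1)$-word tables, plus the stretch-$1$ induction you sketch.

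There is, however, one genuine error in your write-up: the claim that the descent descriptor fits in $O(\log n)$ \emph{bits}, i.e.\ $O(1)$ words. The telescoping you invoke does not hold. If the root-to-$v$ path meets heavy paths $P_1,\dots,P_k$ with $k=\Theta(\log n)$ and $|P_i|\approx n/k$, then $\sum_i\lceil\log|P_i|\rceil=\Theta(\log n\cdot\log(n/\log n))=\Theta(\log^2 n)$ bits, and the constraint $\sum_i|P_i|\le n$ together with a prefix code cannot beat this; the same issue afflicts the light-port indices, since $\sum_i\log(\deg(u_i))$ also fails to telescope (what \emph{does} telescope is $\sum_i\log\bigl(s(u_i)/s(w_i)\bigr)$ for the light edges $(u_i,w_i)$, but a heavy-path position is not bounded by that ratio). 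So your construction yields $\Theta(\log^2 n)$-bit labels, which is exactly the $O(\log n)$-word bound the theorem asks for --- the overclaim is harmless here, but the stated justification for it is wrong. A second, smaller loose end: ``consuming'' descriptor entries only works if the pointer is correctly initialized when the packet turns around at $\mathrm{lca}(u,v)$, which the source cannot precompute; the standard fix is to have each vertex store its light-depth (one word) and index the descriptor by light-depth, or to tag each entry with the pre-order number of its departure vertex so that $v$ can locate its own entry by comparison.
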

Recall that we measure space in machine words, where each word is $\Theta(\log n)$ bits.
We stress out the extremely short routing table size obtained in \cite{TZ01b}. Note that when a vertex receives a packet with a header, it makes the routing decision based only on the routing table, and do not require any knowledge of the label of itself. In particular, the routing table contains a unique identifier of the vertex.

Additional ingredient that our construction will require is that of a \emph{distance labeling scheme} for trees. 
A \emph{distance labeling},
assigns to each point $x\in X$ a label $l(x)$, 
and there is an algorithm $\mathcal{A}$ (oblivious to $(X,d)$) that provided labels $l(x),l(y)$ of arbitrary $x,y \in X$, can compute $d_G(u,v)$.
Specifically, a distance labeling is said to have \emph{stretch} $t\ge 1$ if
\[
\forall x,y\in X,
\qquad
d(x,y)\le \mathcal{A}\left(l(x),l(y)\right) \le t\cdot d(x,y).
\]
We refer to \cite{FGK20} for an overview of distance labeling schemes in different regimes (and comparison with metric embedding, see also \cite{Peleg00Labling,GPPR04,TZ05,EFN18}). 
Exact distance labeling on an $n$-vertex tree requires  $\Theta(\log n)$ words \cite{AGHP16}, which is already larger than the routing table size we are aiming for. 
Nonetheless, Freedman  \etal \cite{FGNW17} (improving upon \cite{AGHP16,GKKPP01}) showed that for any $n$-vertex unweighted tree, and $\eps\in(0,1)$, one can construct an $(1+\eps)$-labeling scheme with labels of size $O(\log\frac1\eps)$ words.
\begin{theorem}[\cite{FGNW17}]\label{thm:tree-label}
	For any $n$-vertex tree $T=(V,E)$ with polynomial aspect ratio $^{\ref{foot:aspectRatio}}$, and parameter $\eps\in(0,1)$, there is a distance labeling scheme with stretch $1+\eps$, and $O(\log\frac1\eps)$ label size.
\end{theorem}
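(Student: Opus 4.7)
}
I would prove this via a centroid decomposition of the tree $T$. Recall that a centroid of a (sub)tree on $m$ vertices is a vertex whose removal splits it into components of size at most $m/2$; iterating gives a rooted \emph{centroid tree} $\mc{C}$ of depth $O(\log n)$ over the same vertex set. The key structural fact I will use: for any pair $u,v \in V(T)$, if $c^*$ is their lowest common ancestor in $\mc{C}$, then $c^*$ lies on the unique $u$-$v$ path in $T$, so $d_T(u,v) = d_T(u,c^*) + d_T(c^*,v)$. This is exactly the property that makes centroid decomposition the natural tool for distance labelings on trees.

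For each vertex $v$, the label $\ell(v)$ records, for every ancestor centroid $c$ of $v$ in $\mc{C}$ (there are $O(\log n)$ of them), two pieces of information: (i) which child-subtree of $c$ contains $v$ (a small local identifier, so that the lowest common centroid of $u$ and $v$ can be identified by comparing labels), and (ii) an approximation $\tilde{d}(v,c)$ of $d_T(v,c)$. The decoder $\mc{A}(\ell(u),\ell(v))$ scans the two lists in lockstep, finds the first level at which the child-subtree identifiers differ, declares that level's parent as $c^*$, and returns $\tilde{d}(u,c^*) + \tilde{d}(v,c^*)$. By the structural fact above, if each $\tilde{d}(x,c^*)$ is within a factor $(1+\epsilon)$ of the true distance, the output is a $(1+\epsilon)$-approximation of $d_T(u,v)$.

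The main obstacle, and where the real work lies, is hitting the $O(\log \frac{1}{\epsilon})$ bound on label size measured in \emph{words}, i.e.\ $O(\log n \cdot \log \frac{1}{\epsilon})$ bits rather than the naive $O(\log^2 n)$. For the child-identifier part, the subtree sizes halve along $\mc{C}$, so the identifiers can be packed in $O(\log n)$ total bits by a standard heavy-path style encoding. For each distance, I would store a floating-point pair (mantissa, exponent): $O(\log \frac{1}{\epsilon})$ mantissa bits suffice for per-level $(1+\epsilon)$-precision, and because the aspect ratio is polynomial, each exponent fits in $O(\log\log n)$ bits. A naive concatenation gives an extra $O(\log n \log\log n)$ overhead from the exponents, which is exactly what I need to eliminate. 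The crucial observation is that the sequence of exponents along $v$'s ancestor chain is ``almost monotone'' and lives in a range of size $O(\log n)$, so by storing only the successive differences in a unary-like encoding I can fit all $O(\log n)$ exponents into $O(\log n)$ bits total, i.e.\ $O(1)$ words. Adding the mantissas (one per level) yields the target $O(\log \frac{1}{\epsilon})$ words.

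Finally, I would verify the stretch bound by tracking the errors: each stored $\tilde{d}(v,c^*)$ differs from the truth by at most $\epsilon' \cdot d_T(v,c^*)$ for a suitably chosen $\epsilon' = \Theta(\epsilon)$, and since $d_T(u,v) = d_T(u,c^*) + d_T(v,c^*)$ the additive errors combine into a multiplicative $(1+\epsilon)$ factor. The centroid depth $O(\log n)$ and the polynomial aspect ratio are both used essentially; relaxing either would require more care in the exponent-encoding step, which I expect to be the part of the proof requiring the most technical attention.
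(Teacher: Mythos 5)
First, a framing point: the paper does not prove \Cref{thm:tree-label} at all. It is imported verbatim from \cite{FGNW17} as a black box, and the only original content surrounding it is the one-line remark that the unweighted result of \cite{FGNW17} extends to weighted trees of polynomial aspect ratio by subdividing edges (for fixed $\eps$). So there is no in-paper proof to compare against; what you have written is an attempt to reprove the cited theorem from scratch.

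On the merits, your architecture (centroid decomposition of depth $O(\log n)$, the fact that the centroid-tree LCA $c^*$ of $u,v$ lies on the $u$--$v$ path so $d_T(u,v)=d_T(u,c^*)+d_T(v,c^*)$, telescoping child-identifiers in $O(\log n)$ total bits, and per-level floating-point distances with $O(\log\frac1\eps)$ mantissa bits) is sound and is indeed in the spirit of how such schemes are built. The genuine gap is exactly where you place it: packing the $O(\log n)$ exponents into $O(\log n)$ bits. Your justification is that the exponent sequence $e_i=\lfloor\log d_T(v,c_i)\rfloor$ is ``almost monotone,'' but a unary delta-encoding needs the \emph{total variation} $\sum_i|e_{i+1}-e_i|$ to be $O(\log n)$, and nothing you say establishes this. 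In the unweighted case you only get the one-sided envelope $e_i\le\log(n/2^{i-1})$, which permits the sequence to repeatedly drop to $0$ (centroid adjacent to $v$, e.g.\ a large star hanging next to $v$) and climb back near the envelope (centroid at the far end of a long path), giving total variation $\Theta(\log^2 n)$ on caterpillar-with-stars constructions; in the weighted, polynomial-aspect-ratio setting that this theorem is actually invoked for, even the decreasing envelope disappears, since component \emph{sizes} halve but distances need not. Closing this hole is precisely the technical core of \cite{FGNW17} (their construction is not a naive per-level floating-point scheme, and they also prove a matching $\Omega(\log n\log\frac1\eps)$-bit lower bound, so any correct scheme must do something clever here). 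As written, the compression step fails, and with it the claimed $O(\log\frac1\eps)$-word bound; the honest fallback of your argument is $O(\log\frac1\eps+\log\log n)$ words per label.
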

We will use \Cref{thm:tree-label} for fixed $\eps$. For this case the theorem can be extended to weighted trees with polynomial aspect ratio (by subdividing edges).
\begin{proof}[Proof of \Cref{thm:route}]
	We combine \Cref{thm:ClanSpanningTree} with \Cref{thm:tree-routh} and  \Cref{thm:tree-label} to construct a compact routing scheme.
	We begin by sampling a spanning clan embedding $(f,\chi)$ of $G$ into a tree $T$ with distortion $O(k\log\log n)$ such that for every vertex $v\in V$,  $\mathbb{E}[|f(v)|]\le n^{1/k}$.
	Using \Cref{thm:tree-label}, we construct a distance labeling scheme for $T$ with stretch at most $2$ and $O(1)$ label size. That is, each vertex $v'\in T$ has a label $l_{\rm dl}(v')$ of constant size, such that for every pair $v',u'\in T$, $d_T(v',u')\le \mathcal{A}\left(l_{\rm dl}(v'),l_{\rm dl}(u')\right) \le 2\cdot d_T(v',u')$ (${\rm dl}$ stands for distance labeling).
	
	Using \Cref{thm:tree-routh}, we construct a compact routing scheme for $T$, such that each $v'\in T$ has a label $\ell_{\rm crs}(v')$ of size $O(\log |T|)=O(\log n)$, and routing table $\tau_{\rm crs}(v')$ of size $O(1)$ (${\rm crs}$ stands for compact routing scheme).
	We construct a compact routing scheme for $G$ as follows: for every vertex $v\in V$, its label defined to be $\ell_{G}(v)=\left(\ell_{\rm crs}(\chi(v)),l_{\rm dl}(\chi(v))\right)$, and its table $\tau_{G}(v)$ to be the concatenation of $\left\{\left(\tau_{\rm crs}(v'),l_{\rm dl}(v')\right)\right\}_{v'\in f(v)}$.
	In words, the label $\ell_{G}(v)$ consist of the routing label $\ell_{\rm crs}(\chi(v))$, and distance label $l_{\rm dl}(\chi(v))$, of the chief $\chi(v)$ in $T$, while the routing table $\tau_{G}(v)$ consist of the routing table $\tau_{\rm crs}(v')$, and distance label $l_{\rm dl}(v')$, of all the copies $v'$ in the clan $f(v)$.
	Clearly, the size of the label is $O(\log n)+O(1)=O(\log n)$, while the expected size of the routing table is $\mathbb{E}[\sum_{v'\in f(v)}O(1)]=O(1)\cdot \mathbb{E}[|f(v)|]=O(n^{\frac1k})$.
	
	Consider a node $v$ that wants so send a package to a node $u$, while possessing the routing label $\ell_{G}(u)$ of $u$. $v$ will go over all the copies $v'\in f(v)$, and choose the copy $v_u$ that minimized the estimated distance $\mathcal{A}\left(l_{\rm dl}(v'),l_{\rm dl}(\chi(u))\right)$. Then, using the routing table $\tau_{\rm crs}(v_u)$ of $v_u$, $v$ will make a routing decision and transfer the package to the first vertex $z'\in T$ on the shortest path from $v_u$ to $\chi(u)$ in $T$. $v$ will transfer this package with a header consisting of the label of $u$ and the name of $z'$.
	This somewhat longer routing decision process occurs only when a delivery is initiated.
	In any other step, a node $z$ receives a package with a header containing the routing label of the destination $\ell_{G}(u)$ and a name of a copy $z'\in f(z)$. Then $z$ uses the routing table $\tau_{\rm crs}(z')$ of $z'$ to make a routing decision and transfer the package to the first vertex $q'\in T$ on the shortest path from $z'$ to $\chi(u)$ in $T$. As previously, $z$ will  transfer the package with a header consisting of the label of $u$ and the name of $q'$.
	Clearly the size of the header is $O(\log n)$. Note that other than the first decision, each decision is made in constant time (while the first decision is made in expected $O(n^{\frac1k})$ time). 
	Finally, when routing a package starting at $v$ towards $u$, the path corresponds exactly to a path in $T$ from a copy $v_u\in f(v)$ to $\chi(u)$. The length of this path is bounded by
	\begin{align*}
	d_{T}(v_{u},\chi(u)) & \le\mathcal{A}\left(l_{{\rm dl}}(v_{u}),l_{{\rm dl}}(\chi(u))\right)=\min_{v'\in f(v)}\mathcal{A}\left(l_{{\rm dl}}(v'),l_{{\rm dl}}(\chi(u))\right)\\
	& \le\min_{v'\in f(v)}2\cdot d_{T}(v',\chi(u))=O(k\log\log n)\cdot d_{G}(v,u)~.
	\end{align*}
\end{proof}

\section*{Acknowledgments}
The authors are grateful to Philip Klein for suggesting the  metric $\rho$-dominating/independent set problems, which eventually led to this project. We thank Vincent Cohen-Addad for useful conversations and for pointing out the proof of \Cref{thm:CKM19} to the first author. 
The first author would like to thank Alexandr Andoni for helpful discussions. The second author would like to thank Michael Lampis for discussing dynamic programming algorithms for metric independent set/dominating set on bounded treewidth graphs.

	\bibliographystyle{alphaurlinit}
	\bibliography{RamseyTreewidthBib,RPTALGbib}

\newcommand{\etalchar}[1]{$^{#1}$}
\begin{thebibliography}{BLMN05b}

\bibitem[ABLP90]{ABLP90}
B.~Awerbuch, A.~Bar{-}Noy, N.~Linial, and D.~Peleg.
\newblock Improved routing strategies with succinct tables.
\newblock {\em J. Algorithms}, 11(3):307--341,
\newblock 1990, \href {http://dx.doi.org/10.1016/0196-6774(90)90017-9}
  {\path{doi:10.1016/0196-6774(90)90017-9}}.

\bibitem[ACE{\etalchar{+}}20]{ACEFN20}
I.~Abraham, S.~Chechik, M.~Elkin, A.~Filtser, and O.~Neiman.
\newblock Ramsey spanning trees and their applications.
\newblock {\em {ACM} Trans. Algorithms}, 16(2):19:1--19:21, 2020.
\newblock
\newblock preliminary version published in SODA 2018, \href
  {http://dx.doi.org/10.1145/3371039} {\path{doi:10.1145/3371039}}.

\bibitem[AFGN18]{AFGN18}
I.~Abraham, A.~Filtser, A.~Gupta, and O.~Neiman.
\newblock Metric embedding via shortest path decompositions.
\newblock In {\em Proceedings of the 50th Annual {ACM} {SIGACT} Symposium on
  Theory of Computing, {STOC} 2018, Los Angeles, CA, USA, June 25-29, 2018},
  pages 952--963, 2018.
\newblock
\newblock full version: \url{https://arxiv.org/abs/1708.04073}, \href
  {http://dx.doi.org/10.1145/3188745.3188808}
  {\path{doi:10.1145/3188745.3188808}}.

\bibitem[AGG{\etalchar{+}}19]{AGGNT19}
I.~Abraham, C.~Gavoille, A.~Gupta, O.~Neiman, and K.~Talwar.
\newblock Cops, robbers, and threatening skeletons: Padded decomposition for
  minor-free graphs.
\newblock {\em {SIAM} J. Comput.}, 48(3):1120--1145, 2019.
\newblock
\newblock preliminary version published in STOC 2014, \href
  {http://dx.doi.org/10.1137/17M1112406} {\path{doi:10.1137/17M1112406}}.

\bibitem[AGHP16]{AGHP16}
S.~Alstrup, I.~L. G{\o}rtz, E.~B. Halvorsen, and E.~Porat.
\newblock Distance labeling schemes for trees.
\newblock In {\em 43rd International Colloquium on Automata, Languages, and
  Programming, {ICALP} 2016, July 11-15, 2016, Rome, Italy}, pages
  132:1--132:16,
\newblock 2016, \href {http://dx.doi.org/10.4230/LIPIcs.ICALP.2016.132}
  {\path{doi:10.4230/LIPIcs.ICALP.2016.132}}.

\bibitem[AHK12]{AHK12}
S.~Arora, E.~Hazan, and S.~Kale.
\newblock The multiplicative weights update method: a meta-algorithm and
  applications.
\newblock {\em Theory of Computing}, 8(1):121--164,
\newblock 2012, \href {http://dx.doi.org/10.4086/toc.2012.v008a006}
  {\path{doi:10.4086/toc.2012.v008a006}}.

\bibitem[AHL02]{AHL02}
N.~Alon, S.~Hoory, and N.~Linial.
\newblock The moore bound for irregular graphs.
\newblock {\em Graphs Comb.}, 18(1):53--57,
\newblock 2002, \href {http://dx.doi.org/10.1007/s003730200002}
  {\path{doi:10.1007/s003730200002}}.

\bibitem[AKPW95]{AKPW95}
N.~Alon, R.~M. Karp, D.~Peleg, and D.~B. West.
\newblock A graph-theoretic game and its application to the k-server problem.
\newblock {\em {SIAM} J. Comput.}, 24(1):78--100, 1995.
\newblock
\newblock preliminary version published in On-Line Algorithms 1991, \href
  {http://dx.doi.org/10.1137/S0097539792224474}
  {\path{doi:10.1137/S0097539792224474}}.

\bibitem[AMS99]{AMS99Sketch}
N.~Alon, Y.~Matias, and M.~Szegedy.
\newblock The space complexity of approximating the frequency moments.
\newblock {\em J. Comput. Syst. Sci.}, 58(1):137--147, 1999.
\newblock
\newblock preliminary version published in STOC 1996, \href
  {http://dx.doi.org/10.1006/jcss.1997.1545}
  {\path{doi:10.1006/jcss.1997.1545}}.

\bibitem[AN19]{AN19}
I.~Abraham and O.~Neiman.
\newblock Using petal-decompositions to build a low stretch spanning tree.
\newblock {\em {SIAM} J. Comput.}, 48(2):227--248, 2019.
\newblock
\newblock preliminary version published in STOC 2012, \href
  {http://dx.doi.org/10.1137/17M1115575} {\path{doi:10.1137/17M1115575}}.

\bibitem[AP92]{AP92}
B.~Awerbuch and D.~Peleg.
\newblock Routing with polynomial communication-space tradeoff.
\newblock {\em SIAM J. Discrete Mathematics}, 5:151--162,
\newblock 1992.

\bibitem[AS03]{AS03}
V.~Athitsos and S.~Sclaroff.
\newblock Database indexing methods for 3d hand pose estimation.
\newblock In {\em Gesture-Based Communication in Human-Computer Interaction,
  5th International Gesture Workshop, {GW} 2003, Genova, Italy, April 15-17,
  2003, Selected Revised Papers}, pages 288--299,
\newblock 2003, \href {http://dx.doi.org/10.1007/978-3-540-24598-8\_27}
  {\path{doi:10.1007/978-3-540-24598-8\_27}}.

\bibitem[AST90]{AST90}
N.~Alon, P.~D. Seymour, and R.~Thomas.
\newblock A separator theorem for graphs with an excluded minor and its
  applications.
\newblock In {\em Proceedings of the 22nd Annual {ACM} Symposium on Theory of
  Computing, May 13-17, 1990, Baltimore, Maryland, {USA}}, pages 293--299,
\newblock 1990, \href {http://dx.doi.org/10.1145/100216.100254}
  {\path{doi:10.1145/100216.100254}}.

\bibitem[Bak94]{Baker94}
B.~S. Baker.
\newblock Approximation algorithms for {NP}-complete problems on planar graphs.
\newblock {\em Journal of the ACM}, 41(1):153--180, 1994.
\newblock
\newblock preliminary version published in FOCS 1983, \href
  {http://dx.doi.org/10.1145/174644.174650} {\path{doi:10.1145/174644.174650}}.

\bibitem[Bar96]{Bar96}
Y.~Bartal.
\newblock Probabilistic approximations of metric spaces and its algorithmic
  applications.
\newblock In {\em 37th Annual Symposium on Foundations of Computer Science,
  {FOCS} '96, Burlington, Vermont, USA, 14-16 October, 1996}, pages 184--193,
\newblock 1996, \href {http://dx.doi.org/10.1109/SFCS.1996.548477}
  {\path{doi:10.1109/SFCS.1996.548477}}.

\bibitem[Bar98]{Bartal98}
Y.~Bartal.
\newblock On approximating arbitrary metrices by tree metrics.
\newblock In {\em Proceedings of the Thirtieth Annual {ACM} Symposium on the
  Theory of Computing, Dallas, Texas, USA, May 23-26, 1998}, pages 161--168,
\newblock 1998, \href {http://dx.doi.org/10.1145/276698.276725}
  {\path{doi:10.1145/276698.276725}}.

\bibitem[Bar04]{Bartal04}
Y.~Bartal.
\newblock Graph decomposition lemmas and their role in metric embedding
  methods.
\newblock In {\em Algorithms - {ESA} 2004, 12th Annual European Symposium,
  Bergen, Norway, September 14-17, 2004, Proceedings}, pages 89--97,
\newblock 2004, \href {http://dx.doi.org/10.1007/978-3-540-30140-0\_10}
  {\path{doi:10.1007/978-3-540-30140-0\_10}}.

\bibitem[Bar11]{Bar11}
Y.~Bartal.
\newblock Lecture notes in metric embedding theory and its algorithmic
  applications, 2011.
\newblock
\newblock URL:
  \url{http://moodle.cs.huji.ac.il/cs10/file.php/67720/GM_Lecture6.pdf}.

\bibitem[Bar21]{Bar21}
Y.~Bartal.
\newblock Advances in metric ramsey theory and its applications.
\newblock {\em CoRR}, abs/2104.03484,
\newblock 2021, \href {http://arxiv.org/abs/2104.03484}
  {\path{arXiv:2104.03484}}.

\bibitem[BBM06]{BBM06}
Y.~Bartal, B.~Bollob{\'{a}}s, and M.~Mendel.
\newblock Ramsey-type theorems for metric spaces with applications to online
  problems.
\newblock {\em J. Comput. Syst. Sci.}, 72(5):890--921, 2006.
\newblock
\newblock Special Issue on FOCS 2001, \href
  {http://dx.doi.org/10.1016/j.jcss.2005.05.008}
  {\path{doi:10.1016/j.jcss.2005.05.008}}.

\bibitem[BCL{\etalchar{+}}18]{BCLLM18}
S.~Bubeck, M.~B. Cohen, Y.~T. Lee, J.~R. Lee, and A.~Madry.
\newblock k-server via multiscale entropic regularization.
\newblock In {\em Proceedings of the 50th Annual {ACM} {SIGACT} Symposium on
  Theory of Computing, {STOC} 2018, Los Angeles, CA, USA, June 25-29, 2018},
  pages 3--16,
\newblock 2018, \href {http://dx.doi.org/10.1145/3188745.3188798}
  {\path{doi:10.1145/3188745.3188798}}.

\bibitem[Ben66]{Benson66}
C.~T. Benson.
\newblock Minimal regular graphs of girths eight and twelve.
\newblock {\em Canadian Journal of Mathematics}, 18:1091â€“1094,
\newblock 1966, \href {http://dx.doi.org/10.4153/CJM-1966-109-8}
  {\path{doi:10.4153/CJM-1966-109-8}}.

\bibitem[BFM86]{BFM86}
J.~Bourgain, T.~Figiel, and V.~Milman.
\newblock On {H}ilbertian subsets of finite metric spaces.
\newblock {\em Israel J. Math.}, 55(2):147--152,
\newblock 1986, \href {http://dx.doi.org/10.1007/BF02801990}
  {\path{doi:10.1007/BF02801990}}.

\bibitem[BFN19]{BFN19Ramsey}
Y.~Bartal, N.~Fandina, and O.~Neiman.
\newblock Covering metric spaces by few trees.
\newblock In {\em 46th International Colloquium on Automata, Languages, and
  Programming, {ICALP} 2019, July 9-12, 2019, Patras, Greece}, pages
  20:1--20:16,
\newblock 2019, \href {http://dx.doi.org/10.4230/LIPIcs.ICALP.2019.20}
  {\path{doi:10.4230/LIPIcs.ICALP.2019.20}}.

\bibitem[BGS16]{BGS16}
G.~E. Blelloch, Y.~Gu, and Y.~Sun.
\newblock A new efficient construction on probabilistic tree embeddings.
\newblock {\em CoRR}, abs/1605.04651, 2016.
\newblock
\newblock \url{https://arxiv.org/abs/1605.04651}, \href
  {http://arxiv.org/abs/1605.04651} {\path{arXiv:1605.04651}}.

\bibitem[BL16]{BL16}
G.~Borradaile and H.~Le.
\newblock Optimal dynamic program for r-domination problems over tree
  decompositions.
\newblock In {\em 11th International Symposium on Parameterized and Exact
  Computation, {IPEC} 2016, August 24-26, 2016, Aarhus, Denmark}, pages
  8:1--8:23,
\newblock 2016, \href {http://dx.doi.org/10.4230/LIPIcs.IPEC.2016.8}
  {\path{doi:10.4230/LIPIcs.IPEC.2016.8}}.

\bibitem[BLMN05a]{BLMN05b}
Y.~Bartal, N.~Linial, M.~Mendel, and A.~Naor.
\newblock On metric {R}amsey-type dichotomies.
\newblock {\em Journal of the London Mathematical Society}, 71(2):289--303,
\newblock 2005, \href {http://dx.doi.org/10.1112/S0024610704006155}
  {\path{doi:10.1112/S0024610704006155}}.

\bibitem[BLMN05b]{BLMN05}
Y.~Bartal, N.~Linial, M.~Mendel, and A.~Naor.
\newblock Some low distortion metric ramsey problems.
\newblock {\em Discret. Comput. Geom.}, 33(1):27--41,
\newblock 2005, \href {http://dx.doi.org/10.1007/s00454-004-1100-z}
  {\path{doi:10.1007/s00454-004-1100-z}}.

\bibitem[BLW17]{BLW17}
G.~Borradaile, H.~Le, and C.~Wulff{-}Nilsen.
\newblock Minor-free graphs have light spanners.
\newblock In {\em 2017 IEEE 58th Annual Symposium on Foundations of Computer
  Science}, FOCS '17, pages 767--778,
\newblock 2017, \href {http://dx.doi.org/10.1109/FOCS.2017.76}
  {\path{doi:10.1109/FOCS.2017.76}}.

\bibitem[BM04]{BM04multi}
Y.~Bartal and M.~Mendel.
\newblock Multiembedding of metric spaces.
\newblock {\em {SIAM} J. Comput.}, 34(1):248--259, 2004.
\newblock
\newblock preliminary version published in SODA 2003, \href
  {http://dx.doi.org/10.1137/S0097539703433122}
  {\path{doi:10.1137/S0097539703433122}}.

\bibitem[Bou85]{Bou85}
J.~Bourgain.
\newblock On lipschitz embedding of finite metric spaces in hilbert space.
\newblock {\em Israel Journal of Mathematics}, 52(1-2):46--52,
\newblock 1985, \href {http://dx.doi.org/10.1007/BF02776078}
  {\path{doi:10.1007/BF02776078}}.

\bibitem[BR10]{BR10}
A.~Babu and J.~Radhakrishnan.
\newblock An entropy based proof of the moore bound for irregular graphs.
\newblock {\em CoRR}, abs/1011.1058,
\newblock 2010, \href {http://arxiv.org/abs/1011.1058}
  {\path{arXiv:1011.1058}}.

\bibitem[CFKL20]{CFKL20}
V.~Cohen{-}Addad, A.~Filtser, P.~N. Klein, and H.~Le.
\newblock On light spanners, low-treewidth embeddings and efficient traversing
  in minor-free graphs.
\newblock {\em CoRR}, abs/2009.05039, 2020.
\newblock
\newblock To appear in FOCS 2020,\url{https://arxiv.org/abs/2009.05039}, \href
  {http://arxiv.org/abs/2009.05039} {\path{arXiv:2009.05039}}.

\bibitem[CG04]{CG04}
D.~E. Carroll and A.~Goel.
\newblock Lower bounds for embedding into distributions over excluded minor
  graph families.
\newblock In {\em Algorithms - {ESA} 2004, 12th Annual European Symposium,
  Bergen, Norway, September 14-17, 2004, Proceedings}, pages 146--156,
\newblock 2004, \href {http://dx.doi.org/10.1007/978-3-540-30140-0\_15}
  {\path{doi:10.1007/978-3-540-30140-0\_15}}.

\bibitem[CG12]{CG12}
T.~H. Chan and A.~Gupta.
\newblock Approximating {TSP} on metrics with bounded global growth.
\newblock {\em {SIAM} J. Comput.}, 41(3):587--617, 2012.
\newblock
\newblock preliminary version published in SODA 2008, \href
  {http://dx.doi.org/10.1137/090749396} {\path{doi:10.1137/090749396}}.

\bibitem[Che13]{C13}
S.~Chechik.
\newblock Compact routing schemes with improved stretch.
\newblock In {\em {ACM} Symposium on Principles of Distributed Computing,
  {PODC} '13, Montreal, QC, Canada, July 22-24, 2013}, pages 33--41,
\newblock 2013, \href {http://dx.doi.org/10.1145/2484239.2484268}
  {\path{doi:10.1145/2484239.2484268}}.

\bibitem[Che15]{C15}
S.~Chechik.
\newblock Approximate distance oracles with improved bounds.
\newblock In {\em Proceedings of the Forty-Seventh Annual {ACM} on Symposium on
  Theory of Computing, {STOC} 2015, Portland, OR, USA, June 14-17, 2015}, pages
  1--10,
\newblock 2015, \href {http://dx.doi.org/10.1145/2746539.2746562}
  {\path{doi:10.1145/2746539.2746562}}.

\bibitem[CJLV08]{CJLV08}
A.~Chakrabarti, A.~Jaffe, J.~R. Lee, and J.~Vincent.
\newblock Embeddings of topological graphs: Lossy invariants, linearization,
  and 2-sums.
\newblock In {\em 49th Annual {IEEE} Symposium on Foundations of Computer
  Science, {FOCS} 2008, October 25-28, 2008, Philadelphia, PA, {USA}}, pages
  761--770,
\newblock 2008, \href {http://dx.doi.org/10.1109/FOCS.2008.79}
  {\path{doi:10.1109/FOCS.2008.79}}.

\bibitem[CKM19]{CKM19}
V.~Cohen{-}Addad, P.~N. Klein, and C.~Mathieu.
\newblock Local search yields approximation schemes for k-means and k-median in
  euclidean and minor-free metrics.
\newblock {\em {SIAM} J. Comput.}, 48(2):644--667, 2019.
\newblock
\newblock preliminary version published in FOCS 2016, \href
  {http://dx.doi.org/10.1137/17M112717X} {\path{doi:10.1137/17M112717X}}.

\bibitem[Cow01]{Cowen01}
L.~Cowen.
\newblock Compact routing with minimum stretch.
\newblock {\em J. Algorithms}, 38(1):170--183, 2001.
\newblock
\newblock preliminary version published in SODA 1999, \href
  {http://dx.doi.org/10.1006/jagm.2000.1134}
  {\path{doi:10.1006/jagm.2000.1134}}.

\bibitem[DFHT05]{DFHT05}
E.~D. Demaine, F.~V. Fomin, M.~T. Hajiaghayi, and D.~M. Thilikos.
\newblock Fixed-parameter algorithms for (\emph{k}, \emph{r})-center in planar
  graphs and map graphs.
\newblock {\em {ACM} Trans. Algorithms}, 1(1):33--47, 2005.
\newblock
\newblock preliminary version published in ICALP 2003, \href
  {http://dx.doi.org/10.1145/1077464.1077468}
  {\path{doi:10.1145/1077464.1077468}}.

\bibitem[DHK05]{DHK05}
E.~D. Demaine, M.~Hajiaghayi, and K.~Kawarabayashi.
\newblock Algorithmic graph minor theory: Decomposition, approximation, and
  coloring.
\newblock In {\em Proceedings of the 46th Annual IEEE Symposium on Foundations
  of Computer Science}, pages 637--646,
\newblock 2005, \href {http://dx.doi.org/10.1109/SFCS.2005.14}
  {\path{doi:10.1109/SFCS.2005.14}}.

\bibitem[EEST08]{EEST08}
M.~Elkin, Y.~Emek, D.~A. Spielman, and S.~Teng.
\newblock Lower-stretch spanning trees.
\newblock {\em {SIAM} J. Comput.}, 38(2):608--628, 2008.
\newblock
\newblock preliminary version published in STOC 2005, \href
  {http://dx.doi.org/10.1137/050641661} {\path{doi:10.1137/050641661}}.

\bibitem[EFN18]{EFN18}
M.~Elkin, A.~Filtser, and O.~Neiman.
\newblock Prioritized metric structures and embedding.
\newblock {\em {SIAM} J. Comput.}, 47(3):829--858, 2018.
\newblock
\newblock preliminary version published in STOC 2015, \href
  {http://dx.doi.org/10.1137/17M1118749} {\path{doi:10.1137/17M1118749}}.

\bibitem[EGP03]{EGP03}
T.~Eilam, C.~Gavoille, and D.~Peleg.
\newblock Compact routing schemes with low stretch factor.
\newblock {\em J. Algorithms}, 46(2):97--114, 2003.
\newblock
\newblock preliminary version published in PODC 1998, \href
  {http://dx.doi.org/10.1016/S0196-6774(03)00002-6}
  {\path{doi:10.1016/S0196-6774(03)00002-6}}.

\bibitem[EILM16]{EILM16}
H.~Eto, T.~Ito, Z.~Liu, and E.~Miyano.
\newblock Approximability of the distance independent set problem on regular
  graphs and planar graphs.
\newblock In {\em Combinatorial Optimization and Applications - 10th
  International Conference, {COCOA} 2016, Hong Kong, China, December 16-18,
  2016, Proceedings}, pages 270--284,
\newblock 2016, \href {http://dx.doi.org/10.1007/978-3-319-48749-6\_20}
  {\path{doi:10.1007/978-3-319-48749-6\_20}}.

\bibitem[EKM14]{EKM14}
D.~Eisenstat, P.~N. Klein, and C.~Mathieu.
\newblock Approximating \emph{k}-center in planar graphs.
\newblock In {\em Proceedings of the Twenty-Fifth Annual {ACM-SIAM} Symposium
  on Discrete Algorithms, {SODA} 2014, Portland, Oregon, USA, January 5-7,
  2014}, pages 617--627,
\newblock 2014, \href {http://dx.doi.org/10.1137/1.9781611973402.47}
  {\path{doi:10.1137/1.9781611973402.47}}.

\bibitem[EN19]{EN19}
M.~Elkin and O.~Neiman.
\newblock Efficient algorithms for constructing very sparse spanners and
  emulators.
\newblock {\em {ACM} Trans. Algorithms}, 15(1):4:1--4:29, 2019.
\newblock
\newblock preliminary version published in SODA 2017, \href
  {http://dx.doi.org/10.1145/3274651} {\path{doi:10.1145/3274651}}.

\bibitem[FFKP18]{FFKP18}
A.~E. Feldmann, W.~S. Fung, J.~K{\"{o}}nemann, and I.~Post.
\newblock A (1+{\(\epsilon\)})-embedding of low highway dimension graphs into
  bounded treewidth graphs.
\newblock {\em {SIAM} J. Comput.}, 47(4):1667--1704, 2018.
\newblock
\newblock preliminary version published in ICALP 2015, \href
  {http://dx.doi.org/10.1137/16M1067196} {\path{doi:10.1137/16M1067196}}.

\bibitem[FGK20]{FGK20}
A.~Filtser, L.~Gottlieb, and R.~Krauthgamer.
\newblock Labelings vs. embeddings: On distributed representations of
  distances.
\newblock In {\em Proceedings of the 2020 {ACM-SIAM} Symposium on Discrete
  Algorithms, {SODA} 2020, Salt Lake City, UT, USA, January 5-8, 2020}, pages
  1063--1075,
\newblock 2020, \href {http://dx.doi.org/10.1137/1.9781611975994.65}
  {\path{doi:10.1137/1.9781611975994.65}}.

\bibitem[FGNW17]{FGNW17}
O.~Freedman, P.~Gawrychowski, P.~K. Nicholson, and O.~Weimann.
\newblock Optimal distance labeling schemes for trees.
\newblock In {\em Proceedings of the {ACM} Symposium on Principles of
  Distributed Computing, {PODC} 2017, Washington, DC, USA, July 25-27, 2017},
  pages 185--194,
\newblock 2017, \href {http://dx.doi.org/10.1145/3087801.3087804}
  {\path{doi:10.1145/3087801.3087804}}.

\bibitem[Fil19]{Fil19padded}
A.~Filtser.
\newblock On strong diameter padded decompositions.
\newblock In {\em Approximation, Randomization, and Combinatorial Optimization.
  Algorithms and Techniques, {APPROX/RANDOM} 2019, September 20-22, 2019,
  Massachusetts Institute of Technology, Cambridge, MA, {USA}}, pages
  6:1--6:21,
\newblock 2019, \href {http://dx.doi.org/10.4230/LIPIcs.APPROX-RANDOM.2019.6}
  {\path{doi:10.4230/LIPIcs.APPROX-RANDOM.2019.6}}.

\bibitem[FKS19]{FKS19}
E.~Fox{-}Epstein, P.~N. Klein, and A.~Schild.
\newblock Embedding planar graphs into low-treewidth graphs with applications
  to efficient approximation schemes for metric problems.
\newblock In {\em Proceedings of the 30th Annual ACM-SIAM Symposium on Discrete
  Algorithms}, SODA `19, page 1069–1088,
\newblock 2019, \href {http://dx.doi.org/10.1137/1.9781611975482.66}
  {\path{doi:10.1137/1.9781611975482.66}}.

\bibitem[FL21]{FL22Relaible}
A.~Filtser and H.~Le.
\newblock Reliable spanners: Locality-sensitive orderings strike back.
\newblock {\em CoRR}, abs/2101.07428,
\newblock 2021, \href {http://arxiv.org/abs/2101.07428}
  {\path{arXiv:2101.07428}}.

\bibitem[Fre87]{Frederickson87}
G.~N. Frederickson.
\newblock Fast algorithms for shortest paths in planar graphs, with
  applications.
\newblock {\em {SIAM} J. Comput.}, 16(6):1004--1022,
\newblock 1987, \href {http://dx.doi.org/10.1137/0216064}
  {\path{doi:10.1137/0216064}}.

\bibitem[FRT04]{FRT04}
J.~Fakcharoenphol, S.~Rao, and K.~Talwar.
\newblock A tight bound on approximating arbitrary metrics by tree metrics.
\newblock {\em J. Comput. Syst. Sci.}, 69(3):485--497, November 2004.
\newblock
\newblock preliminary version published in STOC 2003, \href
  {http://dx.doi.org/10.1016/j.jcss.2004.04.011}
  {\path{doi:10.1016/j.jcss.2004.04.011}}.

\bibitem[GKK{\etalchar{+}}01]{GKKPP01}
C.~Gavoille, M.~Katz, N.~A. Katz, C.~Paul, and D.~Peleg.
\newblock Approximate distance labeling schemes.
\newblock In {\em Algorithms - {ESA} 2001, 9th Annual European Symposium,
  Aarhus, Denmark, August 28-31, 2001, Proceedings}, pages 476--487,
\newblock 2001, \href {http://dx.doi.org/10.1007/3-540-44676-1\_40}
  {\path{doi:10.1007/3-540-44676-1\_40}}.

\bibitem[GKK17]{GKK17}
L.~Gottlieb, A.~Kontorovich, and R.~Krauthgamer.
\newblock Efficient regression in metric spaces via approximate lipschitz
  extension.
\newblock {\em {IEEE} Trans. Inf. Theory}, 63(8):4838--4849, 2017.
\newblock
\newblock preliminary version published in SIMBAD 2013, \href
  {http://dx.doi.org/10.1109/TIT.2017.2713820}
  {\path{doi:10.1109/TIT.2017.2713820}}.

\bibitem[GPPR04]{GPPR04}
C.~Gavoille, D.~Peleg, S.~P{\'{e}}rennes, and R.~Raz.
\newblock Distance labeling in graphs.
\newblock {\em J. Algorithms}, 53(1):85--112, 2004.
\newblock
\newblock preliminary version published in SODA 2001, \href
  {http://dx.doi.org/10.1016/j.jalgor.2004.05.002}
  {\path{doi:10.1016/j.jalgor.2004.05.002}}.

\bibitem[HBK{\etalchar{+}}03]{HBKKW03}
E.~Halperin, J.~Buhler, R.~M. Karp, R.~Krauthgamer, and B.~Westover.
\newblock {Detecting protein sequence conservation via metric embeddings}.
\newblock {\em Bioinformatics}, 19(suppl 1):i122--i129,
\newblock 07 2003, \href
  {http://arxiv.org/abs/https://academic.oup.com/bioinformatics/article-pdf/19/suppl\_1/i122/614436/btg1016.pdf}
  {\path{arXiv:https://academic.oup.com/bioinformatics/article-pdf/19/suppl\_1/i122/614436/btg1016.pdf}},
  \href {http://dx.doi.org/10.1093/bioinformatics/btg1016}
  {\path{doi:10.1093/bioinformatics/btg1016}}.

\bibitem[HHZ21]{HHZ21}
B.~Haeupler, D.~E. Hershkowitz, and G.~Zuzic.
\newblock Deterministic tree embeddings with copies for algorithms against
  adaptive adversaries.
\newblock {\em CoRR}, abs/2102.05168,
\newblock 2021, \href {http://arxiv.org/abs/2102.05168}
  {\path{arXiv:2102.05168}}.

\bibitem[Ind01]{Ind01}
P.~Indyk.
\newblock Algorithmic applications of low-distortion geometric embeddings.
\newblock In {\em 42nd Annual Symposium on Foundations of Computer Science,
  {FOCS} 2001, 14-17 October 2001, Las Vegas, Nevada, {USA}}, pages 10--33,
\newblock 2001, \href {http://dx.doi.org/10.1109/SFCS.2001.959878}
  {\path{doi:10.1109/SFCS.2001.959878}}.

\bibitem[Jor69]{Jordan69}
C.~Jordan.
\newblock Sur les assemblages de lignes.
\newblock {\em Journal f{\"u}r die reine und angewandte Mathematik},
  70:185--190,
\newblock 1869.

\bibitem[Kar89]{Karp89}
R.~M. Karp.
\newblock A 2k-competitive algorithm for the circle.
\newblock {\em Manuscript, August}, 5,
\newblock 1989.

\bibitem[KKM{\etalchar{+}}12]{KKMPT12}
M.~Khan, F.~Kuhn, D.~Malkhi, G.~Pandurangan, and K.~Talwar.
\newblock Efficient distributed approximation algorithms via probabilistic tree
  embeddings.
\newblock {\em Distributed Comput.}, 25(3):189--205, 2012.
\newblock
\newblock preliminary version published in PODC 2008, \href
  {http://dx.doi.org/10.1007/s00446-012-0157-9}
  {\path{doi:10.1007/s00446-012-0157-9}}.

\bibitem[KLMN05]{KLMN04}
R.~Krauthgamer, J.~R. Lee, M.~Mendel, and A.~Naor.
\newblock Measured descent: a new embedding method for finite metrics.
\newblock {\em Geometric and Functional Analysis}, 15(4):839--858, 2005.
\newblock
\newblock preliminary version published in FOCS 2004, \href
  {http://dx.doi.org/10.1007/s00039-005-0527-6}
  {\path{doi:10.1007/s00039-005-0527-6}}.

\bibitem[KLP19]{KLP19}
I.~Katsikarelis, M.~Lampis, and V.~T. Paschos.
\newblock Structural parameters, tight bounds, and approximation for (k,
  r)-center.
\newblock {\em Discret. Appl. Math.}, 264:90--117, 2019.
\newblock
\newblock preliminary version published in ISAAC 2017, \href
  {http://dx.doi.org/10.1016/j.dam.2018.11.002}
  {\path{doi:10.1016/j.dam.2018.11.002}}.

\bibitem[KLP20]{KLP20}
I.~Katsikarelis, M.~Lampis, and V.~T. Paschos.
\newblock Structurally parameterized d-scattered set.
\newblock {\em Discrete Applied Mathematics}, 2020.
\newblock
\newblock preliminary version published in WG 2018, \href
  {http://dx.doi.org/10.1016/j.dam.2020.03.052}
  {\path{doi:10.1016/j.dam.2020.03.052}}.

\bibitem[Le18]{LeThesis18}
H.~Le.
\newblock {\em Structural Results and Approximation Algorithms in Minor-free
  Graphs}.
\newblock PhD thesis, Oregon State University,
\newblock 2018.

\bibitem[Lem03]{Lemin01}
A.~Lemin.
\newblock On ultrametrization of general metric spaces.
\newblock {\em Proceedings of the American mathematical society},
  131(3):979--989,
\newblock 2003, \href {http://dx.doi.org/10.1090/S0002-9939-02-06605-4}
  {\path{doi:10.1090/S0002-9939-02-06605-4}}.

\bibitem[LLR95]{LLR95}
N.~Linial, E.~London, and Y.~Rabinovich.
\newblock The geometry of graphs and some of its algorithmic applications.
\newblock {\em Comb.}, 15(2):215--245, 1995.
\newblock
\newblock preliminary version published in FOCS 1994, \href
  {http://dx.doi.org/10.1007/BF01200757} {\path{doi:10.1007/BF01200757}}.

\bibitem[LUW95]{LUW95}
F.~Lazebnik, V.~A. Ustimenko, and A.~J. Woldar.
\newblock A new series of dense graphs of high girth.
\newblock {\em Bulletin of the American mathematical society}, 32(1):73--79,
\newblock 1995, \href {http://dx.doi.org/10.1090/S0273-0979-1995-00569-0}
  {\path{doi:10.1090/S0273-0979-1995-00569-0}}.

\bibitem[MN07]{MN07}
M.~Mendel and A.~Naor.
\newblock Ramsey partitions and proximity data structures.
\newblock {\em Journal of the European Mathematical Society}, 9(2):253--275,
  2007.
\newblock
\newblock preliminary version published in FOCS 2006, \href
  {http://dx.doi.org/10.4171/JEMS/79} {\path{doi:10.4171/JEMS/79}}.

\bibitem[MP15]{MP15}
D.~Marx and M.~Pilipczuk.
\newblock Optimal parameterized algorithms for planar facility location
  problems using voronoi diagrams.
\newblock In {\em Algorithms - {ESA} 2015 - 23rd Annual European Symposium,
  Patras, Greece, September 14-16, 2015, Proceedings}, pages 865--877,
\newblock 2015, \href {http://dx.doi.org/10.1007/978-3-662-48350-3\_72}
  {\path{doi:10.1007/978-3-662-48350-3\_72}}.

\bibitem[NT12]{NT12}
A.~Naor and T.~Tao.
\newblock Scale-oblivious metric fragmentation and the nonlinear dvoretzky
  theorem.
\newblock {\em Israel Journal of Mathematics}, 192(1):489--504,
\newblock 2012, \href {http://dx.doi.org/10.1007/s11856-012-0039-7}
  {\path{doi:10.1007/s11856-012-0039-7}}.

\bibitem[Pel00]{Peleg00Labling}
D.~Peleg.
\newblock Proximity-preserving labeling schemes.
\newblock {\em J. Graph Theory}, 33(3):167--176, 2000.
\newblock
\newblock preliminary version published in WG 1999, \href
  {http://dx.doi.org/10.1002/(SICI)1097-0118(200003)33:3<167::AID-JGT7>3.0.CO;2-5}
  {\path{doi:10.1002/(SICI)1097-0118(200003)33:3<167::AID-JGT7>3.0.CO;2-5}}.

\bibitem[PU89]{PU89}
D.~Peleg and E.~Upfal.
\newblock A trade-off between space and efficiency for routing tables.
\newblock {\em J. {ACM}}, 36(3):510--530,
\newblock 1989, \href {http://dx.doi.org/10.1145/65950.65953}
  {\path{doi:10.1145/65950.65953}}.

\bibitem[Rao99]{Rao99}
S.~Rao.
\newblock Small distortion and volume preserving embeddings for planar and
  {E}uclidean metrics.
\newblock In {\em Proceedings of the Fifteenth Annual Symposium on
  Computational Geometry, Miami Beach, Florida, USA, June 13-16, 1999}, pages
  300--306,
\newblock 1999, \href {http://dx.doi.org/10.1145/304893.304983}
  {\path{doi:10.1145/304893.304983}}.

\bibitem[RR98]{RR98}
Y.~Rabinovich and R.~Raz.
\newblock Lower bounds on the distortion of embedding finite metric spaces in
  graphs.
\newblock {\em Discret. Comput. Geom.}, 19(1):79--94,
\newblock 1998, \href {http://dx.doi.org/10.1007/PL00009336}
  {\path{doi:10.1007/PL00009336}}.

\bibitem[RS03]{RS03}
N.~Robertson and P.~D. Seymour.
\newblock Graph minors. {XVI}. {E}xcluding a non-planar graph.
\newblock {\em Journal of Combinatoral Theory Series B}, 89(1):43--76,
\newblock 2003, \href {http://dx.doi.org/10.1016/S0095-8956(03)00042-X}
  {\path{doi:10.1016/S0095-8956(03)00042-X}}.

\bibitem[Tal04]{Talwar04}
K.~Talwar.
\newblock Bypassing the embedding: algorithms for low dimensional metrics.
\newblock In {\em STOC '04: Proceedings of the thirty-sixth annual ACM
  symposium on Theory of computing}, pages 281--290. ACM Press,
\newblock 2004, \href
  {http://dx.doi.org/http://doi.acm.org/10.1145/1007352.1007399}
  {\path{doi:http://doi.acm.org/10.1145/1007352.1007399}}.

\bibitem[TZ01]{TZ01b}
M.~Thorup and U.~Zwick.
\newblock Compact routing schemes.
\newblock In {\em Proceedings of the Thirteenth Annual {ACM} Symposium on
  Parallel Algorithms and Architectures, {SPAA} 2001, Heraklion, Crete Island,
  Greece, July 4-6, 2001}, pages 1--10,
\newblock 2001, \href {http://dx.doi.org/10.1145/378580.378581}
  {\path{doi:10.1145/378580.378581}}.

\bibitem[TZ05]{TZ05}
M.~Thorup and U.~Zwick.
\newblock Approximate distance oracles.
\newblock {\em J. {ACM}}, 52(1):1--24,
\newblock 2005, \href {http://dx.doi.org/10.1145/1044731.1044732}
  {\path{doi:10.1145/1044731.1044732}}.

\bibitem[Wen91]{Wenger91}
R.~Wenger.
\newblock Extremal graphs with no c4's, c6's, or c10's.
\newblock {\em Journal of Combinatorial Theory, Series B}, 52(1):113 -- 116,
\newblock 1991, \href
  {http://dx.doi.org/https://doi.org/10.1016/0095-8956(91)90097-4}
  {\path{doi:https://doi.org/10.1016/0095-8956(91)90097-4}}.

\end{thebibliography}
\appendix
\addtocontents{toc}{\protect\setcounter{tocdepth}{1}}
\section{Path Distortion of Clan embeddings into ultrametrics}\label{app:PathDistortion}
In this section we provide briefly the modification and missing details required to obtain the path distortion property for our clan embedding into ultrametrics.
\begin{definition}[Path-distortion]	
	We say that the one-to-many embedding $f:X\rightarrow 2^Y$ between $(X,d_X)$ to $(Y,d_Y)$ has \emph{path-distortion} $t$ if for every sequence $\left(x_0,x_1,\dots,x_m\right)$ in $X$ there is a sequence $v'_0,\dots,v'_m$ in $Y$ where $x'_i\in f(x_i)$, such that $\sum_{i=0}^{m-1}d_Y(x'_i,x'_{i+1})\le t \cdot \sum_{i=1}^{m-1}d_X(x_i,x_{i+1})$.
\end{definition}

To obtain a clan embedding $(f,\chi)$ as in \Cref{lem:clanTreeMeasure}, the only modification required is to use the following strengthen version of \Cref{clm:ClanTreePartition} (the proof of which appears bellow).
\begin{claim}\label{clm:ClanTreePartitionAlt}
	\sloppy There is a point $v\in X$ and radius $R\in(0,\frac{\diam(X)}{2}]$,
	such that the sets ${P=B_{X}(v,R+\frac{1}{8(k+1)}\cdot\diam(X))}$,
	$Q=B_{X}(v,R)$, and $\bar{Q}=X\setminus Q$ satisfy the following properties:
	\begin{enumerate}
		\item $\min\{\mu(P),\mu(\bar{Q})\}\le\frac{2}{3}\cdot\mu(X)$, and $\diam(P)\le\frac12\cdot\diam(X)$.
		\item $\mu(P)\le\mu(Q)\cdot\left(\frac{\mu^{*}(X)}{\mu^{*}(P)}\right)^{\frac{1}{k}}$.		
	\end{enumerate} 
\end{claim}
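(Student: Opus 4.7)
The plan is to follow the structure of \Cref{clm:ClanTreePartition}, refining the region-growing argument so that the chosen radius simultaneously delivers a balanced partition. First I would pick $v \in X$ to minimize the ratio $\mu(B_X(v, \diam(X)/4))/\mu(B_X(v, \diam(X)/8))$, exactly as in the original proof, and set $\rho = \diam(X)/(8(k+1))$ with nested balls $Q_i = B_X(v, \diam(X)/8 + i\rho)$ for $i \in [0, k+1]$. Since any candidate $P = Q_{i^*+1}$ is contained in $B_X(v, \diam(X)/4)$, the diameter bound $\diam(P) \le \diam(X)/2$ of property~1 comes for free.

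The new ingredient is to choose the index $i^*$ carefully. I would call an index $i \in [0,k]$ \emph{invalid} if $\mu(Q_i) < \mu(X)/3$ and $\mu(Q_{i+1}) > 2\mu(X)/3$ simultaneously. The key observation is that at most one index is invalid: if $i_1 < i_2$ were both invalid, nesting $Q_{i_1+1} \subseteq Q_{i_2}$ would force $\mu(Q_{i_1+1}) > 2\mu(X)/3 > \mu(X)/3 > \mu(Q_{i_2}) \ge \mu(Q_{i_1+1})$, a contradiction. Hence the valid set $I \subseteq [0,k]$ satisfies $|I| \ge k$. I would then select $i^* \in I$ minimizing $\mu(Q_{i+1})/\mu(Q_i)$, and set $R = \diam(X)/8 + i^* \rho$, $P = Q_{i^*+1}$, $Q = Q_{i^*}$. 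By validity of $i^*$, either $\mu(P) \le 2\mu(X)/3$ or $\mu(\bar{Q}) = \mu(X) - \mu(Q) \le 2\mu(X)/3$, completing property~1.

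For property~2, the argument splits on whether an invalid index exists. If all $k+1$ indices are valid, standard telescoping of the minimum gives
\[
\frac{\mu(P)}{\mu(Q)} \le \left(\frac{\mu(Q_{k+1})}{\mu(Q_0)}\right)^{1/(k+1)},
\]
while if a single index $i^\dagger$ is invalid, it contributes a ratio strictly greater than $2$, so minimizing over the remaining $k$ valid indices yields
\[
\frac{\mu(P)}{\mu(Q)} \le \left(\frac{\mu(Q_{k+1})/\mu(Q_0)}{\mu(Q_{i^\dagger+1})/\mu(Q_{i^\dagger})}\right)^{1/k} \le \left(\frac{\mu(Q_{k+1})/\mu(Q_0)}{2}\right)^{1/k}.
\]
The minimality of $v$, combined with $\diam(P) \le \diam(X)/2$, gives $\mu(Q_{k+1})/\mu(Q_0) \le \mu^*(X)/\mu^*(P)$ by the same calculation as in the proof of \Cref{clm:ClanTreePartition}. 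Since $\mu^*(X)/\mu^*(P) \ge 1$, both bounds are dominated by $(\mu^*(X)/\mu^*(P))^{1/k}$, establishing property~2.

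The main obstacle is the tension between balance (which restricts the admissible indices) and the ratio bound (which normally benefits from an unconstrained minimization). The ``at most one invalid index'' lemma resolves this tension cleanly: losing one interval costs a factor of at most $2^{1/k}$ in the telescoping, which is exactly absorbed by the factor-$2$ gain recovered from skipping the invalid ratio. This is precisely why growing the interval count from $k$ to $k+1$ is the right (and minimal) price to pay for the balance guarantee.
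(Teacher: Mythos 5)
Your proposal is correct and follows essentially the same strategy as the paper: same choice of $v$, same nested balls $Q_i$ with step $\rho=\frac{\diam(X)}{8(k+1)}$, the same observation that a step crossing the middle band $[\frac13\mu(X),\frac23\mu(X)]$ forces a ratio greater than $2$, and the same final appeal to the minimality of $v$ (which, as you note, requires $\diam(P)\le\frac12\diam(X)$). The only difference is bookkeeping: the paper takes the unrestricted minimizer $i'$ and shifts to $i'+1$ when it is unbalanced (bounding the new ratio by $\frac{3}{2}$), whereas you exclude the (at most one) invalid index up front and absorb the lost telescoping factor by the factor-$2$ gain — both executions are valid.
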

As a result, the distortion gurantee  we will obtain will be $16(k+1)$ instead of $16 k$.
However, we will be guaranteed that recursively one of the two created clusters has measure at most $\frac{2}{3}\mu (X)$, and also that the diameter in the first cluster is bounded by half the diameter of $X$.
These are the only properties used in the proof of \cite{BM04multi} to obtain the path distortion gurantee. In particular, the exact same argument as in \cite{BM04multi} will imply the following result:
\begin{lemma}\label{lem:clanTreeMeasureAlt}
	Given an $n$-point metric space $(X,d_{X})$ with aspect ratio$^{\ref{foot:aspectRatio}}$ $\Phi$, $(\ge1)$-measure $\mu:X\rightarrow\mathbb{R}_{\ge1}$,
	and integer parameter $k\ge1$, there is a clan embedding $(f,\chi)$ into an ultrametric with multiplicative distortion $16(k+1)$, path distortion $O(k)\cdot\min\{\log n,\log \Phi\}$, and
	such that $\mathbb{E}_{x\sim\mu}[|f(x)|]\le\mu(X)^{1+\frac1k}$.
\end{lemma}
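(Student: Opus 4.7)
The plan is to replace \Cref{clm:ClanTreePartition} by the strengthened \Cref{clm:ClanTreePartitionAlt} inside the construction of \Cref{lem:clanTreeMeasure}, verify that this substitution preserves the distortion and clan-size analyses (with $k$ replaced by $k+1$ only in the distortion constant), and then leverage the two extra structural properties to import the path-distortion analysis of Bartal--Mendel~\cite{BM04multi} as a black box.

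First I would prove \Cref{clm:ClanTreePartitionAlt} by adapting the argument of \Cref{clm:ClanTreePartition}. Let $v$ minimize $\mu(B_X(v,\diam(X)/4))/\mu(B_X(v,\diam(X)/8))$, set $\rho=\diam(X)/(8(k+1))$, and define $Q_{i}=B_X(v,\diam(X)/8+i\rho)$ for $i=0,\dots,k+1$. Since $Q_{k+1}\subseteq B_X(v,\diam(X)/4)$, any candidate $P=Q_{i+1}$ automatically satisfies $\diam(P)\le \diam(X)/2$. Call an index $i\in\{0,\dots,k\}$ \emph{good} if the partition with $R=\diam(X)/8+i\rho$ satisfies property~(1), i.e.\ $\mu(Q_i)\ge \mu(X)/3$ or $\mu(Q_{i+1})\le 2\mu(X)/3$. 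By monotonicity $\mu(Q_0)\le \cdots \le \mu(Q_{k+1})$, at most one index can violate both inequalities simultaneously (any $j$ to the right of a bad index has $\mu(Q_j)\ge 2\mu(X)/3\ge \mu(X)/3$; any $j$ to the left has $\mu(Q_{j+1})\le \mu(X)/3\le 2\mu(X)/3$), so at least $k$ of the $k+1$ rings are good. Picking the good ring minimizing $r_i=\mu(Q_{i+1})/\mu(Q_i)$ and applying AM-GM on at least $k$ ratios, each $\ge 1$,
\[
r_{i^\star}\;\le\;\Bigl(\prod_{j\text{ good}}r_j\Bigr)^{1/k}\;\le\;\Bigl(\tfrac{\mu(Q_{k+1})}{\mu(Q_0)}\Bigr)^{1/k}\;\le\;\Bigl(\tfrac{\mu^*(X)}{\mu^*(P)}\Bigr)^{1/k},
\]
where the last inequality is the same chain as in the original proof, using minimality of $v$ together with $\diam(P)\le\diam(X)/2$. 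This establishes both new properties.

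Next, the construction and analysis of \Cref{lem:clanTreeMeasure} transfer verbatim with \Cref{clm:ClanTreePartitionAlt} in place of \Cref{clm:ClanTreePartition}. The multiplicative distortion increases to $16(k+1)$ because the overlap annulus has width $\diam(X)/(8(k+1))$, so in the case ``$y\notin P$'' one obtains $d_X(x,y)\ge \diam(X)/(16(k+1))$ and the root label $\diam(X)$ contributes a $16(k+1)$ factor. The clan-size recursion is unchanged because property~(2) uses the same exponent $1/k$, and the key inequality $\mu(P)\mu^*(P)^{1/k}\le \mu(Q)\mu^*(X)^{1/k}$ still holds.

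Finally, the new partition endows the recursion tree with diameter halving (property~1) and a constant-factor $\mu$-measure shrinkage on one side of every split (property~2). This is precisely the structural setup required by the path-distortion analysis of Bartal--Mendel~\cite{BM04multi}, which I can invoke as a black box: the depth of the recursion tree is $O(\min\{\log n,\log\Phi\})$ (the $\log\Phi$ factor from halving the diameter until no cluster contains two distinct points, the $\log n$ factor from the shrinkage argument applied to cardinality), and each recursion level contributes $O(k)$ to the length of any reconstructed sequence of copies, giving total path distortion $O(k)\cdot\min\{\log n,\log\Phi\}$. The main obstacle is the pigeonhole step inside \Cref{clm:ClanTreePartitionAlt} establishing that at most one of the $k+1$ rings can fail the $\tfrac{2}{3}$-balance; once that is pinned down, the rest of the proof is a routine recombination of the analysis of \Cref{lem:clanTreeMeasure} with the path-distortion argument from~\cite{BM04multi}.
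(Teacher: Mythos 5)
Your proposal is correct and follows essentially the same route as the paper: strengthen the partition claim with the $\tfrac{2}{3}$-balance and diameter-halving properties using $k+2$ concentric rings of width $\diam(X)/(8(k+1))$, observe that the recursion of \Cref{lem:clanTreeMeasure} goes through unchanged except for the $16(k+1)$ distortion constant, and invoke the path-distortion analysis of \cite{BM04multi} as a black box. The only (valid) deviation is in the pigeonhole step: you discard the at-most-one unbalanced ring up front and take the minimum ratio over the remaining $\ge k$ good rings via AM--GM, whereas the paper takes the global minimizer and, if it is unbalanced, shifts to the adjacent ring and checks numerically that its ratio ($\le 3/2$) is still dominated by the geometric mean; both arguments work.
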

Using the exact same arguments we had to obtain \Cref{thm:ClanUltrametric} from \Cref{lem:clanTreeMeasure}, we  conclude:
\begin{restatable}[Clan embedding into ultrametric]{theorem}{ClanUltrametricAlt}
	\label{thm:ClanUltrametricAlt}
	Given an $n$-point metric space $(X,d_{X})$ with aspect ration $\Phi$, and parameter $\epsilon\in(0,1]$, there is a uniform distribution $\mathcal{D}$ over $O(n\log n/\epsilon^2)$ clan embeddings $(f,\chi)$ into ulrametrics	with multiplicative distortion $O(\frac{\log n}{\epsilon})$, path distortion $O(\frac{\log n}{\epsilon})\cdot\min\{\log n,\log \Phi\}$, and such that for every point $x\in X$, $\mathbb{E}_{f\sim\mathcal{D}}[|f(x)|]\le1+\epsilon$.
	
	In addition, for every $k\in \N$, there is a uniform distribution $\mathcal{D}$ over $O(n^{1+\frac{2}{k}}\log n)$ clan embeddings $(f,\chi)$
	into ulrametrics with multiplicative distortion $16(k+1)$, path distortion $O(k)\cdot\min\{\log n,\log \Phi\}$, and such that for every point $x\in X$, $\mathbb{E}_{f\sim\mathcal{D}}[|f(x)|]= O(n^{\frac1k})$.
\end{restatable}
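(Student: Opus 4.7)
The plan is to reduce Theorem~\ref{thm:ClanUltrametricAlt} to the distributional Lemma~\ref{lem:clanTreeMeasureAlt} by exactly the minimax/MWU route already used to derive \Cref{thm:ClanUltrametric} from \Cref{lem:clanTreeMeasure}; the only new technical work sits in (i) establishing the strengthened partition Claim~\ref{clm:ClanTreePartitionAlt}, and (ii) exporting from the deterministic recursion a \emph{path}-distortion bound of $O(k)\cdot\min\{\log n,\log\Phi\}$ via the Bartal--Mendel template~\cite{BM04multi}. The upgrade to multiplicative distortion $16(k+1)$ (instead of $16k$) is purely the cost of having one extra ``buffer'' shell in the ball-growing step.

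To prove Claim~\ref{clm:ClanTreePartitionAlt}, I would pick $v\in X$ minimizing $\mu(B_X(v,\Delta/4))/\mu(B_X(v,\Delta/8))$ (as in the original claim, with $\Delta=\diam(X)$), set $\rho=\Delta/(8(k+1))$ and $Q_i=B_X(v,\Delta/8+i\rho)$ for $i=0,1,\dots,k+1$. The outer ball $Q_{k+1}$ has radius $\Delta/4$, so $\diam(P)\le\diam(Q_{k+1})\le\Delta/2$ automatically for any candidate $P\subseteq Q_{k+1}$. Call index $i\in\{0,\dots,k\}$ \emph{good} if either $\mu(Q_{i+1})\le\tfrac{2}{3}\mu(X)$ or $\mu(Q_i)\ge\tfrac{1}{3}\mu(X)$. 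A \emph{bad} index is one where $\mu$ jumps from $<\tfrac13\mu(X)$ at $Q_i$ to $>\tfrac23\mu(X)$ at $Q_{i+1}$; since $(\mu(Q_i))$ is monotone, at most one of the $k+1$ indices is bad, so there are at least $k$ good ones. Each individual ratio $r_i=\mu(Q_{i+1})/\mu(Q_i)\ge 1$, and the product of all $k+1$ ratios is $\mu(Q_{k+1})/\mu(Q_0)$, hence the product of the $\ge k$ good ratios is at most the same quantity, and the smallest good ratio is at most $(\mu(Q_{k+1})/\mu(Q_0))^{1/k}$. Finally, by the choice of $v$ and the same comparison with the extremal point $u_P$ used in the original proof, $\mu(Q_{k+1})/\mu(Q_0)\le\mu^*(X)/\mu^*(P)$, which yields both conclusions of the claim simultaneously.

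Feeding this strengthened claim into the recursion of \Cref{lem:clanTreeMeasure} (with constant $16(k+1)$ instead of $16k$ because the buffer width is now $\Delta/(8(k+1))$) gives a clan embedding $(f,\chi)$ into an ultrametric with multiplicative distortion $16(k+1)$ and $\mathbb{E}_{x\sim\mu}[|f(x)|]\le\mu(X)^{1+1/k}$, word-for-word by the original induction. The new ingredient is the path-distortion claim, which I would establish by adapting \cite{BM04multi}: for a path $x_0,x_1,\dots,x_m$, assign each consecutive pair $(x_{i-1},x_i)$ to the highest recursion node whose cluster contains both points, charge the ``extra'' length incurred at that node to $d_X(x_{i-1},x_i)$ multiplied by the distortion $16(k+1)$, and argue that each original edge pays at most once per recursion level it survives. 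The number of levels is bounded in two different ways: the diameter halves every step (Property~1 of the strengthened claim gives $\diam(P)\le\diam(X)/2$), so in $O(\log\Phi)$ steps the cluster diameter falls below the minimum pairwise distance and a pair cannot be further separated; simultaneously, Property~1 also ensures one of the two child clusters has measure at most $\tfrac23\mu(X)$, so any particular point lies in shrinking-measure clusters along most branches, giving an $O(\log n)$ bound on relevant levels. Taking the minimum yields path distortion $O(k)\cdot\min\{\log n,\log\Phi\}$, as claimed in Lemma~\ref{lem:clanTreeMeasureAlt}.

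The hard part will be executing the path-distortion charging cleanly: one must verify that the copies $x_i'\in f(x_i)$ can be chosen \emph{consistently} along the path (not greedily per pair) so that the sum telescopes, and that both the diameter-halving and measure-$2/3$ properties can be combined into a single level counter. Once Lemma~\ref{lem:clanTreeMeasureAlt} is in hand, I would conclude Theorem~\ref{thm:ClanUltrametricAlt} verbatim as in Section~\ref{subsec:clan-MWU}: the $(\ge1)$-to-probability-measure translation of Lemma~\ref{lem:clanTreeProbability} applies unchanged, the remark on $|f(X)|$ bounds gives an $(O(n),1+\eps/2,O(\log n/\eps))$- and an $(O(n^{1+1/k}),O(n^{1/k}),16(k+1))$-bounded \textsc{Oracle} (now additionally outputting the path-distortion guarantee, which is preserved by sampling from a distribution of such embeddings), and invoking \Cref{lem:MWU-Distribution} produces the uniform distributions of the stated support sizes.
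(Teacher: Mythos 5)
Your proposal matches the paper's proof: the paper establishes exactly the strengthened partition claim (Claim~\ref{clm:ClanTreePartitionAlt}) with radius step $\diam(X)/(8(k+1))$, verifies only the diameter-halving and $\tfrac23$-measure properties and then defers the path-distortion charging entirely to the argument of \cite{BM04multi}, and concludes via the same measure-translation, minimax, and MWU pipeline used for \Cref{thm:ClanUltrametric}. Your ``at most one bad index'' count is a mildly different packaging of the paper's ``take the ratio-minimizing index and shift by one if it violates the balance condition,'' but the two are equivalent, and the charging details you flag as remaining work are likewise left as a citation in the paper.
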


\begin{remark}
	The spanning clan embedding construction for \Cref{thm:ClanSpanningTree} actually provides path-distortion gurantee without modification. This is as in the \texttt{create-petal} procedure (\Cref{alg:create-petal}), we always create a petal (cluster) with measure at most $\frac{1}{2}\mu(Y)$ (and bounded radius). 
\end{remark}

\begin{proof}[Proof of \Cref{clm:ClanTreePartitionAlt}]
	Let $v$ be the point minimizing the ratio $\frac{\mu\left(B_{X}(v,\frac{\diam(X)}{4})\right)}{\mu\left(B_{X}(v,\frac{\diam(X)}{8})\right)}$.
	Set $\rho=\frac{\diam(X)}{8(k+1)}$, and for $i\in[0,k]$
	let $Q_{i}=B_{X}(v,\frac{\diam(X)}{8}+i\cdot\rho)$. Let $i'\in[0,k-1]$
	be the index minimizing $\frac{\mu(Q_{i+1})}{\mu(Q_{i})}$. Then, 
	\[
	\left(\frac{\mu(Q_{k+1})}{\mu(Q_{0})}\right)^{\frac1k}\ge \left(\frac{\mu(Q_{k})}{\mu(Q_{0})}\right)^{\frac1k}=\left(\frac{\mu(Q_{1})}{\mu(Q_{0})}\cdot\frac{\mu(Q_{2})}{\mu(Q_{1})}\cdots\frac{\mu(Q_{k})}{\mu(Q_{k-1})}\right)^{\frac1k}\ge\left(\frac{\mu(Q_{i'+1})}{\mu(Q_{i'})}\right)^{k\cdot\frac{1}{k}}=\frac{\mu(Q_{i'+1})}{\mu(Q_{i'})}~.
	\]
	If $\mu(Q_{i'+1})\le\frac{2}{3}\mu(X)$ or $\mu(Q_{i'})\ge\frac{1}{3}\mu(X)$, fix $i=i'$. Otherwise, fix $i=i'+1$. Note that $i\in [0,k]$.
	Set $R=\frac{\diam(X)}{8}+i\cdot\rho$, and $P=B_{X}(v,R+\rho)$, $Q=B_{X}(v,R)$, $\bar{Q}=X\setminus Q$. 
	Note that $\diam(P)\le2\cdot(\frac{\diam(X)}{8}+(k+1)\cdot\rho)=\frac{\diam(X)}{2}$.
	
	If $i=i'$, then clearly $\frac{\mu(P)}{\mu(Q)}\le\left(\frac{\mu(Q_{k+1})}{\mu(Q_{0})}\right)^{\frac{1}{k}}$ and $\min\{\mu(P),\mu(\bar{Q})\}\le\frac{2}{3}\cdot\mu(X)$. 
	Otherwise, $i=i'+1$, thus $\mu(Q_{i'+1})>\frac{2}{3}\mu(X)$ and $\mu(Q_{i'})<\frac{1}{3}\mu(X)$, implying that $\frac{\mu(Q_{i'+1})}{\mu(Q_{i'})}>2$ and thus 
	\[
	\frac{\mu(P)}{\mu(Q)}=\frac{\mu(Q_{i+1})}{\mu(Q_{i})}=\frac{\mu(Q_{i'+2})}{\mu(Q_{i'+1})}\le\frac{\mu(X)}{\frac{2}{3}\mu(X)}=\frac32<\frac{\mu(Q_{i'+1})}{\mu(Q_{i'})}\le\left(\frac{\mu(Q_{k+1})}{\mu(Q_{0})}\right)^{\frac{1}{k}}~.
	\]
	Furthermore, $\mu(\bar{Q})=\mu(X)-\mu(Q_{i'+1})<\frac{1}{3}\mu(X)$. 
	In both cases we obtain that $\min\{\mu(P),\mu(\bar{Q})\}\le\frac{2}{3}\cdot\mu(X)$ and $\frac{\mu(P)}{\mu(Q)}\le\left(\frac{\mu(Q_{k+1})}{\mu(Q_{0})}\right)^{\frac{1}{k}}$. It remains to prove the second required property.
	

	Let $u_{P}$ be the point defining $\mu^{*}(P)$, that is $\mu^{*}(P)	=\mu\left(B_{P}(u_{P},\frac{\diam(P)}{4}\right)\le\mu\left(B_{P}(u_{P},\frac{\diam(X)}{8}\right)$.
	Using the minimality of $v$, it holds that
	\[
	\frac{\mu(P)}{\mu(Q)}\le\left(\frac{\mu(Q_{k})}{\mu(Q_{0})}\right)^{\frac{1}{k}}=\left(\frac{\mu\left(B_{X}(v,\frac{\diam(X)}{4})\right)}{\mu\left(B_{X}(v,\frac{\diam(X)}{8})\right)}\right)^{\frac{1}{k}} \stackrel{(*)}{\le} \left(\frac{\mu\left(B_{X}(u_{P},\frac{\diam(X)}{4})\right)}{\mu\left(B_{X}(u_{P},\frac{\diam(X)}{8})\right)}\right)^{\frac{1}{k}}\le\left(\frac{\mu^{*}\left(X\right)}{\mu^{*}\left(P\right)}\right)^{\frac{1}{k}}~.
	\]
	where $(*)$ is due to the choice of $v$.
\end{proof}
\section[Local Search Algorithms]{Local Search Algorithms for Metric Becker Problems}\label{sec:LocalSearch}
In this section we present PTAS's $^{\ref{foot:approximationSchemes}}$ for the metric $\rho$-dominating/independent set problems under the uniform measure. Both algorithms are local search algorithms. The analysis of the algorithm for the metric $\rho$-dominating set problem was presented in~\cite{LeThesis18}.
This analysis uses techniques similar to the ones used in \cite{CKM19} to construct PTAS for the $k$-means and $k$-median problems in minor-free graphs. The analysis for the metric $\rho$-independent set problem is original (even though similar).

In both proofs we will use $r$-divisions. The following theorem follows from \cite{Frederickson87,AST90} (see \cite{CKM19} for details).
\begin{theorem}[\cite{Frederickson87,AST90}]\label{thm:r-division}
	For every graph $H$, there is an absolute constant $c_H$ such that every $r\in\N$, and every $n$-vertex $H$-minor-free graph $G=(V,E)$, the vertices of $G$ can be divided into clusters $\mathcal{R}$ such that:
	\begin{enumerate}
		\item For every edge $\{u,v\}\in E$, there is a cluster $C\in\mathcal{R}$ such that $u,v\in C$.
		\item For every $C\in\mathcal{R}$, $|C|\le r^2$.
		\item Let $\mathcal{B}$ be the set of vertices appearing in more than a single cluster, called boundary vertices, then $\sum_{C\in\mathcal{R}}|C\cap\mathcal{B}|\le c_H\cdot\frac nr$.
	\end{enumerate} 
\end{theorem}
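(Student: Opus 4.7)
The plan is to derive the $r$-division by recursively applying a balanced vertex separator theorem for $H$-minor-free graphs. I would invoke the Alon--Seymour--Thomas separator theorem, which gives a constant $c'_H$ such that every $n$-vertex $H$-minor-free graph admits a vertex set $S$ of size $|S|\le c'_H \sqrt{n}$ whose removal breaks the graph into components of size at most $\tfrac{2}{3}n$. Since the class of $H$-minor-free graphs is closed under subgraphs, we may apply this repeatedly.

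The recursion is as follows. On input $G'$, if $|V(G')|\le r^2$, output the single cluster $V(G')$. Otherwise, compute a separator $S$ as above, let $V_1,\dots,V_k$ be the components of $G'\setminus S$ (grouped if necessary so that each part has size at most $\tfrac{2}{3}|V(G')|$), and recurse on each $G'[V_i\cup S]$. Every edge $\{u,v\}$ of the input graph lies in some recursive subproblem by construction, so at termination every edge is contained in some output cluster, giving property~1. Property~2 holds by the stopping condition.

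The main technical step is bounding $\sum_{C\in\mathcal R}|C\cap\mathcal B|$. Each boundary occurrence corresponds to a vertex $v$ of some separator $S$ produced at some node $\eta$ of the recursion tree, together with a leaf of the subtree rooted at $\eta$ in which $v$ appears. I would prove by induction on $n$ that the total boundary contribution generated within a subproblem of size $n$ is at most $\alpha_H \cdot n/r$ for an appropriate constant $\alpha_H$ depending only on $H$. The recurrence has the form
\[
B(n)\ \le\ B(n_1)+\cdots+B(n_k)+c'_H \sqrt{n}\cdot L(n),
\]
where $\sum_i n_i \le n + O(\sqrt{n})$, each $n_i\le \tfrac{2}{3}n$, and $L(n)$ is the number of leaves in the recursion below the current node (the multiplicity with which separator vertices of $S$ appear in output clusters). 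A standard calculation shows $L(n)=O(n/r^2)$, so the additive term at a node of size $n$ is $O(\sqrt{n}\cdot n/r^2)$, and the geometric summation over the recursion tree is dominated by the top level and yields $B(n)=O(n/r)$, matching the claimed bound with $c_H$ absorbing all $H$-dependent constants.

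The step I expect to require the most care is the multiplicity accounting: a vertex placed in a separator at some internal node of the recursion tree gets duplicated along every root-to-leaf path through that node, so one must avoid double-counting and verify that the $\tfrac{2}{3}$-balance of the separator is enough to force $L(n)=O(n/r^2)$ rather than something larger. The cleanest way is probably to prove the leaf-count bound first (via the balance property and stopping rule), then feed it into the recurrence for $B(n)$ and solve via the Akra--Bazzi or master-theorem style argument tailored to non-uniform splits.
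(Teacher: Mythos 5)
The paper does not prove this statement; it is imported as a black box from Frederickson and Alon--Seymour--Thomas (with details in \cite{CKM19}), so the only question is whether your from-scratch argument is sound. Your overall strategy --- recursive application of the ALS separator, stopping at size $r^2$, with clusters $G'[V_i\cup S]$ --- is exactly the standard route, and properties 1 and 2 go through as you say (provided you group the components of $G'\setminus S$ into $O(1)$ balanced parts, which you implicitly assume when writing $\sum_i n_i\le n+O(\sqrt n)$).

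The gap is in the solution of your recurrence for the boundary count. Charging each separator vertex of a node $\eta$ its full leaf-multiplicity $L(n)=O(n/r^2)$ makes the additive term at the root $\Theta\bigl(\sqrt n\cdot n/r^2\bigr)=\Theta(n^{3/2}/r^2)$, which already exceeds the target $n/r$ as soon as $n>r^2$; moreover the sum over the recursion tree under this charging is dominated by the \emph{top} level precisely because the per-node term $m^{3/2}/r^2$ is superlinear in the node size $m$, so solving the recurrence honestly yields $B(n)=\Theta(n^{3/2}/r^2)$, not $O(n/r)$ (for $n=r^4$ this is the trivial bound $n$). The statement "the geometric summation \dots yields $B(n)=O(n/r)$" is therefore false as written. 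The correct accounting is different: since each node has only $k=O(1)$ children after grouping, a separator vertex spawns only $k-1=O(1)$ \emph{new} copies at that node, so the total duplication over the whole tree is $O\bigl(\sum_{\eta}\sqrt{n_\eta}\bigr)$; this sum is \emph{bottom}-heavy (at level $i$ it is roughly $\sqrt n\cdot 2^{i/2}$) and is dominated by the last level, giving $O\bigl((n/r^2)\cdot r\bigr)=O(n/r)$. Finally, $\sum_{C}|C\cap\mathcal B|=\sum_{v:\,m_v\ge 2}m_v\le 2\sum_v(m_v-1)$, i.e., at most twice the total duplication, which closes the argument. Equivalently: show $\sum_C|C|\le n+O(n/r)$ first, and read off property 3 from that.
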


\subsection{Local search for $\rho$-dominating set under uniform measure}

We state and prove the theorem here when the set of terminals $\mathcal{K}=V$, however it can be easily accommodated to deal with a general terminal set.

\begin{restatable}[]{theorem}{CKM19}
	\label{thm:CKM19}
	There is a polynomial approximation scheme (PTAS) for the metric $\rho$-dominating set problem in $H$-minor-free graphs under the uniform measure.\\
	Specifically, given a weighted $n$-vertex $H$-minor-free graph $G=(V,E,w)$,  and parameters $\eps\in(0,\frac12)$, $\rho>0$, in $n^{O_{|H|}(\eps^{-2})}$ time, one can find a $\rho$-dominating set $S\subseteq V$ such that for every $\rho$-dominating set $\tilde{S}$,
	$|S|\le (1+\eps)|\tilde{S}|$.
\end{restatable}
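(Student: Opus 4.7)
The plan is a standard local-search algorithm with swap size $t=\Theta(c_H^2/\eps^2)$, where $c_H$ is the constant from \Cref{thm:r-division}. Starting from the trivial $\rho$-dominating set $S=V$, repeatedly perform the following step: if there exist $A\subseteq S$ and $B\subseteq V\setminus S$ with $|A|,|B|\le t$ and $|B|<|A|$ such that $(S\setminus A)\cup B$ remains a $\rho$-dominating set, replace $S$ by $(S\setminus A)\cup B$. Each iteration strictly decreases $|S|$, so there are at most $n$ iterations, and each one can be executed by brute-force enumeration in $n^{O(t)}=n^{O_H(\eps^{-2})}$ time.

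The heart of the argument is to show that any $t$-locally-optimal $S$ satisfies $|S|\le(1+\eps)|S^*|$, where $S^*$ is an optimal $\rho$-dominating set; we may assume $S\cap S^*=\emptyset$. For every $v\in V$, fix $\sigma(v)\in S$ and $\sigma^*(v)\in S^*$ with $d_G(v,\sigma(v)),d_G(v,\sigma^*(v))\le\rho$. Form the \emph{exchange graph} $G'$ on vertex set $S\cup S^*$, containing the edge $\{\sigma(v),\sigma^*(v)\}$ for each $v\in V$.

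The central structural claim, and the main technical obstacle, is that $G'$ is itself $H$-minor-free. I would prove this in the style of Cohen-Addad, Klein and Mathieu~\cite{CKM19}: partition $V$ into connected branch sets, one per vertex of $S\cup S^*$, where the branch set of $u\in S$ (resp.\ $u^*\in S^*$) consists of $u$ together with portions of shortest paths in $G$ that witness $d_G(v,\sigma(v))\le\rho$ (resp.\ $d_G(v,\sigma^*(v))\le\rho$), disambiguated by a careful priority rule so that the branch sets remain disjoint and connected. By construction, for every $v$ the branch sets of $\sigma(v)$ and $\sigma^*(v)$ are adjacent in the quotient, so $G'$ arises as a minor of $G$ and therefore inherits $H$-minor-freeness. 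The delicate part is designing the priority rule so that both connectivity of each branch set and the presence of every required edge are simultaneously preserved.

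Given $H$-minor-freeness of $G'$, apply \Cref{thm:r-division} with $r=\Theta(c_H/\eps)$, so that $r^2\le t$, obtaining a division $\mathcal{R}$ with boundary vertex set $\mathcal{B}$. For each $C\in\mathcal{R}$, consider the candidate swap $A_C=(C\setminus\mathcal{B})\cap S$ and $B_C=(C\cap S^*)\cup(C\cap\mathcal{B}\cap S)$; both have size at most $|C|\le r^2\le t$. The swap is feasible because for any $v\in V$, either $\sigma(v)\notin A_C$, in which case $\sigma(v)$ still dominates $v$, or $\sigma(v)$ is an interior vertex of $C$, in which case the edge $\{\sigma(v),\sigma^*(v)\}$ of $G'$ lies inside $C$, forcing $\sigma^*(v)\in C\cap S^*\subseteq B_C$. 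Local optimality therefore gives $|A_C|\le|B_C|$; summing over $C\in\mathcal{R}$ and using $\sum_{C}|C\cap\mathcal{B}|\le c_H(|S|+|S^*|)/r$ to bound both the boundary $S$-vertices added to each $B_C$ and the overcounting of $S^*$-vertices that lie in $\mathcal{B}$ yields
\[
|S|\;\le\;\frac{1+O(c_H/r)}{1-O(c_H/r)}\cdot|S^*|\;\le\;(1+\eps)\,|S^*|,
\]
completing the proof.
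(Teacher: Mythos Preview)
Your overall architecture (local search with swap size $\Theta(\eps^{-2})$, then an $r$-division of an exchange graph on $S\cup S^*$) is exactly the paper's. The gap is in the exchange graph itself. You define $G'$ with a direct edge $\{\sigma(v),\sigma^*(v)\}$ for every client $v$, and propose to realize it as a minor via branch sets made of pieces of shortest paths. But the shortest path from $v$ to $\sigma(v)$ and from $v$ to $\sigma^*(v)$ already share $v$, and paths from different clients can interleave arbitrarily; there is no evident priority rule that keeps the branch sets disjoint \emph{and} connected \emph{and} still touching for every required edge. In fact, using the Voronoi partition with centers $S\cup S^*$ (the natural branch sets) one sees that along the shortest path from $v$ to $\sigma^*(v)$ the Voronoi centers may pass through other $S$-vertices, so $\{\sigma(v),\sigma^*(v)\}$ need not be an edge of the contracted graph; your $G'$ is not in general a subgraph of the obvious minor. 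The ``careful priority rule'' you defer to is precisely the missing idea.

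The paper sidesteps this entirely. It takes the Voronoi partition of $V$ with respect to $S\cup S_{\opt}$ and contracts each cell, obtaining a graph $\tilde G$ that is trivially a minor of $G$; the edges of $\tilde G$ are between \emph{adjacent Voronoi cells}, not between $\sigma(v)$ and $\sigma^*(v)$. The price is a slightly subtler feasibility argument: for a client $u$ whose nearest $S$-center $v_1$ is interior to a region $C$ and whose nearest $S_{\opt}$-center $v_2$ lies outside $C$, walk along the shortest path $v_1\to u\to v_2$. The induced sequence of Voronoi centers is a walk in $\tilde G$ from $v_1\in C$ to $v_2\notin C$, so it must hit some $v_3\in C\cap\mathcal B$; the Voronoi property then gives $d_G(u,v_3)\le d_G(u,v_2)\le\rho$, so the swapped set $(S\setminus C)\cup(C\cap(\mathcal B\cup S_{\opt}))$ still dominates $u$. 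From there the counting is the same as yours. In short: trade your clean one-edge feasibility step (which needs an unproved minor claim) for the paper's trivially-a-minor Voronoi graph plus a path-tracing feasibility argument.
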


\begin{algorithm}[t]
	\caption{\texttt{Local search algorithm for metric $\rho$-dominating set}}\label{alg:CKM19}
	\DontPrintSemicolon
	\SetKwInOut{Input}{input}\SetKwInOut{Output}{output}
	\Input{$n$ vertex graph $G=(V,E,w)$, parameters $\rho,s$}
	\Output{$\rho$-dominating set $S$}
	\BlankLine
	$S\leftarrow V$\;
	\While{$\exists$ $\rho$-dominating set $S'\subseteq V$ s.t. $|S'|<|S|$ and $|S\setminus S'|+|S'\setminus S|\le s$}{
		$S\leftarrow S'$\;
	}

	\Return $S$\;
\end{algorithm}

\hspace{-17pt}\emph{Proof.}\ \ 
Set $r=\frac{4c_H}{\eps}$ where $c_H$ is the constant from \Cref{thm:r-division} w.r.t. $H$. 
Let $S$ be the set returned by the local search \Cref{alg:CKM19} with parameters $\rho$, and $s=r^2=O_H(\frac{1}{\eps^2})$.
Clearly $S$ is a $\rho$-dominating set.
The running time of each step of the while loop is at most ${n\choose s}^2\cdot\poly(n)=n^{O_{|H|}(\eps^{-2})}$, as there are at most $n$ iterations, the running time follows.
Let $S_{\opt}$ be the $\rho$-dominating set of minimum cardinality, it remains to prove that $|S|\le(1+\eps)|S_{\opt}|$.

Let $\tilde{V}=S\cup S_{\opt}$, and let $\mathcal{P}$ be a partition of the vertices in $V$ w.r.t. the Voronoi cells with $\tilde{V}$ as centers.
Specifically, for each vertex $u\in V$, $u$ joins the cluster $P_v$ of a vertex $v\in\tilde{V}$ at minimal distance $\min_{v\in\tilde{V}}d_G(u,v)$.  \footnote{For simplicity, we will assume that all the pairwise distances are unique. Alternatively, one can break ties in a consistent way (i.e. w.r.t. some total order).}
Let $\tilde{G}$ be the graph obtained from $G$ by contracting the internal edges in each Voronoi cell (and keeping only a single copy of each edge). Alternatively, one can define $\tilde{G}$ with $\tilde{V}$ as vertex set such that $v,u\in\tilde{V}$ are adjacent iff there is an edge in $G$ between a vertex in $P_u$ to a vertex in $P_v$. Note that $\tilde{G}$ is a minor of $G$, and hence is $H$-minor-free.

Next, we use \Cref{thm:r-division} on $\tilde{G}$ to obtain $r$-division  $\mathcal{R}$, with $\mathcal{B}$ as boundary vertices.
Consider a cluster $C\in \mathcal{R}$, and let $C'=C\cap(\mathcal{B}\cup S_{\opt})$. Fix $S'=(S\setminus C)\cup C'$. 
\begin{claim}
	$S'$ is a $\rho$-dominating set.
\end{claim}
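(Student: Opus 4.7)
The goal is to exhibit, for every $u\in V$, a vertex $w\in S'$ with $d_G(u,w)\le \rho$. Since both $S$ and $S_{\opt}$ are $\rho$-dominating, there exist $\sigma\in S$ and $v^{*}\in S_{\opt}$ with $d_G(u,\sigma)\le\rho$ and $d_G(u,v^{*})\le\rho$. The strategy is a short case analysis based on whether $\sigma$ and $v^{*}$ lie inside the cluster $C$. In the two easy cases one of $\sigma,v^{*}$ is already in $S'$: if $\sigma\notin C$ then $\sigma\in S\setminus C\subseteq S'$; if $v^{*}\in C$ then $v^{*}\in C\cap S_{\opt}\subseteq C'\subseteq S'$. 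The only non-trivial case is $\sigma\in C$ and $v^{*}\notin C$, where the aim is to produce a dominator that is a boundary vertex of $C$, i.e.\ a vertex of $\mathcal{B}\cap C\subseteq C'\subseteq S'$.

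The main tool for the hard case is the interplay between the Voronoi partition $\mathcal{P}$ and the $r$-division $\mathcal{R}$ of $\tilde{G}$. Concatenate the shortest $\sigma$-to-$u$ path with the shortest $u$-to-$v^{*}$ path to obtain a walk $\sigma=p_0,p_1,\ldots,p_M=v^{*}$ in $G$ of length at most $2\rho$. Map each $p_i$ to its Voronoi center $\pi(p_i)\in\tilde V$ (so $\pi(\sigma)=\sigma$ and $\pi(v^{*})=v^{*}$). The induced sequence $\pi(p_0),\pi(p_1),\ldots,\pi(p_M)$ is a walk in $\tilde G$: whenever two consecutive $p_i,p_{i+1}$ lie in different Voronoi cells, the $G$-edge $p_ip_{i+1}$ becomes the $\tilde G$-edge $\pi(p_i)\pi(p_{i+1})$. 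This walk starts inside $C$ (at $\sigma$) and ends outside $C$ (at $v^{*}$), so it must leave $C$ at some step.

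Let $j$ be the largest index with $\pi(p_j)\in C$. Then $\pi(p_{j+1})\notin C$, and the $\tilde G$-edge $\pi(p_j)\pi(p_{j+1})$ belongs to some cluster $C^{*}\in\mathcal{R}$ by the first property of \Cref{thm:r-division}. Since $\pi(p_{j+1})\in C^{*}\setminus C$, we get $C^{*}\ne C$, forcing $\pi(p_j)\in C\cap C^{*}\subseteq\mathcal{B}$. Thus $\pi(p_j)\in\mathcal{B}\cap C\subseteq C'\subseteq S'$. It remains to show $d_G(u,\pi(p_j))\le\rho$; here one uses that $\pi(p_j)$ is the closest vertex of $\tilde V$ to $p_j$, so $d_G(p_j,\pi(p_j))\le d_G(p_j,\sigma)$ if $p_j$ lies on the $\sigma$-to-$u$ sub-path and $d_G(p_j,\pi(p_j))\le d_G(p_j,v^{*})$ otherwise. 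Combined with the fact that $p_j$ sits on a shortest path, the triangle inequality collapses $d_G(u,p_j)+d_G(p_j,\pi(p_j))$ to either $d_G(u,\sigma)$ or $d_G(u,v^{*})$, both of which are at most $\rho$.

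\textbf{Main obstacle.} The trivial sub-cases are immediate; the delicate point is the non-trivial case, where one has to conjure a witness in $S'$ out of thin air. The key trick is recognizing that boundary vertices of $C$ are exactly those $\tilde G$-vertices through which walks can leave $C$, so a walk in $\tilde G$ induced by any $G$-path must cross $\mathcal{B}\cap C$ whenever its endpoints straddle $C$. The only calculation that requires care is the distance bound $d_G(u,\pi(p_j))\le\rho$; this hinges on choosing the walk to pass through $u$, so that every intermediate vertex $p_j$ lies on a shortest path out of one of $\sigma$ or $v^{*}$, allowing the Voronoi-center bound $d_G(p_j,\pi(p_j))\le d_G(p_j,\sigma)$ or $\le d_G(p_j,v^{*})$ to be combined with the shortest-path identity into a single inequality of the form $d_G(u,\pi(p_j))\le d_G(u,\sigma)$ or $\le d_G(u,v^{*})$.
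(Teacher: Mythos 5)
Your proof is correct and follows essentially the same route as the paper's: the same Voronoi partition of $V$ with centers $S\cup S_{\opt}$, the same $r$-division of the contracted graph $\tilde G$, the same concatenated shortest path through $u$, and the same observation that the induced walk in $\tilde G$ must exit $C$ through a vertex of $\mathcal{B}\cap C$. Your version is in fact slightly cleaner: by bounding $d_G(p_j,\pi(p_j))$ by the distance to whichever endpoint's segment $p_j$ lies on, you avoid the paper's need to take the \emph{closest} dominators $v_1,v_2$ and its auxiliary argument that the entire $v_1$-to-$u$ prefix stays inside the Voronoi cell $P_{v_1}$.
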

	\hspace{-17pt}\emph{Proof.}\ \  
	Consider a vertex $u\in V$, We will argue that $u$ is at distance at most $\rho$ from some vertex in $S'$.
	Let $v_1\in S$ (resp. $v_2\in S_\opt$) be the closest vertex to $u$ in $S$ (resp. in $S_\opt$). It holds that $d_G(u,v_1),d_G(u,v_2)\le\rho$.
	If either $v_1\notin C$, $v_1\in C\cap\mathcal{B}$, or $v_2\in C$ then $S'$ contains at least one of $v_1,v_2$ and we are done. Thus we can assume that  $v_1\in C\setminus\mathcal{B}$ and $v_2\notin C$. 
	Let $\Pi=\{v_1=z_0,z_1,\dots,z_a,u,w_0,w_1,\dots,w_b=v_2\}$ be the unique shortest path from $v_1$ to $v_2$ that goes through $u$ (the thick black line in illustration on the right). 
	
\begin{wrapfigure}{r}{0.25\textwidth}
	\begin{center}
		\vspace{-25pt}
		\includegraphics[scale=0.7]{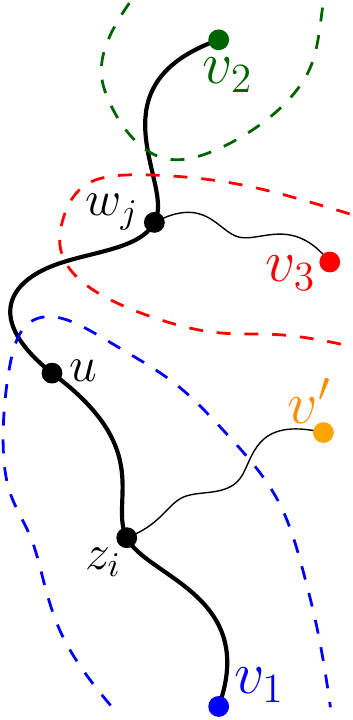}
		\vspace{-15pt}
	\end{center}
\end{wrapfigure}
	
	Assume first that $u$ belongs to the Voroni cell $P_{v_1}$ of $v_1$ (encircled by a blue dashed line). For every $i$ and $v'\in\tilde{V}$ it holds that $d_{G}(v',z_{i})\ge d_{G}(v',u)-d_{G}(u,z_{i})>d_{G}(v_{1},u)-d_{G}(u,z_{i})=d_{G}(v_{1},z_{i})$. It follows that all the vertices $\{z_0,z_1,\dots,z_a\}$ belong to the Voronoi cell $P_{v_1}$. 
	As $v_1\in C\setminus\mathcal{B}$, and $v_2\notin C$, there must be some index $j$ such that $w_j$ belongs to the Voronoi cell $P_{v_3}$ of $v_3\in C\cap \mathcal{B}$ (as otherwise there will be an edge in $\tilde{G}$ between a vertex in $C\setminus\mathcal{B}$ to a vertex not in $C$). 
	It holds that 
	\[
	d_{G}(u,v_{3})\le d_{G}(u,w_{j})+d_{G}(w_{j},v_{3})\le d_{G}(u,w_{j})+d_{G}(w_{j},v_{2})=d_{G}(u,v_{2})\le\rho~,
	\]
	where the first inequality follows by triangle inequality, the second as $w_j\in P_{v_3}$, and the equality as $w_j$ lays on the shortest path from $u$ to $v_2$. 
	As $v_3\in C\cap \mathcal{B}$ it follows that $v_3\in S'$, thus we are done.
	The case $u\in P_{v_2}$ is symmetric.
	\qed
	
	\vspace{7pt}
	
It holds that $|S'\setminus S|+|S\setminus S'|\le |C|\le r^2=s$. Thus, $|S'|\ge |S|$ since  otherwise, \Cref{alg:CKM19} would've not returned the set $S$. Hence  $|C\cap(\mathcal{B}\cup S_{\opt})|=|C'|\ge|C\cap S|$. As the same argument could be applied on every cluster $C\in\mathcal{R}$, we conclude that,
\[
|S|=\sum_{C\in\mathcal{R}}|C\cap S|\le\sum_{C\in\mathcal{R}}|C\cap(\mathcal{B}\cup S_{\opt})|\le|S_{\opt}|+\sum_{C\in\mathcal{R}}|C\cap\mathcal{B}|\le|S_{\opt}|+c_H\cdot\frac{|\tilde{V}|}{r}\le|S_{\opt}|+2c_H\cdot\frac{|S|}{r}~.
\]
But this implies 
$|S_{\opt}|\ge(1-\frac{2c_H}{r})|S|=(1-\frac{\eps}{2})|S|$,
thus $|S|\le\frac{1}{1-\frac\eps2}|S_{\opt}|\le(1+\eps)|S_{\opt}|$.
\qed

\subsection{Local search for $\rho$-independent set under uniform measure}
\begin{restatable}[]{theorem}{CKM19}\label{thm:Local-Sarch-Independent}
	There is a polynomial approximation scheme (PTAS) for the metric $\rho$-independent set problem in $H$-minor-free graphs under uniform measure.\\
	Specifically, given a weighted $n$-vertex $H$-minor-free graph $G=(V,E,w)$,  and parameters $\eps\in(0,\frac12)$, $\rho>0$, in $n^{O_{|H|}(\eps^{-2})}$ time, one can find a $\rho$-independent set $S\subseteq V$ such that for every $\rho$-independent set $\tilde{S}$,
	$|S|\ge (1-\eps)|\tilde{S}|$.
\end{restatable}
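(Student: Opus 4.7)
\noindent\textbf{Proof plan for \Cref{thm:Local-Sarch-Independent}.}

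I would mirror the structure of the proof of \Cref{thm:CKM19}, adapting the local search and its analysis from a minimization to a maximization problem. The algorithm is the natural dual of \Cref{alg:CKM19}: initialize $S$ as any \emph{maximal} $\rho$-independent set obtained by a greedy sweep, and then repeatedly look for a \emph{swap} $A\subseteq S$, $B\subseteq V\setminus S$ with $|A|+|B|\le s:=r^2$ such that $(S\setminus A)\cup B$ is still $\rho$-independent and $|B|>|A|$; perform any such swap and iterate. Setting $r=\Theta(c_H/\eps)$ (where $c_H$ is the constant of \Cref{thm:r-division} for the forbidden minor $H$), the algorithm strictly increases $|S|$ at each step and so runs in at most $n$ iterations, each of which enumerates $\binom{n}{s}^2\cdot\poly(n)=n^{O_{|H|}(\eps^{-2})}$ candidate swaps. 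Correctness of the output as a $\rho$-independent set is immediate from the invariant.

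The approximation guarantee follows the same Voronoi-partition / $r$-division template used for the $\rho$-dominating set. Let $S$ be the output and $S_{\opt}$ an optimal $\rho$-independent set. Set $\tilde V=S\cup S_{\opt}$, define the Voronoi cells $\{P_x\}_{x\in\tilde V}$ with respect to $d_G$ (breaking ties by a fixed ordering), and let $\tilde G$ be the minor of $G$ obtained by contracting each $P_x$. Because $\tilde G$ is $H$-minor-free, \Cref{thm:r-division} yields clusters $\mathcal R$ with boundary set $\mathcal B\subseteq\tilde V$ satisfying $\sum_{C\in\mathcal R}|C\cap\mathcal B|\le c_H|\tilde V|/r$. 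For each cluster $C\in\mathcal R$ I would consider the swap
\[
A_C\,=\,C\cap S,\qquad B_C\,=\,\bigl\{\,v\in(C\setminus\mathcal B)\cap S_{\opt}\,:\,\text{every }u\in S\text{ with }d_G(u,v)<\rho\text{ lies in }C\,\bigr\},
\]
which by construction gives a valid $\rho$-independent set $(S\setminus A_C)\cup B_C$ of size $|A_C|+|B_C|\le 2|C|\le 2r^2$ (so I would actually set $s=2r^2$). Local optimality then forces $|B_C|\le|A_C|$.

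The heart of the analysis is therefore bounding the \emph{loss} $|(C\setminus\mathcal B)\cap S_{\opt}|-|B_C|$, i.e.\ the number of interior $S_{\opt}$-vertices of $C$ that had to be excluded from $B_C$ because of a ``leakage'' conflict with some $u\in S\setminus C$. The Voronoi argument used in the proof of \Cref{thm:CKM19} (tracing the shortest path from $v$ to the offending $u$ and locating the transition into $C$ in $\tilde G$) produces, for every such excluded $v$, an associated boundary vertex $\sigma_v\in C\cap\mathcal B$. The main obstacle --- and the place where the IS analysis genuinely departs from the DS one --- is proving that this charging can be made almost injective, so that $\sum_{C}(\text{loss in }C)=O(\sum_C|C\cap\mathcal B|)\le O(c_H|\tilde V|/r)$. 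I plan to handle this by refining the swap: enlarge $A_C$ with those $u\in S\setminus C$ witnessing a leakage through $C\cap\mathcal B$ (their number per boundary vertex can be bounded by a local independent-set argument in $G$, since they are mutually $\rho$-separated and share a common boundary witness in the minor-free contracted graph), keeping $|A_C|+|B_C|$ within the swap budget by increasing $s$ by a constant factor.

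Granting the bound on the total loss, one then assembles the standard estimate
\[
|S_{\opt}|\;\le\;\sum_{C\in\mathcal R}|(C\setminus\mathcal B)\cap S_{\opt}|+|S_{\opt}\cap\mathcal B|\;\le\;\sum_{C\in\mathcal R}|C\cap S|+O\!\left(\tfrac{c_H|\tilde V|}{r}\right)\;\le\;|S|+O\!\left(\tfrac{c_H|\tilde V|}{r}\right),
\]
and since $|\tilde V|\le|S|+|S_{\opt}|$ and $r=\Theta(c_H/\eps)$, this rearranges to $|S|\ge(1-\eps)|S_{\opt}|$, giving the claimed PTAS with running time $n^{O_{|H|}(\eps^{-2})}$.
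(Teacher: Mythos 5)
Your algorithm and the outer skeleton (local search with swaps of size $\poly(1/\eps)$, an $r$-division of an auxiliary $H$-minor-free graph on $S\cup S_{\opt}$, and the final summation over clusters) match the paper. However, the heart of your argument is exactly the part you leave open, and as written it is a genuine gap. You build $\tilde G$ by contracting Voronoi cells of $\tilde V=S\cup S_{\opt}$, which forces you to confront ``leakage'': interior optimal vertices $v\in(C\setminus\mathcal B)\cap S_{\opt}$ that conflict with some $u\in S\setminus C$. Your proposed fix --- enlarging $A_C$ by the offending $u$'s and claiming ``their number per boundary vertex can be bounded by a local independent-set argument'' --- is not substantiated: there is no a priori bound on how many mutually $\rho$-separated vertices of $S$ can lie within distance $\rho$ of vertices of a single cluster, and without an almost-injective charging the loss term is not $O(c_H|\tilde V|/r)$. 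Since the entire approximation guarantee rests on ``granting the bound on the total loss,'' the proof is incomplete at its crux.

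The paper sidesteps this entirely by choosing a different auxiliary graph: $\tilde G$ has vertex set $S\cup S_{\opt}$ and an edge $\{u,v\}$ \emph{precisely when} $d_G(u,v)<\rho$ (all such edges go between $S$ and $S_{\opt}$). This conflict graph is shown to be a minor of $G$ by a non-crossing argument: if two conflict edges $\{u,v\}$ and $\{u',v'\}$ had intersecting shortest paths through a common vertex $z$, the triangle inequality would force $d_G(u,u')<\rho$ with $u,u'\in S_{\opt}$ (or the symmetric statement for $S$), contradicting $\rho$-independence. Applying the $r$-division to \emph{this} graph, the swap $S'=(S\setminus C)\cup\bigl((C\cap S_{\opt})\setminus\mathcal B\bigr)$ is automatically $\rho$-independent --- a conflict between a non-boundary vertex of $C$ and a vertex outside $C$ would be an edge of $\tilde G$ leaving $C\setminus\mathcal B$, which the $r$-division forbids --- so no charging argument is needed, the swap size is $|C|\le r^2$ (no doubling of $s$), and the excluded vertices $C\cap S_{\opt}\cap\mathcal B$ are accounted for directly by $\sum_C|C\cap\mathcal B|\le c_H|\tilde V|/r$. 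If you want to salvage your write-up, replacing the Voronoi contraction by this conflict-graph minor is the missing idea.
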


\begin{algorithm}[t]
	\caption{\texttt{Local search algorithm for metric $\rho$-independent set}}\label{alg:LocalSearchIndependentSet}
	\DontPrintSemicolon
	\SetKwInOut{Input}{input}\SetKwInOut{Output}{output}
	\Input{$n$ vertex graph $G=(V,E,w)$, parameters $\rho,s$}
	\Output{$\rho$-independent set $S$}
	\BlankLine
	$S\leftarrow \emptyset$\;
	\While{$\exists$ $\rho$-independent set $S'\subseteq V$ s.t. $|S'|>|S|$ and $|S\setminus S'|+|S'\setminus S|\le s$}{
		$S\leftarrow S'$\;
	}
	
	\Return $S$\;
\end{algorithm}
\begin{proof}
	Set $r=\frac{4c_H}{\eps}$ where $c_H$ is the constant from \Cref{thm:r-division} w.r.t. $H$.
	Let $S$ be the set returned by the local search \Cref{alg:LocalSearchIndependentSet} with parameters $\rho$, and $s=r^2=\frac{16c_H^2}{\eps^2}=O_H(\frac{1}{\eps^2})$.
	Clearly $S$ is a $\rho$-independent set.
	The running time of each step of the while loop is at most ${n\choose s}^2\cdot\poly(n)=n^{O_{|H|}(\eps^{-2})}$, as there are at most $n$ iterations, the running time follows.
	Let $S_{\opt}$ be the $\rho$-independent set of maximum cardinality, it remains to prove that $|S|\ge(1-\eps)|S_{\opt}|$.
	
	Construct a graph $\tilde{G}$ with $\tilde{V}=S\cup S_\opt$ as a vertex set. We add an edge an edge between $u,v\in \tilde{V}$ iff $d_G(u,v)<\rho$. 
	Clearly all the edges are from $S\times S_\opt$ (as both $S,S_\opt$ are $\rho$-independent sets). 
	Note that $\tilde{V}$ is a minor of $G$. This is because if we take all the shortest paths $P_{u,v}$ for $\{u,v\}\in E'$ they will not intersect. To see this, assume for contradiction that there are different pairs $u,u'\in S_\opt$ , $v,v'\in S$ such that $\{u,v\},\{u',v'\}\in E$, and there is some vertex $z$ such that $z\in P_{u,v}\cap P_{u',v'}$.
	W.l.o.g. assume that $d_G(u,z)+d_G(u',z)\le d_G(z,v)+d_G(z,v')$. Using the triangle inequality it follows that 
	\begin{align*}
	d_{G}(u,u')\le d_{G}(u,z)+d_{G}(u',z) & \le\frac{1}{2}\cdot\left(d_{G}(u,z)+d_{G}(z,v)+d_{G}(u',z)+d_{G}(z,v')\right)\\
	& =\frac{1}{2}\cdot\left(d_{G}(u,v)+d_{G}(u',v')\right)<\rho~,
	\end{align*}
	a contradiction.
	
	Next, we apply \Cref{thm:r-division} to $\tilde{G}$ to obtain $r$-division  $\mathcal{R}$, with $\mathcal{B}$ as boundary vertices.
	Consider a cluster $C\in \mathcal{R}$, and let $C'=(C\cap S_{\opt})\setminus \mathcal{B}$. Fix $S'=(S\setminus C)\cup C'$. 
	\begin{claim}
		$S'$ is a $\rho$-independent set.
	\end{claim}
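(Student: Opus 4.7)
To prove that $S' = (S \setminus C) \cup C'$ is $\rho$-independent, I would verify that every pair of distinct vertices $u, v \in S'$ satisfies $d_G(u,v) > \rho$, proceeding by a short case analysis on the origin of $u$ and $v$, where $C' = (C \cap S_{\opt}) \setminus \mathcal{B}$.

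Two of the three cases are immediate. If both $u, v \in S \setminus C \subseteq S$, then $d_G(u,v) > \rho$ because $S$ is $\rho$-independent by construction of the while loop (which only ever assigns $S$ to be another $\rho$-independent set). Symmetrically, if both $u, v \in C' \subseteq S_{\opt}$, then $d_G(u,v) > \rho$ because $S_{\opt}$ is $\rho$-independent. So the only nontrivial case is the mixed one, where (up to relabeling) $u \in S \setminus C$ and $v \in C'$.

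For the mixed case, I would argue by contradiction. Assume $d_G(u,v) \le \rho$; then by the construction of the auxiliary graph $\tilde{G}$ (whose edge set consists of all pairs in $\tilde{V} = S \cup S_{\opt}$ within distance $\rho$), the pair $\{u,v\}$ is an edge of $\tilde{G}$. By the edge-covering property of the $r$-division, i.e.\ item~1 of \Cref{thm:r-division} applied to $\tilde{G}$, some cluster $C_{uv} \in \mathcal{R}$ must contain both endpoints of this edge. But $v \in C'$ means $v \in C \setminus \mathcal{B}$, so $v$ is a non-boundary vertex of the $r$-division and therefore belongs to only one cluster. That cluster is necessarily $C$, forcing $C_{uv} = C$ and hence $u \in C$, contradicting $u \in S \setminus C$. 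This contradiction yields $d_G(u,v) > \rho$, completing the case analysis.

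I do not expect a real obstacle here beyond bookkeeping: the entire claim hinges on the edge-covering property of an $r$-division together with the defining feature of boundary vertices, namely that a non-boundary vertex of $C$ can only be adjacent in $\tilde{G}$ to vertices lying in the same cluster $C$. One small subtlety to double-check is the strict vs.\ non-strict inequality between the definitions of the edges in $\tilde{G}$ and of $\rho$-independence; with either convention the contradiction goes through provided the threshold is set consistently, and if necessary the definition of $\tilde{G}$ can be adjusted to include edges at $d_G = \rho$ so that no gap remains.
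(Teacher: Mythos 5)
Your proposal is correct and follows essentially the same route as the paper: the same three-way case split, with the mixed case resolved by observing that a non-boundary vertex of $C$ in the $r$-division of $\tilde{G}$ can only be adjacent to vertices inside $C$ (you phrase this via the edge-covering property and a contradiction, the paper states it directly, but it is the same fact). Your remark on the strict versus non-strict threshold is a fair observation — the paper itself concludes $d_G(u,v)\ge\rho$ while defining $\rho$-independence with a strict inequality — and is handled exactly as you suggest.
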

	\begin{proof}
		Consider a pair of vertices $u,v\in S'$, we will show that $d_G(u,v)\ge\rho$.
		If both $u,v$ belong to $S$, then since $S$ is a $\rho$-independent set, it follows that $d_G(u,v)\ge\rho$. The same argument holds if both $u,v$ belong to $S_\opt$. We thus can assume w.l.o.g. that $u\in S\setminus S_\opt$ and $v\in S_\opt\setminus S$.
		It follows that $u\notin C$ while $v\in C$.
		However, as $v\in C\cap S'$, necessarily $v\notin\mathcal{B}$. The only vertices in $C$ with edges towards vertices out of $C$ are in  $\mathcal{B}$. It follows that  $\{u,v\}$ is not an edge of $\tilde{G}$, implying $d_G(u,v)\ge\rho$.
	\end{proof}
	It holds that $|S'\setminus S|+|S\setminus S'|\le |C|\le r^2=s$. Thus, $|S'|\le |S|$, as otherwise \Cref{alg:LocalSearchIndependentSet} would have not returned the set $S$. Hence, $|(C\cap S_{\opt})\setminus \mathcal{B}|=|C'|\le|C\cap S|$. As the same argument could be applied on every cluster $C\in\mathcal{R}$, we conclude that,
	\[
	|S|=\sum_{C\in\mathcal{R}}|C\cap S|\ge\sum_{C\in\mathcal{R}}|(C\cap S_{\opt})\setminus\mathcal{B}|\ge|S_{\opt}|-\sum_{C\in\mathcal{R}}|C\cap\mathcal{B}|\ge|S_{\opt}|-c_{H}\cdot\frac{|\tilde{V}|}{r}\ge|S_{\opt}|-2c_{H}\cdot\frac{|S|}{r}~.
	\]
	But this implies that
	$|S_{\opt}|\le(1+\frac{2c_{H}}{r})|S|=(1+\frac{\eps}{2})|S|$,
	thus $|S|\ge\frac{1}{1+\frac{\eps}{2}}|S_{\opt}|\ge(1-\eps)|S_{\opt}|$.
\end{proof}

\end{document}